\documentclass[10pt,a4paper,reqno]{amsart}
\usepackage{amsthm,amsfonts,amssymb,amsmath,euscript,comment, enumerate,slashed, marvosym, pifont}
\usepackage{graphicx}
\usepackage{color}
\usepackage{bbm}
\usepackage{geometry}
\usepackage{amsfonts}
\usepackage{xcolor}
\usepackage{hyperref}
\usepackage{seqsplit}
\usepackage{mathtools}

\geometry{ hmargin=2.3cm, vmargin=3cm }

\newcommand{\pfstep}[1]{\smallskip \noindent {\it #1.}}

\newtheorem{thm}{Theorem}[section]

\newtheorem{corollary}[thm]{Corollary}

\newtheorem{df}[thm]{Definition}
\newtheorem{rk}[thm]{Remark}
\newtheorem{definition}[thm]{Definition}
\newtheorem{conjecture}[thm]{Conjecture}
\newtheorem{proposition}[thm]{Proposition}
\newtheorem{lemma}[thm]{Lemma}
\newtheorem{theorem}[thm]{Theorem}
\newtheorem{assumptions}[thm]{Assumptions}

\newcommand{\m}[1]{\mathbb{#1}}

\newcommand{\ep}{\varepsilon}

\newcommand{\f}{\frac}
\newcommand{\rd}{\partial}

\newcommand{\alp}{\alpha}
\newcommand{\bt}{\beta}

\newcommand{\ls}{\lesssim}
\newcommand{\de}{\delta}

\newcommand{\la}{\langle}
\newcommand{\ra}{\rangle}

\newcommand{\ud}{\mathrm{d}}

\newcommand{\RR}{\mathbb R}
\newcommand{\mfh}{\mathfrak h}
\newcommand{\mfk}{\mathfrak k}
\def \i {\infty}
\def \th {\theta}

\newcommand{\Lxi}{L^{(\xi)}}
\newcommand{\Lbxi}{\underline{L}^{(\xi)}}

\newcommand{\uxi}{\underline{\xi}}
\newcommand{\ux}{\underline{x}}
\newcommand{\ueta}{\underline{\eta}}
\newcommand{\uth}{\underline{\th}}

\newcommand{\calF}{\mathcal F}
\newcommand{\calG}{\mathcal G}
\newcommand{\calH}{\mathcal H}
\newcommand{\calR}{\mathcal R}
\newcommand{\ucalF}{\underline{\calF}}

\newcommand{\srd}{\slashed{\rd}}

\newcommand{\brk}{\@ifstar{\brkb}{\brki}}
\newcommand{\brki}[1]{\langle{#1}\rangle}
\newcommand{\brkb}[1]{\left\langle{#1}\right\rangle}

\numberwithin{equation}{section}

\begin{document}

\title{Burnett's conjecture in generalized wave coordinates}

\begin{abstract} 
We prove Burnett's conjecture in general relativity when the metrics satisfy a generalized wave coordinate condition, i.e., suppose $\{g_n\}_{n=1}^\infty$ is a sequence of Lorentzian metrics (in arbitrary dimensions $d \geq 3$) satisfying a generalized wave coordinate condition and such that $g_n\to g$ in a suitably weak and ``high-frequency'' manner, then the limit metric $g$ satisfies the Einstein--massless Vlasov system. Moreover, we show that the Vlasov field for the limiting metric can be taken to be a suitable microlocal defect measure corresponding to the convergence. The proof uses a compensation phenomenon based on the linear and nonlinear structure of the Einstein equations.
\end{abstract}

\author{C\'ecile Huneau}
\address{CMLS, Ecole Polytechnique, 91120 Palaiseau, France}
\email{cecile.huneau@polytechnique.edu}
\author{Jonathan Luk}
\address{Department of Mathematics, Stanford University, CA 94304, USA}
\email{jluk@stanford.edu}

\maketitle

\section{Introduction}

It is well-known that the ``high-frequency'' limit $g$ of a sequence of metrics $\{g_n\}_{n=1}^\infty$ to the Einstein vacuum equation $\mathrm{Ric}(g_n) = 0$ may not be itself a solution to the Einstein vacuum equations. Nonetheless, Burnett conjectured that such limits must still be of a very specific type, where the ``effective matter'' that arises in the limiting process takes the form of a massless Vlasov field:
\begin{conjecture}[Burnett \cite{Burnett}]\label{conj:Burnett}
If $g$ is a suitable ``high-frequency limit'' of a sequence of vacuum metrics $\{g_n\}_{n=1}^\infty$, then $g$ solves the Einstein--massless Vlasov system (after defining a suitable Vlasov field).
\end{conjecture}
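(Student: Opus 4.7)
The plan is to work in the generalized wave coordinate gauge, in which the vacuum equations $\mathrm{Ric}(g_n) = 0$ reduce, at leading order, to a quasilinear wave system of the schematic form
\[
g_n^{\alpha\beta}\partial_\alpha\partial_\beta (g_n)_{\mu\nu} = Q_{\mu\nu}(g_n,\partial g_n, \partial g_n) + \text{(lower order)},
\]
with $Q$ bilinear in $\partial g_n$. The obstruction to passing to the weak limit is precisely this quadratic nonlinearity: since $g_n \to g$ only weakly with highly oscillating derivatives, $Q(g_n, \partial g_n, \partial g_n)$ does not converge to $Q(g, \partial g, \partial g)$, and the discrepancy is the candidate stress--energy tensor of the effective matter. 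The goal is to show that this discrepancy has exactly the algebraic form of the stress--energy of a massless Vlasov field.

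To capture the defect I would introduce a microlocal defect (H-)measure $\mu$ associated with the sequence $\partial g_n$: up to a subsequence, for every scalar symbol $a(x,\xi)$ of order zero and every pair of components $(I,J)$,
\[
\int \overline{(\partial g_n)_I}\,\mathrm{Op}(a)\,(\partial g_n)_J \,\mathrm{d}x \longrightarrow \int a(x,\xi)\,\mathrm{d}\mu_{IJ}(x,\xi).
\]
The tensor-valued measure $\mu$ is the central object of the proof. The first step is to derive pointwise-in-$(x,\xi)$ constraints on $\mu$ by testing the linearized equation against well-chosen pseudodifferential symbols: propagation of the microlocal defect measure for the wave operator $g^{\alpha\beta}\partial_\alpha\partial_\beta$ forces $\mu$ to be supported on the characteristic set $\{g^{\alpha\beta}(x)\xi_\alpha\xi_\beta = 0\}$ (the \emph{masslessness} condition) and to be invariant under the null geodesic (bicharacteristic) flow of $g$ (the \emph{Vlasov transport} equation). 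The generalized wave coordinate condition enters as an additional linear constraint that kills gauge/longitudinal components of $\mu$.

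The second step is to rewrite the weak limit of the nonlinear source $Q(g_n,\partial g_n,\partial g_n)$ as a contraction of the symbol of $Q$ against $\mu$, and then to invoke the precise algebraic form of the Einstein equations in wave coordinates. The point is a \emph{compensation phenomenon}: many tensorial components of $\mu$ that would a priori contaminate the limit either vanish outright by the gauge and null-shell constraints from Step~1, or combine into divergences that are absorbed into $\mathrm{Ric}(g)$. What survives is a symmetric, traceless, divergence-free rank-two tensor supported on the null cone, which one then reads off as $T_{\mu\nu}(g) = \int \xi_\mu \xi_\nu \,\mathrm{d}\mu$, namely the stress--energy of a massless Vlasov field whose distribution function is (a component of) $\mu$ itself.

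The main obstacle I expect is the combined microlocal and tensorial bookkeeping. A priori $\mu$ carries indices from every pair of components of $\partial g_n$, and eliminating spurious contributions demands simultaneous use of (i) the quasilinear wave equation (for support and transport of $\mu$), (ii) the generalized wave coordinate condition (for gauge relations), and (iii) the precise null structure of the quadratic nonlinearity $Q$ (for the algebraic cancellations that yield the Vlasov form). A secondary but nontrivial difficulty is that the transport equation for $\mu$ naturally arises along the bicharacteristic flow of a \emph{background} metric; upgrading it to the null geodesic flow of the limit metric $g$, and verifying that the resulting $T_{\mu\nu}(g)$ is exactly what couples to $\mathrm{Ric}(g)$ to close the Einstein--massless Vlasov system, will require careful treatment of the low regularity of $g$.
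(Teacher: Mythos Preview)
Your outline matches the paper's strategy in broad strokes: microlocal defect measure, null-cone support, gauge constraints from the wave coordinate condition, and identification of the surviving quadratic defect as $\int \xi_\mu\xi_\nu\,\mathrm{d}\mu$. For parts (1)--(3) of the theorem (support, Einstein equation, non-negativity) your sketch is essentially what the paper does.

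The genuine gap is in your derivation of the Vlasov transport equation. You write that ``propagation of the microlocal defect measure for the wave operator $g^{\alpha\beta}\partial_\alpha\partial_\beta$ forces $\mu$ \ldots\ to be invariant under the null geodesic flow,'' and reserve the compensation phenomenon for Step~2 (computing $\mathrm{Ric}(g)$). But the propagation identity for $\mu$ involves $\langle \partial h, A\,\Box_{g_0}h\rangle$, and $\Box_{g_0}h$ is not zero: by the equation it contains the full nonlinearity, including the quasilinear term $g_0^{\alpha\alpha'}g_0^{\beta\beta'}h_{\alpha'\beta'}\partial^2_{\alpha\beta}h$, the null-form terms $Q$, and the non-null terms $P$. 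Standard linear microlocal defect theory does not make these contributions vanish. The paper's main work (all of Section~5) is a \emph{trilinear} compensated compactness argument: for $Q$ one needs genuine trilinear null-form estimates (the paper imports the Ionescu--Pausader normal form bounds after freezing coefficients); for $P$ one must first use the wave coordinate condition to reveal a hidden null structure, and this only works for the specific combination defining $\mu$, not for individual $\mu_{\alpha\beta\rho\sigma}$; for the quasilinear term one needs a threefold frequency decomposition (low / elliptic high / near-light-cone high) and a separate argument, including Calder\'on commutators, in each regime. None of this is ``tensorial bookkeeping''---it is the analytic heart of the proof, and your plan does not contain it. Separately, your worry about low regularity of the limit $g$ is misplaced: under the paper's assumptions $g_0$ is $C^\infty$, so the bicharacteristic flow is already that of the limit metric.
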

As described in \cite{Burnett}, the conjecture, if true, shows in particular that in the limit, high-frequency gravitational waves do not interact directly, but they only interact via the Einstein equation through their influence on the geometry.

\textbf{In this paper, we prove Conjecture~\ref{conj:Burnett} under the additional assumption that
$\{g_n\}_{n=1}^\infty$ and $g_0$ satisfy suitable (generalized) wave coordinate conditions.} 

The Burnett conjecture was previously proven in an elliptic gauge when the metric admits a $\m U(1)$ symmetry \cite{HL.Burnett}, and in the double null coordinates gauge when the metric is angularly regular \cite{LR.HF}. This is the first proof of Conjecture~\ref{conj:Burnett} without symmetry assumptions or angular regularity assumptions. 

To discuss our result, we introduce the main assumptions of our theorem.
\begin{assumptions}\label{ass:main}
Let $d \geq 3$. Consider a sequence of $C^\infty$ Lorentzian metrics $\{g_n\}_{n=1}^\infty$ on a bounded open set $U\subset \mathbb R^{d+1}$ and suppose that the following holds:
\begin{enumerate}
\item \textbf{The Einstein vacuum equation holds for $\{g_n\}_{n=1}^\infty$:} $\mathrm{Ric}(g_n) = 0$, $\forall n \in \mathbb N$.
\item \textbf{$g_0$ is a uniform limit:} There exists a $C^\infty$ Lorentzian metric $g_0$ and a decreasing sequence $\{\lambda_n \}_{n=1}^\infty\subset (0,1)$ with $\lim_{n\to \infty} \lambda_n = 0$ such that 
\begin{equation}\label{eq:ass.1}
|g_n - g_0|\leq \lambda_n.
\end{equation}
\item \textbf{$g_0$ is a high-frequency limit:} the following bounds hold: 
\begin{equation}
|\partial g_n| \ls 1,\quad  |\rd^2 g_n| \ls \lambda_n^{-1},
\end{equation}\label{eq:ass.2}
where the implicit constants are independent of $n$. 
\item \textbf{The generalized wave coordinate condition holds for $\{g_n\}_{n=1}^\infty$ and $g_0$:} Defining 
\begin{equation}\label{eq:generalized.wave.def}
H^\alp_n \doteq (g_{n}^{-1})^{\mu\nu} \Gamma_{\mu\nu}^\alp(g_n),\quad H^\alp_0 \doteq (g_{0}^{-1})^{\mu\nu} \Gamma_{\mu\nu}^\alp(g_0),
\end{equation}
where $\Gamma_{\mu\nu}^\alp(g_n)$ and $\Gamma_{\mu\nu}^\alp(g_0)$ denote the Christoffel symbols of $g_n$ and $g_0$, respectively, there exists $\eta \in (0,1]$ such that
\begin{equation}\label{eq:wave.coord.assumption}
|H^\alp_n - H^\alp_0| \ls \lambda_n^{\eta}, \quad |\rd H^\alp_n| \ls 1,\quad \forall n \in \mathbb N,\,\forall \alp = 0,1,\cdots d,
\end{equation}
where the implicit constants are independent of $n$. 
\end{enumerate}
\end{assumptions}

\begin{rk}\label{rmk:improved.assumption}
We remark that instead of (3), the original conjecture in \cite{Burnett} only requires the bound $\sup_n |\partial g_n| < \infty$. Our stronger assumption (3) which concerns also the second derivative is imposed in the spirit of \cite{Isaacson1}, which fixes the frequency scale. 

\begin{enumerate}
\item It is not difficult to see, because all the arguments in the proof of the main theorem have room, that there exists an explicitly computable $\de>0$ such that the assumption on the second derivatives of $g_n$ can be replaced by $|\rd^2 g_n| \ls \lambda_n^{-1-\de}$. (At the same time, the condition $|\rd H^\alp_n| \ls 1$ can also be relaxed to $|\rd H^\alp_n| \ls \lambda_n^{-\de}$, with $\de = \de(\eta) >0$.) The advantage for the assumptions as stated in Assumption~\ref{ass:main} is they imply that $\Box_{g_0}(g_n-g_0)$ is bounded (see equation \eqref{eq:h.eqn}), which is not necessary but makes the exposition cleaner.
\item As one will see in the proof, in the main theorem (Theorem~\ref{thm:main}) below, the assumption on the second derivatives of $g_n$ is only needed in part (4), i.e., the Vlasov equation for $\mu$. If we were only interested in the statements (1)--(3) of Theorem~\ref{thm:main}, then the assumption $|\rd^2 g_n| \ls \lambda_n^{-1}$ (as well as the assumption on the uniform boundedness of $\rd H_n^\alp$) can be removed.
\end{enumerate}

However, it remains an open problem to prove (or disprove) Burnett's conjecture in generalized wave coordinates without any assumptions on the second derivatives of $g_n$.
\end{rk}



We need a few definitions to introduce our main theorem.

\begin{definition}\label{def:measure}
\begin{enumerate}
\item Let $T^*U \equiv U\times \mathbb R^{d+1}$ denote the cotangent bundle of $U$ and let $S^*U = U \times ((\mathbb R^{d+1}\setminus\{0\})/\sim)$ denote the cosphere bundle, where $\xi \sim \eta$ if and only if $\xi = \alp \eta$ for some $\alp >0$. 
\item For $k \in \mathbb N \cup \{0\}$, we say that $f: U \times (\mathbb R^{d+1}\setminus\{0\}) \to \mathbb R$ is positively $k$-homogeneous if $f(x,\alp \xi) = \alp^k f(x,\xi)$ for all $\alp \geq 1$ and all $(x,\xi) \in U \times (\mathbb R^{d+1}\setminus\{0\})$ with $|\xi|\geq 1$. For any fixed $k \in \mathbb N\cup \{0\}$, we can define a Radon measure on $S^* U$ by its action on positively $k$-homogeneous continuous functions $f: U \times (\mathbb R^{d+1}\setminus\{0\}) \to \mathbb R$.
\end{enumerate}
\end{definition}

The main result of the paper is the following theorem:
\begin{theorem}\label{thm:main}
Burnett's conjecture (Conjecture~\ref{conj:Burnett}) is true under Assumption~\ref{ass:main}.

More precisely, under (1)--(4) in Assumption~\ref{ass:main}, define $\mu$ by
\begin{equation}\label{eq:mu.def}
\mu= g_0^{\alpha\rho}g_0^{\beta \sigma}(\frac{1}{4}\mu_{\rho \beta \alpha \sigma} -\frac{1}{8}\mu_{\rho \alpha \beta \sigma}),
\end{equation}
where $\mu_{\alp\bt\sigma\rho}$ is a Radon measure\footnote{Here, the measure acts on continuous functions which are positively $2$-homogeneous in $\xi$; see Definition~\ref{def:measure}.} on $S^* U$ defined\footnote{The well-definedness of $\mu_{\alp\bt\sigma\rho}$ follows from the works of Tartar \cite{Tartar} and Gerald \cite{Gerard}; see Remark~\ref{rmk:existence.of.mu} in Section~\ref{sec:mdm}.} so that for $h = g_n - g_0$ and after passing to a subsequence (still labelled by $n$), the following holds for all $0$-th order pseudo-differential operator $A$ with principal symbol $a(x,\xi)$ which is real and $0$-positively homogeneous in $\xi$:
$$\lim_{n \to \infty} \la \rd_\gamma h_{\alp\bt}, A \rd_\de h_{\rho\sigma} \ra_{L^2(\mathrm{dVol}_{g_0})} = \int_{S^* U} a \xi_\gamma \xi_\de \, \ud \mu_{\alp\bt\rho\sigma}.$$

Then $(U,g_0,\mu)$ satisfies the Einstein--massless Vlasov system, namely, the following all hold: 
\begin{enumerate}
\item ($\mu$ is a massless field, i.e., $\mu$ is supported on the light cone) $$g_0^{\mu\nu} \xi_\mu\xi_\nu \, \ud \mu \equiv 0.$$
\item (The Einstein equation holds) $$\int_{U} \psi \mathrm{Ric}_{\mu \nu}(g_0) \, \mathrm{dVol}_{g_0} = \int_{S^* U} \psi \xi_\mu \xi_\nu \ud \mu,\quad \forall \psi \in C^\infty_c(U).$$
\item ($\mu$ is real-valued and non-negative) $$\int_{S^* U} f \, \ud \mu \geq 0\quad \hbox{$\forall$ continuous positively $2$-homogeneous $f: S^* U\to [0,\infty)$}. $$
\item ($\mu$ satisfies the Vlasov equation) For any smooth $\widetilde{a}:S^*U \to \mathbb R$ which is spatially compactly supported in $U$ and positively $1$-homogeneous in $\xi$:
\begin{equation}\label{eq:transport.in.thm}
\int_{S^*U} \{g_0^{\mu\nu} \xi_\mu\xi_\nu, \widetilde{a}(x,\xi) \}  \, \ud \mu = 0,
\end{equation}
where $\{ \cdot, \cdot \}$ denotes the Poisson bracket $\{f,h\} \doteq \rd_{\xi_\mu} f \rd_{x^\mu} h - \rd_{x^\mu} f \rd_{\xi_\mu} h.$
\end{enumerate}
\end{theorem}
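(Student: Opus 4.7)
The starting point is the classical identity
\begin{equation*}
\mathrm{Ric}_{\mu\nu}(g) \;=\; -\tfrac{1}{2}g^{\alp\bt}\rd^2_{\alp\bt}g_{\mu\nu} \;+\; \tfrac{1}{2}\bigl(g_{\mu\alp}\rd_\nu H^\alp + g_{\nu\alp}\rd_\mu H^\alp\bigr) \;+\; \q{P}_{\mu\nu}(g,\rd g),
\end{equation*}
where $H^\alp = (g^{-1})^{\kappa\lambda}\Gamma^\alp_{\kappa\lambda}(g)$ and $\q{P}_{\mu\nu}$ is quadratic in $\rd g$ with coefficients rational in $g,g^{-1}$.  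Applying it to $g_n$ and $g_0$, subtracting, and using $\mathrm{Ric}(g_n)=0$ with Assumption~\ref{ass:main}(3)--(4), I would derive for $h := g_n - g_0$ a wave equation of the schematic form
\begin{equation*}
\tfrac{1}{2}\Box_{g_0} h_{\mu\nu} \;=\; \mathrm{Ric}_{\mu\nu}(g_0) \;+\; \bigl[\q{P}_{\mu\nu}(g_n,\rd g_n) - \q{P}_{\mu\nu}(g_0,\rd g_0)\bigr] \;+\; R^{(n)}_{\mu\nu},
\end{equation*}
in which the left side is uniformly bounded in $L^\infty_{\mathrm{loc}}$, while the remainder $R^{(n)}_{\mu\nu}$ (collecting the wave-gauge terms involving $\rd(H^\alp_n-H^\alp_0)$ and a term schematically $(g_n^{-1}-g_0^{-1})\rd^2 g_n$) vanishes against smooth compactly supported test functions as $n\to\infty$, either by uniform convergence or after a single integration by parts.

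\smallskip

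For part~(2), I would pair the displayed identity with $\psi\,\mathrm{dVol}_{g_0}$, $\psi\in C^\infty_c(U)$.  After two integrations by parts onto $\psi$, the $\Box_{g_0} h$ term tends to zero because $h\to 0$ uniformly; any remaining term linear in $\rd h$ with $L^\infty$-strongly convergent coefficient vanishes by $\rd h\wtend 0$.  The bracket expands as $C_{\mu\nu}^{\alp\bt\rho\sigma,\gamma\de}(g_0)\,(\rd_\gamma h_{\alp\bt})(\rd_\de h_{\rho\sigma})$ plus negligible pieces, and by definition of $\mu_{\alp\bt\rho\sigma}$ its limit equals $\int_{S^*U}\psi\,C_{\mu\nu}^{\alp\bt\rho\sigma,\gamma\de}(g_0)\,\xi_\gamma\xi_\de\,\ud\mu_{\alp\bt\rho\sigma}$.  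Exploiting the index symmetries of $\mu_{\alp\bt\rho\sigma}$ together with the generalized wave-coordinate condition (which, promoted to the mdm level via $|H^\alp_n - H^\alp_0|\ls\lambda_n^\eta$, suppresses $g_0^{\alp\bt}\xi_\bt\,\ud\mu_{\alp\gamma\rho\sigma}$-type contractions), this should collapse to $\xi_\mu\xi_\nu\,\ud\mu$ with $\mu$ exactly as in \eqref{eq:mu.def}, giving~(2).  Part~(1) is then obtained by the Tartar--G\'erard principle: since $\Box_{g_0} h_n$ remains bounded while $\rd^2 h_n$ oscillates at frequency $\lambda_n^{-1}$, the principal symbol $g_0^{\alp\bt}\xi_\alp\xi_\bt$ of $\Box_{g_0}$ annihilates the mdm $\xi_\gamma\xi_\de\,\ud\mu_{\alp\bt\rho\sigma}$; equivalently, this is the tracelessness $g_0^{\mu\nu}\xi_\mu\xi_\nu\,\ud\mu\equiv 0$ forced by taking the $g_0$-trace of~(2).

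\smallskip

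For~(3), I would pass to a $g_0$-orthonormal frame and recognise the specific contraction in \eqref{eq:mu.def} as (a constant multiple of) the squared Hilbert--Schmidt norm of the trace-reversed perturbation $\bar h_{\mu\nu} := h_{\mu\nu} - \tfrac{1}{2}g_{0\,\mu\nu}\mathrm{tr}_{g_0}h$, modulo gauge directions already killed by~(1); positivity on the mdm level then follows from the elementary non-negativity of diagonal defect measures.  For~(4), a commutator argument: for any self-adjoint pseudodifferential operator $\mathrm{Op}(\widetilde a)$ of order~$1$ with symbol $\widetilde a$, one has $[\Box_{g_0},\mathrm{Op}(\widetilde a)] = i\,\mathrm{Op}(\{g_0^{\mu\nu}\xi_\mu\xi_\nu,\widetilde a\}) + (\text{order one})$.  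Testing $\la\Box_{g_0} h_{\mu\nu}, \mathrm{Op}(\widetilde a)\, h^{\mu\nu}\ra$ in two ways and passing to the limit, the principal contribution from the commutator produces precisely \eqref{eq:transport.in.thm}, while the leftover terms are $o(1)$ thanks to the bound $|\rd^2 g_n|\ls\lambda_n^{-1}$ -- this is exactly where Assumption~\ref{ass:main}(3) is needed, consistent with Remark~\ref{rmk:improved.assumption}(2).

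\smallskip

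The main obstacle I anticipate is the algebraic identification in~(2): the quadratic expansion of $\q{P}_{\mu\nu}$ produces many apparently independent contractions of $\mu_{\alp\bt\rho\sigma}$, and reducing them -- after invoking the wave-gauge vanishings and the intrinsic symmetries of $\mu_{\alp\bt\rho\sigma}$ -- to the single rank-one tensor $\xi_\mu\xi_\nu\,\ud\mu$ with the precise coefficients $\tfrac{1}{4},\,-\tfrac{1}{8}$ of \eqref{eq:mu.def} is a delicate tensorial computation, amounting to a measure-theoretic analogue of Isaacson's stress-energy calculation for high-frequency gravitational perturbations; tracking which pieces of the gauge condition must be used at each stage, and in what order, is the combinatorial heart of the argument.
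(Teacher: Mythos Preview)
Your outline for parts (1) and (2) is broadly aligned with the paper's approach, and your identification of the ``algebraic heart'' for (2) is apt---indeed the paper isolates the non-null-form part $P_{\mu\nu}$ of the quadratic terms (Lemma~\ref{lem:Ric.B}) as the sole surviving contribution to $\mathrm{Ric}(g_0)$.

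For part (3), your sketch is too coarse. A $g_0$-orthonormal frame is not enough: the paper constructs a \emph{$\xi$-dependent null frame} $\{L^{(\xi)},\underline{L}^{(\xi)},\slashed{e}_B^{(\xi)}\}$ adapted to each null covector $\xi$ on $\mathrm{supp}(\mu)$. The wave-coordinate identity \eqref{eq:wave.coord.for.MLDM}, contracted along $L^{(\xi)}$, $\slashed{e}_A^{(\xi)}$, and $\underline{L}^{(\xi)}$, kills all but the purely transverse components and additionally forces the transverse trace to vanish (\eqref{eq:mu.cancel.in.frame}--\eqref{eq:mu.cancel.in.frame.2}), yielding $\mu = \tfrac14\delta^{BB'}\delta^{CC'}\mu_{\slashed{B}\slashed{C}\slashed{B}'\slashed{C}'}$, which is then manifestly non-negative. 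Your ``trace-reversed $\bar h$ in an orthonormal frame'' does not capture this, because the signature is Lorentzian and the indefinite directions must be removed using the null structure of $\xi$, not just the metric.

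The serious gap is part (4). Your commutator argument correctly produces the transport term, but the assertion that ``the leftover terms are $o(1)$ thanks to the bound $|\rd^2 g_n|\ls\lambda_n^{-1}$'' is precisely where the paper's main work lies. After substituting the equation for $\Box_{g_0}h$ (Proposition~\ref{prop:h.eqn}) into the propagation identity, one must control \emph{trilinear} expressions of the form $\la \rd h, A(\rd h\,\rd h)\ra$ with $A\in\Psi^0$. These do \emph{not} vanish by size considerations alone. The paper handles them as follows: (i) the null-form terms $Q_{\mu\nu}$ require a genuine trilinear compensated compactness estimate, proved via the Ionescu--Pasauder normal-form bounds (Section~\ref{sec:trilinear}, in particular Proposition~\ref{prop:main.trilinear}, which involves freezing coefficients, Littlewood--Paley decomposition, and angular localisation); (ii) the $P_{\mu\nu}$ terms violate the classical null condition and only acquire a \emph{hidden} null structure after swapping derivatives and invoking the wave-coordinate condition inside the trilinear pairing (Proposition~\ref{prop:quadratic}); (iii) the quasilinear term $g_0^{\mu\mu'}g_0^{\nu\nu'}h_{\mu\nu}\rd^2_{\mu'\nu'}h$ requires a three-way frequency decomposition of $h_{\mu\nu}$ into low-frequency, elliptic-region, and near-light-cone pieces (Definition~\ref{def:freq.decomposition}), each treated by a different mechanism---Calder\'on commutators, elliptic gain, and a further hidden null structure via an anti-$\rd_t$-derivative $\mathfrak k$ (Lemma~\ref{lem:mfh3}, Proposition~\ref{prop:quasilinear.3}); (iv) the linear terms $L_{\mu\nu}$ do not vanish but cancel \emph{exactly} against a Christoffel-type term in the transport identity (Proposition~\ref{prop:linear}). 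None of these mechanisms is visible in your proposal, and the second-derivative bound by itself cannot substitute for them.
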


The proof of the four parts of Theorem~\ref{thm:main} can be found in Proposition~\ref{prop:supp.null.cone}, Proposition~\ref{prop:Ricci.limit}, Proposition~\ref{prop:positive} and Theorem~\ref{thm:main.transport}, respectively.

\begin{rk}
Theorem~\ref{thm:main} includes as a special case when the classical wave coordinate condition is satisfied, i.e., when $H_n, \, H_0 \equiv 0$. We remark that one motivation to consider generalized wave coordinates instead of just wave coordinates is that the former seems to come up naturally in the construction of high-frequency limits; see \cite{Touati2, touati2024reverse}. 
\end{rk}

\begin{rk}
Notice that Theorem~\ref{thm:main} is more specific than Conjecture~\ref{conj:Burnett} in the sense that we give an explicit construction of the massless Vlasov field as the microlocal defect measure.
\end{rk}

\subsection{Ideas of the proof}

The broad strategy follows our previous work \cite{HL.Burnett}: we first show that the failure of strong convergence can be captured by a suitably defined microlocal defect measure and then show that this microlocal defect measure has the necessary properties, including satisfying the transport equation in \eqref{eq:transport.in.thm}. The proof of the transport equation is the hardest part and can be viewed as a question about compensated compactness. It is well known that the microlocal defect measure associated to a sequence of solutions to any \emph{linear} wave equation (with regularity consistent with our main theorem) must satisfy \eqref{eq:transport.in.thm}; see for instance \cite{Francfort, FrancfortMurat, Tartar}. Thus the main difficulty in our case is to show that there is a compensation phenomenon that allows for the microlocal defect measure to still satisfy \eqref{eq:transport.in.thm} even in the presence of additional linear and nonlinear terms in the Einstein equations.

\subsubsection{Ricci curvature of the limit}
Our work relies heavily on the structure of the Einstein vacuum equations in generalized wave coordinates. As is known from the work of Lindblad--Rodnianski \cite{LinRod.WN}, the nonlinearity of the Einstein vacuum equations in generalized wave coordinates contains both terms that obey the classical null condition and terms that violate the classical null condition. Denoting $h = g_n - g_0$, we have (see \eqref{eq:h.eqn})
\begin{equation}\label{eq:h.eqn.intro}
\begin{split}
-2 \mathrm{Ric}(g_0)_{\mu\nu} = &\: \Box_{g_0} h_{\mu\nu} + P_{\mu\nu}(g_0)(\rd h,\rd h)+ g_0^{\alp\alp'}g_0^{\bt\bt'} h_{\alp'\bt'} \rd^2_{\alp\bt} h_{\mu\nu} \\
&\: + \hbox{linear terms} + \hbox{null condition terms} + \hbox{better terms},
\end{split}
\end{equation}
where $P_{\mu\nu}$ is a quadratic nonlinearity violating the classical null condition; see \eqref{eq:P.def}. 

To understand whether $\mathrm{Ric}(g_0)_{\mu\nu} = 0$, we need to understand the weak limit of the right-hand side of \eqref{eq:h.eqn.intro} as $n\to \infty$. The linear terms obviously converge weakly to $0$; the null condition terms also converge weakly to $0$ due to compensation compactness. It turns out that the main quasilinear term $g_0^{\alp\alp'}g_0^{\bt\bt'} h_{\alp'\bt'} \rd^2_{\alp\bt} h_{\mu\nu}$ has a hidden null structure which makes it also converge weakly to $0$ (see \eqref{eq:quasilinear.hidden.null.structure}). Thus the only term that does not converge weakly to $0$ is $P_{\mu\nu}(g_0)(\rd h,\rd h)$. The microlocal defect measure $\mu$ is defined so that we exactly have
$$\int_U \psi \mathrm{Ric}_{\alp\bt}(g_0) \, \mathrm{dVol}_{g_0} = - \f 12 \int_U \psi P_{\alp\bt}(g_0)(\rd h, \rd h) \, \mathrm{dVol}_{g_0} = \int_{S^*U} \psi \xi_\alp \xi_\bt \, \ud \mu.$$

\subsubsection{Non-negativity of the measure $\mu$}
It is not difficult to show, essentially using boundedness of $\Box_{g_0} h$ that $\mu$ is supported on the null cone. However, $\mu$ defined in Theorem~\ref{thm:main} does not seem to be obviously non-negative. In order to show this, we introduce a $\xi$-dependent null frame and only reveal the non-negativity after expanding the metric with respect to this frame. The computations are inspired in part by \cite[(3.17)]{Touati2}. See Proposition~\ref{prop:positive} for details.

\subsubsection{Propagation equation for $\mu$}
We now turn to ideas for proving that the transport equation \eqref{eq:transport.in.thm} holds for $\mu$. As pointed out above, since we know that microlocal defect measures associated to linear wave equations obey \eqref{eq:transport.in.thm}, we need to be able to deal with the extra terms in \eqref{eq:h.eqn.intro}, now viewed as an equation for $\Box_{g_0} h$. We have three main challenges: the linear terms, the semilinear terms and the quasilinear terms in \eqref{eq:h.eqn.intro}. 

For the linear terms, we need an \emph{exact} cancellation, which indeed holds true and can be verified with explicit computations. 

For the nonlinear (semilinear and quasilinear) terms, we need to control trilinear terms of the type.
\begin{equation}\label{eq:intro.trilinear}
\la \rd h, A (\rd h \rd h) \ra,
\end{equation}
where $A$ is a $0$-th order pseudo-differential operator. In general, terms of the form \eqref{eq:intro.trilinear} need not converge to $0$. However, it turns out that if there is a null form, e.g., if $\mathfrak Q$ is one of the classical null forms, then using also that $h$ satisfies wave equations, it can be shown that $\la \rd h, A (\mathfrak Q(\rd h, \rd h)) \ra \to 0$. After reducing to a constant coefficient problem by freezing coefficients, we prove this with the help of a trilinear estimate by Ionescu--Pasauder \cite{IP}; see Proposition~\ref{prop:main.trilinear}. In particular, this allows us to treat the null condition terms in \eqref{eq:h.eqn.intro}.

However, the $P$ term in \eqref{eq:h.eqn.intro} violates the classical null condition. Nonetheless, there is a much more subtle trilinear structure, meaning that after suitable algebraic manipulations using the generalized wave coordinate condition, one can reveal a hidden null structure and can deal with this term using ideas in the paragraph above. We notice that this structure is present only when we consider the propagation equation for $\mu$ in \eqref{eq:mu.def},  and does not hold for the individual $\mu_{\alp\bt\rho\sigma}$!

An additional interesting challenge comes from the quasilinear terms $g_0^{\mu\mu'} g_0^{\nu\nu'} h_{\mu\nu} \rd_{\mu'\nu'}^2 h_{\alp'\bt'}$. Here, we need to decompose $h_{\mu\nu}$ into $h_{\mu\nu} = \sum_{i=1}^3 \mfh^{(i)}_{\mu\nu}$ according to their frequencies (see precise decomposition in Definition~\ref{def:freq.decomposition}). The term $\mfh^{(1)}_{\mu\nu}$ has low frequency, so $\rd \mfh^{(1)}_{\mu\nu}$ is $o(1)$. This can then be exploited using the Calder\'on commutator theorem in a similar manner as \cite{HL.Burnett}. The term $\mfh^{(2)}_{\mu\nu}$ has high frequency, but the frequency lives away from the light cone of $g_0$. In this frequency regime, the wave operator $\Box_{g_0}$ is elliptic, and thus the control of $\Box_{g_0} h$ that we get from the equation gives very strong compactness. Finally, the term $\mfh^{(3)}_{\mu\nu}$ has high frequency with the frequency possibly close to the light cone. Here is the most delicate case, but it can be dealt with after noting that there is a secret trilinear null structure if we use the generalized wave coordinate condition. The structure is a bit similar to the $P$ terms discussed above.

\subsection{Related works}

We discuss a short sample of related works. We refer the reader to our recent survey \cite{HL.survey} for further references. 

\subsubsection{Physics literature}\label{sec:physics}

The study of high-frequency spacetimes with the type of scaling we consider here was initiated in the works of Isaacson \cite{Isaacson1, Isaacson2}. We refer the reader also to \cite{MTW} for related discussions, There were various construction of high-frequency approximate solutions; see for instance \cite{AliHunt, CB.HF, MacCallumTaub}. There were also constructions of explicit solutions in symmetry classes; see for instance \cite{GW2, pHtF93, SGWK, SW}. 
In the work \cite{Burnett}, Burnett reformulated this type of considerations in terms of weak limits and proposed Conjecture~\ref{conj:Burnett}.  Finally, we mention \cite{GW1}, which discusses the implications of this type of high-frequency limits in the context of inhomogeneities in cosmology. 

\subsubsection{Related results on Burnett's conjecture} The first result in the spirit of Conjecture~\ref{conj:Burnett} is the work of Green--Wald \cite{GW1} which shows that under the assumptions of Conjecture~\ref{conj:Burnett}, the limiting stress-energy-momentum tensor is traceless and satisfies the weak energy condition. 

Under a codimensional $2$ symmetry assumption, compactness questions as in Conjecture~\ref{conj:Burnett} was studied in \cite{LeFLeF2019, LeFLeF2020}. More generally, a complete characterization of high-frequency limits was obtained by Luk--Rodnianski in \cite{LR.HF} under angular regularity assumptions, where they consider a setting where the manifold is given by $U^2 \times S^{2}$, but instead of exact symmetries along $S^{2}$ they only required high regularity along directions of $S^{2}$.

Beyond the angularly regular settings, we proved Conjecture~\ref{conj:Burnett} in \cite{HL.Burnett} for $\mathbb U(1)$ symmetric solutions under an elliptic gauge condition. In particular, we gave a formulation of Conjecture~\ref{conj:Burnett} in terms of microlocal defect measures of \cite{Gerard, Tartar}; it is also within the same framework that we discuss Conjecture~\ref{conj:Burnett} here. See also the subsequent work \cite{GuerraTeixeira} by Guerra--Teixeira da Costa in the same setting with a slightly different treatment of the time-dominated frequency regime.  

\subsubsection{Construction of high-frequency limits} 
There is a dual problem to Conjecture~\ref{conj:Burnett}, namely to construct sequences of high-frequency vacuum spacetimes whose limit solves the Einstein--massless Vlasov system with a non-trivial Vlasov field. Beyond the explicit solutions mentioned in Section~\ref{sec:physics}, the first constructions were given in our \cite{HL.HF} where we imposed a polarized $\mathbb U(1)$ symmetry (see also our forthcoming \cite{HL.Vlasov}). Other constructions include those in symmetry classes \cite{Lott1, Lott3, Lott2, LeFLeF2019, LeFLeF2020}, as well as much more generally, constructions which require regularity along angular directions but without exact symmetries \cite{LR.HF}.

We particularly draw attention to the works of Touati \cite{Touati2, touati2024reverse} (see also \cite{Touaticontraintes}), where the construction is carried out in generalized wave coordinates, consistent with the general framework of the present paper. His works can be viewed as a justification and generalization of the considerations in the pioneering work \cite{CB.HF}.

\subsection{Outline of the paper} The remainder of the paper will be organized as follows. 

We first establish some important analytic and algebraic facts which are central to our proof. In \textbf{Section~\ref{sec:trilinear}}, we prove some trilinear estimates for null forms. In \textbf{Section~\ref{sec:EE}}, we analyze the algebraic structure of the Einstein equations.

We then turn to the proof of Theorem~\ref{thm:main}. In \textbf{Section~\ref{sec:mdm}}, we prove the first three parts of the theorem. Part (4) of the theorem, which concerns the propagation equation of the Vlasov measure is the most difficult part and is proven in \textbf{Section~\ref{sec:propagation}}.

Some proofs of the results from Section~\ref{sec:EE} are given in \textbf{Appendix~\ref{sec:appendix}}.

\subsection*{Acknowledgements} We thank John Anderson and Arthur Touati for helpful discussions. J.~Luk gratefully acknowledges the support by a Terman fellowship and the NSF grant DMS-2304445.

	\section{Trilinear compensated compactness for null forms}\label{sec:trilinear}
	
	The goal of this section is to establish estimates corresponding to the trilinear compensated compactness for null forms. As pointed out already in the introduction, we rely on the estimates proven by Ionescu--Pasauder \cite{IP}. After introducing some notations in \textbf{Section~\ref{sec:trilinear.notation}}, we will first deal with the simpler $\mathfrak Q_0$ case in \textbf{Section~\ref{sec:Q0}} and then turn to the harder case for  $\mathfrak Q_{\mu\nu}$ in \textbf{Section~\ref{sec:Qmunu}}.
	
	\subsection{Notations}\label{sec:trilinear.notation}
	
	\begin{definition}\label{def:X.norm}
	For $p \in [1,\infty]$ and $\lambda \geq 2$, define the $(\lambda,g_0)$-dependent norm $X^p_\lambda(g_0)$ by
	$$\| f\|_{X^p_\lambda(g_0)} \doteq \sum_{k=0}^2 \lambda^{-1+k} \| \rd f \|_{L^p} + \| \Box_{g_0} f \|_{L^p}.$$
	\end{definition}
	
	\begin{rk}
	The reason that we use the $X^p_\lambda(g_0)$ norm is that by the assumptions of $g_n - g_0$ in Assumption~\ref{ass:main} and the equation that it satisfies (see Proposition~\ref{prop:h.eqn} below), $\| g_n - g_0 \|_{X^p_\lambda(g_0)} \ls 1$ for every $p \in [1,\infty]$, uniformly as $\lambda \to 0$.
	\end{rk}

	\subsection{The $\mathfrak Q_0^{(g)}$ null forms}\label{sec:Q0}
	\begin{df}\label{def:Q0}
	Given a Lorentzian metric $g$, define the null form $\mathfrak Q_0^{(g)}$ by
	\begin{equation}
	\mathfrak Q^{(g)}_0(\phi, \psi) \doteq (g^{-1})^{\alp\bt} \rd_\alp \phi \rd_\bt \psi.
	\end{equation}
	\end{df}
	
	The following estimate can be proven easily with integration by parts (cf.~\cite[Proposition~12.2]{HL.Burnett}, \cite[Lemma~3.6]{GuerraTeixeira}).
	\begin{proposition}\label{prop:stupid.trilinear}
	Suppose $g$ is a smooth Lorentzian metric on an open $U\subset R^{d+1}$ and $\{\phi^{(i)}\}_{i=1,2,3} \subset C_c^\infty(U;\mathbb C)$ are supported in a fixed compact set $K \subset U$.
	
	Then for every $f \in C^\infty_c$
	$$\Big| \la f \rd_\alp \phi^{(3)}, \mathfrak Q_{0}^{(g)}(\phi^{(1)}, \phi^{(2)}) \ra_{L^2(\mathrm{dVol}_{g})} \Big| \ls \lambda \| \phi^{(1)}  \|_{X^\infty_\lambda(g)} \| \phi^{(2)}  \|_{X^2_\lambda(g)} \| \phi^{(3)}  \|_{X^2_\lambda(g)}, $$
	where the implicit constant may depend on $f$, $g$ and $K$, but is independent of $\lambda$ and $(\phi^{(1)}, \phi^{(2)}, \phi^{(3)})$.
	\end{proposition}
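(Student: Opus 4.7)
The plan is to reduce the trilinear quantity to terms that can be estimated directly in the $X^p_\lambda$ norms by first eliminating the null form in favour of the scalar wave operator. The starting point is the algebraic identity
$$2\mathfrak{Q}^{(g)}_0(\phi,\psi) \;=\; \Box_g(\phi\psi) \;-\; \phi\,\Box_g\psi \;-\; \psi\,\Box_g\phi,$$
which is an immediate consequence of the product rule for $\Box_g = \tfrac{1}{\sqrt{|g|}}\rd_\mu(\sqrt{|g|}(g^{-1})^{\mu\nu}\rd_\nu)$. This decomposition is the only input from the null structure, and it splits the inner product into three pieces.

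For the two ``diagonal'' pieces $\la f\rd_\alp\phi^{(3)},\phi^{(i)}\Box_g\phi^{(j)}\ra_{L^2(\mathrm{dVol}_g)}$ (with $\{i,j\}=\{1,2\}$), I would simply place $\phi^{(1)}$ in $L^\infty$, $\rd_\alp\phi^{(3)}$ in $L^2$, and $\Box_g\phi^{(j)}$ in $L^2$. The single gain of $\lambda$ comes from bounding $\|\phi^{(1)}\|_{L^\infty}$: since $\phi^{(1)}$ is supported in the fixed compact $K$, the fundamental theorem of calculus yields $\|\phi^{(1)}\|_{L^\infty}\ls_K\|\rd\phi^{(1)}\|_{L^\infty}$, and the definition of $X^\infty_\lambda(g)$ gives $\|\rd\phi^{(1)}\|_{L^\infty}\ls\lambda\|\phi^{(1)}\|_{X^\infty_\lambda(g)}$; together with the direct bounds $\|\rd\phi^{(3)}\|_{L^2}\ls\|\phi^{(3)}\|_{X^2_\lambda(g)}$ and $\|\Box_g\phi^{(j)}\|_{L^2}\le\|\phi^{(j)}\|_{X^2_\lambda(g)}$, this yields the desired estimate.

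The remaining piece $\la f\rd_\alp\phi^{(3)},\Box_g(\phi^{(1)}\phi^{(2)})\ra_{L^2(\mathrm{dVol}_g)}$ is handled by integration by parts. Since $\Box_g$ is self-adjoint with respect to $\mathrm{dVol}_g$ and all functions are compactly supported in $U$, I would move the wave operator onto $f\rd_\alp\phi^{(3)}$ with no boundary terms, obtaining $\la\Box_g(f\rd_\alp\phi^{(3)}),\phi^{(1)}\phi^{(2)}\ra$. Expanding the Leibniz rule for $\Box_g$ gives one ``main'' summand $f\,\Box_g\rd_\alp\phi^{(3)}$ together with summands of the form $\Box_g(f)\cdot\rd_\alp\phi^{(3)}$ and $(g^{-1})^{\mu\nu}\rd_\mu f\cdot\rd_\nu\rd_\alp\phi^{(3)}$; these latter terms are bounded directly using $\|\rd\phi^{(3)}\|_{L^2}\le\|\phi^{(3)}\|_{X^2_\lambda(g)}$ or $\|\rd^2\phi^{(3)}\|_{L^2}\le\lambda^{-1}\|\phi^{(3)}\|_{X^2_\lambda(g)}$, paired with $\|\phi^{(1)}\|_{L^\infty}\cdot\|\phi^{(2)}\|_{L^2}$, both of which individually produce a factor of $\lambda$ by the compact-support argument above. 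For the main summand I would commute $\Box_g$ with $\rd_\alp$ (the commutator being a second-order operator in $\phi^{(3)}$ with coefficients in $\rd g$, $\rd^2 g$, treated exactly as the error terms just described) and then integrate $\rd_\alp$ by parts once more against $f\phi^{(1)}\phi^{(2)}$. Every surviving term then has the shape $\la\Box_g\phi^{(3)},(\rd_\alp X)\,Y\ra$ with $\{X,Y\}\subset\{f,\phi^{(1)},\phi^{(2)}\}$ and is bounded by $\|\Box_g\phi^{(3)}\|_{L^2}\le\|\phi^{(3)}\|_{X^2_\lambda(g)}$ together with the earlier $L^\infty/L^2$ bounds on $\phi^{(1)},\phi^{(2)}$.

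No step is conceptually difficult; the only bookkeeping concern is verifying that each resulting trilinear term carries the overall factor of $\lambda$ demanded by the statement. This works because the null identity has effectively converted ``$\rd\phi^{(1)}\cdot\rd\phi^{(2)}$'' into terms involving $\Box_g\phi^{(i)}$ or an undifferentiated compactly supported $\phi^{(i)}$, each of which gives the single required gain of $\lambda$ when measured in $X^p_\lambda(g)$.
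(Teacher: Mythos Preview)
Your approach is essentially the paper's: use $2\mathfrak{Q}_0^{(g)}(\phi,\psi)=\Box_g(\phi\psi)-\phi\Box_g\psi-\psi\Box_g\phi$, estimate the last two pieces directly, and for the first move $\Box_g$ (and then $\rd_\alp$) across by integration by parts. One small slip: under the intended norm $\|f\|_{X^p_\lambda}=\sum_{k=0}^2\lambda^{-1+k}\|\rd^k f\|_{L^p}+\|\Box_g f\|_{L^p}$, the $k=1$ term gives only $\|\rd\phi^{(1)}\|_{L^\infty}\le\|\phi^{(1)}\|_{X^\infty_\lambda}$ (no extra $\lambda$), so your Poincar\'e/FTC detour does not produce the gain you claim; the bound you actually need, $\|\phi^{(1)}\|_{L^\infty}\le\lambda\|\phi^{(1)}\|_{X^\infty_\lambda}$ (and likewise $\|\phi^{(2)}\|_{L^2}\le\lambda\|\phi^{(2)}\|_{X^2_\lambda}$), is immediate from the $k=0$ term, and with that fix every estimate in your outline goes through exactly as in the paper.
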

	\begin{proof}
	The key is to write 
	$$f \rd_\alp \phi^{(3)} \mathfrak Q_{0}^{(g)}(\phi^{(1)}, \phi^{(2)})  =  \f 12 f \rd_\alp \phi^{(3)} \Big( \Box_{g_0} (\phi^{(1)} \phi^{(2)}) -  \phi^{(1)} \Box_{g_0} \phi^{(2)} - \phi^{(2)} \Box_{g_0} \phi^{(1)} \Big).$$
	The integral of the last two terms is bounded by 
	$$\|\rd_\alp \phi^{(3)}\|_{L^2} \|\phi^{(1)} \|_{L^\i} \|\Box_{g_0} \phi^{(2)} \|_{L^2}\quad \hbox{and} \quad \|\rd_\alp \phi^{(3)}\|_{L^2} \|\phi^{(2)} \|_{L^2} \|\Box_{g_0} \phi^{(1)} \|_{L^\i},$$ respectively, and are both acceptable by Definition~\ref{def:X.norm}. For the remaining term, we integrate by parts. The main contribution can be bounded by $\|\Box_{g_0} \phi^{(3)}\|_{L^2} \|\rd_\alp \phi^{(1)} \|_{L^\i} \|\phi^{(2)} \|_{L^2}$ or $\|\Box_{g_0} \phi^{(3)}\|_{L^2} \| \phi^{(1)} \|_{L^\i} \|\rd_\alp \phi^{(2)} \|_{L^2}$ and are therefore acceptable as before. The error terms arising from differentiating $f$ in the process of the integration by parts are better. \qedhere
	\end{proof}
	
	\subsection{The $\mathfrak Q_{\mu\nu}$ null forms and the Ionescu--Pasauder estimate}\label{sec:Qmunu}
	\begin{df}\label{def:Qmunu}
	\begin{equation}
	\mathfrak Q_{\mu\nu} (\phi, \psi) \doteq \rd_\mu \phi \rd_\nu \psi - \rd_\nu \phi \rd_\mu \psi.
	\end{equation}
	\end{df}
	
	We have a result similar to Proposition~\ref{prop:stupid.trilinear}. Notice that we need a global smallness condition \eqref{eq:metric.smallness} and that the power of $\lambda$ that we gain is weaker.
	\begin{proposition}\label{prop:main.trilinear}
	Suppose $g$ is a smooth Lorentzian metric on an open set $U\subset \RR^{d+1}$ and $\{\phi^{(i)}\}_{i=1,2,3} \subset C_c^\infty(U;\mathbb C)$ are supported in a fixed compact set $K \subset U$. Assume, in addition, that $g$ is $C^0$ close to the Minkowski metric in the following sense:
	$$g = - \mathfrak N (\ud t -\mathfrak a_i \ud x^i)^2 + \mathfrak c_i (\ud x^i + \mathfrak b^i \ud t + \mathfrak B_{j}^{i} \ud x^j)^2,$$
	where $\mathfrak N$, $\mathfrak a_i$, $\mathfrak b^i$, $\mathfrak c_i$, $\mathfrak B_{j}^{i}$ are smooth functions satisfying	
	\begin{equation}\label{eq:metric.smallness}
	|\mathfrak N-1|,\, |\mathfrak c_i-1|,\, |\mathfrak a_i|,\, |\mathfrak b^i|,\, |\mathfrak B_{j}^{i}| \leq 10^{-5}.
	\end{equation}
	
	Then for every $f \in C^\infty(U;\mathbb C)$,
	$$\Big| \la f \rd_\alp \phi^{(3)}, \mathfrak Q_{\mu\nu}(\phi^{(1)}, \phi^{(2)}) \ra_{L^2(\mathrm{dVol}_{g})} \Big| \ls \lambda^{\f{1}{15}}  \| \phi^{(1)}  \|_{X^\infty_\lambda(g)} \| \phi^{(2)}  \|_{X^2_\lambda(g)} \| \phi^{(3)}  \|_{X^2_\lambda(g)}, $$
	where the implicit constant may depend on $f$, $g$ and $K$, but is independent of $\lambda$ and $(\phi^{(1)}, \phi^{(2)}, \phi^{(3)})$.
	\end{proposition}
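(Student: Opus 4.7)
Unlike $\mathfrak Q_0^{(g)}$, the null form $\mathfrak Q_{\mu\nu}$ admits no elementary algebraic identity that lets integration by parts produce a factor of $\Box_g$: writing $\mathfrak Q_{\mu\nu}(\phi^{(1)},\phi^{(2)}) = \rd_\mu(\phi^{(1)} \rd_\nu \phi^{(2)}) - \rd_\nu(\phi^{(1)} \rd_\mu \phi^{(2)})$ and integrating against $f\rd_\alp \phi^{(3)}$ only reshuffles the expression into $-\la f\phi^{(1)}, \mathfrak Q_{\mu\nu}(\rd_\alp \phi^{(3)}, \phi^{(2)}) \ra$ modulo errors where $\rd$ falls on $f$; this is still of size $O(1)$ after Cauchy--Schwarz and gives no improvement. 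The gain must therefore come from a harmonic-analytic argument sensitive to the Minkowski null cone, supplied by the trilinear estimate of Ionescu--Pasauder \cite{IP}. My plan has three steps: freeze coefficients to reduce to the flat background, split the inputs by modulation with respect to the Minkowski null cone, and apply \cite{IP} in the near-cone regime while using ellipticity of $\Box_m$ in the off-cone regime.

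\emph{Freezing coefficients.} Using \eqref{eq:metric.smallness}, I would introduce a smooth partition of unity on $U$ at a small scale $\ell$ (later chosen as a small positive power of $\lambda$), and on each cell replace $g(x)$ by its constant value $g(x_0)$ at the centre. A fixed linear change of coordinates then sends $g(x_0)$ to the Minkowski metric $m$. Because \eqref{eq:metric.smallness} holds with the very small constant $10^{-5}$, this change of variables is close to the identity, and the $X^p_\lambda(g)$ norms of $\phi^{(i)}$ are equivalent to the $X^p_\lambda(m)$ norms of their transports. The errors coming from localization, from commutators with $\rd$, and from the replacement $g(x) \mapsto g(x_0)$ all lose only a small power of $\lambda$ (controlled by the choice of $\ell$) and can be absorbed into the final exponent.

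\emph{Modulation splitting and the IP trilinear estimate.} On the flat background, the $X^p_\lambda(m)$ bounds effectively localize each $\phi^{(i)}$ near frequency $|\xi| \sim \lambda^{-1}$. For a small parameter $\delta > 0$ to be optimised, further split each input into an \emph{off-cone} part with Minkowski modulation $|\tau^2 - |\xi|^2| \gtrsim \lambda^{-2+2\delta}$ and a \emph{near-cone} remainder. In the off-cone regime, $\Box_m$ is elliptic with inverse of size $\lambda^{2-2\delta}$, so $\|\Box_m \phi^{(i)}\|_{L^2} \ls 1$ yields $\|\rd \phi^{(i)}_{\mathrm{off}}\|_{L^2} \ls \lambda^{1-2\delta}$; plugging this into the trilinear pairing via Cauchy--Schwarz gives a contribution $\ls \lambda^{1-2\delta}$, which is favourable for small $\delta$. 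For the remaining near-cone piece one applies the Ionescu--Pasauder trilinear estimate \cite{IP}. The key structural point is that the Fourier symbol of $\mathfrak Q_{\mu\nu}$ is the antisymmetric form $\xi^{(1)}_\mu \xi^{(2)}_\nu - \xi^{(1)}_\nu \xi^{(2)}_\mu$, which vanishes whenever the two input frequencies are parallel; hence the resonant interaction along the common null cone cancels, and \cite{IP} quantifies the resulting cancellation for inputs with modulation $\ls \lambda^{-2+2\delta}$, producing a near-cone bound $\ls \lambda^{\alpha(\delta)}$ with $\alpha(\delta) > 0$. Equating this near-cone gain with the off-cone gain $\lambda^{1-2\delta}$ and optimising $\delta$ yields the explicit exponent $\frac{1}{15}$.

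\emph{Main obstacle.} The delicate step is the near-cone application of \cite{IP}: one must adapt their estimate to inputs that are only \emph{forced} waves with $\|\Box_m \phi^{(i)}\|_{L^2}$ bounded (rather than free waves), and to the asymmetric norm structure where $\phi^{(1)}$ is controlled only in $L^\infty$ while $\phi^{(2)}, \phi^{(3)}$ are in $L^2$-based norms. Tracking the sharp dependence of the IP constants on the modulation cutoff, alongside the losses from coefficient freezing and partition-of-unity localisation, is exactly what pins down the exponent $1/15$; any loss in either regime would degrade this fraction.
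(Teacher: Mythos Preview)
Your high-level strategy---freeze coefficients at a small scale, then exploit the Ionescu--Pasauder trilinear estimate for the null form---is the same as the paper's, and your opening remark explaining why integration by parts alone cannot work is exactly right.

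The intermediate decomposition, however, differs from the paper's and your version has a genuine technical gap. You propose a \emph{spacetime modulation} split: project each input according to whether $|\tau^2-|\uxi|^2|$ is large or small, use ellipticity of $\Box_m$ off-cone, and invoke \cite{IP} near-cone. The paper instead (i) localises only in \emph{spatial} frequency to the band $|\uxi|\in[\lambda^{-1+a},\lambda^{-1-a}]$ with $a=\tfrac{1}{14}$, and (ii) applies the IP normal form (Proposition~\ref{prop:normal.form}/Corollary~\ref{cor:normal.form}) directly; that estimate already packages the near-/off-cone dichotomy, via an \emph{angular} parameter $\delta$ between the two input frequencies and a time-integration-by-parts producing the $\Box$ terms. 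The gap in your route is that modulation projections are Fourier multipliers with singular support on the cone and are \emph{not} bounded on $L^\infty$, so you cannot localise $\phi^{(1)}$ (which is only controlled in the $X^\infty_\lambda$ norm) off-cone by Plancherel as you do for $\phi^{(2)},\phi^{(3)}$. The paper's spatial-only Littlewood--Paley cutoffs avoid this because their kernels lie in $L^1$ and hence act on $L^\infty$. Your phrase ``apply IP to the near-cone piece'' also needs unpacking: the IP lemma is stated for angular cutoffs $\Xi_{\iota_1\iota_2}$, not modulation cutoffs, so you would still have to supply a bridge. Finally, the exponent $\tfrac{1}{15}$ in the paper comes from balancing the freezing scale $\lambda^{1/2}$, the spatial frequency window $a=\tfrac{1}{14}$, and the angular parameter $\delta=\lambda^{1/4-a/2}$; your sketch asserts the same output without the bookkeeping.
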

	
	In the constant coefficient case, this proposition is a direct result of the work of Ionescu--Pasauder (which in fact proves a slightly stronger result). We will first recall the estimate of Ionescu--Pasauder, and then adapt it to our case after freezing coefficients to reduce to the constant coefficient case.

	We introduce some conventions for Lemma~\ref{lem:IP.orig} and Propoosition~\ref{prop:normal.form}, which concern some estimates on the Minkowski spacetime. In Lemma~\ref{lem:IP.orig} and Propoosition~\ref{prop:normal.form}, we take the metric and the volume form to be that of Minkowski. Denote a point in Minkowski by $(t,\ux)$. We will use the \underline{spatial} Fourier transform denoted by 
	$$\ucalF(f)(\uxi) = \int_{\mathbb R^{d}} f(t,\ux) e^{-i\ux\cdot \uxi} \, \ud \ux,\quad \ucalF^{-1}(h)(\ux) = \f 1{(2\pi)^{d}} \int_{\mathbb R^{d}} e^{i\ux\cdot \uxi} h(\uxi) \, \ud \uxi.$$
	Let $P_k$ denote the standard Littlewood--Paley projection, but only in the spatial variables $\ux$. Define also as in \cite{IP} the following convention for angles
	\begin{equation}\label{eq:def.Xi}
	\Xi_{\iota_1 \iota_2}(\underline{\th},\ueta) = \sqrt{\f{2(|\underline{\th}||\ueta| - \iota_1 \iota_2 \underline{\theta}\cdot \ueta)}{|\underline{\theta}||\ueta|}}.
	\end{equation}
	The following lemma is from \cite{IP} up to extremely minor modifications.
	\begin{lemma}[Ionescu--Pasauder, Lemma~2.9 in \cite{IP}]\label{lem:IP.orig}
	Let $\iota, \iota_1, \iota_2 \in \{ +, -\}$, $b\leq 2$, $f,\,f_1,\, f_2 \in L^2(\mathbb R^3)$, and $k, \, k_1,\, k_2 \in \mathbb Z$. Let $m$ be one of the following symbols:
	\begin{equation}\label{eq:symbol.IP}
	m(\uth,\ueta)= \f{\uth_i \ueta_j - \uth_j \ueta_i}{|\uth| |\ueta|} \quad \hbox{or} \quad m(\uth,\ueta) = \f{\ueta_j}{|\ueta|} - \f{\uth_j}{|\uth|}.
	\end{equation}
	\begin{enumerate}
	\item Assume $\chi_1:\mathbb R^3 \to [0,1]$ is a smooth function supported in the ball $B_2$ and define
	$$L^b_{k,k_1,k_2} \doteq \int_{\mathbb R^3\times \mathbb R^3} m(\uxi-\ueta,\ueta)\chi_1(2^{-b}\Xi_{\iota_1 \iota_2}(\uxi-\ueta,\ueta)) \cdot \widehat{P_{k_1} f_1}(\uxi-\ueta) \widehat{P_{k_2} f_2}(\ueta) \widehat{P_{k} f}(\uxi) \, \ud \uxi \, \ud \ueta.$$
	Then 	
	\begin{equation}\label{eq:IP.L.est}
	|L^b_{k,k_1,k_2}| \ls 2^b (2^{k_1-k}+1) \| P_{k_1} f_1\|_{L^{\infty}} \| P_{k_2} f_2\|_{L^{2}}  \|P_{k} f\|_{L^{2}}.
	\end{equation}
	\item Assume that $\chi_{2}:\mathbb R^3\to [0,1]$ is a smooth function supported in $B_2 \setminus B_{1/2}$ and define
	$$M^b_{k,k_1,k_2} \doteq \int_{\mathbb R^3\times \mathbb R^3} m(\uxi-\ueta,\ueta) \f{\chi_2(2^{-b'} \Xi_{\iota_1 \iota_2}(\uxi-\ueta,\ueta) )}{\Phi_{\iota \iota_1 \iota_2}(\uxi,\ueta)} \widehat{P_{k_1} f_1}(\uxi-\ueta) \widehat{P_{k_2} f_2}(\ueta) \widehat{P_{k} f}(\uxi) \, \ud \uxi \, \ud \ueta, $$
	where $\Phi_{\iota \iota_1 \iota_2}(\uxi,\ueta)$ is the phase given by 
	$$\Phi_{\iota \iota_1 \iota_2}(\uxi,\ueta) \doteq \iota |\uxi| - \iota_1|\uxi - \ueta| - \iota_2|\ueta|.$$
	Then
	\begin{equation}\label{eq:IP.M.est}
	|M^b_{k,k_1,k_2}| \ls 2^{-b'} 2^{-\min\{k_1,k_2\}} (2^{k_1-k}+1) \| P_{k_1} f_1\|_{L^{\infty}} \| P_{k_2} f_2\|_{L^{2}}  \|P_{k} f\|_{L^{2}}.
	\end{equation}
	\end{enumerate}
	\end{lemma}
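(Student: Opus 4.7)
The plan is to follow Lemma 2.9 of Ionescu--Pasauder \cite{IP} essentially verbatim, with only cosmetic changes to absorb the placement of Littlewood--Paley projections. The central observation is that both symbols $m$ in \eqref{eq:symbol.IP} are null-form type: a direct computation shows $|m(\uth,\ueta)| \ls \Xi_{++}(\uth,\ueta)\Xi_{--}(\uth,\ueta)$, so that on the support of either cutoff $\chi_1(2^{-b}\Xi_{\iota_1\iota_2})$ or $\chi_2(2^{-b'}\Xi_{\iota_1\iota_2})$ one gets the pointwise symbol bounds $|m|\ls 2^b$ and $|m|\ls 2^{b'}$ respectively, since the complementary factor $\Xi_{-\iota_1\iota_2}$ is always $\ls 1$.

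For part (1), I would rewrite $L^b_{k,k_1,k_2}$ via Plancherel as the pairing
\[
L^b_{k,k_1,k_2} = \la P_k f,\, T_b(P_{k_1} f_1, P_{k_2} f_2) \ra_{L^2(\ud\ux)},
\]
where $T_b$ is the bilinear Fourier multiplier with symbol $m(\uth,\ueta)\chi_1(2^{-b}\Xi_{\iota_1\iota_2}(\uth,\ueta))$. A Cauchy--Schwarz in $\uxi$ reduces matters to the $L^\infty\times L^2 \to L^2$ bilinear estimate $\|T_b(P_{k_1}f_1,P_{k_2}f_2)\|_{L^2}\ls 2^b(2^{k_1-k}+1)\|P_{k_1}f_1\|_{L^\infty}\|P_{k_2}f_2\|_{L^2}$. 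The factor $2^b$ is precisely the symbol bound; the factor $(2^{k_1-k}+1)$ comes from an angular tube decomposition: the cutoff forces $\uxi-\ueta$ and $\ueta$ to be (anti)parallel up to angle $\sim 2^b$, which constrains $\uxi$ to a sector whose frequency volume yields this factor after invoking almost-orthogonality of the tube pieces.

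For part (2), the same bilinear reduction applies, now with symbol $m(\uth,\ueta)\chi_2(2^{-b'}\Xi_{\iota_1\iota_2})/\Phi_{\iota\iota_1\iota_2}(\uxi,\ueta)$. On the annular set $\Xi_{\iota_1\iota_2}\sim 2^{b'}$, the standard geometric computation of the resonance function for the free wave equation gives a lower bound of the schematic form
\[
|\Phi_{\iota\iota_1\iota_2}(\uxi,\ueta)| \gtrsim 2^{2b'}\min\{2^{k_1},2^{k_2}\},
\]
which, combined with $|m\chi_2|\ls 2^{b'}$, produces the pointwise bound $|m\chi_2/\Phi|\ls 2^{-b'}2^{-\min(k_1,k_2)}$ on the effective symbol. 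The rest of the estimate then proceeds exactly as in part (1).

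The main obstacle is preserving the sharp frequency-localization factor $(2^{k_1-k}+1)$: a crude $L^1$-in-frequency estimate of the kernel loses an extra $2^{\min(k_1,k_2)}$, so the improvement genuinely requires exploiting the angular constraint through a tube/almost-orthogonal decomposition. Once this is set up cleanly for part (1), the same machinery handles part (2) line by line.
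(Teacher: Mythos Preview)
Your proposal is correct and takes essentially the same approach as the paper: both defer to the proof of \cite[Lemma~2.9]{IP}, noting that the specific null-form symbols in \eqref{eq:symbol.IP} satisfy $|m|\ls 2^b$ on the support of the cutoffs. The only point the paper makes more explicit is that the IP argument proceeds via integration by parts with the operators $2^{2b}\Delta_{\uth}$, $2^{2b}\Delta_{\ueta}$, $L_{\uth}$, $L_{\ueta}$, $S_{ij}$, so one must also verify that $m$ is stable under repeated application of these (i.e., $|(2^{2b}\Delta_{\uth})^{I_1}\cdots m|\ls 2^b$ on the relevant region), not merely the pointwise bound; your ``angular tube decomposition / almost-orthogonality'' is the geometric content of exactly this, so the two accounts agree.
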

	\begin{proof}
	This is almost explicitly as in \cite{IP} (and we have essentially follow their notation, except for rewriting $\Phi_{\sigma\mu\nu}$ as $\Phi_{\iota \iota_1 \iota_2}$). The only difference is that we used the specific symbols in \eqref{eq:symbol.IP}, instead of a symbol satisfying $\|\mathcal F^{-1} m\|_{L^1(\mathbb R^3\times \mathbb R^3)} \leq 1$ in \cite{IP}. The estimates are therefore also modified accordingly.
	
	The point is that we can repeat the proof of \cite[Lemma~2.9]{IP} and noting that the symbol $m(\uth,\ueta)$ satisfies the pointwise bound $|m(\uth,\ueta)| \ls 2^b$ on the intersection of the supports of various relevant cutoffs. Moreover, for the operators 
	$$2^{2b} \Delta_{\uth},\quad 2^{2b} \Delta_{\ueta},\quad L_{\uth} = \sum_j \uth_j \rd_{\uth_j}, \quad L_{\ueta} = \sum_j \ueta_j \rd_{\ueta_j},\quad S_{ij} = \uth_i \rd_{\uth_j} + \ueta_i \rd_{\ueta_j}$$
	that are used in the proof of \cite[Lemma~2.9]{IP}, the symbol $m$ satisfies
	$$|(2^{2b} \Delta_{\uth})^{I_1} (2^{2b} \Delta_{\ueta})^{I_2} L_{\uth}^{I_3} L_{\ueta}^{I_4} m(\uth,\ueta)| \ls 2^b$$
	on the region where the integrand is non-vanishing. One can thus repeat the same proof to obtain the desired estimate.
	\end{proof}

	\begin{proposition}\label{prop:normal.form}
	The following estimate holds for all $\de \leq 1$ and all $\bt, \mu, \nu \in \{0,1,\cdot,d\}$:
	\begin{equation}
	\begin{split}
	&\: \Big| \la \rd_\bt P_{k_3} \phi_3, \mathfrak Q_{\mu\nu}(P_{k_1} \phi_1, P_{k_2} \phi_2) \ra \Big| \\
	\ls &\: \de (2^{k_1-k_3}+1) \| \rd \phi_{1} \|_{L^\i} \| \rd \phi_{2} \|_{L^2} \|  \rd \phi_{3}\|_{L^2} \\
	&\: + \de^{-1} 2^{-\min\{k_1,k_2\}} (2^{k_1-k_3}+1) \Big(\| \Box_M \phi_{1} \|_{L^\i} \| \rd \phi_{2} \|_{L^2} \| \rd \phi_{3}\|_{L^2} \\
	&\: \qquad\qquad+ \| \rd \phi_{1} \|_{L^\i} \| \Box_M \phi_{1} \|_{L^2}  \| \rd \phi_{3}\|_{L^2}+ \| \rd \phi_{1} \|_{L^\i} \| \rd \phi_{2} \|_{L^2} \| \Box_M \phi_{3}\|_{L^2}\Big),
	\end{split}
	\end{equation}
	where $\Box_M$ denotes the Minkowskian wave operator.
	\end{proposition}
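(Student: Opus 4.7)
The plan is to reduce the estimate to a direct application of Lemma~\ref{lem:IP.orig} via a half-wave decomposition of each $\phi_i$ followed by an angular partition of frequency space.

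First, write $\phi_i = \phi_i^+ + \phi_i^-$ using spatial-Fourier half-wave projections (suitably truncated at low frequencies), which satisfy $(\rd_t - i\iota_i|\nabla|)\phi_i^{\iota_i}$ controlled in $L^p$ by $\||\nabla|^{-1}\Box_M\phi_i\|_{L^p}$. This has two effects: every time derivative $\rd_t$ falling on $\phi_i^{\iota_i}$ can be replaced by $i\iota_i|\nabla|\phi_i^{\iota_i}$ up to a $\Box_M$-error, and the trilinear form decomposes as a sum over $(\iota,\iota_1,\iota_2)\in\{\pm\}^3$ whose oscillatory phase after integration by parts in $t$ is precisely the Ionescu--Pasauder phase $\Phi_{\iota\iota_1\iota_2}$. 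Under this reduction, the spatial-Fourier symbol of $\mathfrak Q_{\mu\nu}(\phi_1^{\iota_1},\phi_2^{\iota_2})$ is $|\uth||\ueta|$ times a null-type symbol $m(\uth,\ueta)$ that is either one of the two symbols in \eqref{eq:symbol.IP} (when $\mu,\nu$ are both spatial, or when the signs line up in the temporal case) or else satisfies the same pointwise bound $|m|\ls \Xi_{\iota_1\iota_2}$ together with the derivative bounds used in the proof of Lemma~\ref{lem:IP.orig}; in every case $m$ vanishes on the corresponding resonance set.

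Next, partition the integrand via a smooth cutoff in $\Xi_{\iota_1\iota_2}(\uxi-\ueta,\ueta)$ at angular scale $\de$. In the resonant region $\Xi \ls \de$, apply Lemma~\ref{lem:IP.orig}(1) with $2^b\approx\de$; the $|\uth||\ueta|$ prefactor is absorbed by replacing the frequency-localized $\phi_1,\phi_2$ with $\rd\phi_1,\rd\phi_2$ in the norms, and summing over dyadic blocks produces directly the first term on the right-hand side. In the non-resonant region $\Xi\gtrsim \de$, write $1 = \Phi_{\iota\iota_1\iota_2}^{-1}\Phi_{\iota\iota_1\iota_2}$ and integrate by parts once in $t$; by the half-wave identity this generates the phase $\Phi_{\iota\iota_1\iota_2}$ times the trilinear form, up to three error terms in which one of the $\phi_i$ is replaced by $|\nabla|^{-1}\Box_M\phi_i$. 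Applying Lemma~\ref{lem:IP.orig}(2) to each error with $2^{-b'}\approx\de^{-1}$ and summing dyadically over the non-resonant scales yields the $\de^{-1}2^{-\min\{k_1,k_2\}}$ gain multiplied by the three $\|\Box_M\phi_i\|$ terms as stated.

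The main obstacle is the bookkeeping of the case analysis: one must enumerate the eight sign combinations $(\iota,\iota_1,\iota_2)$ and the temporal/spatial choices of $\mu,\nu,\bt$, and verify that in every case the reduced symbol vanishes on the relevant resonance set with the requisite smoothness bounds. In particular, the opposite-sign configurations $\iota_1\iota_2=-1$ involving a temporal index produce a ``sum'' symbol $\tfrac{\ueta_j}{|\ueta|}+\tfrac{\uth_j}{|\uth|}$ rather than the difference appearing in \eqref{eq:symbol.IP}, but this still vanishes on the anti-parallel resonance set captured by $\Xi_{-1}$ and satisfies the same bounds, so the proof of Lemma~\ref{lem:IP.orig} goes through verbatim. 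One must also verify that the $|\nabla|^{-1}$ smoothing of the $\Box_M\phi_i$ errors from the half-wave decomposition is compatible with the $2^{-\min\{k_1,k_2\}}$ loss in Lemma~\ref{lem:IP.orig}(2), which holds since each $\phi_i$ is spatially frequency-localized at $\approx 2^{k_i}$.
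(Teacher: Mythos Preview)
Your proposal is correct and follows essentially the same approach as the paper: the paper parameterizes the half-wave decomposition via the profiles $\psi_{j,\pm}=(\partial_t\pm i|\nabla|)\phi_j$ (so that the derivatives $\partial\phi_j$, rather than $\phi_j$ itself, are expressed in terms of $\psi_{j,\pm}$) instead of half-wave projections of $\phi_j$, but the angular cutoff at scale $\de$, the normal-form integration by parts in $t$ in the non-resonant region, and the appeal to Lemma~\ref{lem:IP.orig} are identical. Your explicit observation that the opposite-sign temporal configurations produce the ``sum'' symbol $\tfrac{\ueta_j}{|\ueta|}+\tfrac{\uth_j}{|\uth|}$ rather than the difference listed in \eqref{eq:symbol.IP} is a detail the paper glosses over; as you note, this symbol vanishes on the $\Xi_{-1}$ resonance set and satisfies the same bounds, so the proof of Lemma~\ref{lem:IP.orig} goes through unchanged.
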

	\begin{proof}
	Fix $\de\leq 1$ and choose $b \in \mathbb Z$ such that $\de \leq 2^b \leq 10\de$. We introduce some notations for the remainder of the proof. First, following \cite{IP}, we denote 
	\begin{align}
	\calG_m[f_1,f_2,f_3] = &\: \int_{\mathbb R^3\times \mathbb R^3} m(\uxi-\ueta,\ueta) \widehat{f_1}(\uxi - \ueta) \widehat{f_2}(\ueta) \overline{\widehat{f_3}(\uxi)}\, \ud \uxi \ud \ueta, \\
	\calH_m^{\iota_1,\iota_2,\iota_3}[f_1,f_2,f_3] = &\: \int_{\mathbb R^3\times \mathbb R^3} \f{m(\uxi-\ueta,\ueta)}{\Phi_{\iota_3 \iota_1 \iota_2}(\uxi,\ueta)} \widehat{f_1}(\uxi - \ueta) \widehat{f_2}(\ueta) \overline{\widehat{f_3}(\uxi)}\, \ud \uxi \ud \ueta.
	\end{align}
	Note that $\calG$, $\calH$ are linear in the first two slots, but conjugate linear in the third. We also define $R_j$ and $|\underline{\nabla}|$ to be the following operators:
	$$ \ucalF \calR_j  f(\uxi) = -i \f{\uxi_j }{|\uxi|} \ucalF f (\uxi),\quad \ucalF |\underline{\nabla}| f(\uxi) = |\uxi| \ucalF f (\uxi).$$ 
	
	We perform normal form as in \cite{IP}. Define $\psi^{(j,\pm)}$ by 
	$$\ucalF \psi_{j,\pm} = (\rd_t \pm i |\uxi |) \ucalF \phi_{j}.$$
	In particular, the following two identities hold:
	\begin{align}
	(\rd_t \mp i|\uxi |)\ucalF \psi_{j,\pm} = - \ucalF (\Box_M \phi_j), \label{eq:psi.and.Box} \\
	\rd_\ell \phi_{j} = \f i 2 R_\ell (\psi_{j,-} - \psi_{j,+}),\quad \rd_t \phi_{j} = \f 12 (\psi_{j,+}+\psi_{j,-}).
	\end{align}

	Now taking spatial Fourier transform $\ucalF$, the term $\la \rd_\bt P_{k_3} \phi_3, \mathfrak Q_{\mu\nu}(P_{k_1} \phi_1, P_{k_2} \phi_2) \ra$ can be written as linear combinations of terms of the form 
	\begin{equation}\label{eq:G.terms.to.estimate}
	\begin{split}
	\int_{\mathbb R} \calG_m[P_{k_1} \psi_{1,\iota}, P_{k_2} \psi_{2,\iota_2}, P_{k_3} \psi_{3,\iota_3}] \, \ud t, \quad \int_{\mathbb R} \calG_m[P_{k_1} R_j \psi_{1,\iota}, P_{k_2} \psi_{2,\iota_2}, P_{k_3} \psi_{3,\iota_3}]\, \ud t
	\end{split}
	\end{equation}
	for $\iota_1,\iota_2,\iota_3 \in \{+,-\}$ and with $m$ being one of the symbols in \eqref{eq:symbol.IP}.
		
	Let us control the first type of terms in \eqref{eq:G.terms.to.estimate}; the second type of terms are similar. We split the term into two parts:
	$$\calG_{m}[\psi_{1,\iota_1}, \psi_{2,\iota_2}, \psi_{3,\iota_3}] = \calG_{\chi_1 m}[\psi_{1,\iota_1}, \psi_{2,\iota_2}, \psi_{3,\iota_3}] + \calG_{(1-\chi_1) m}[\psi_{1,\iota_1}, \psi_{2,\iota_2}, \psi_{3,\iota_3}], $$ where $\chi_1 = \chi_1(2^{-b} \Xi_{\iota_1,\iota_2}(\uxi-\ueta,\ueta))$ (recall \eqref{eq:def.Xi}).
	
	For $\calG_{\chi_1 m}[\psi_{1,\iota_1}, \psi_{2,\iota_2}, \psi_{3,\iota_3}]$, we use \eqref{eq:IP.L.est} and compact support in $t$ to obtain
	\begin{equation}\label{eq:normal.form.1}
	\int_{\mathbb R} |\calG_{\chi_1 m}[\psi_{1,\iota_1}, \psi_{2,\iota_2}, \psi_{3,\iota_3}]| \ud t \ls 2^b (2^{k_1-k}+1) \| \rd P_{k_1} \phi_1\|_{L^{\infty}} \| \rd P_{k_2} \phi_2\|_{L^{2}}  \|P_{k} \rd \phi_3\|_{L^{2}}.
	\end{equation}
	
	For $\calG_{(1-\chi_1) m}[\psi_{1,\iota_1}, \psi_{2,\iota_2}, \psi_{3,\iota_3}]$, we note that
	\begin{equation}
	\begin{split}
	&\: i \calG_{(1-\chi_1) m}[\psi_{1,\iota_1}, \psi_{2,\iota_2}, \psi_{3,\iota_3}]  \\
	=&\: -\calH_{(1-\chi_1) m}^{\iota_1,\iota_2,\iota_3}[\psi_{1,\iota_1}, \psi_{2,\iota_2}, \iota_3 i |\nabla| \psi_{3,\iota_3}] - \calH_{(1-\chi_1) m}^{\iota_1,\iota_2,\iota_3}[\iota_1 i|\nabla|\psi_{1,\iota_1}, \psi_{2,\iota_2}, \psi_{3,\iota_3}] \\
	&\: - \calH_{(1-\chi_1) m}^{\iota_1,\iota_2,\iota_3}[\psi_{1,\iota_1}, \iota_2 i |\nabla| \psi_{2,\iota_2}, \psi_{3,\iota_3}] \\
	=&\: -\calH_{(1-\chi_1) m}^{\iota_1,\iota_2,\iota_3}[\psi_{1,\iota_1}, \psi_{2,\iota_2},  \Box_M \phi_{3}] - \calH_{(1-\chi_1) m}^{\iota_1,\iota_2,\iota_3}[\Box_M \phi_{1}, \psi_{2,\iota_2}, \psi_{3,\iota_3}] - \calH_{(1-\chi_1) m}^{\iota_1,\iota_2,\iota_3}[\psi_{1,\iota_1}, \Box_M \phi_{2}, \psi_{3,\iota_3}] \\
	&\: -\int_{\mathbb R^3\times \mathbb R^3} \rd_t \Big[ \f{(1-\chi_1)m(\uxi-\ueta,\ueta)}{\Phi_{\iota_3 \iota_1 \iota_2}(\uxi,\ueta)} \widehat{\psi_{1,\iota_1}}(\uxi - \ueta) \widehat{\psi_{2,\iota_2}}(\ueta) \overline{\widehat{\psi_{3,\iota_3}}(\uxi)}  \Big]\, \ud \uxi \ud \ueta,
	\end{split}
	\end{equation}
	where we used \eqref{eq:psi.and.Box} in the last equality. We then integrate in $t$ over $\mathbb R$ in $t$, and note that the last term, which is a total $\rd_t$ derivative, drops. After using compact support, the time-integral of the remaining three terms can be bounded using \eqref{eq:IP.M.est}. After summing over all dyadic scales $2^{b'} \geq 2^b$, we obtain
	\begin{equation}\label{eq:normal.form.2}
	\begin{split}
	&\: \Big| \int_{\mathbb R} \calG_{(1-\chi_1) m}[\psi_{1,\iota_1}, \psi_{2,\iota_2}, \psi_{3,\iota_3}] \, \ud t \Big| \\
	\ls &\: 2^{-b} 2^{-\min\{k_1,k_2\}} (2^{k_1-k_3}+1)  \max_{\{\sigma(1),\sigma(2),\sigma(3)\} = \{1,2,3\}} \| \Box_M \phi_{\sigma(1)} \|_{L^2} \| \rd \phi_{\sigma(2)} \|_{L^2} \|  \rd \phi_{\sigma(3)}\|_{L^\i}.
	\end{split}
	\end{equation}
	
	Combining \eqref{eq:normal.form.1} and \eqref{eq:normal.form.2}, and recalling that $\de \leq 2^b \leq 10\de$, we obtain the desired estimate. \qedhere
	\end{proof}
	
	\begin{corollary}\label{cor:normal.form}
	Suppose $g$ is a \textbf{constant-coefficient} metric
	$$g = - \mathfrak N (\ud t -\mathfrak a_i \ud x^i)^2 + \mathfrak c_i (\ud x^i + \mathfrak b^i \ud t + \mathfrak B_{j}^{i} \ud x^j)^2,$$
	where $\mathfrak N$, $\mathfrak a_i$, $\mathfrak b^i$, $\mathfrak c_i$, $\mathfrak B_{j}^{i}$ are constants such that $\mathfrak B_{j}^{i}$ has vanishing diagonal entries and 
	\begin{equation}\label{eq:metric.smallness.2}
	|\mathfrak N-1|,\, |\mathfrak c_i-1|,\, |\mathfrak a_i|,\, |\mathfrak b^i|,\, |\mathfrak B_{j}^{i}| \leq 10^{-10}.
	\end{equation}
	
	Then the following modification of the estimate in Proposition~\ref{prop:normal.form} holds:
	\begin{equation}\label{eq:normal.form}
	\begin{split}
	&\: \Big| \la \rd_\bt P_{k_3} \phi_3, \mathfrak Q_{\mu\nu}(P_{k_1} \phi_1, P_{k_2} \phi_2) \ra \Big| \\
	\ls &\: \de (2^{k_1-k_3}+1) \| \rd \phi_{1} \|_{L^\i} \| \rd \phi_{2} \|_{L^2} \|  \rd \phi_{3}\|_{L^2} \\
	&\: + \de^{-1} 2^{-\min\{k_1,k_2\}} (2^{k_1-k_3}+1) \Big(\| \Box_g \phi_{1} \|_{L^\i} \| \rd \phi_{2} \|_{L^2} \| \rd \phi_{3}\|_{L^2} \\
	&\: \qquad\qquad+ \| \rd \phi_{1} \|_{L^\i} \| \Box_g \phi_{2} \|_{L^2}  \| \rd \phi_{3}\|_{L^2}+ \| \rd \phi_{1} \|_{L^\i} \| \rd \phi_{2} \|_{L^2} \| \Box_g \phi_{3}\|_{L^2}\Big),
	\end{split}
	\end{equation}
	with constants \underline{independent} of $g$, where the Littlewood--Paley projection $P_k$ is to be understood in the spatial coordinates $(\tilde{x}^1, \tilde{x}^2, \tilde{x}^3)$ for the coordinate system $(\tilde{t}, \tilde{x}^1, \tilde{x}^2, \tilde{x}^3)$ defined by 
	\begin{equation}\label{eq:new.coord}
	\tilde{t} = t - \mathfrak a_i x^i,\quad \tilde{x}^i = x^i + \mathfrak b^i t +\mathfrak B_{ij} x^j.
	\end{equation}
	\end{corollary}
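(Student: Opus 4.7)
The plan is to reduce to Proposition~\ref{prop:normal.form} via the explicit linear change of coordinates \eqref{eq:new.coord} together with a diagonal rescaling. Since $\ud \tilde t = \ud t - \mathfrak a_i \ud x^i$ and (using the vanishing-diagonal assumption on $\mathfrak B^i_j$) $\ud \tilde x^i = \ud x^i + \mathfrak b^i \ud t + \mathfrak B^i_j \ud x^j$, direct substitution into the given form of $g$ shows that in $(\tilde t, \tilde x^i)$ variables the metric is diagonal:
\[
g = -\mathfrak N (\ud \tilde t)^2 + \sum_i \mathfrak c_i (\ud \tilde x^i)^2.
\]
The subsequent diagonal rescaling $\bar t = \sqrt{\mathfrak N}\, \tilde t$, $\bar x^i = \sqrt{\mathfrak c_i}\, \tilde x^i$ then puts $g$ into the standard Minkowski form $-(\ud \bar t)^2 + \sum_i (\ud \bar x^i)^2$, so in the $(\bar t, \bar x)$ coordinates we have $\Box_g = \Box_M$ and $\mathrm{dVol}_g = \ud \bar t\, \ud \bar x$. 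By \eqref{eq:metric.smallness.2}, the composite linear map from $(t, x)$ to $(\bar t, \bar x)$ is very close to the identity, with all matrix entries within $O(10^{-5})$ of their identity values; in particular its Jacobian and all entries are uniformly bounded.

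Under this composite change of variables, each original partial derivative $\rd_\mu$ is a constant linear combination $M^{\bar \nu}_\mu \rd_{\bar \nu}$ with bounded coefficients. The null form $\mathfrak Q_{\mu\nu}$, being an antisymmetric bilinear expression in derivatives, transforms as a $2$-tensor:
\[
\mathfrak Q_{\mu\nu}(\phi_1, \phi_2) = M^{\bar \mu}_\mu M^{\bar \nu}_\nu \mathfrak Q_{\bar \mu \bar \nu}(\phi_1, \phi_2).
\]
Consequently, the left-hand side of \eqref{eq:normal.form} is a bounded linear combination of finitely many expressions of the form $\la \rd_{\bar \beta} \wht{P}_{k_3} \phi_3, \mathfrak Q_{\bar \mu \bar \nu}(\wht{P}_{k_1} \phi_1, \wht{P}_{k_2} \phi_2) \ra$, where $\wht{P}_k$ denotes the spatial Littlewood--Paley projection in the $\tilde x$ variables as specified in the statement. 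All $L^p$ norms appearing on the right-hand side---namely $\|\rd \phi\|_{L^\infty}$, $\|\rd \phi\|_{L^2}$ and $\|\Box_g \phi\|_{L^p}$---are equivalent in the original and barred coordinates up to constants independent of $g$, thanks to the bounded Jacobian and the fact that $\Box_g = \Box_M$ in the barred coordinates.

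The one mildly subtle point is that Proposition~\ref{prop:normal.form} requires Littlewood--Paley projections in the Minkowskian $\bar x$ variables, whereas $\wht{P}_k$ acts in $\tilde x$. These two are related by the diagonal rescaling $\bar x^i = \sqrt{\mathfrak c_i} \tilde x^i$, which is very close to the identity; hence any function localized by $\wht{P}_k$ has spatial Fourier support (in $\bar x$) in an anisotropic annulus contained in $\{|\bar \xi| \sim 2^k\}$. I therefore write $\wht{P}_k = \sum_{|k' - k| \le C} \wht{P}_k P_{k'}$ for some absolute constant $C$, where $P_{k'}$ is the standard Littlewood--Paley projection in $\bar x$, and each composition $\wht{P}_k P_{k'}$ is a Fourier multiplier with uniformly bounded symbol, hence bounded on $L^p$ for every $p \in [1,\infty]$ by the Mikhlin multiplier theorem, with constants independent of $g$. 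Applying Proposition~\ref{prop:normal.form} to each of the resulting finitely many terms and summing yields \eqref{eq:normal.form}. The main thing to verify---which follows directly from \eqref{eq:metric.smallness.2}---is that every constant incurred in these reductions can be taken uniformly in $g$, so no residual $g$-dependence appears in the final estimate.
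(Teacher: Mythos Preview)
Your proof is correct and follows essentially the same approach as the paper: pass to the $(\tilde t,\tilde x)$ coordinates in which $g$ becomes diagonal, and then reduce to the Minkowski estimate of Proposition~\ref{prop:normal.form} (you make the extra diagonal rescaling to $(\bar t,\bar x)$ explicit, whereas the paper simply says ``the argument is exactly the same'' in the new coordinates). One small imprecision: the Mikhlin theorem only yields $L^p$ boundedness for $1<p<\infty$, so for the $L^\infty$ bound you actually use, argue instead that $\wht P_k$ (and hence $\wht P_k P_{k'}$, whose symbol is smooth and compactly supported) has an $L^1$ convolution kernel.
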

	\begin{proof}
	The argument is exactly the same except for changing to the $(\tilde{t}, \tilde{x}^1, \tilde{x}^2, \tilde{x}^3)$ coordinates. The smallness condition \eqref{eq:metric.smallness.2} ensures that the implicit constants can be made independent of $g$. \qedhere
	\end{proof}
		
	We can now prove Proposition~\ref{prop:main.trilinear}.

	\begin{proof}[Proof of Proposition~\ref{prop:main.trilinear}]
	By scaling, we can assume that 
	$$\| \phi^{(1)}  \|_{X^\infty_\lambda(g)} = \| \phi^{(2)}  \|_{X^2_\lambda(g)} = \| \phi^{(3)}  \|_{X^2_\lambda(g)} = 1.$$
	
	\pfstep{Step~1: Spatial cutoffs} Let $\ep_0 = \f 12$. Cover the compact set $K$ with $O(\lambda^{-4\ep_0})$ cubes of side-lengths $\lambda^{\ep_0}$, labelled by $C_\alp$. Let $\{ \zeta_\alp^3\}$ be a smooth partition of unity corresponding to these cubes. The cut-off$\zeta_\alp$ can be chosen so that $\| \rd^k \zeta_\alp \|_{L^1} \ls \lambda^{(4-k)\ep_0}$. 
	

	Define $\phi^{(i)}_\alp \doteq \zeta_\alp  \phi^{(i)} $. Since the cutoff functions live at a larger scale than $\lambda$, it is easy to check using the $X^{\infty}_\lambda(g_0)$ norm bound that
	\begin{equation}\label{eq:null.phi.Li}
	\sum_{k=0}^2 \lambda^{1-k} \| \rd^k \phi^{(1)}_\alp \|_{L^\infty} \ls 1.
	\end{equation}
	
	Let $g_\alp$ be the constant coefficient Lorentzian metric given by $g$ at the center of the ball. Then 
	$$\|g - g_\alp \|_{L^\i(2B_\alp)} \ls \lambda^{\ep_0}.$$ 
	$\Box_{g_\alp}$ is now a constant coefficient wave operator. We will estimate the size of $\Box_{g_\alp} \phi^{(i)}_\alp$. First, we estimate
	\begin{equation*}
	\begin{split}
	\| \Box_g \phi^{(1)}_\alp\|_{L^{\infty}} \ls &\: \| \Box_g \phi^{(1)} \|_{L^{\infty}} + \| \rd \zeta_\alp\|_{L^\i} \|\rd \phi^{(1)}\|_{L^{\infty}} + \| \rd^2 \zeta_\alp\|_{L^\i} \| \phi^{(1)}\|_{L^\infty} \\
	\ls &\: 1+ \lambda^{-\ep_0} + \lambda^{1-2\ep_0} \ls \lambda^{-\ep_0} + \lambda^{1-2\ep_0}.
	\end{split}
	\end{equation*}
	Using this, we deduce
	\begin{equation}\label{eq:null.Boxphi.cutoff.Linfty}
	\begin{split}
	\| \Box_{g_\alp} \phi^{(1)}_\alp\|_{L^{\infty}} \ls &\: \| \Box_g \phi^{(1)}_\alp\|_{L^{\infty}} + \| g^{-1} - g^{-1}_\alp\|_{L^\i(2B_\alp)} \| \rd^2 \phi^{(1)}_{\alp} \|_{L^{\infty}} + \|\rd \phi^{(1)}_\alp \|_{L^{\infty}} \\
	\ls &\: \lambda^{-\ep_0} + \lambda^{1-2\ep_0} + \lambda^{-1+\ep_0} + 1\ls \lambda^{-\f 12}.
	\end{split}
	\end{equation}
	
	For $\phi^{(2)}$ and $\phi^{(3)}$, we only have $L^2$-based bounds. For these, we argue similarly as above, but we use also \underline{orthogonality} to obtain
	\begin{equation}\label{eq:null.phi.L2}
	\sum_{j=0}^2 \lambda^{1-j} \Big(\sum_{\alp} \| \rd^j  \phi^{(i)}_\alp \|_{L^2}^2\Big)^{\f 12} \ls 1,\quad  \Big(\sum_{\alp} \| \Box_{g_\alp}  \phi^{(i)}_\alp \|_{L^2}^2\Big)^{\f 12} \ls \lambda^{-\f 12}\quad i = 2,3.
	\end{equation}
	
	\pfstep{Step~2: A simple reduction} Let $f_\alp$ be the value of $f$ at the center of $C_\alp$. Since $f$ is smooth, by the mean value theorem, $|f-f_\alp| \ls \lambda^{\ep_0}$ on the support of $\zeta_\alp$. Hence, 
	\begin{equation}\label{eq:null.reduction.1}
	\begin{split}
	&\: \Big| \la f \rd_\bt \phi^{(1)}, \mathfrak Q_{\mu\nu}(\phi^{(2)}, \phi^{(3)}) \ra -\sum_{\alp} \la f_\alp \rd_\bt \phi^{(1)}_\alp,  \mathfrak Q_{\mu\nu}(\phi^{(2)}_\alp, \phi^{(3)}_\alp) \ra \Big| \\
	\ls &\: \max\{\lambda^{1-\ep_0}, \lambda^{\ep_0} \} \sum_{\alp} \lambda^{4\ep_0} \ls \max\{\lambda^{1-\ep_0}, \lambda^{\ep_0} \} = \lambda^{\f 12},
	\end{split}
	\end{equation}
	where we used $\ep_0 = \f 12$ in the last step.
	
	\pfstep{Step~3: Fourier cut-off} To proceed, we further introduce another spatial cut-off in Fourier space. Here, ``spatial'' is to be understood with respect to the $(\tilde{t}, \tilde{x}^1, \tilde{x}^2, \tilde{x}^3)$ defined in \eqref{eq:new.coord} corresponding to the constant coefficient metric $g_\alp$. Let $a= \f 1{14}$. For each $i$ and $\alp$, define
	$$\widetilde{\phi}^{(i)}_\alp \doteq \ucalF^{-1}(\chi_{\mathcal F}(\uxi) \ucalF \phi^{(i)}_\alp),$$
	where $\chi_{\ucalF}(\uxi)$ is a smooth cutoff function such that $\mathrm{supp}(\chi_{\mathcal F}) \subset \{ \f 12 \lambda^{-1+a} \leq |\uxi| \leq 2 \lambda^{-1-a}\}$ and $\chi_{\mathcal F}(\uxi)\equiv 1$ when $\lambda^{-1+a} \leq |\uxi| \leq \lambda^{-1-a}$. Using \eqref{eq:null.phi.Li}, \eqref{eq:null.Boxphi.cutoff.Linfty} and \eqref{eq:null.phi.L2}, it is easy to see that 
	\begin{equation}\label{eq:null.tphi.cutoff}
	\begin{split}
	\sum_{j=0}^2 \lambda^{1-j} \| \rd^j  \widetilde{\phi}^{(1)}_\alp \|_{L^\i} \ls 1,&\: \quad   \| \Box_{g_\alp}  \widetilde{\phi}^{(1)}_\alp \|_{L^\i}  \ls \lambda^{-\f 12} \\
	\sum_{j=0}^2 \lambda^{1-j} \Big(\sum_{\alp} \| \rd^j  \widetilde{\phi}^{(i)}_\alp \|_{L^2}^2\Big)^{\f 12} \ls 1,&\:\quad  \Big(\sum_{\alp} \| \Box_{g_\alp}  \widetilde{\phi}^{(i)}_\alp \|_{L^2}^2\Big)^{\f 12} \ls \lambda^{-\f 12}\quad i = 2,3.
	\end{split}
	\end{equation}
	
	Moreover, $\phi^{(i)}_\alp - \widetilde{\phi}^{(i)}_\alp$ either has spatial frequency lower than $\f 12\lambda^{-1+a}$ or higher than $2\lambda^{-1-a}$. Denote $\sigma^{(i)}_\alp = \phi^{(i)}_\alp - \widetilde{\phi}^{(i)}_\alp$ and decompose $\sigma^{(i)}_\alp = \sigma^{(i),L}_\alp + \sigma^{(i),H}_\alp$, where $\sigma^{(i),L}_\alp$ and $\sigma^{(i),H}_\alp$ have low- and high-frequency, respectively. Notice that $2^{-k} \widetilde{\underline{\rd}} P_k$ and $P_k$ both have a kernel in $L^1$ and thus the operators are bounded both on $L^2$ and on $L^\i$ on a fixed dyadic frequency. Thus, using \eqref{eq:null.tphi.cutoff}, it follows that the spatial derivatives of $\phi^{(i)}_\alp - \widetilde{\phi}^{(i)}_\alp$ (denoted by $\widetilde{\underline{\rd}} \in \{ \widetilde{\rd}_{\tilde{x}^1}, \widetilde{\rd}_{\tilde{x}^2}, \widetilde{\rd}_{\tilde{x}^3}\}$) obey improved estimates:

	$$\| \widetilde{\underline{\rd}} \sigma^{(1),L}_\alp \|_{L^\i} \ls \sum_{2^k \ls \lambda^{-1+a}} \| \widetilde{\underline{\rd}} P_k \phi^{(1)}_\alp \|_{L^\i} \ls \sum_{2^k \ls \lambda^{-1+a}} 2^k \| \phi^{(1)}_\alp\|_{L^\i} \ls \lambda^a,$$
	and
	$$\| \widetilde{\underline{\rd}} \sigma^{(1),H}_\alp \|_{L^\i} \ls \sum_{2^k \gtrsim \lambda^{-1-a}} 2^{-k} \| \widetilde{\underline{\rd}}{}^2 P_k \phi^{(1)}_\alp \|_{L^\i} \ls \sum_{2^k \gtrsim \lambda^{-1-a}} 2^{-k} \lambda^{-1}  \ls \lambda^a.$$
	Combining and arguing similarly for $\phi^{(i)}_\alp - \widetilde{\phi}^{(i)}_\alp$ when $i = 2,3$, we obtain
	\begin{equation}\label{eq:null.tphi.cutoff.spatial.improved}
	\begin{split}
	 \| \widetilde{\underline{\rd}} (\phi^{(1)}_\alp - \widetilde{\phi}^{(1)}_\alp) \|_{L^\i} \ls \lambda^a, \quad 
	\Big(\sum_{\alp} \| \widetilde{\underline{\rd}} (\phi^{(i)}_\alp - \widetilde{\phi}^{(i)}_\alp) \|_{L^2}^2 \Big)^{\f 12} \ls \lambda^{a},\quad i = 2,3.
	\end{split}
	\end{equation}
	
	We now derive estimates similar  to \eqref{eq:null.tphi.cutoff.spatial.improved}, but for the $\widetilde{\rd}_{\tilde{t}}$ derivative, using the estimates for $\Box_{g_\alp} \widetilde{\phi}^{(i)}_\alp$ and suitable elliptic estimates. We begin with $\widetilde{\rd}_{\tilde{t}}(\phi^{(i)}_\alp - \widetilde{\phi}^{(i)}_\alp)$. For $\sigma^{(i),L}_\alp$, $\sigma^{(i),H}_\alp$ defined above, note that
	$$ \| \sigma^{(1),H}_\alp \|_{L^\i}  \| \underline{\widetilde{\rd}}{}^2 \sigma^{(1),H}_\alp \|_{L^\i} \ls  \lambda^{2+2a} \| \rd^2 \phi^{(1)}_\alp \|_{L^\i} \| \underline{\widetilde{\rd}}{}^2 \sigma^{(1),H}_\alp \|_{L^\i}  \ls \lambda^{2a}$$ and also $\| \sigma^{(1),L}_\alp \|_{L^\i}  \| \underline{\widetilde{\rd}}{}^2 \sigma^{(1),L}_\alp \|_{L^\i}  \ls \lambda^{2a}$. Thus using the one-dimensional bound $\| f' \|_{L^\i}^2 \leq 4 \|f \|_{L^\i} \|f'' \|_{L^\i}$ (see \cite[Chapter~5, Problem~15]{Rudin}), we obtain
	\begin{equation}\label{eq:null.tphi.cutoff.time.improved.1}
	\begin{split}
	 \| \widetilde{\rd}_{\tilde{t}} \sigma^{(1),H}_\alp \|_{L^\i}^2 
	\leq &\: 2 \| \sigma^{(1),H}_\alp \|_{L^\i}\| \widetilde{\rd}_{\tilde{t}}^2 \sigma^{(1),H}_\alp \|_{L^\i}\\
	\ls &\: \| \sigma^{(1),H}_\alp \|_{L^\i} (\|\Box_{g_\alp} \sigma^{(1),H}_\alp\|_{L^\i} + \| \underline{\widetilde{\rd}}{}^2 \sigma^{(1),H}_\alp \|_{L^\i}) 
	\ls \lambda \lambda^{-\f 12} + \lambda^{2a} \ls  \lambda^{2a},
	\end{split}
	\end{equation}
	where in the penultimate inequality we used \eqref{eq:null.tphi.cutoff} and \eqref{eq:null.tphi.cutoff.spatial.improved} and in the final inequality we used $2a = \f 17 < \f 12$.
	
	For $\widetilde{\rd}_{\tilde{t}}(\phi^{(i)}_\alp - \widetilde{\phi}^{(i)}_\alp)$ when $i=2,3$, we integrate by parts and use the wave operator bound as follows:
	\begin{equation}\label{eq:null.tphi.cutoff.time.improved}
	\begin{split}
	 \sum_\alp \| \widetilde{\rd}_{\tilde{t}} (\phi^{(i)}_\alp - \widetilde{\phi}^{(i)}_\alp) \|_{L^2}^2 
	= &\: - \sum_\alp \la \phi^{(i)}_\alp - \widetilde{\phi}^{(i)}_\alp, \widetilde{\rd}^2_{\tilde{t}} (\phi^{(i)}_\alp - \widetilde{\phi}^{(i)}_\alp) \ra \\
	\ls &\: \sum_\alp \|\phi^{(i)}_\alp - \widetilde{\phi}^{(i)}_\alp \|_{L^2} \|\Box_{g_\alp} (\phi^{(i)}_\alp - \widetilde{\phi}^{(i)}_\alp)\|_{L^2} + \sum_\alp \| \widetilde{\underline{\rd}} (\phi^{(i)}_\alp - \widetilde{\phi}^{(i)}_\alp) \|_{L^2}^2 \\
	\ls &\: \lambda \lambda^{-\f 12} + \lambda^{2a} \ls  \lambda^{2a},
	\end{split}
	\end{equation}
	where we used \eqref{eq:null.tphi.cutoff} and \eqref{eq:null.tphi.cutoff.spatial.improved} similarly as in \eqref{eq:null.tphi.cutoff.time.improved.1}.

	Hence, using H\"older's inequality and \eqref{eq:null.tphi.cutoff}, \eqref{eq:null.tphi.cutoff.spatial.improved}, \eqref{eq:null.tphi.cutoff.time.improved.1}, \eqref{eq:null.tphi.cutoff.time.improved}, we obtain
	\begin{equation}\label{eq:null.reduction.2}
	\Big| \sum_{\alp} \la f_\alp \rd \phi^{(3)}_\alp,  Q_{\mu\nu}(\phi^{(1)}_\alp, \phi^{(2)}_\alp) \ra -\sum_{\alp} \la f_\alp \rd \widetilde{\phi}^{(3)}_\alp,  Q_{\mu\nu}(\widetilde{\phi}^{(1)}_\alp, \widetilde{\phi}^{(2)}_\alp) \ra \Big|  \ls \lambda^a,
	\end{equation}
	which is acceptable since $a= \f 1{14} > \f 1{15}$.
	
	By \eqref{eq:null.reduction.1} and \eqref{eq:null.reduction.2}, it therefore suffices to bound the term
	\begin{equation}\label{eq:null.const.coeff.error}
	\sum_{\alp} \la f_\alp \rd \widetilde{\phi}^{(1)}_\alp,  Q_{\mu\nu}(\widetilde{\phi}^{(2)}_\alp, \widetilde{\phi}^{(3)}_\alp) \ra .
	\end{equation}
	
	\pfstep{Step 3: Applying the Ionescu--Pasauder normal form bounds} Now, each term in the summand in \eqref{eq:null.const.coeff.error} can be treated with the Ionescu--Pasauder estimate \eqref{eq:normal.form}. We decompose each $\widetilde{\phi}^{(i)}$ into Littlewood--Paley pieces. Notice that because of the Fourier cut-offs in the previous step, all factors of $2^{k_1-k_3}$ in \eqref{eq:normal.form} are at worst $O(\lambda^{-2a})$, and the $2^{-\min\{k_1,k_2\}}$ factor can be bounded above by $O(\lambda^{1-a})$. Therefore, we have
	\begin{equation}
	\begin{split}
	&\: \Big| \la \rd P_{k_3} \widetilde{\phi}^{(3)}_\alp,  Q_{\mu\nu}(P_{k_1}  \widetilde{\phi}^{(1)}_\alp, P_{k_2} \widetilde{\phi}^{(2)}_\alp) \ra \Big|\\
	\ls &\: \de \lambda^{-2a}  \| \rd \widetilde{\phi}^{(1)}_\alp\|_{L^\i} \| \rd \widetilde{\phi}^{(2)}_\alp\|_{L^2} \| \rd \widetilde{\phi}^{(3)}_\alp\|_{L^2} \\
	&\:  + \de^{-1} \lambda^{1-a} \lambda^{-2a} \Big(\| \Box_g \widetilde{\phi}^{(1)}_\alp \|_{L^\i} \| \rd \widetilde{\phi}^{(2)}_\alp \|_{L^2} \| \rd \widetilde{\phi}^{(3)}_\alp\|_{L^2} \\
	&\: \qquad\qquad+ \| \rd \widetilde{\phi}^{(1)}_\alp \|_{L^\i} \| \Box_g \widetilde{\phi}^{(2)}_\alp \|_{L^2}  \| \rd \widetilde{\phi}^{(3)}_\alp\|_{L^2}+ \| \rd \widetilde{\phi}^{(1)}_\alp \|_{L^\i} \| \rd \widetilde{\phi}^{(2)}_\alp \|_{L^2} \| \Box_g \widetilde{\phi}^{(3)}_\alp\|_{L^2}\Big)\\
	\ls &\: \Big( \de \lambda^{-2a} + \de^{-1} \lambda^{\f 12-3a} \Big)\| \rd \widetilde{\phi}^{(2)}_\alp \|_{L^2} \| \rd \widetilde{\phi}^{(3)}_\alp \|_{L^2} \\
	&\: + \de^{-1} \lambda^{1-3a} \Big( \| \Box_g \widetilde{\phi}^{(2)}_\alp \|_{L^2}  \| \rd \widetilde{\phi}^{(3)}_\alp\|_{L^2} + \| \rd \widetilde{\phi}^{(2)}_\alp \|_{L^2} \| \Box_g \widetilde{\phi}^{(3)}_\alp\|_{L^2} \Big),
	\end{split}
	\end{equation}
	where in the last line we used \eqref{eq:null.tphi.cutoff}. We then choose $\de = \lambda^{\f 14 - \f a2}$ (which is $\leq 1$ as required for $\lambda$ sufficiently small) to optimize the above bound so that 
	\begin{equation}\label{eq:use.of.IP}
	\begin{split}
	&\: \Big| \la \rd P_{k_3} \widetilde{\phi}^{(3)}_\alp,  Q_{\mu\nu}(P_{k_1}  \widetilde{\phi}^{(1)}_\alp, P_{k_2} \widetilde{\phi}^{(2)}_\alp) \ra \Big| \\
	\ls &\: \lambda^{\f 14- \f{5a}2} \Big( \| \rd \widetilde{\phi}^{(2)}_\alp \|_{L^2} \| \rd \widetilde{\phi}^{(3)}_\alp \|_{L^2} + \| \Box_g \widetilde{\phi}^{(2)}_\alp \|_{L^2}  \| \rd \widetilde{\phi}^{(3)}_\alp\|_{L^2} + \| \rd \widetilde{\phi}^{(2)}_\alp \|_{L^2} \| \Box_g \widetilde{\phi}^{(3)}_\alp\|_{L^2} \Big).
	\end{split}
	\end{equation}

	\pfstep{Step 4: Concluding the argument} Finally, we sum over all $\alp$ and $(k_1,k_2,k_3)$ in order to bound the term \eqref{eq:null.const.coeff.error}. 
	First, note that the $\ell^2$ sums in \eqref{eq:null.tphi.cutoff} allows us to sum over all $\alp$ to obtain
	\begin{equation}
	\begin{split}
	 \sum_{\alp} \Big| \la \rd P_{k_3} \widetilde{\phi}^{(3)}_\alp,  Q_{\mu\nu}(P_{k_1}  \widetilde{\phi}^{(1)}_\alp, P_{k_2} \widetilde{\phi}^{(2)}_\alp) \ra \Big| 
	\ls &\: \lambda^{\f 14- \f{5a}2}.
	\end{split}
	\end{equation}
	Due to the Fourier cut-offs, there are at most $O((\log \lambda)^3)$ terms in the sum in $(k_1,k_2,k_3)$ (one $\log \lambda$ from each $k_i$). We therefore bound the term in \eqref{eq:null.const.coeff.error} as follows:
	\begin{equation}
	\begin{split}
	\Big| \sum_{\alp} \la f_\alp \rd \widetilde{\phi}^{(3)}_\alp,  Q_{\mu\nu}(\widetilde{\phi}^{(1)}_\alp, \widetilde{\phi}^{(2)}_\alp) \ra \Big| 
	\ls &\:  \sum_{k_1,k_2,k_3: \lambda^{-1+a} \ls 2^{k_i} \ls \lambda^{-1-a}} \lambda^{\f 14- \f{5a}2} 
	\ls  \lambda^{\f 14- \f{5a}2}  (\log \lambda)^3,
	\end{split}
	\end{equation}
	which is acceptable after recalling $a = \f 1{14}$. This concludes the proof. \qedhere
	\end{proof}

	\section{Einstein equations in wave coordinates}\label{sec:EE}
	
	The goal of this section is to derive the equation for $h$; see Proposition~\ref{prop:h.eqn}. In order to achieve this, we carry out some computations for the Ricci curvature. 
		
	\subsection{Conventions}\label{sec:conventions}
	For the remainder of the paper, we use the following conventions.
	
\begin{enumerate}
\item (Conventions on $g_\lambda$, $g_0$ and $h$)
\begin{enumerate}
\item Introduce the notation $g_\lambda$ so that $g_{\lambda_n} = g_n$, where $g_n$ and $\lambda_n$ are as in Assumption~\ref{ass:main}.
\item Whenever defined, we denote $h = g_\lambda - g_0$ (suppressing the explicit $\lambda$-dependence of $h$ for notational convenience).
\end{enumerate}
\item (Coordinates) Coordinates will be denoted by $(x^0,x^1,\cdots,x^d)$ on $\mathbb R^{d+1}$ and will be denoted by $(x^0,x^1,\cdots,x_d,\xi_0,\xi_1,\cdots,\xi_d)$ on its cotangent bundle. We will often write $x^0 = t$. So that there is no confusion with the $_{0}$ in $g_0$, we will often write the corresponding index as $_{t}$, such as $(g_0)_{tt}$, $\xi_t$, etc.
\item (Indices) We will use lower case Greek indices $\alp,\bt,\nu = 0,1,\cdots,d$ and lower case Latin indices $i,j = 1,\cdots, d$. Repeated indices are always summed over the range indicated above.
\item (Inverse) We will use the notation that $g_0^{\alp\bt}$ and $g_\lambda^{\alp\bt}$ denote the inverses of $g_0$ and $g_\lambda$, respectively.
\item (Limits) From now on, we will write $\lim_{\lambda}$ instead of $\lim_{n\to \infty}$, consistent with the convention above. It will always be understood that this limit is taken along as a sequence. In fact, in the argument below, we will need to take subsequences of the given sequence.
\item (Implicit constants) All the implicit constants in $\ls$ are independent of $\lambda$, but could depend on everything else in the problem, $g_0$ (limiting metric), $A$ (pseudo-differential operator), $U$ (domain), $d$ (dimension), etc. We will also use the big-O and small-o notations, to be understood with respect to the $\lim_{\lambda\to 0}$ limit, i.e., for $y \in \mathbb R$, we say $B = O(\lambda^y)$ if $B \ls \lambda^y$ and $B = o(\lambda^y)$ if $\lim_{\lambda \to 0} \lambda^{-y} B = 0$.
\item (Measures and $L^p$ spaces) Unless otherwise specified, all integration are with respect on the volume form induced by the metric $g_0$. In particular, we use $\la \cdot, \cdot, \ra$ to denote the $L^2$ inner product with the volume form induced by $g_0$. In the context below, notice that these norms are compared to those taken with respect to the Lebesgue measure. 
\item (Fourier transforms) We will fix our notation for the (spacetime) Fourier transform as follows:
	$$\mathcal F(f)(\xi) = \int_{\mathbb R^{d+1}} f(x) e^{-ix\cdot \xi} \, \ud x,\quad \mathcal F^{-1}(h)(x) = \f 1{(2\pi)^{d+1}} \int_{\mathbb R^{d+1}} e^{ix\cdot \xi} h(\xi) \, \ud \xi.$$
\end{enumerate}

\subsection{Localization of the problem}
The following lemma is easy to check, and is an immediate consequence of the continuity of $g_0$.
\begin{lemma}
Given any $x \in U$,  there exists a smaller open set $U'$ satisfying $x\in U' \subset U$ such that after a linear change of variables, 
\begin{equation}\label{eq:g.smallness}
|(g_0)_{\alp\bt} - m_{\alp\bt}|,\, |g_0^{\alp\bt} - m^{\alp\bt}| \leq 10^{-100},
\end{equation}
where $m$ is the Minkowski metric. Moreover, assumptions (1)--(4) in Assumption~\ref{ass:main} continue to hold, except that the bounds in \eqref{eq:ass.1}, \eqref{eq:ass.2} and \eqref{eq:wave.coord.assumption} may worsen by a fixed (i.e., $\lambda$-independent) multiplicative constant.
\end{lemma}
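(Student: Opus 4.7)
The plan is to first exhibit a linear change of coordinates that normalizes $g_0$ at the chosen point $x$, and then to check that each item of Assumption~\ref{ass:main} transfers to the new coordinates, the key point being that the change is \emph{linear}.

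For the first step, since $(g_0)_{\alp\bt}(x)$ is a symmetric non-degenerate bilinear form of Lorentzian signature $(1,d)$, Sylvester's law of inertia produces an invertible linear map $L \in GL(d+1,\mathbb{R})$ so that, in the new coordinates $\tilde{x}^\alp = L^\alp_{\ \bt}\, x^\bt$, the metric coefficients satisfy $(g_0)_{\alp\bt}(x) = m_{\alp\bt}$ on the nose. Since $g_0$ is $C^\infty$ (hence continuous), there is an open neighborhood $U' \ni x$ on which $|(g_0)_{\alp\bt} - m_{\alp\bt}| \leq 10^{-100}$; the companion bound on the inverse $g_0^{\alp\bt}$ then follows from continuity of matrix inversion at a non-degenerate matrix, after possibly shrinking $U'$.

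The remaining task is to verify that items (1)--(4) in Assumption~\ref{ass:main} survive this change, possibly with worse (but $\lambda$-independent) constants. The linearity of the coordinate change is crucial: the Jacobian $L$ has constant entries, so all second and higher partial derivatives of the coordinate change vanish. As a consequence: (i) the Einstein vacuum equation is diffeomorphism-invariant, so $\mathrm{Ric}(g_n) = 0$ continues to hold; (ii) the pointwise bounds \eqref{eq:ass.1} and \eqref{eq:ass.2} are preserved up to multiplicative factors controlled by $\|L\|$ and $\|L^{-1}\|$, since each coordinate derivative of a metric component transforms by one factor of $L^{-1}$; and (iii) because $\partial^2 L \equiv 0$, the usual non-tensorial correction in the transformation law of the Christoffel symbols drops out, so $\Gamma^\alp_{\mu\nu}$ transforms as the components of a $(1,2)$-tensor. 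It follows that $H^\alp = (g^{-1})^{\mu\nu}\Gamma^\alp_{\mu\nu}$ transforms as a vector field, so both $|H^\alp_n - H^\alp_0|$ and $|\rd H^\alp_n|$ suffer only a fixed multiplicative loss, preserving \eqref{eq:wave.coord.assumption}.

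The argument is genuinely elementary and there is no serious obstacle; the only point worth flagging is item (iii), where one must observe that Christoffel symbols transform \emph{tensorially} under linear (as opposed to general) coordinate changes, which is exactly the reason a linear normalization suffices and a general diffeomorphism is not needed.
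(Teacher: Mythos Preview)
Your proof is correct and is essentially a fully elaborated version of what the paper treats as immediate: the paper gives no argument beyond stating that the lemma ``is easy to check, and is an immediate consequence of the continuity of $g_0$.'' Your explicit use of Sylvester's law of inertia to normalize $g_0$ at $x$, followed by the observation that linearity of the coordinate change makes the Christoffel symbols transform tensorially (so that the generalized wave coordinate condition is preserved), is exactly the right unpacking of this claim.
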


\textbf{From now on, we assume that \eqref{eq:g.smallness} hold.} We also expand out Assumption~\ref{ass:main} in the following lemma, now in the conventions defined in Section~\ref{sec:conventions} above.

\begin{lemma}
Under the above conventions and Assumption~\ref{ass:main}, the following pointwise bounds for the metrics hold
\begin{equation}\label{eq:basic.bds.for.Ric}
\begin{split}
|g_0^{-1}|,\,|g_0|,\, |\rd g_0|,\, |\rd^2 g_0|,\, |g_\lambda|,\, |\rd g_{\lambda}|\ls 1, \quad |\rd^2 g_{\lambda}|\ls \lambda^{-1},\quad |h|\ls \lambda, \, |\rd h|\ls 1,\, |\rd^2 h|\ls \lambda^{-1},
\end{split}
\end{equation}
and the following pointwise bounds for $H^\alp = H^\alp(g_\lambda)(\rd g_\lambda)$ and $H_0^\alp = H^\alp(g_0)(\rd g_0)$ hold
\begin{equation}\label{eq:basic.bds.for.H}
 |H_0|,\, |\rd H_0|,\, |H|,\,|\rd H| \ls 1,\quad |H - H_0| \ls \lambda^\eta,\quad |\rd(H - H_0)| \ls 1.
\end{equation}
Using \eqref{eq:basic.bds.for.Ric}, the bound $ |H - H_0| \ls \lambda^\eta$ also implies the following:
\begin{equation}\label{eq:basic.bds.for.WC}
\Big| g_0^{\alp\alp'} \rd_\alp h_{\bt\alp'} - \f 12 g_0^{\alp\alp'} \rd_\bt h_{\alp\alp'} \Big| = O(\lambda^\eta).
\end{equation}
\end{lemma}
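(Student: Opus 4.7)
The bounds in \eqref{eq:basic.bds.for.Ric} and the first four bounds in \eqref{eq:basic.bds.for.H} are just a repackaging of Assumption~\ref{ass:main}. The plan for these is to read off $|\rd g_\lambda| \ls 1$, $|\rd^2 g_\lambda| \ls \lambda^{-1}$ directly from parts (2)--(3), deduce the corresponding bounds on $h = g_\lambda - g_0$ by subtraction (using that $g_0$ is a fixed smooth Lorentzian metric on the precompact set $U'$, so that $g_0$, $g_0^{-1}$ and a bounded number of derivatives are uniformly bounded there), and read off $|H - H_0| \ls \lambda^\eta$ and $|\rd H| \ls 1$ from part (4). Smoothness of $g_0$ gives $|H_0|, |\rd H_0| \ls 1$, and the last inequality $|\rd(H - H_0)| \ls 1$ is then just the triangle inequality. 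The localization constants coming from \eqref{eq:g.smallness} only worsen each estimate by a harmless $\lambda$-independent factor.

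The real content of the lemma is \eqref{eq:basic.bds.for.WC}, which I would obtain by linearizing the defining identity $H^\alp(g) = g^{\mu\nu} \Gamma^\alp_{\mu\nu}(g)$ around $g_0$. Writing $g_\lambda = g_0 + h$ and using the elementary expansions
\[
g_\lambda^{\alp\bt} = g_0^{\alp\bt} + O(|h|), \qquad \Gamma^\alp_{\mu\nu}(g_\lambda) = \Gamma^\alp_{\mu\nu}(g_0) + \tfrac12 g_0^{\alp\bt}\bigl(\rd_\mu h_{\bt\nu} + \rd_\nu h_{\mu\bt} - \rd_\bt h_{\mu\nu}\bigr) + O(|h|\,|\rd g_0|),
\]
together with $|h| \ls \lambda$ and $|\rd h| \ls 1$ from the already-established bounds \eqref{eq:basic.bds.for.Ric}, and then symmetrizing in $(\mu,\nu)$, one obtains
\[
H^\alp - H^\alp_0 = g_0^{\mu\nu} g_0^{\alp\bt}\bigl(\rd_\mu h_{\bt\nu} - \tfrac12 \rd_\bt h_{\mu\nu}\bigr) + O(\lambda).
\]
Because $\eta \in (0,1]$ we have $O(\lambda) = O(\lambda^\eta)$, so the hypothesis $|H^\alp - H^\alp_0| \ls \lambda^\eta$ forces the same bound on the linear expression on the right. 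Contracting with $(g_0)_{\alp\gamma}$ to eliminate the upper $g_0^{\alp\bt}$, renaming indices, and invoking the uniform pointwise bound on $g_0$, then yields \eqref{eq:basic.bds.for.WC}.

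The only step that requires a small amount of care is the bookkeeping of the quadratic-in-$h$ remainder in the linearization of $H^\alp$: each such remainder is a product of at least one factor $|h| \ls \lambda$ with terms of the form $|\rd g_0|$, $|\rd h|$, or further factors of $|h|$, and hence is $O(\lambda)$ by \eqref{eq:basic.bds.for.Ric}. This is the only potential obstacle, and it is essentially just algebra. No compensation, microlocal analysis, or compactness argument is required at this stage; the lemma is entirely an algebraic and pointwise consequence of Assumption~\ref{ass:main}.
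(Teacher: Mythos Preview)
Your proposal is correct and matches the paper's approach: the paper does not give an explicit proof of this lemma, presenting it as a direct unpacking of Assumption~\ref{ass:main}, and your linearization of $H^\alp(g_\lambda)$ around $g_0$ is exactly the natural way to obtain \eqref{eq:basic.bds.for.WC}. The only cosmetic point is that the displayed remainder $O(|h|\,|\rd g_0|)$ in your expansion of $\Gamma^\alp_{\mu\nu}(g_\lambda)$ should also include $O(|h|\,|\rd h|)$, but you account for this correctly in your final paragraph, so there is no gap.
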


\subsection{Computations for the Ricci tensor}

Since these computations are straightforward but tedious, the proofs will be relegated to the appendix. Define the following notation for the Christoffel symbol:
\begin{equation}
\Gamma_{\alp\bt\mu} = \rd_{(\alp} g_{\mu)\bt}  - \f 12 \rd_\bt g_{\mu\nu},\quad \Gamma_{\alp}{}^\sigma{}_{\mu} = g^{\sigma\bt}\Big( \rd_\alp g_{\bt\mu} + \rd_\mu g_{\alp\bt} - \rd_\bt g_{\mu\nu} \Big).
\end{equation}

The Ricci curvature takes the following form. The proof can be found in Appendix~\ref{sec:appendix}.
\begin{lemma}[Ricci curvature in generalized wave coordinates]\label{lem:Ric}
The following identity holds for any $C^2$ Lorentzian metric $g$:
\begin{equation}
2 \mathrm{Ric}(g)_{\mu\nu} = - \widetilde{\Box}_g g_{\mu\nu} + 2 g_{\rho(\mu} \rd_{\nu)} H^\rho(g)(\rd g) + H^\rho(g)(\rd g) \rd_\rho g_{\mu\nu} + B_{\mu\nu}(g)(\rd g,\rd g),
\end{equation}
where\footnote{Notice that $H^\rho (g)(\rd g) = g^{\alp\bt}\Gamma^{\rho}_{\alp\bt}$ so that the notation is consistent with \eqref{eq:generalized.wave.def}.}
\begin{align}
H^\rho (g)(\rd p) \doteq &\: g^{\rho\sigma}g^{\alp\bt}( \rd_\bt p_{\sigma\alp} - \f 12 \rd_{\sigma} p_{\alp\bt}),\\
\widetilde{\Box}_g p_{\mu\nu} \doteq &\: g^{\alp\bt} \rd^2_{\alp\bt} p_{\mu\nu}, \label{eq:tBox.def} \\
B_{\mu\nu}(g)(\rd p,\rd q) \doteq &\: 2 g^{\alp\bt} g^{\sigma\rho} \rd_{(\mu|} p_{\bt\rho} \rd_\sigma q_{|\nu)\alp} - \f 12 g^{\alp\bt} g^{\sigma\rho} \rd_{(\mu|} p_{\bt\rho} \rd_{|\nu)} q_{\alp\sigma} \notag \\
&\: + g^{\alp\bt} g^{\sigma\rho} \rd_\sigma p_{\nu\alp} \rd_\rho q_{\mu\bt} - g^{\alp\bt} g^{\sigma\rho} \rd_{\alp} p_{\nu\sigma} \rd_\rho q_{\mu\bt}.\label{eq:Bmunu.def}
\end{align}
\end{lemma}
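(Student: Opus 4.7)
The plan is to derive the identity by direct expansion of the Ricci tensor in terms of the Christoffel symbols. I would start from the standard expression
$$R_{\mu\nu} = \rd_\rho \Gamma^\rho_{\mu\nu} - \rd_\mu \Gamma^\rho_{\rho\nu} + \Gamma^\rho_{\rho\sigma}\Gamma^\sigma_{\mu\nu} - \Gamma^\rho_{\mu\sigma}\Gamma^\sigma_{\rho\nu},$$
and insert $\Gamma^\rho_{\alp\bt} = \f 12 g^{\rho\sigma}(\rd_\alp g_{\sigma\bt} + \rd_\bt g_{\sigma\alp} - \rd_\sigma g_{\alp\bt})$, separating second-derivative terms in $g$ from quadratic-in-$\rd g$ terms.

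For the principal part, using the standard trace identity $\Gamma^\rho_{\rho\nu} = \f 12 g^{\rho\sigma} \rd_\nu g_{\rho\sigma}$, one computes
$$[R_{\mu\nu}]_{\text{principal}} = -\f 12 g^{\alp\bt}\rd_\alp\rd_\bt g_{\mu\nu} + \f 12 g^{\alp\bt}\bigl(\rd_\alp\rd_\mu g_{\bt\nu} + \rd_\alp\rd_\nu g_{\bt\mu} - \rd_\mu\rd_\nu g_{\alp\bt}\bigr).$$
I would then expand $g_{\rho\nu}\rd_\mu H^\rho$ by the product rule: the contributions in which $\rd_\mu$ hits the inverse metric factors $g^{\rho\sigma}g^{\alp\bt}$ defining $H^\rho$ produce only quadratic remainders, while the contributions in which $\rd_\mu$ hits $\rd_\bt g_{\sigma\alp} - \f 12 \rd_\sigma g_{\alp\bt}$ reproduce, after symmetrization in $(\mu,\nu)$, precisely the second cluster of principal terms above. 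This identifies the principal part of $R_{\mu\nu}$ with $-\f 12 \widetilde{\Box}_g g_{\mu\nu} + g_{\rho(\mu}\rd_{\nu)} H^\rho$ modulo quadratic-in-$\rd g$ errors; multiplying by $2$ yields the first two terms of the desired formula.

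The remaining task is to show that every quadratic-in-$\rd g$ contribution assembles into $H^\rho(g)(\rd g) \rd_\rho g_{\mu\nu} + B_{\mu\nu}(g)(\rd g,\rd g)$. These contributions come from four sources: (i) the explicit $\Gamma^\rho_{\rho\sigma}\Gamma^\sigma_{\mu\nu}$ term in the Riemann tensor, whose contraction $\Gamma^\sigma_{\mu\nu}$ still contains a term proportional to $\rd_\rho g_{\mu\nu}$ that, combined with the $H^\rho$ factor from $\Gamma^\rho_{\rho\sigma}$, produces the $H^\rho \rd_\rho g_{\mu\nu}$ piece; (ii) the $\Gamma^\rho_{\mu\sigma}\Gamma^\sigma_{\rho\nu}$ contraction; (iii) terms arising when $\rd$ falls on $g^{-1}$ in $\rd_\rho \Gamma^\rho_{\mu\nu}$ and $\rd_\mu \Gamma^\rho_{\rho\nu}$, rewritten using $\rd g^{-1} = - g^{-1}(\rd g) g^{-1}$; and (iv) the correction terms produced by the expansion of $g_{\rho\nu}\rd_\mu H^\rho$ above. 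For each source I would rearrange the index structure so that all $g^{-1}$ factors are outside the derivatives, and then match the remainder monomial-by-monomial against \eqref{eq:Bmunu.def}.

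The main obstacle is not conceptual but combinatorial: the argument is a bookkeeping exercise in which many nearly identical quadratic monomials of the form $g^{\alp\bt}g^{\sigma\rho}\rd_\ast g_{\cdot\cdot} \rd_\ast g_{\cdot\cdot}$ must be collected and reorganized to match the precise symmetrized pattern in $B_{\mu\nu}$. This is why the authors defer the proof to the appendix. I would follow the same strategy, processing each of the four contributions above in turn, systematically applying the $(\mu,\nu)$ symmetrization and the symmetry of $g^{-1}$, and finally extracting the $H^\rho \rd_\rho g_{\mu\nu}$ piece from source (i) to leave exactly the four terms in \eqref{eq:Bmunu.def}.
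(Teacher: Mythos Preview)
Your overall strategy is the same as the paper's: start from a coordinate expression for the Ricci tensor, isolate the principal (second-derivative) part by comparing with $\rd_{(\mu}(g_{\nu)\rho}H^\rho)$, and then match the quadratic remainder against $B_{\mu\nu}$. The paper's computation in the appendix follows exactly this outline.

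There is, however, one bookkeeping error in your plan that the paper's organization sidesteps. You claim that source (i), the $\Gamma^\rho_{\rho\sigma}\Gamma^\sigma_{\mu\nu}$ term, contains an ``$H^\rho$ factor'' from $\Gamma^\rho_{\rho\sigma}$. This is not correct: $\Gamma^\rho_{\rho\sigma}=\tfrac12 g^{\alp\bt}\rd_\sigma g_{\alp\bt}$ is the trace contraction, whereas $g_{\sigma\rho}H^\rho = g^{\alp\bt}\rd_\bt g_{\sigma\alp}-\tfrac12 g^{\alp\bt}\rd_\sigma g_{\alp\bt}$ involves the other (divergence-type) contraction. So the $H^\rho\rd_\rho g_{\mu\nu}$ piece does not come cleanly from your source (i); in your organization it will emerge only after combining pieces from (i), (iii), and (iv).

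The paper avoids this by starting from the Ricci formula written with \emph{lowered} Christoffel symbols and the inverse metric pulled outside the derivative,
\[
\mathrm{Ric}_{\mu\nu}=g^{\alp\bt}\bigl(\rd_\bt\Gamma_{\mu\alp\nu}-\rd_\nu\Gamma_{\mu\alp\bt}+\Gamma_{\nu\sigma\alp}\Gamma_\mu{}^\sigma{}_\bt-\Gamma_{\alp\sigma\bt}\Gamma_\mu{}^\sigma{}_\nu\bigr),
\]
so that the last quadratic term is literally $g^{\alp\bt}\Gamma_{\alp\sigma\bt}\,\Gamma_\mu{}^\sigma{}_\nu = g_{\sigma\rho}H^\rho\,\Gamma_\mu{}^\sigma{}_\nu = H^\rho\bigl(\rd_{(\mu}g_{\nu)\rho}-\tfrac12\rd_\rho g_{\mu\nu}\bigr)$. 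This factors out $H^\rho$ immediately; combined with the $H^\rho\rd_{(\mu}g_{\nu)\rho}$ correction coming from $\rd_{(\mu}(g_{\nu)\rho}H^\rho)=g_{\rho(\mu}\rd_{\nu)}H^\rho+H^\rho\rd_{(\mu}g_{\nu)\rho}$, it produces $\tfrac12 H^\rho\rd_\rho g_{\mu\nu}$ on the nose. Your approach will of course reach the same endpoint, but if you want the $H^\rho\rd_\rho g_{\mu\nu}$ term to separate out cleanly rather than assemble from several sources, switching to this form is what buys you that.
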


We now take the expression in Lemma~\ref{lem:Ric} and derive an equation for $\Box_{g_0} (g_\lambda - g_0)$ by considering $\mathrm{Ric}(g_\lambda) - \mathrm{Ric}(g_0)$. 
In the next few lemmas, we consider the contributions from the different terms in Lemma~\ref{lem:Ric}. The proofs of Lemma~\ref{lem:Ric}--Lemma~\ref{lem:Ric.quasilinear} are relegated to Appendix~\ref{sec:appendix}, while the proof of Lemma~\ref{lem:Ric.H} is straightforward and omitted.

\begin{lemma}[Linear structure of $B_{\mu\nu}$]\label{lem:Ric.linear}
Under the assumption \eqref{eq:basic.bds.for.Ric}, the following holds (see \eqref{eq:Bmunu.def} for definition of $B_{\mu\nu}$):
\begin{equation}
\begin{split}
B_{\mu\nu}(g_\lambda)(\rd g_\lambda,\rd g_\lambda) - B_{\mu\nu}(g_0)(\rd g_0,\rd g_0) 
= &\:  L_{\mu\nu}(g_0)(\rd h)  + B_{\mu\nu}(g_0)(\rd h,\rd h) + O(\lambda),
\end{split}
\end{equation}
where
\begin{equation}\label{eq:L.def}
L_{\mu\nu}(g_0)(\rd h) \doteq 4 g_0^{\sigma\rho}\Gamma_{\rho}{}^{\alp}{}_{(\mu|}(g_0)\rd_{\sigma} h_{|\nu)\alp} + D_{(\mu|}^{\alp\sigma} (g_0)\rd_{|\nu)} h_{\alp\sigma}
\end{equation}
and 
\begin{equation}
D_{\mu}^{\alp\sigma}(g_0) \doteq g_0^{\alp\bt} g_0^{\sigma\rho} (2 \rd_\rho (g_0)_{\bt \mu} - \rd_{\mu} (g_0)_{\bt\rho}).
\end{equation}
\end{lemma}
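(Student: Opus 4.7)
The structure of the lemma suggests exploiting two facts: (i) $B_{\mu\nu}(g)(\rd p,\rd q)$ is bilinear in its derivative arguments for each fixed $g$, and (ii) $|g_\lambda^{-1} - g_0^{-1}| \ls \lambda$, which follows from $|h|\ls \lambda$ together with the uniform bounds on both inverse metrics in \eqref{eq:basic.bds.for.Ric}.

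The plan is to first write $\rd g_\lambda = \rd g_0 + \rd h$ and use bilinearity to expand
\begin{equation*}
B_{\mu\nu}(g_\lambda)(\rd g_\lambda, \rd g_\lambda) = B_{\mu\nu}(g_\lambda)(\rd g_0, \rd g_0) + B_{\mu\nu}(g_\lambda)(\rd g_0, \rd h) + B_{\mu\nu}(g_\lambda)(\rd h, \rd g_0) + B_{\mu\nu}(g_\lambda)(\rd h, \rd h).
\end{equation*}
Next, in each of these four terms I would replace the coefficient $g_\lambda^{-1}$ by $g_0^{-1}$. Using $|g_\lambda^{-1} - g_0^{-1}|\ls \lambda$ together with the pointwise bounds $|\rd g_0|,\,|\rd h| \ls 1$ from \eqref{eq:basic.bds.for.Ric}, each such substitution generates an $O(\lambda)$ error, since every term contains exactly two factors of derivatives of the metric. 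After subtracting $B_{\mu\nu}(g_0)(\rd g_0, \rd g_0)$ from both sides and collecting the $B_{\mu\nu}(g_0)(\rd h,\rd h)$ term on the right, the lemma is reduced to the purely algebraic identity
\begin{equation*}
B_{\mu\nu}(g_0)(\rd g_0, \rd h) + B_{\mu\nu}(g_0)(\rd h, \rd g_0) = L_{\mu\nu}(g_0)(\rd h).
\end{equation*}

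This last identity I would verify by direct index-chasing. Substituting the definition \eqref{eq:Bmunu.def} of $B_{\mu\nu}$ produces eight terms, each a contraction with exactly one factor of $\rd g_0$ and one of $\rd h$. Rewriting the $\rd g_0$ factors in terms of Christoffel symbols (via the identity $\rd_\alpha (g_0)_{\beta\gamma} = \Gamma_{\alpha\beta\gamma}(g_0) + \Gamma_{\alpha\gamma\beta}(g_0)$, then raising an index with $g_0^{-1}$) and regrouping according to whether the $\rd h$ factor carries its derivative on one of the free indices $(\mu,\nu)$ or on a contracted index, one should see the two summands in \eqref{eq:L.def} emerge: contributions of the first type assemble into $D^{\alpha\sigma}_{(\mu|}(g_0)\rd_{|\nu)}h_{\alpha\sigma}$, while contributions of the second type assemble into $4 g_0^{\sigma\rho}\Gamma_\rho{}^\alpha{}_{(\mu|}(g_0)\rd_\sigma h_{|\nu)\alpha}$.

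The whole argument is essentially algebraic; there is no analytic subtlety beyond the two a priori bounds $|h|\ls \lambda$ and $|\rd h|\ls 1$, and the smallness $|g_\lambda^{-1}-g_0^{-1}|\ls\lambda$. The only genuine obstacle is the bookkeeping required by the $(\mu,\nu)$ symmetrizations and the interplay between the four distinct quadratic structures appearing in $B_{\mu\nu}$ when checking the final identity, which is presumably why the paper defers the explicit verification to Appendix~\ref{sec:appendix}.
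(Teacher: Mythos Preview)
Your proposal is correct and follows essentially the same approach as the paper: expand bilinearly in $\rd g_\lambda = \rd g_0 + \rd h$, absorb the replacement of $g_\lambda^{-1}$ by $g_0^{-1}$ into an $O(\lambda)$ error, and then verify the algebraic identity $B_{\mu\nu}(g_0)(\rd g_0,\rd h)+B_{\mu\nu}(g_0)(\rd h,\rd g_0)=L_{\mu\nu}(g_0)(\rd h)$ by direct computation. The paper carries out that last computation explicitly in the appendix by grouping the terms exactly as you describe, arriving at the Christoffel and $D$ pieces.
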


\begin{lemma}[Nonlinear structure of $B_{\mu\nu}$]\label{lem:Ric.B}
Under the assumption \eqref{eq:basic.bds.for.Ric}, $B_{\mu\nu}$ (see \eqref{eq:Bmunu.def}) can be decomposed as
$$B_{\mu\nu}(g_0)(\rd h,\rd h) = Q_{\mu\nu}(g_0)(\rd h,\rd h) + P_{\mu\nu}(g_0)(\rd h,\rd h) + O(\lambda^\eta),$$
where $Q_{\mu\nu}$ and $P_{\mu\nu}$ are given by
\begin{equation}\label{eq:Q.def}
\begin{split}
Q_{\mu\nu}(g_0)(\rd h,\rd h) \doteq &\: g_0^{\alp\bt} \mathfrak Q^{(g_0)}(h_{\nu\alp}, h_{\mu\bt}) + g_0^{\alp\bt} g_0^{\sigma\rho} \mathfrak Q_{\bt(\mu|}(h_{\sigma\rho}, h_{|\nu)\alp})  \\
&\:+ 2 g_0^{\alp\bt} g_0^{\sigma\rho} \mathfrak Q_{(\mu|\sigma} (h_{\bt\rho}, h_{|\nu)\alp}) - g_0^{\alp\bt} g_0^{\sigma\rho} \mathfrak Q_{\alp\rho}(h_{\nu\sigma}, h_{\mu\bt})
\end{split}
\end{equation}
and
\begin{equation}\label{eq:P.def}
P_{\mu\nu}(g_0)(\rd h, \rd h) \doteq \frac 14 g_0^{\alp\alp'}\rd_\mu h_{\alp \alp'} g_0^{\bt \bt'}\rd_\nu h_{\bt \bt'} - \frac 12 g_0^{\alp\alp'}\rd_\mu h_{\alp \bt} g_0^{\bt \bt'}\rd_\nu h_{\alp' \bt'}.
\end{equation}
In particular, $Q_{\mu\nu}$ is a linear combination of terms satisfying the classical null condition, i.e., it can be written as a linear combination of the null forms in Definition~\ref{def:Q0} and Definition~\ref{def:Qmunu}.
\end{lemma}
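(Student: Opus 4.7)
The proof is a purely algebraic manipulation, following the same spirit as the Lindblad--Rodnianski decomposition of the Einstein tensor in wave coordinates. The plan is to expand $B_{\mu\nu}(g_0)(\rd h,\rd h)$ using the definition \eqref{eq:Bmunu.def} and reorganize each of its four quadratic summands into three pieces: (i) linear combinations of the classical null forms $\mathfrak Q_0^{(g_0)}$ and $\mathfrak Q_{\alp\bt}$, which will recover $Q_{\mu\nu}$; (ii) trace-squared pieces which will recover $P_{\mu\nu}$; and (iii) a pointwise remainder of size $O(\lambda^\eta)$.

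The main tool will be the generalized wave coordinate identity \eqref{eq:basic.bds.for.WC}, which asserts that $g_0^{\alp\alp'}\rd_\alp h_{\bt\alp'} = \tfrac{1}{2} g_0^{\alp\alp'}\rd_\bt h_{\alp\alp'} + O(\lambda^\eta)$. Whenever a trace-type contraction $g_0^{\alp\bt}\rd_\alp h_{\bt \,\cdot}$ appears as one factor in a product with a $\rd h$ factor (which is bounded by \eqref{eq:basic.bds.for.Ric}), substituting using this identity costs only $O(\lambda^\eta)$. The algorithm therefore is: for each summand in \eqref{eq:Bmunu.def}, split it into a piece symmetric in the pair of derivative indices and a piece antisymmetric in them. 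The antisymmetric pieces are automatically null forms $\mathfrak Q_{\alp\bt}$ and feed directly into $Q_{\mu\nu}$. The symmetric pieces can in general be rewritten so that one factor is either a $g_0^{-1}$-contracted quadratic form in $\rd h$ (that is, a $\mathfrak Q_0^{(g_0)}$-type null form) or else involves a trace-type contraction of the form $g_0^{\alp\bt}\rd_\alp h_{\bt\,\cdot}$; in the latter case \eqref{eq:basic.bds.for.WC} is applied and produces a term of exactly the shape appearing in $P_{\mu\nu}$ modulo the error.

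The concrete execution will go term by term through the four summands of \eqref{eq:Bmunu.def}. For example, in $g_0^{\alp\bt}g_0^{\sigma\rho}\rd_\sigma h_{\nu\alp}\rd_\rho h_{\mu\bt}$, the antisymmetrization in $\sigma,\rho$ produces the $g_0^{\alp\bt}g_0^{\sigma\rho}\mathfrak Q_{\alp\rho}(h_{\nu\sigma},h_{\mu\bt})$ piece visible in the last term of $Q_{\mu\nu}$, while the symmetric part can be contracted and then traded using \eqref{eq:basic.bds.for.WC} for a derivative in the $\mu$- or $\nu$-direction, producing a $P$-type contribution. Similar splittings handle the symmetrized terms $\rd_{(\mu|}h_{\bt\rho}\rd_\sigma h_{|\nu)\alp}$ and $\rd_{(\mu|}h_{\bt\rho}\rd_{|\nu)}h_{\alp\sigma}$. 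The expected hard part is not conceptual but combinatorial: the symmetrization in $(\mu,\nu)$ together with the many dummy-index contractions makes coefficient tracking error-prone, and the matching of the final coefficients $\tfrac14$ and $-\tfrac12$ in \eqref{eq:P.def} only emerges after careful collection. I would therefore verify the computation by independently expanding $Q_{\mu\nu}+P_{\mu\nu}$ from \eqref{eq:Q.def}--\eqref{eq:P.def} and comparing to the rewritten form of $B_{\mu\nu}$, checking that the difference is exhausted by expressions to which \eqref{eq:basic.bds.for.WC} applies directly.
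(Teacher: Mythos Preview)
Your approach is correct and matches the paper's own proof: the paper likewise manipulates the first and last summands of \eqref{eq:Bmunu.def} by swapping derivative indices to peel off $\mathfrak Q_{\alp\bt}$-type null forms and then applying \eqref{eq:basic.bds.for.WC} to convert the residual divergence-type factors into trace-type factors, which assemble into $P_{\mu\nu}$. One small slip in your illustrative example: the term $g_0^{\alp\bt}g_0^{\sigma\rho}\rd_\sigma h_{\nu\alp}\rd_\rho h_{\mu\bt}$ is already $g_0^{\alp\bt}\mathfrak Q_0^{(g_0)}(h_{\nu\alp},h_{\mu\bt})$ and needs no further work (antisymmetrizing in $\sigma,\rho$ and contracting with the symmetric $g_0^{\sigma\rho}$ gives zero); the $\mathfrak Q_{\alp\rho}$ contribution you describe actually comes from the fourth term $-g_0^{\alp\bt}g_0^{\sigma\rho}\rd_\alp h_{\nu\sigma}\rd_\rho h_{\mu\bt}$, where the derivative indices are $\alp,\rho$.
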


\begin{lemma}[Quasilinear terms]\label{lem:Ric.quasilinear}
Under the assumption \eqref{eq:basic.bds.for.Ric}, the following holds:
$$\widetilde{\Box}_{g_\lambda} (g_{\lambda})_{\mu\nu} - \widetilde{\Box}_{g_0} (g_0)_{\mu\nu} = \widetilde{\Box}_{g_0} h_{\mu\nu} - g_0^{\alp\alp'}g_0^{\bt\bt'} h_{\alp'\bt'} \rd^2_{\alp\bt} h_{\mu\nu} + O(\lambda).$$
\end{lemma}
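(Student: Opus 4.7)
The plan is to reduce everything to a first-order Taylor expansion of the inverse metric $g_\lambda^{-1}$ around $g_0^{-1}$, with the bounds in \eqref{eq:basic.bds.for.Ric} providing all the needed control over error terms.

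First, I would split the difference using $g_\lambda = g_0 + h$ as
\begin{equation*}
\widetilde{\Box}_{g_\lambda} (g_\lambda)_{\mu\nu} - \widetilde{\Box}_{g_0} (g_0)_{\mu\nu}
= g_\lambda^{\alpha\beta}\,\partial^2_{\alpha\beta} h_{\mu\nu} + (g_\lambda^{\alpha\beta} - g_0^{\alpha\beta})\,\partial^2_{\alpha\beta}(g_0)_{\mu\nu}.
\end{equation*}
The second summand is immediately $O(\lambda)$: since $|h|\lesssim \lambda$ and $|g_\lambda^{-1}|,|g_0^{-1}|\lesssim 1$, the standard identity $g_\lambda^{\alpha\beta} - g_0^{\alpha\beta} = -g_0^{\alpha\alpha'}g_0^{\beta\beta'}h_{\alpha'\beta'} + O(|h|^2)$ gives $|g_\lambda^{-1}-g_0^{-1}|\lesssim \lambda$, while $|\partial^2 g_0|\lesssim 1$ by \eqref{eq:basic.bds.for.Ric}.

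For the first summand, I would substitute the Neumann-type expansion
\begin{equation*}
g_\lambda^{\alpha\beta} = g_0^{\alpha\beta} - g_0^{\alpha\alpha'}g_0^{\beta\beta'} h_{\alpha'\beta'} + O(|h|^2),
\end{equation*}
obtained from iterating $g_\lambda^{-1} - g_0^{-1} = -g_\lambda^{-1}\,h\,g_0^{-1}$ once and then bounding the remainder. Multiplying by $\partial^2 h_{\mu\nu}$ and using $|h|^2\lesssim \lambda^2$ together with $|\partial^2 h|\lesssim \lambda^{-1}$, the quadratic-in-$h$ remainder contributes only $O(\lambda^2 \cdot \lambda^{-1}) = O(\lambda)$. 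What survives is exactly
\begin{equation*}
g_\lambda^{\alpha\beta}\,\partial^2_{\alpha\beta} h_{\mu\nu} = \widetilde{\Box}_{g_0} h_{\mu\nu} - g_0^{\alpha\alpha'}g_0^{\beta\beta'} h_{\alpha'\beta'}\,\partial^2_{\alpha\beta} h_{\mu\nu} + O(\lambda),
\end{equation*}
which combined with the first step yields the claimed identity.

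There is no real obstacle here: the lemma is a purely algebraic book-keeping statement and the only analytic input is matching the powers of $\lambda$ from \eqref{eq:basic.bds.for.Ric}. The only minor subtlety is verifying that $|g_\lambda^{-1} - g_0^{-1}|\lesssim \lambda$ uniformly; this follows since \eqref{eq:g.smallness} keeps both $g_0$ and $g_\lambda$ uniformly close to Minkowski, so the geometric series defining the inverse converges with uniform bounds independent of $\lambda$.
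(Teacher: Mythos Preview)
Your proof is correct and follows essentially the same approach as the paper: both arguments split the difference of the two $\widetilde{\Box}$ terms (you group as $g_\lambda^{-1}\rd^2 h + (g_\lambda^{-1}-g_0^{-1})\rd^2 g_0$, the paper as $(g_\lambda^{-1}-g_0^{-1})\rd^2 g_\lambda + g_0^{-1}\rd^2 h$, which is algebraically equivalent), then invoke the Neumann expansion $g_\lambda^{-1} - g_0^{-1} = -g_0^{-1}h\,g_0^{-1} + O(\lambda^2)$ and the bounds in \eqref{eq:basic.bds.for.Ric} to absorb the remainders into $O(\lambda)$.
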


\begin{lemma}[Terms related to the wave coordinate condition]\label{lem:Ric.H}
Under the assumptions \eqref{eq:basic.bds.for.Ric} and \eqref{eq:basic.bds.for.H}, the following holds:
\begin{equation*}
\begin{split}
&\: 2 (g_\lambda)_{\rho(\mu} \rd_{\nu)} H^\rho + H^\rho \rd_\rho (g_\lambda)_{\mu\nu} -  (2 (g_0)_{\rho(\mu} \rd_{\nu)} H_0^\rho + H_0^\rho \rd_\rho (g_0)_{\mu\nu}) \\
 =&\: 2 (g_0)_{\rho(\mu} \rd_{\nu)} (H^\rho - H_0^\rho) + H_0^\rho \rd_\rho h_{\mu\nu} + O(\lambda^{\eta}).
 \end{split}
 \end{equation*}
\end{lemma}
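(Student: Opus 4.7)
The plan is to prove this by direct expansion. Write $(g_\lambda)_{\rho\mu} = (g_0)_{\rho\mu} + h_{\rho\mu}$ on the left-hand side and regroup the terms to isolate the contributions that appear in the claimed right-hand side. Explicitly,
\begin{equation*}
\begin{split}
&2 (g_\lambda)_{\rho(\mu} \rd_{\nu)} H^\rho + H^\rho \rd_\rho (g_\lambda)_{\mu\nu} - 2 (g_0)_{\rho(\mu} \rd_{\nu)} H_0^\rho - H_0^\rho \rd_\rho (g_0)_{\mu\nu} \\
&\qquad = 2(g_0)_{\rho(\mu} \rd_{\nu)}(H^\rho - H_0^\rho) + H_0^\rho \rd_\rho h_{\mu\nu} + \mathcal{E}_{\mu\nu},
\end{split}
\end{equation*}
where the error $\mathcal{E}_{\mu\nu}$ collects exactly the three terms
\begin{equation*}
\mathcal{E}_{\mu\nu} = 2 h_{\rho(\mu} \rd_{\nu)} H^\rho + (H^\rho - H_0^\rho) \rd_\rho (g_0)_{\mu\nu} + (H^\rho - H_0^\rho) \rd_\rho h_{\mu\nu}.
\end{equation*}

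Next I would bound each term in $\mathcal{E}_{\mu\nu}$ pointwise using \eqref{eq:basic.bds.for.Ric} and \eqref{eq:basic.bds.for.H}. The first term is controlled by $|h|\,|\rd H| \ls \lambda$; the second by $|H - H_0|\,|\rd g_0| \ls \lambda^\eta$; and the third by $|H - H_0|\,|\rd h| \ls \lambda^\eta$. Since $\eta \in (0,1]$, in all three cases the bound is $O(\lambda^\eta)$, yielding the claim.

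There is no real obstacle here; the lemma is purely bookkeeping and the main point to keep track of is that we only use $|h|\ls \lambda$, $|\rd h|\ls 1$, $|\rd g_0|\ls 1$, $|\rd H|\ls 1$, and $|H-H_0|\ls \lambda^\eta$, all of which are furnished by the stated hypotheses. Consequently the proof fits in a few lines and can reasonably be omitted or placed in the appendix together with the proofs of Lemmas~\ref{lem:Ric}--\ref{lem:Ric.quasilinear}.
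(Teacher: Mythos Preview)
Your proof is correct and is precisely the routine computation the paper has in mind; indeed, the paper omits the proof of this lemma entirely, calling it ``straightforward,'' and your expansion with the three-term error $\mathcal{E}_{\mu\nu}$ bounded by $|h|\,|\rd H| + |H-H_0|\,|\rd g_0| + |H-H_0|\,|\rd h| = O(\lambda^\eta)$ is exactly what is needed.
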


Putting together the above lemmas, we obtain the following main wave equation for $h$:
\begin{proposition}[Structure of $\Box_{g_0} h$]\label{prop:h.eqn}
Under the assumptions \eqref{eq:basic.bds.for.Ric} and \eqref{eq:basic.bds.for.H}, the following holds:
\begin{equation}\label{eq:h.eqn}
\begin{split}
\Box_{g_0} h_{\mu\nu} = &\: 2 \mathrm{Ric}(g_0)_{\mu\nu} + g_0^{\alp\alp'}g_0^{\bt\bt'} h_{\alp'\bt'} \rd^2_{\alp\bt} h_{\mu\nu} + L_{\mu\nu}(g_0)(\rd h) \\
&\: + Q_{\mu\nu}(g)(\rd h,\rd h) + P_{\mu\nu}(g)(\rd h,\rd h) + 2 (g_0)_{\alp(\mu} \rd_{\nu)} (H^\alp - H^\alp_0) + O(\lambda^\eta),
\end{split}
\end{equation}
where $\Box_{g_0}$ is understood as the wave operator for scalar functions, i.e., $\Box_{g_0} = g_0^{\alp\bt} \rd^2_{\alp\bt} - H^\alp(g_0)(\rd g_0)\rd_\alp$. In particular, we also have
\begin{equation}\label{eq:Boxh.bound}
|\Box_{g_0} h| \ls 1.
\end{equation}
\end{proposition}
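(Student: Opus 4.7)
The plan is to assemble the four structural lemmas of this section into the single equation \eqref{eq:h.eqn}. First I would apply Lemma~\ref{lem:Ric} to both $g_\lambda$ and $g_0$; since $\mathrm{Ric}(g_\lambda)=0$ by Assumption~\ref{ass:main}(1), subtracting the two identities expresses $-2\mathrm{Ric}(g_0)_{\mu\nu}$ as the sum of three brackets: a $\widetilde{\Box}$-difference $\widetilde{\Box}_{g_\lambda}(g_\lambda)_{\mu\nu}-\widetilde{\Box}_{g_0}(g_0)_{\mu\nu}$, an $H$-difference $2(g_\lambda)_{\rho(\mu}\rd_{\nu)}H^\rho + H^\rho\rd_\rho(g_\lambda)_{\mu\nu}-2(g_0)_{\rho(\mu}\rd_{\nu)}H_0^\rho - H_0^\rho\rd_\rho(g_0)_{\mu\nu}$, and a $B$-difference $B_{\mu\nu}(g_\lambda)(\rd g_\lambda,\rd g_\lambda)-B_{\mu\nu}(g_0)(\rd g_0,\rd g_0)$. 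Into these I would substitute Lemma~\ref{lem:Ric.quasilinear} for the $\widetilde{\Box}$-difference, Lemma~\ref{lem:Ric.H} for the $H$-difference, and Lemmas~\ref{lem:Ric.linear} and \ref{lem:Ric.B} for the $B$-difference. Every term on the right-hand side of \eqref{eq:h.eqn} then appears, with errors of size $O(\lambda^\eta)$ (absorbing the $O(\lambda)$ remainders from the quasilinear step since $\eta\leq 1$).

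The single algebraic point to check carefully is the conversion between the reduced wave operator $\widetilde{\Box}_{g_0} p_{\mu\nu}=g_0^{\alp\bt}\rd^2_{\alp\bt}p_{\mu\nu}$ (used in Lemma~\ref{lem:Ric}) and the scalar wave operator $\Box_{g_0}=g_0^{\alp\bt}\rd^2_{\alp\bt}-H_0^\alp\rd_\alp$ (which is what appears in \eqref{eq:h.eqn}). Acting componentwise on $h_{\mu\nu}$, these operators differ by $H_0^\alp\rd_\alp h_{\mu\nu}$, and this is precisely the extra term produced by Lemma~\ref{lem:Ric.H}. The two therefore cancel exactly, and upon rearrangement \eqref{eq:h.eqn} drops out.

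Finally, for the pointwise bound \eqref{eq:Boxh.bound}, I would estimate each term on the right-hand side of \eqref{eq:h.eqn} using \eqref{eq:basic.bds.for.Ric} and \eqref{eq:basic.bds.for.H}. The only potentially dangerous contribution is the quasilinear term $g_0^{\alp\alp'}g_0^{\bt\bt'}h_{\alp'\bt'}\rd^2_{\alp\bt}h_{\mu\nu}$, but $|h|\,|\rd^2 h|\ls\lambda\cdot\lambda^{-1}=1$, which is exactly the scaling built into Assumption~\ref{ass:main}(3). The linear term $L_{\mu\nu}(g_0)(\rd h)$ and the quadratic forms $Q_{\mu\nu}, P_{\mu\nu}$ are controlled by $|\rd h|\ls 1$ and $|\rd h|^2\ls 1$ respectively; $|\rd(H^\alp-H_0^\alp)|\leq|\rd H|+|\rd H_0|\ls 1$ by \eqref{eq:basic.bds.for.H}; and $\mathrm{Ric}(g_0)$ is smooth on the compact domain, hence bounded.

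I do not anticipate any real obstacle, as the proposition is essentially an algebraic repackaging of the preceding lemmas. The main things to keep in mind are the $\widetilde{\Box}$-vs-$\Box$ bookkeeping together with the associated $H_0$-cancellation noted above, and the exact matching of scales $|h|\cdot|\rd^2 h|\ls 1$ that makes $\Box_{g_0}h$ bounded uniformly in $\lambda$.
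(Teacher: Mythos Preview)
Your proposal is correct and follows essentially the same approach as the paper: subtract the two applications of Lemma~\ref{lem:Ric}, plug in Lemmas~\ref{lem:Ric.quasilinear}, \ref{lem:Ric.H}, \ref{lem:Ric.linear}, \ref{lem:Ric.B}, and combine $\widetilde{\Box}_{g_0}h_{\mu\nu}$ with the $H_0^\rho\rd_\rho h_{\mu\nu}$ term from Lemma~\ref{lem:Ric.H} to produce $\Box_{g_0}h_{\mu\nu}$. Your treatment of the bound \eqref{eq:Boxh.bound}, in particular the observation that $|h|\,|\rd^2 h|\ls 1$ is the only term requiring the second-derivative assumption, is also exactly right.
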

\begin{proof}
This is an immediate consequence of Lemmas~\ref{lem:Ric}, \ref{lem:Ric.linear}, \ref{lem:Ric.B}, \ref{lem:Ric.quasilinear} and \ref{lem:Ric.H}. Notice that we combine $\widetilde{\Box}_{g_0} h_{\mu\nu}$ and the term $H_0^\rho \rd_\rho h_{\mu\nu}$ in Lemma~\ref{lem:Ric.H} to form $\Box_{g_0} h_{\mu\nu}$. \qedhere
\end{proof}
	
	\section{Microlocal defect measure and the limiting Ricci curvature}\label{sec:mdm}
	
	We now begin the proof of Theorem~\ref{thm:main}. We work under Assumption~\ref{ass:main} and \eqref{eq:g.smallness}, and continue to use the conventions introduced in Section~\ref{sec:conventions}.
	
	Our goal in this section will be to prove parts (1)--(3) of Theorem~\ref{thm:main}.
	
\subsection{Definition of the microlocal defect measure}

\begin{definition}\label{def:mu.abrs}
Define $\mu_{\alp\bt\rho\sigma}$ so that the following holds for all $A\in \Psi^0$ with principal symbol $a(x,\xi)$ which is real, positively $0$-homogeneous and has compact support in $x$:
$$\lim_{\lambda\to 0} \la \rd_\gamma h_{\alp\bt}, A \rd_\de h_{\rho\sigma} \ra_{L^2} = \int_{S^* U} a \,\xi_\gamma \xi_\de \ud \mu_{\alp\bt\rho\sigma},$$
where we take $\mu_{\alp\bt\rho\sigma}$ to be a measure on $S^* U$ by acting on functions which are positively $2$-homogeneous in $\xi$ (see Definition~\ref{def:measure}), and the $\lim_{\lambda \to 0}$ limit is to be understood after passing to a (further) subsequence. Given $\mu_{\alp\bt\rho\sigma}$, we then define $\mu$ by \eqref{eq:mu.def}.
\end{definition}

\begin{rk}\label{rmk:existence.of.mu}
\begin{enumerate}
\item (Existence of $\mu_{\alp\bt\rho\sigma}$) The existence of the measures $\mu_{\alp\bt\rho\sigma}$ follows from the standard theory of microlocal defect measures. By our assumptions, $\rd h$ is $L^2$-bounded uniformly in $\lambda$. Thus, results in \cite{Gerard,Tartar} show that after passing to a subsequence, there is a measure $\widetilde{\mu}_{\gamma\alp\bt\de\rho\sigma}$ such that 
$$\lim_{\lambda\to 0} \la \rd_\gamma h_{\alp\bt}, A \rd_\de h_{\rho\sigma} \ra_{L^2} = \int_{S^* U} a  \ud \widetilde{\mu}_{\gamma\alp\bt\de\rho\sigma},$$
with $\widetilde{\mu}_{\gamma\alp\bt\de\rho\sigma}$ acting on positively $0$-homogeneous functions. Finally, using localization lemma for microlocal defect measures \cite[Corollary~2.2]{Gerard} and the commutation of mixed partials, it can be shown that $\widetilde{\mu}_{\gamma\alp\bt\de\rho\sigma} = \xi_\gamma \xi_\de \mu_{\alp\bt\rho\sigma}$. A similar and more detailed argument can be found for instance in \cite{FrancfortMurat} or \cite[Sections~6.1, 6.2]{HL.Burnett}.
\item (Symmetries of $\mu_{\alp\bt\rho\sigma}$) It is easy to check that
\begin{equation}\label{eq:mu.obvious.symmetries}
a \mu_{\alp\bt\rho\sigma} = a \mu_{\rho\sigma\alp\bt} = a \mu_{\bt\alp\rho\sigma}.
\end{equation}
Indeed, the first equality follows from the fact that $a$ is also the principal symbol of $A^*$ and the second equality follows from the fact that $h$ is a symmetric tensor. 
\end{enumerate}
\end{rk}

The following fact will be useful, and is a consequence of the generalized wave coordinate condition.
\begin{lemma}\label{lem:wave.coord.for.MLDM}
For any indices $\gamma,\sigma,\nu$,
\begin{equation}\label{eq:wave.coord.for.MLDM}
g_0^{\alp\bt} \xi_\alp \mu_{\bt \gamma\sigma\nu} - \f 12 g_0^{\alp\bt} \xi_\gamma \mu_{\alp\bt\sigma\nu} = 0.
\end{equation}
\end{lemma}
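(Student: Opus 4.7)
The plan is to deduce \eqref{eq:wave.coord.for.MLDM} directly from the pointwise bound \eqref{eq:basic.bds.for.WC}, which encodes the generalized wave coordinate condition in quantitative form. Fix any $A \in \Psi^0$ whose principal symbol $a(x,\xi)$ is real, positively $0$-homogeneous, and compactly supported in $x$, and fix indices $\gamma, \sigma, \nu, \de$. By \eqref{eq:basic.bds.for.WC}, the quantity $g_0^{\alp\bt}\rd_\alp h_{\bt\gamma} - \f{1}{2} g_0^{\alp\bt} \rd_\gamma h_{\alp\bt}$ is $O(\lambda^\eta)$ pointwise, while $A \rd_\de h_{\sigma\nu}$ is uniformly bounded in $L^2$ with output supported in a fixed compact set (determined by the $x$-support of $a$). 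Thus by Cauchy--Schwarz,
\begin{equation*}
\la g_0^{\alp\bt}\rd_\alp h_{\bt\gamma} - \tfrac{1}{2} g_0^{\alp\bt} \rd_\gamma h_{\alp\bt},\, A \rd_\de h_{\sigma\nu}\ra_{L^2} = O(\lambda^\eta),
\end{equation*}
which vanishes in the limit $\lambda \to 0$.

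The next step is to identify each piece of this inner product as an integral against $\mu_{\cdot\cdot\cdot\cdot}$ via Definition~\ref{def:mu.abrs}. Since $g_0^{\alp\bt}$ is smooth and real-valued, $\la g_0^{\alp\bt}\rd_\alp h_{\bt\gamma}, A \rd_\de h_{\sigma\nu}\ra = \la \rd_\alp h_{\bt\gamma}, (g_0^{\alp\bt} A) \rd_\de h_{\sigma\nu}\ra$, and $g_0^{\alp\bt} A \in \Psi^0$ has principal symbol $g_0^{\alp\bt}(x)\, a(x,\xi)$, inheriting the properties needed to apply Definition~\ref{def:mu.abrs}. Passing to the subsequence defining $\mu_{\alp\bt\rho\sigma}$, one obtains
\begin{equation*}
\lim_{\lambda \to 0} \la g_0^{\alp\bt}\rd_\alp h_{\bt\gamma}, A \rd_\de h_{\sigma\nu}\ra = \int_{S^*U} g_0^{\alp\bt}(x)\, a(x,\xi)\, \xi_\alp \xi_\de\, \ud\mu_{\bt\gamma\sigma\nu},
\end{equation*}
and analogously $\lim_{\lambda \to 0} \la g_0^{\alp\bt}\rd_\gamma h_{\alp\bt}, A \rd_\de h_{\sigma\nu}\ra = \int_{S^*U} g_0^{\alp\bt}(x)\, a(x,\xi)\, \xi_\gamma \xi_\de\, \ud\mu_{\alp\bt\sigma\nu}$. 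Combining with the previous step gives
\begin{equation*}
\int_{S^*U} a(x,\xi)\, \xi_\de \left(g_0^{\alp\bt}\xi_\alp\, \ud\mu_{\bt\gamma\sigma\nu} - \tfrac{1}{2} g_0^{\alp\bt}\xi_\gamma\, \ud\mu_{\alp\bt\sigma\nu}\right) = 0
\end{equation*}
for every admissible $a$ and every index $\de$.

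Finally, to promote this to the measure identity \eqref{eq:wave.coord.for.MLDM}, I would note that any continuous, positively $1$-homogeneous $b(x,\xi)$ with compact $x$-support admits the decomposition $b(x,\xi) = \sum_{\de} \xi_\de \cdot \big(\xi_\de b(x,\xi)/|\xi|^2\big)$, where each factor $\xi_\de b/|\xi|^2$ is continuous, positively $0$-homogeneous in $\xi$, and compactly $x$-supported. Approximating these factors by smooth principal symbols (by density) and summing over $\de$, the displayed identity forces the $1$-homogeneous measure $g_0^{\alp\bt}\xi_\alp\, \ud\mu_{\bt\gamma\sigma\nu} - \f{1}{2} g_0^{\alp\bt}\xi_\gamma\, \ud\mu_{\alp\bt\sigma\nu}$ to annihilate every such $b$, which is precisely \eqref{eq:wave.coord.for.MLDM}. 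There is no genuine obstacle here; the only step that needs care is the pseudo-differential symbol computation identifying the principal symbol of $g_0^{\alp\bt} A$, which is routine since multiplication by a smooth function is a $\Psi^0$ operator whose full symbol is just the function itself.
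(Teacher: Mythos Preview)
Your proposal is correct and takes essentially the same approach as the paper: the paper's proof is the single sentence ``This follows from \eqref{eq:basic.bds.for.WC} and Definition~\ref{def:mu.abrs},'' and your argument spells out exactly these two ingredients --- pairing the $O(\lambda^\eta)$ quantity from \eqref{eq:basic.bds.for.WC} against $A\rd_\de h_{\sigma\nu}$ and then invoking the defining property of $\mu_{\alp\bt\rho\sigma}$. The final density/decomposition step you include is a routine elaboration that the paper leaves implicit.
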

\begin{proof}
This follows from \eqref{eq:basic.bds.for.WC} and Definition~\ref{def:mu.abrs}. \qedhere
\end{proof}

\subsection{Proof of (1)--(3) in Theorem~\ref{thm:main}}
With the definition of $\mu$, we now prove parts (1)--(3) of Theorem~\ref{thm:main}. We start with showing that $\mu$ is supported on the null cone.

\begin{proposition}\label{prop:supp.null.cone}
$$g_0^{\alp\bt} \xi_\alp\xi_\bt \, \ud \mu \equiv 0.$$
\end{proposition}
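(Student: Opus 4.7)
The strategy is a standard elliptic-type localization for microlocal defect measures, exploiting the crucial fact that $|h| \ls \lambda$ is small while $|\Box_{g_0} h| \ls 1$ is only bounded (see \eqref{eq:basic.bds.for.Ric} and \eqref{eq:Boxh.bound}). I would first reduce the claim to showing that $(g_0)^{\gamma\de}\xi_\gamma\xi_\de \cdot \mu_{\alp\bt\rho\sigma} = 0$ as a Radon measure on $S^*U$ for every choice of the indices $(\alp,\bt,\rho,\sigma)$. Given this reduction, the proposition follows by contracting against the smooth, $\xi$-independent factor $g_0^{\alpha\rho}g_0^{\beta\sigma}$ appearing in \eqref{eq:mu.def}: the scalar $(g_0)^{\gamma\de}\xi_\gamma\xi_\de$ pulls through the contraction and annihilates each of $\mu_{\rho\bt\alp\sigma}$ and $\mu_{\rho\alp\bt\sigma}$.

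To prove the reduction, fix a test symbol $a(x,\xi)$ that is real-valued, positively $0$-homogeneous in $\xi$, and compactly supported in $x$, with associated $A \in \Psi^0$. Applying Definition~\ref{def:mu.abrs} to the pseudo-differential operator $g_0^{\gamma\de} A$, whose principal symbol is $g_0^{\gamma\de}(x) a(x,\xi)$, and summing over $\gamma,\de$ yields
\begin{equation*}
\int_{S^*U} a \,(g_0)^{\gamma\de}\xi_\gamma\xi_\de \,\ud\mu_{\alp\bt\rho\sigma} = \lim_{\lambda \to 0}\, \la (g_0)^{\gamma\de}\rd_\gamma h_{\alp\bt},\, A\,\rd_\de h_{\rho\sigma}\ra_{L^2(\ud\mathrm{Vol}_{g_0})}.
\end{equation*}
It therefore suffices to prove the right-hand side vanishes. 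I would integrate by parts in $\rd_\gamma$, commute $(g_0)^{\gamma\de}$ past $A$ (incurring an order $-1$ commutator), and use $(g_0)^{\gamma\de}\rd_\gamma\rd_\de = \Box_{g_0} + H^\alp(g_0)\rd_\alp$ to recast the leading contribution as
$-\la h_{\alp\bt},\, A\bigl(\Box_{g_0} h_{\rho\sigma} + H^\alp(g_0)\, \rd_\alp h_{\rho\sigma}\bigr)\ra$ up to lower-order errors. By \eqref{eq:Boxh.bound}, the uniform $L^\infty$-bound on $H^\alp(g_0)$, the $L^2$-boundedness of $\rd h$, and the boundedness of $A$ on $L^2$, the argument of the inner product is $L^2$-bounded uniformly in $\lambda$; paired against $h$ it contributes $O(\|h\|_{L^2}) = O(\lambda)$. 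The same smallness absorbs every commutator and derivative-of-coefficient error produced by the integration by parts, since each such error term pairs $h$ (in $L^2$, with norm $O(\lambda)$) against an $L^2$-bounded quantity.

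I do not expect any serious obstacle, as this is essentially the standard fact that the microlocal defect measure of a sequence which is small in $L^2$ and whose wave operator is $L^2$-bounded must be supported in the characteristic set $\{g_0^{\alp\bt}\xi_\alp\xi_\bt = 0\}$ of $\Box_{g_0}$. The only care required is bookkeeping: one must check that each error in the integration-by-parts reduction carries a factor of $h$ (not just of $\rd h$) in $L^2$, which holds because the commutators $[\rd_\gamma, A]$ and $[g_0^{\gamma\de}, A]$ are of orders $0$ and $-1$ respectively, exactly matching the order of the remaining derivative they meet.
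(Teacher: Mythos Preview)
Your proposal is correct and follows essentially the same approach as the paper's proof: both reduce to showing $g_0^{\gamma\de}\xi_\gamma\xi_\de\,\ud\mu_{\alp\bt\rho\sigma}=0$, and both exploit the pairing $\la h_{\alp\bt}, A\Box_{g_0}h_{\rho\sigma}\ra\to 0$ (from $\|h\|_{L^2}=O(\lambda)$ and $\|\Box_{g_0}h\|_{L^2}=O(1)$) together with integration by parts to identify the limit with $\int a\, g_0^{\gamma\de}\xi_\gamma\xi_\de\,\ud\mu_{\alp\bt\rho\sigma}$. The only cosmetic difference is the direction of the computation: the paper starts from $\la h, A\Box_{g_0}h\ra$ and integrates by parts to reach the quadratic form, whereas you start from the quadratic form and integrate by parts to reach $\la h, A\Box_{g_0}h\ra$; the commutator bookkeeping is the same either way.
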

\begin{proof}
We prove the stronger statement that $g_0^{\alp\bt} \xi_\alp\xi_\bt \, \ud \mu_{\rho\sigma\gamma\de} = 0$ for any indices $\rho$, $\sigma$, $\gamma$, $\de$. This is easy and well-known; see for instance \cite{FrancfortMurat}. Since $h\to 0$ (strongly in $L^2$) and $\Box_{g_0}h$ is uniformly bounded by \eqref{eq:Boxh.bound}
\begin{equation}\label{eq:hBoxhto0}
\begin{split}
\la h_{\rho\sigma}, A\Box_{g_0}  h_{\gamma\de} \ra \to 0.
\end{split}
\end{equation}
On the other hand, integrating by parts and using $[A,\Box_{g_0}] \in \Psi^{1}$, we obtain
\begin{equation}
\la h_{\rho\sigma}, A\Box_{g_0}  h_{\gamma\de} \ra = \la g_0^{\alp\bt} \rd_\alp h_{\rho\sigma}, A \rd_\bt h_{\gamma\de} \ra + o(1) \to \int_{S^* U} a g_0^{\alp\bt} \xi_\alp \xi_\bt \, \ud \mu_{\rho\sigma\gamma\de}.
\end{equation}
Since $a$ is arbitrary, it follows that $ g_0^{\alp\bt} \xi_\alp \xi_\bt \, \ud \mu_{\rho\sigma\gamma\de} \equiv 0$, as desired. \qedhere
\end{proof}

\begin{rk}
Notice that in the proof of Proposition~\ref{prop:supp.null.cone}, we used the uniform boundedness of $\Box_{g_0}h$, which in particular required the assumed bound on $\rd^2 h$ on $\rd H$. We mentioned in Remark~\ref{rmk:improved.assumption} that the assumptions on $\rd^2 h$ and $\rd H$ are not necessary. To see this, consider the unbounded term in \eqref{eq:h.eqn} involving $\rd^2 h$, i.e., $g_0^{\alp\alp'}g_0^{\bt\bt'} h_{\alp'\bt'} \rd^2_{\alp\bt} h_{\gamma\de}$. Its contribution to \eqref{eq:hBoxhto0} is
$$\la h_{\rho\sigma}, A g_0^{\alp\alp'}g_0^{\bt\bt'} h_{\alp'\bt'} \rd^2_{\alp\bt} h_{\gamma\de} \ra,$$
which also $\to 0$, which can be seen after integration by parts. A similar comment applies to the term involving $\rd H$.
\end{rk}

Next, we turn to part (2) of Theorem~\ref{thm:main}.

\begin{proposition}\label{prop:Ricci.limit}
The limiting Ricci curvature is given by
$$\int_{U} \psi \mathrm{Ric}_{\mu \nu}(g_0) \, \mathrm{dVol}_{g_0} = \int_{S^* U} \psi \xi_\mu \xi_\nu \ud \mu,\quad \forall \psi \in C^\infty_c(U),$$
where $\mu= g_0^{\alpha\rho}g_0^{\beta \sigma}(\frac{1}{4}\mu_{\rho \beta \alpha \sigma} -\frac{1}{8}\mu_{\rho \alpha \beta \sigma}).$
\end{proposition}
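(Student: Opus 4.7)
The plan is to test equation \eqref{eq:h.eqn} against an arbitrary $\psi \in C^\infty_c(U)$ in $L^2(\mathrm{dVol}_{g_0})$ and identify the weak limit of each term as $\lambda \to 0$. The left-hand side $\la \psi, \Box_{g_0} h_{\mu\nu} \ra$ vanishes in the limit: two integrations by parts (no boundary terms since $\psi$ is compactly supported) move both derivatives onto $\psi$ and the smooth coefficients, leaving an integral bounded by $\|h\|_{L^\infty} \ls \lambda$. Several right-hand side terms are equally immediate: the $O(\lambda^\eta)$ error is trivial; the gauge-defect contribution $2(g_0)_{\alp(\mu}\rd_{\nu)}(H^\alp - H^\alp_0)$ vanishes after one integration by parts using $|H - H_0| \ls \lambda^\eta$; and the linear expression $L_{\mu\nu}(g_0)(\rd h)$ vanishes because $h \to 0$ strongly in $L^2$ on compact subsets (by $|h|\ls \lambda$), hence $\rd h \rightharpoonup 0$ in $L^2$ by duality against smooth bounded coefficients.

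For the semilinear null form $Q_{\mu\nu}(g_0)(\rd h, \rd h)$ of Lemma~\ref{lem:Ric.B}, I split into its $\mathfrak Q_0^{(g_0)}$-type and $\mathfrak Q_{\mu\nu}$-type components. A $\mathfrak Q_{\mu\nu}(h_1, h_2) = \rd_\mu(h_1 \rd_\nu h_2) - \rd_\nu(h_1 \rd_\mu h_2)$ piece is a pure divergence by commutation of mixed partials, so integration by parts against $\psi$ gives $o(1)$ using $|h| \ls \lambda$ and $|\rd h| \ls 1$. A $\mathfrak Q_0^{(g_0)}(h_1, h_2)$ piece is treated by the identity underlying Proposition~\ref{prop:stupid.trilinear}: one integration by parts produces $-\la \psi, h_1 (\Box_{g_0} h_2 + H_0^\gamma \rd_\gamma h_2)\ra$ plus terms with $\rd\psi$ acting on products of $h$ and $\rd h$, all of which are $o(1)$ using $|h|\ls \lambda$ together with \eqref{eq:Boxh.bound} and \eqref{eq:basic.bds.for.Ric}.

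The most delicate step is the quasilinear term $g_0^{\alp\alp'} g_0^{\bt\bt'} h_{\alp'\bt'} \rd^2_{\alp\bt} h_{\mu\nu}$, which has no manifest null structure. Integrating by parts in $\rd_\alp$, the terms where $\rd_\alp$ falls on $\psi$ or on $g_0^{\alp\alp'} g_0^{\bt\bt'}$ are again $o(1)$ by $|h|\ls \lambda$, leaving $-\la \psi, g_0^{\alp\alp'} g_0^{\bt\bt'} \rd_\alp h_{\alp'\bt'} \rd_\bt h_{\mu\nu}\ra$. The generalized wave coordinate identity \eqref{eq:basic.bds.for.WC}, namely $g_0^{\alp\alp'}\rd_\alp h_{\alp'\bt'} = \tfrac12 g_0^{\alp\alp'}\rd_{\bt'} h_{\alp\alp'} + O(\lambda^\eta)$, converts this into $-\tfrac12 \la \psi, g_0^{\alp\alp'} \mathfrak Q_0^{(g_0)}(h_{\alp\alp'}, h_{\mu\nu})\ra + o(1)$, which vanishes by the $\mathfrak Q_0^{(g_0)}$ argument of the preceding paragraph. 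This revelation of a hidden $\mathfrak Q_0$ structure via the wave coordinate identity is the main obstacle in the proof, and the only point where assumption (4) of Assumption~\ref{ass:main} is essential here.

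Finally I apply Definition~\ref{def:mu.abrs} to $P_{\mu\nu}(g_0)(\rd h, \rd h)$, whose two quadratic monomials pass to the limits $\tfrac14 g_0^{\alp\alp'} g_0^{\bt\bt'}\int \psi\xi_\mu\xi_\nu\,\ud\mu_{\alp\alp'\bt\bt'}$ and $-\tfrac12 g_0^{\alp\alp'} g_0^{\bt\bt'}\int \psi\xi_\mu\xi_\nu\,\ud\mu_{\alp\bt\alp'\bt'}$. Collecting yields $2\int \psi\, \mathrm{Ric}(g_0)_{\mu\nu}\,\mathrm{dVol}_{g_0} = -\lim \la \psi, P_{\mu\nu}\ra$, and the proposition follows after relabelling indices, using the symmetries \eqref{eq:mu.obvious.symmetries} of $\mu_{\alp\bt\rho\sigma}$ and of $g_0^{-1}$, to match the stated formula $\mu = g_0^{\alp\rho}g_0^{\bt\sigma}(\tfrac14\mu_{\rho\bt\alp\sigma} - \tfrac18\mu_{\rho\alp\bt\sigma})$.
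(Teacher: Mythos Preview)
Your proof is correct and follows essentially the same route as the paper's: test \eqref{eq:h.eqn} against $\psi$, kill the linear and gauge terms by weak convergence, handle the null-form and quasilinear terms via the divergence/$\Box$ identities plus the hidden $\mathfrak Q_0$ structure from the wave-coordinate condition, and identify the $P_{\mu\nu}$ contribution with $\mu$. The only cosmetic differences are that you carry out the integration by parts at the level of the pairing rather than via the total-derivative identity \eqref{eq:quasilinear.hidden.null.structure}, and you treat $\mathfrak Q_0$ by a single integration by parts instead of the symmetrized identity $\mathfrak Q_0(\phi,\psi)=\tfrac12\Box_{g_0}(\phi\psi)-\tfrac12\phi\Box_{g_0}\psi-\tfrac12\psi\Box_{g_0}\phi$; both are equivalent here.
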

\begin{proof}
We will use without comment the standard facts (1) $p_\lambda \to p$ uniformly and $q_\lambda \rightharpoonup q$ weakly in $L^2$ implies $p_\lambda q_\lambda \to pq$ in distribution and (2) $p_\lambda \to p$ in distribution implies all its derivatives converge in distribution.

We now consider Proposition~\ref{prop:h.eqn}, thought of as an expression of $\mathrm{Ric}(g_0)$, pair it with some $\psi \in C^\infty_c(\mathbb R^{d+1})$ and consider the $\lim_{\lambda\to 0}$ limit.

First note that the terms $\Box_{g_0} h$ and $L(g_0)(\rd h)$ are \underline{linear} in (the derivatives of) $h$ thus converge weakly to $0$ as $\lambda \to 0$. Similarly, the term $2 (g_0)_{\alp(\mu} \rd_{\nu)} (H^\alp - H^\alp_0)$ is linear in the derivative of $H^\alp - H^\alp_0$ and thus converges weakly to $0$ as $\lambda \to 0$. 

For the quasilinear term $g_0^{\alp\alp'}g_0^{\bt\bt'} h_{\alp'\bt'} \rd^2_{\alp\bt} h_{\mu\nu}$, we note that 
\begin{equation}\label{eq:quasilinear.hidden.null.structure}
\begin{split}
 g_0^{\alp\alp'}g_0^{\bt\bt'} h_{\alp'\bt'} \rd^2_{\alp\bt} h_{\mu\nu} = &\: \rd_\alp (g_0^{\alp\alp'}g_0^{\bt\bt'} h_{\alp'\bt'} \rd_{\bt} h_{\mu\nu} ) - g_0^{\alp\alp'}g_0^{\bt\bt'} \rd_\alp h_{\alp'\bt'} \rd_{\bt} h_{\mu\nu} + O(\lambda) \\
 = &\: \underbrace{\rd_\alp (g_0^{\alp\alp'}g_0^{\bt\bt'} h_{\alp'\bt'} \rd_{\bt} h_{\mu\nu} )}_{=:I} - \underbrace{ \f 12 g_0^{\alp\alp'}g_0^{\bt\bt'} \rd_{\bt'} h_{\alp'\alp} \rd_{\bt} h_{\mu\nu}}_{=:II} + O(\lambda^\eta) ,
\end{split}
\end{equation}
where in the second equality we used the wave coordinate condition \eqref{eq:basic.bds.for.WC}. Notice now that term $I$ is a total derivative of an $O(\lambda)$ term and thus converges weakly to $0$; term $II$ also converges weakly to $0$ because it can be written as $\f 12 g_0^{\alp\alp'} \mathfrak Q^{(g)}_0(h_{\alp'\alp}, h_{\mu\nu})$.

For the null forms, we simply notice that $\mathfrak Q^{(g)}_0(\phi,\psi) = \f 12 \Box_{g_0} (\phi\psi) - \f 12 \phi \Box_{g_0} \psi - \f 12 \psi \Box_{g_0} \phi$, and that $\mathfrak Q_{\alp\bt}(\phi,\psi) = \rd_\alp (\phi \rd_\bt \psi) - \rd_\bt (\phi \rd_\alp \psi)$ so that in both cases $\mathfrak Q(\phi_\lambda,\psi_\lambda) \rightharpoonup \mathfrak Q(\phi,\psi)$.

It thus follows that the only contribution from Proposition~\ref{prop:h.eqn} that does not converge weakly to $0$ comes from the term $P_{\mu\nu}(g)(\rd h,\rd h)$. However, this contribution is exactly $\int_{S^* \mathbb R^{d+1}} \psi \xi_\mu \xi_\nu \ud \mu$ by \eqref{eq:P.def}, \eqref{eq:mu.def} and Definition~\ref{def:mu.abrs}. \qedhere
\end{proof}

Next, we turn to non-negativity of $\mu$, i.e., statement (3) of the main theorem (Theorem~\ref{thm:main}). Let us remark that once we have obtained Proposition~\ref{prop:Ricci.limit}, the non-negativity of $\mu$ already follows from the fact, established in \cite{GW1}, that the weak energy condition holds for the limiting spacetime (irrespective of gauge conditions). Here, however, we give a direct proof of the non-negativity.

\begin{proposition}\label{prop:positive}
The measure $\mu$ is real-valued and non-negative.
\end{proposition}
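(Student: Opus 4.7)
The plan is to deduce reality from standard properties of microlocal defect measures, and to deduce non-negativity from an explicit computation in a null coframe adapted to $\xi$ that reveals a manifestly positive structure once the generalized wave coordinate identity is applied.

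For reality, the standard construction of microlocal defect measures applied to the real-valued sequence $\{\rd h_n\}$ and pseudo-differential operators with real principal symbol produces $\mu_{\alp\bt\rho\sigma}$ as a Hermitian matrix-valued measure in the multi-indices $(\alp\bt), (\rho\sigma)$, i.e.\ $\mu_{\alp\bt\rho\sigma} = \overline{\mu_{\rho\sigma\alp\bt}}$. Combined with the symmetry $\mu_{\alp\bt\rho\sigma} = \mu_{\rho\sigma\alp\bt}$ already recorded in \eqref{eq:mu.obvious.symmetries}, this forces $\mu_{\alp\bt\rho\sigma}$ (and hence $\mu$) to be real-valued.

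For non-negativity, I would fix $(x_0, \xi_0) \in \mathrm{supp}(\mu)$; by Proposition~\ref{prop:supp.null.cone}, $\xi_0$ is null. Introduce a null coframe $\{L, \underline{L}, e_1, \ldots, e_{d-1}\}$ at $x_0$ with $L_\alp = (\xi_0)_\alp$, $g_0^{\alp\bt} L_\alp \underline{L}_\bt = -2$, $g_0^{\alp\bt}\underline{L}_\alp \underline{L}_\bt = 0$, and $e_A$ orthonormal and orthogonal to $L, \underline{L}$, so that with $L^\alp = g_0^{\alp\bt}L_\bt$ (and similarly for the other frame covectors),
$$g_0^{\alp\bt} = -\f{1}{2}(L^\alp \underline{L}^\bt + \underline{L}^\alp L^\bt) + \sum_A e_A^\alp e_A^\bt.$$
Contracting the identity \eqref{eq:wave.coord.for.MLDM} with frame vectors on the first two slots of $\mu$ (and, by the $(1,2)\leftrightarrow(3,4)$ symmetry, also the last two) yields
$$\mu_{LL\,cd} = 0,\quad \mu_{LA\,cd} = 0,\quad \mu_{L\underline{L}\,cd} = -T_{cd},\quad \sum_A \mu_{AA\,cd} = 0,$$
where $T_{cd} \doteq g_0^{\alp\bt}\mu_{\alp\bt cd}$. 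Substituting the frame decomposition of $g_0^{-1}$ into
$$\mu = \f{1}{4} g_0^{\alp\rho} g_0^{\bt\sigma} \mu_{\rho\bt\alp\sigma} - \f{1}{8} g_0^{\alp\rho}g_0^{\bt\sigma}\mu_{\rho\alp\bt\sigma}$$
and expanding, every $L$-contraction of $\mu$ collapses via the constraints above to an expression involving $T$; the resulting $T$-contributions from the two summands cancel exactly, leaving only the pure $e_A$-sector
$$\mu = \f{1}{4}\sum_{A,B=1}^{d-1}\mu_{BABA}, \qquad \mu_{BABA} \doteq \mu_{\alp\bt\rho\sigma} e_B^\alp e_A^\bt e_B^\rho e_A^\sigma.$$

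Non-negativity then follows because $\mu_{\alp\bt\rho\sigma}$ is positive semi-definite as a Hermitian matrix-valued measure in the pair index $(\alp\bt),(\rho\sigma)$; this is a standard consequence of the construction (compare \cite{FrancfortMurat}), visible upon choosing separable test symbols $v^{(\gamma\alp\bt)} = \phi^\gamma w^{(\alp\bt)}$ with $\phi^\gamma \xi_\gamma \neq 0$ and factoring out $\xi_\gamma \xi_\de$ as in Remark~\ref{rmk:existence.of.mu}. Applied to the real vector $w^{(\alp\bt)} = e_B^\alp e_A^\bt$ it gives $\mu_{BABA} \geq 0$ for each $(A, B)$, and summing over $A, B$ gives $\mu \geq 0$.

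The main obstacle is the frame computation reducing $\mu$ to $\f{1}{4}\sum_{A,B}\mu_{BABA}$: although each step is elementary, the bookkeeping of the two different double-traces $g_0^{\alp\rho}g_0^{\bt\sigma}\mu_{\rho\bt\alp\sigma}$ (``cross'') and $g_0^{\alp\rho}g_0^{\bt\sigma}\mu_{\rho\alp\bt\sigma}$ (``direct'') against the null-frame expansion of $g_0^{-1}$ is delicate, and the crucial cancellation of the $T$-terms between them is what pins down the exact coefficients $\f 14$ and $\f 18$ inherited from the algebraic structure of $P_{\mu\nu}$ in \eqref{eq:P.def}.
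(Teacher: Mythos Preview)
Your proposal is correct and takes essentially the same approach as the paper: introduce a $\xi$-adapted null frame, use the wave-coordinate identity \eqref{eq:wave.coord.for.MLDM} to reduce $\mu$ to the manifestly non-negative expression $\tfrac14\sum_{A,B}\mu_{ABAB}$, and then conclude via the positive semi-definiteness of the matrix-valued microlocal defect measure. The paper is only slightly more explicit in making the frame depend smoothly on $(x,\xi)$ (so that the frame components are genuine principal symbols) and in writing the final quantity as a limit of squares, but the content is the same.
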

\begin{proof}
\pfstep{Step~1: Definition of null frame adapted to $\xi$} 
Given $(x,\xi_\alp)$ on the support of $\mu$, denote $n = g_0^{t\alp} \rd_\alp$, $\xi^\alp = g_0^{\alp\bt} \xi_\bt$ and introduce the following $\xi$-dependent vector fields:
\begin{equation}\label{eq:def.frame}
L^{(\xi)} \doteq g_0^{\mu\nu} \xi_\mu \rd_\nu,\quad \underline{L}^{(\xi)} \doteq \f{g_0^{tt}}{(\xi^t)^2}L^{(\xi)} - \f 2{\xi^t} n,\quad \srd_i^{(\xi)} \doteq  \rd_i + \f{\xi_i}{2} \Lbxi + \f{g_0^{tt} \xi_i}{2 (\xi^t)^2} \Lxi.
\end{equation}
These vector fields are well-defined on the support of $\ud \mu$ since $\xi^t \neq 0$ (by Proposition~\ref{prop:supp.null.cone} and \eqref{eq:g.smallness}). These vector fields are chosen so that the following relations hold:
\begin{equation}\label{eq:xi.null.frame.cond}
g_0(L^{(\xi)},L^{(\xi)}) = g_0(\underline{L}^{(\xi)},\underline{L}^{(\xi)}) = g_0(L^{(\xi)},\srd_i^{(\xi)}) = g_0(\Lbxi,\srd_i^{(\xi)}) = 0,\quad g_0(L^{(\xi)},\underline{L}^{(\xi)}) = -2.
\end{equation}
The relations \eqref{eq:xi.null.frame.cond} can be checked by direct computations, after noting that 
\begin{align*}
g_0(L^{(\xi)}, L^{(\xi)}) = \xi^\alp \xi_\alp, \quad g_0(L^{(\xi)}, \rd_i) =  g_0^{\mu\nu} \xi_\mu (g_0)_{\nu i} = \xi_i,\quad g_0(\Lbxi, \rd_i) = \f{g_0^{tt}\xi_i}{(\xi^t)^2}, \\
g_0(L^{(\xi)},n) = (g_0)_{\alp\nu} g_0^{0\alp} g_0^{\mu\nu} \xi_\mu = \xi^0, \quad g_0(n,n)= (g_0)_{\alp\alp'} g_0^{t\alp}g_0^{t\alp'} = g_0^{00},\quad g_0(n,\rd_i) = 0.
\end{align*}

\pfstep{Step~2: Some computations} Before we proceed, we collect some computations. From now on, denote by $\underline{g}_0$ the spatial part of $g$. In particular, $\underline{g}_0^{ij}$ denotes the $(ij)$-component of the inverse of $\underline{g}_0$. It relates to the $g_0^{ij}$ coming from the inverse of $g_0$ through
\begin{equation}\label{eq:ug.inverse}
\underline{g}_0^{ij} = g_0^{ij} - \f{g_0^{ti} g_0^{tj}}{g_0^{tt}},
\end{equation}
which can be derived by standard formulas on inverses of block diagonal matrices. It is also convenient to note the following rewritings of $g_0^{\mu\nu} \xi_\mu \xi_\nu = 0$ (on $\mathrm{supp}(\mu)$, see Proposition~\ref{prop:supp.null.cone}):
\begin{equation}\label{eq:other.ways.to.say.null}
\underline{g}_0^{ij} \xi_i \xi_j = -g_0^{tt} (\xi_t + \f{g^{ti}}{g^{tt}} \xi_i)^2 = - \f{(\xi^t)^2}{g_0^{tt}}.
\end{equation}

Next, we compute that $\Lxi t = \xi^t$, $\Lbxi t =  - \f{g_0^{tt}}{\xi^t}$, $\Lxi x^\ell = \xi^\ell$ and $\Lbxi x^\ell = \f{g_0^{tt}\xi^\ell}{(\xi^t)^2} - \f{2g_0^{ti}}{\xi^t}$. From this, and using \eqref{eq:ug.inverse}, \eqref{eq:other.ways.to.say.null}, it follows that $\srd_i t= 0$ (so that $\srd_i$ is tangential to constant-$t$ hypersurfaces), and that $\underline{g}_0^{ij} \xi_j \srd_i t = \underline{g}_0^{ij} \xi_j \srd_i x^\ell = 0$. In particular, $\{\srd_1, \cdots, \srd_d\}$ are linearly dependent, as they satisfy the relation
\begin{equation}\label{eq:srd.dependence}
\underline{g}_0^{ij} \xi_j \srd_i = 0.
\end{equation}
Notice also that since $\{\Lxi,\Lbxi, \srd_1^{(\xi)},\cdots, \srd_d^{(\xi)}\}$ obviously span the whole tangent space, locally we can find $d-1$ linearly independent and spacelike elements in $\{\srd_1^{(\xi)},\cdots, \srd_d^{(\xi)}\}$ and perform Gram--Schmidt to obtain an orthonormal frame $\{\slashed{e}_B^{(\xi)}\}_{B=1}^{d-1}$ so that $\mathrm{span}\{\slashed{e}_B^{(\xi)}\}_{B=1}^{d-1} = \mathrm{span}\{\srd_i^{(\xi)}\}_{i=1}^{d}$ and such that $g_0(\slashed{e}_B^{(\xi)}, \slashed{e}_C^{(\xi)}) = \de_{BC}$. From now on, fixed such a local orthonormal frame $\{\slashed{e}_B^{(\xi)}\}_{B=1}^{d-1}$. We will use the convention that capital Latin indices run over $B,C=1,\cdots,d-1$ and that repeated indices will be summed over this range. 

Using such an orthonormal frame and recalling the relations \eqref{eq:xi.null.frame.cond}, we can write the inverse metric as follows:
\begin{equation}\label{eq:g0.in.terms.of.frame}
\begin{split}
g_0^{\alp\bt} \rd_\alp\otimes \rd_\bt 
= &\: -\f 12 L^{(\xi)}\otimes \underline{L}^{(\xi)} - \f 12 \underline{L}^{(\xi)}\otimes L^{(\xi)} + \de^{BC} \slashed{e}_B^{(\xi)}\otimes \slashed{e}_C^{(\xi)}.
\end{split}
\end{equation}

\pfstep{Step~3: Analyzing the microlocal defect measure using the orthonormal frame} We recall \eqref{eq:wave.coord.for.MLDM} and contract it with $(\Lxi)^\gamma$, $(\slashed{e}_A^{(\xi)})^\gamma$ and $(\Lbxi)^\gamma$ to get
\begin{align}
(\Lxi)^\gamma g_0^{\alp\bt} \xi_\alp \mu_{\bt \gamma\sigma\nu} - \f 12 g_0^{\alp\bt} (\Lxi)^\gamma \xi_\gamma \mu_{\alp\bt\sigma\nu} = &\: 0, \label{eq:mu.WC.L}\\
(\slashed{e}_A^{(\xi)})^\gamma g_0^{\alp\bt} \xi_\alp \mu_{\bt \gamma\sigma\nu} - \f 12 g_0^{\alp\bt} (\slashed{e}_A^{(\xi)})^\gamma \xi_\gamma \mu_{\alp\bt\sigma\nu} =&\:  0, \label{eq:mu.WC.e}\\
(\Lbxi)^\gamma g_0^{\alp\bt} \xi_\alp \mu_{\bt \gamma\sigma\nu} - \f 12 g_0^{\alp\bt} (\Lbxi)^\gamma \xi_\gamma \mu_{\alp\bt\sigma\nu} =  &\: 0.\label{eq:mu.WC.Lb}
\end{align}
Since $(\Lxi)^\gamma \xi_\gamma = 0$ and $(\slashed{e}_A^{(\xi)})^\gamma \xi_\gamma = 0$ (by \eqref{eq:def.frame}, \eqref{eq:xi.null.frame.cond}), after recalling $\Lxi$ in \eqref{eq:def.frame}, we use \eqref{eq:mu.WC.L} and \eqref{eq:mu.WC.e} to obtain
\begin{equation}\label{eq:mu.cancel.in.frame}
(L^{(\xi)})^\bt (L^{(\xi)})^\gamma \mu_{\bt \gamma\sigma\nu},\, (L^{(\xi)})^\bt (\slashed{e}_B^{(\xi)})^\gamma \mu_{\bt \gamma\sigma\nu} = 0,\quad \forall \sigma,\nu \in \{0,1,\cdots,d\},\,\forall B \in \{1,\cdots,d-1\}.
\end{equation}
Since $(\Lbxi)^\gamma \xi_\gamma = -2$ (by \eqref{eq:def.frame}, \eqref{eq:xi.null.frame.cond}), we use \eqref{eq:mu.WC.Lb} to obtain
\begin{equation}\label{eq:mu.cancel.in.frame.2}
\begin{split}
0 =  (\Lbxi)^\gamma (\Lxi)^\bt \mu_{\bt \gamma\sigma\nu} + g_0^{\alp\bt}  \mu_{\alp\bt\sigma\nu} = &\: \de^{BC} (\slashed{e}^{(\xi)}_B)^\alp  (\slashed{e}^{(\xi)}_C)^\bt \mu_{\alp\bt\sigma\nu}.
\end{split}
\end{equation}

We now expand the $g_0$ in $\mu = g_0^{\alpha\rho}g_0^{\beta \sigma}(\frac{1}{4}\mu_{\rho \beta \alpha \sigma} -\frac{1}{8}\mu_{\rho \alpha \beta \sigma})$ using \eqref{eq:g0.in.terms.of.frame} and \eqref{eq:mu.cancel.in.frame} to get
\begin{equation}
\begin{split}
\mu = &\: (\f 14 \de^{BB'} \de^{CC'} - \f 18 \de^{BC} \de^{B'C'}) (\slashed{e}^{(\xi)}_B)^\alp  (\slashed{e}^{(\xi)}_C)^\bt (\slashed{e}^{(\xi)}_{B'})^\sigma (\slashed{e}^{(\xi)}_{C'})^{\sigma} \mu_{\alp\bt\rho\sigma} \\
= &\: (\f 14 \de^{BB'} \de^{CC'} - \f 18 \de^{BC} \de^{B'C'})  \mu_{\slashed{B}\slashed{C}\slashed{B'}\slashed{C'}},
\end{split}
\end{equation}
where the second line defines the notation $\mu_{\slashed{B}\slashed{C}\slashed{B'}\slashed{C'}}$. Note that $(\Lxi)^\rho (\Lbxi)^\bt (\Lxi)^\alp (\Lbxi)^\sigma \mu_{\rho \beta \alpha \sigma}$ cancels. Using also \eqref{eq:mu.cancel.in.frame.2}, we obtain
\begin{equation}\label{eq:mu.good.form}
\mu =  \f 14 \de^{BB'} \de^{CC'} \mu_{\slashed{B}\slashed{C}\slashed{B'}\slashed{C'}}.
\end{equation}

\pfstep{Step~4: Completing the proof} Since $\xi_t \neq 0$ on $\mathrm{supp}(\mu)$, in order to prove the proposition, it suffices to show that for any $\chi(x,\xi)$ which is real, compactly supported in $x$ and positively $0$-homogeneous in $\xi$,
$$ \int_{S^* \mathbb R^{d+1}} \chi^4(x,\xi) \xi_t^2 \, \ud \mu \geq 0.$$
By cutting off further if necessary, we assume the local orthonormal frame $\{\slashed{e}^{(\xi)}_B \}_{B=1}^{d-1}$ is well-defined on $\mathrm{supp}(\chi)$.

For $B=1,\cdots,d-1$, let $A_{\chi(\slashed{e}_B^{(\xi)})^\alp}$ be a $0$-th order pseudo-differential operator with principal symbol $\chi(x,\xi)(\slashed{e}_B^{(\xi)})^\alp$ (which is well-defined on $\mathrm{supp}(\chi)$ and is positively $0$-homogeneous in $\xi$). Then, using \eqref{eq:mu.good.form}, 
\begin{equation*}
\begin{split}
 \int_{S^* \mathbb R^{d+1}} \chi^4(x,\xi) \xi_t^2 \, \ud \mu
= &\:  \f 14 \lim_{\lambda\to 0} \int_{\mathbb R^{d+1}} \Big( A_{\chi(\slashed{e}_B^{(\xi)})^\alp} \circ  A_{\chi(\slashed{e}_C^{(\xi)})^\bt} \rd_t h_{\alp\bt} \Big)^2,
\end{split}
\end{equation*}
which is manifestly non-negative. \qedhere
\end{proof}

\section{Propagation of the microlocal defect measure}\label{sec:propagation}

We continue to work under Assumption~\ref{ass:main} and \eqref{eq:g.smallness}, and use the conventions introduced in Section~\ref{sec:conventions}.

The goal of this section is to prove the following theorem. This proves part (4) of Theorem~\ref{thm:main} and thus completes the proof of the main theorem.
\begin{theorem}\label{thm:main.transport}
The following identity holds for any smooth $\widetilde{a}:S^*U \to \mathbb R$ which is compactly supported in $x$ and positively $1$-homogeneous in $\xi$:
\begin{equation}\label{eq:main.transport}
\int_{S^* U} \{g_0^{\mu\nu} \xi_\mu\xi_\nu, \widetilde{a}(x,\xi) \}  \, \ud \mu = 0.
\end{equation}
\end{theorem}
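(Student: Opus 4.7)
The strategy is to test \eqref{eq:main.transport} via a self-adjoint first-order pseudo-differential operator $\widetilde{A}$ with principal symbol $\widetilde{a}$, and to exploit the structure of the equation for $h$ given by Proposition~\ref{prop:h.eqn}. Specifically, consider the symmetrized pairing
\[
K_{\mu\nu\alp\bt}(\lambda) := \la h_{\mu\nu}, [\Box_{g_0}, \widetilde{A}] h_{\alp\bt}\ra + \la h_{\alp\bt}, [\Box_{g_0}, \widetilde{A}] h_{\mu\nu}\ra.
\]
The commutator $[\Box_{g_0}, \widetilde{A}]$ is of order $2$ with principal symbol $(-i)\{g_0^{\tau\sigma}\xi_\tau\xi_\sigma, \widetilde{a}\}$, which is $2$-homogeneous in $\xi$. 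After integration by parts to move one derivative off each copy of $h$, this pairing can be rewritten as a bilinear expression in $\rd h$; in the $\lambda \to 0$ limit, and after taking the particular contraction $g_0^{\alp\rho}g_0^{\bt\sigma}(\f14 K_{\rho\bt\alp\sigma} - \f18 K_{\rho\alp\bt\sigma})$ matching the definition of $\mu$ from \eqref{eq:mu.def}, this converges (up to a universal nonzero constant) to $\int_{S^*U}\{g_0^{\tau\sigma}\xi_\tau\xi_\sigma, \widetilde{a}\}\,\ud\mu$. Hence, proving \eqref{eq:main.transport} amounts to showing that this combination of $K_{\mu\nu\alp\bt}(\lambda)$ vanishes in the limit.

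To compute $K_{\mu\nu\alp\bt}(\lambda)$, one uses $[\Box_{g_0}, \widetilde{A}] h = \Box_{g_0}\widetilde{A} h - \widetilde{A}\Box_{g_0} h$, shifts operators via self-adjointness of $\Box_{g_0}$ and $\widetilde{A}$, and substitutes the equation $\Box_{g_0} h_{\mu\nu} = F_{\mu\nu}$ from Proposition~\ref{prop:h.eqn}. This expresses $K_{\mu\nu\alp\bt}(\lambda)$, up to $o(1)$ errors, as a sum of pairings of the form (first-order ps.d.o.\ applied to $h$) against each term on the right-hand side of \eqref{eq:h.eqn}. The task is then to show that each such contribution, after the $\f14, -\f18$ contraction by $g_0$'s defining $\mu$, tends to zero. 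The smooth source $2\mathrm{Ric}(g_0)$ vanishes in the limit by strong convergence $h\to 0$ in $L^2$; the $O(\lambda^\eta)$ remainder is trivial; the term $2(g_0)_{\alp(\mu}\rd_{\nu)}(H^\alp - H_0^\alp)$ is linear in $\rd(H - H_0)$, and integration by parts exploits $|H - H_0| = O(\lambda^\eta)$ from \eqref{eq:basic.bds.for.H}; the linear term $L_{\mu\nu}(g_0)(\rd h)$ should produce an \emph{exact} algebraic cancellation via \eqref{eq:basic.bds.for.WC} after the specific contractions; the null-form term $Q_{\mu\nu}(g_0)(\rd h, \rd h)$ is handled by localizing coefficients (exploiting \eqref{eq:g.smallness}) and applying Propositions~\ref{prop:stupid.trilinear} and~\ref{prop:main.trilinear} to the resulting trilinear expressions.

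The hardest parts will be the $P$ term and the quasilinear term $g_0^{\alp\alp'}g_0^{\bt\bt'} h_{\alp'\bt'}\rd^2_{\alp\bt} h_{\mu\nu}$, both of which lie outside the classical null-form framework. For $P$, the plan is that the particular $\tfrac14, -\tfrac18$ contraction defining $\mu$, combined with repeated applications of the wave-coordinate identity \eqref{eq:basic.bds.for.WC}, unveils a hidden trilinear null structure so that the resulting terms fall under the scope of the null-form trilinear estimates used for $Q$. For the quasilinear term, we decompose $h = \mfh^{(1)} + \mfh^{(2)} + \mfh^{(3)}$ into the three frequency regimes sketched in the introduction: the low-frequency $\mfh^{(1)}$ is handled by a Calder\'on commutator argument as in \cite{HL.Burnett}; the high-frequency $\mfh^{(2)}$, supported away from the characteristic cone of $g_0$, is handled by ellipticity of $\Box_{g_0}$ in that regime together with the $L^\infty$ bound \eqref{eq:Boxh.bound}; and the remaining $\mfh^{(3)}$ near the cone is addressed via another hidden null structure arising from \eqref{eq:basic.bds.for.WC}, in analogy with $P$. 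The central subtlety throughout is that these cancellations are intimately tied to the specific contraction producing $\mu$; as noted in the introduction, the individual $\mu_{\alp\bt\rho\sigma}$ are not expected to obey an analogous transport equation.
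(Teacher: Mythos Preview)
Your proposed quantity $K_{\mu\nu\alp\bt}$ vanishes \emph{identically}, so the scheme collapses at the outset. Since $\Box_{g_0}$ and $\widetilde{A}$ are both self-adjoint (with respect to $\mathrm{dVol}_{g_0}$), the commutator $[\Box_{g_0},\widetilde{A}]$ is anti-self-adjoint; hence for real $h$,
\[
\la h_{\mu\nu}, [\Box_{g_0},\widetilde{A}] h_{\alp\bt}\ra
= \la [\Box_{g_0},\widetilde{A}]^* h_{\mu\nu}, h_{\alp\bt}\ra
= -\la h_{\alp\bt}, [\Box_{g_0},\widetilde{A}] h_{\mu\nu}\ra,
\]
so $K_{\mu\nu\alp\bt}(\lambda)\equiv 0$. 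Equivalently, the principal symbol of $[\Box_{g_0},\widetilde{A}]$ is purely imaginary, and so for real $h$ its contribution to a quadratic pairing is zero at leading order, not $\int \{p,\widetilde a\}\,\ud\mu$. Computing $K$ ``in two ways'' therefore yields only the triviality $0=0$ and gives no access to the transport term.

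What the paper uses instead is not a commutator but a \emph{multiplier} identity: with $\widetilde{A}=XA$ (where $X=\rd_t$ and $A\in\Psi^0$ has real symbol $a=\widetilde a/\xi_t$), Lemma~\ref{lem:waveEE} derives, via the stress-energy tensor $\mathbb T^A_{\mu\nu}[\phi,\psi]$ and the divergence theorem,
\[
\la \Box_{g_0}\phi,\,XA\psi\ra + \la X\phi,\,A\Box_{g_0}\psi\ra
= -\int_{S^*\RR^{d+1}} \{g_0^{\mu\nu}\xi_\mu\xi_\nu,\widetilde a\}\,\ud\mu_{\phi\psi}.
\]
Note the plus sign: the informative quantity is $\la \Box h,\widetilde A h\ra+\la \widetilde A h,\Box h\ra$, not the commutator. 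After contracting with $(2g_0^{\alp\alp'}g_0^{\bt\bt'}-g_0^{\alp\bt}g_0^{\alp'\bt'})$, this yields Proposition~\ref{prop:main.reduced}. One further point: in this identity the $x$-dependence of the contraction coefficients produces the extra term $\tfrac14\int g_0^{\mu\nu}\xi_\nu\widetilde a\,\rd_{x^\mu}(2g_0^{\alp\alp'}g_0^{\bt\bt'}-g_0^{\alp\bt}g_0^{\alp'\bt'})\,\ud\mu_{\alp\bt\alp'\bt'}$, and it is precisely the $L_{\mu\nu}$ contribution (Proposition~\ref{prop:linear}) that cancels it. So $L$ does \emph{not} vanish on its own, contrary to your description; it cancels against a term your commutator setup never generated. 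Your treatment of the remaining terms ($Q$, $P$, the quasilinear piece via the $\mfh^{(1)}+\mfh^{(2)}+\mfh^{(3)}$ decomposition, and the gauge term) matches the paper's strategy and would go through once the correct starting identity is in place.
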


\textbf{For the remainder of the section, fix $\widetilde{a}$ that satisfies the assumption of Theorem~\ref{thm:main.transport} and define $a(x,\xi) = \f{\widetilde{a}(x,\xi)}{\xi_t}$.} (Note that this is well-defined because $\xi_t \neq 0$ on $\mathrm{supp}(\mu)$ by \eqref{eq:g.smallness} and Proposition~\ref{prop:supp.null.cone}.)

\subsection{Cutting off $h$}\label{sec:cutoff.assumptions}

Before proceeding, we first introduce another reduction. For the fixed $a$ (or equivalently $\widetilde{a}$) above, its spatial support is contained in $K \subset U$. We will fix the compact set $K$ and introduce cutoffs with respect to this $K$.

For $K$ as above, fix another open set $K' \subset U$ such that $K \subset \mathring{K}'$. Let $\chi$ be a smooth cutoff function such that $\mathrm{supp}(\chi) \subset K'$ and $\chi \equiv 1$ on $K$. From now on, we replace $h$ by $\chi h$ so that it is $C^\infty_c$, which makes taking Fourier transforms easier. Moreover, \textbf{we can now work globally in the whole space $\mathbb R^{d+1}$}, with $g_0$ extended outside $\mathring{K}'$ so that \eqref{eq:g.smallness} holds globally and the $C^k$ norm is globally controlled for all $k\in \mathbb N$. The choice of the extension of $g_0$ will not change the derivation of \eqref{eq:main.transport}.

Notice that after introducing the cutoff, all the estimates \eqref{eq:basic.bds.for.Ric}, \eqref{eq:basic.bds.for.H}, \eqref{eq:basic.bds.for.WC} and \eqref{eq:Boxh.bound} still hold. However, the equation \eqref{eq:h.eqn} no longer holds globally in $U$, but importantly it holds on $\mathrm{supp}(a)$. This will already be sufficient in the proof of Proposition~\ref{prop:general.wave.transport}.

\textbf{We will work under these cutoff assumptions for the remainder of the paper.}

\subsection{Main identity for the propagation of $\mu$}

In this subsection, we derive the main propagation identity for $\mu$; see Proposition~\ref{prop:main.reduced} below. We begin with a general lemma.

\begin{lemma}\label{lem:waveEE}
Let $g_0$ be as before. Let $\phi_\lambda$, $\psi_\lambda$ be smooth functions supported in a fixed compact set in $\mathbb R^{d+1}$ which (1) are uniformly bounded in $H^1$, (2) satisfy $\phi_\lambda,\,\psi_\lambda\to 0 $ in $L^2$, and (3) are such that $\Box_{g_0} \phi_\lambda$, $\Box_{g_0} \psi_\lambda$ are uniformly bounded in $L^2$.

Define $\ud \mu_{\phi\psi}$ to be the cross microlocal defect measure, i.e., for any $A \in \Psi^{0}$ with positively $0$-homogeneous. principal symbol $a$, (up to a subsequence)
$$\lim_{\lambda\to 0} \la \rd_\alp \phi_\lambda, A \rd_\bt \psi \ra =  \int_{S^*\mathbb R^{d+1}} a \xi_\alp \xi_\bt \, \ud \mu_{\phi\psi}.$$

Then for any $A \in \Psi^{0}$ whose principal symbol is a real, positively $0$-homogeneous Fourier multiplier $m(\xi)$, and any vector field $X$, we have
$$\f 12 \lim_{\lambda \to 0}\Big(\la \Box_{g_0} \phi_\lambda, X A\psi_\lambda\ra + \la X \phi_\lambda, A \Box_{g_0} \psi_\lambda \ra \Big) +\f 12 \int_{S^*\RR^{d+1}} \{g_0^{\mu\nu} \xi_\mu\xi_\nu,  X^\rho (x) \xi_\rho m(\xi) \}  \, \ud \mu_{\phi\psi}= 0.$$
\end{lemma}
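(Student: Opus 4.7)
The plan is to reduce the claim to a single commutator pairing and then evaluate that pairing via the cross microlocal defect measure.

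First, I would rewrite
\begin{equation*}
I_\lambda := \langle \Box_{g_0}\phi_\lambda, XA\psi_\lambda\rangle + \langle X\phi_\lambda, A\Box_{g_0}\psi_\lambda\rangle
\end{equation*}
using the self-adjointness of $\Box_{g_0}$ with respect to $\ud \mathrm{Vol}_{g_0}$. Writing $(XA)^*=A^*X^*$ with $X^*=-X-\mathrm{div}_{g_0}X$ and $A^*=A+R_{-1}$ for some $R_{-1}\in\Psi^{-1}$, the identity $\langle \Box_{g_0}\phi_\lambda, XA\psi_\lambda\rangle=\langle (XA)^*\phi_\lambda,\Box_{g_0}\psi_\lambda\rangle+\langle \phi_\lambda,[\Box_{g_0},XA]\psi_\lambda\rangle$ combined with $\langle X\phi_\lambda, A\Box_{g_0}\psi_\lambda\rangle=\langle AX\phi_\lambda,\Box_{g_0}\psi_\lambda\rangle+o(1)$ would yield, after cancellation,
\begin{equation*}
I_\lambda = \langle \phi_\lambda,[\Box_{g_0},XA]\psi_\lambda\rangle + o(1).
\end{equation*}
All the $o(1)$ errors are absorbed by Rellich compactness: since $\phi_\lambda\to 0$ in $L^2$ we have $X\phi_\lambda\rightharpoonup 0$ in $L^2$, and $R_{-1}$ gains one degree of smoothness, so $R_{-1}X\phi_\lambda\to 0$ strongly in $L^2_{\mathrm{loc}}$; analogously the zeroth-order $\mathrm{div}_{g_0}X\cdot\phi_\lambda$ correction tends to $0$ in $L^2$, and each contribution is then paired against the $L^2$-bounded sequence $\Box_{g_0}\psi_\lambda$.

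Second, I would identify $C:=[\Box_{g_0},XA]$ as a pseudodifferential operator of order $2$ and read off its principal symbol using $\sigma([P,Q])=\tfrac{1}{i}\{\sigma(P),\sigma(Q)\}$. With $\sigma(\Box_{g_0})=-g_0^{\mu\nu}\xi_\mu\xi_\nu$ and $\sigma(XA)=iX^\rho(x)\xi_\rho m(\xi)$ this gives $\sigma(C)=-\{g_0^{\mu\nu}\xi_\mu\xi_\nu,X^\rho(x)\xi_\rho m(\xi)\}$. That $A$ is a Fourier multiplier is crucial here: since $[\rd_\mu,A]=0$, the only commutators in $[\Box_{g_0},A]$ come from the $x$-dependent coefficients $g_0^{\mu\nu},H^\alpha$, and the symbolic bookkeeping is significantly simplified.

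Third, I would compute $\lim_\lambda\langle\phi_\lambda,C\psi_\lambda\rangle$ by a factored integration-by-parts argument adapted to the $H^1$-boundedness of $\phi_\lambda,\psi_\lambda$. Writing $C$ in the form $\sum D_1^{(j)}\circ D_2^{(j)}$ modulo an order-$1$ remainder, with $D_1^{(j)},D_2^{(j)}\in\Psi^1$, the remainder is handled by compactness (its image on $\psi_\lambda$ lies in $L^2$ bounded, paired with $\phi_\lambda\to 0$ in $L^2$), and integration by parts transfers $D_1^{(j)}$ (or rather its adjoint) onto $\phi_\lambda$. Each resulting pairing $\langle (D_1^{(j)})^*\phi_\lambda,D_2^{(j)}\psi_\lambda\rangle$ is, up to further order-$0$ corrections, of the form $\langle \rd_\mu\phi_\lambda, Q^{\mu\nu}\rd_\nu\psi_\lambda\rangle$ with $Q^{\mu\nu}\in\Psi^0$, whose limit is given by $\int q^{\mu\nu}(x,\xi)\xi_\mu\xi_\nu\,\ud\mu_{\phi\psi}$ by Definition~\ref{def:mu.abrs} (extended to the cross setting). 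Summing contributions and identifying the resulting symbol with $\sigma(C)$ yields $\lim_\lambda\langle\phi_\lambda,C\psi_\lambda\rangle = -\int\{g_0^{\mu\nu}\xi_\mu\xi_\nu,X^\rho\xi_\rho m\}\,\ud\mu_{\phi\psi}$, which gives the claimed identity after dividing by $2$.

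The main obstacle is the symbolic/compactness bookkeeping: each step spawns lower-order commutator corrections (from $A^*-A$, from the $\sqrt{|g_0|}$-factor in $\mathrm{dVol}_{g_0}$, from $[A,g_0^{\mu\nu}]$ and $[A,X^\rho]$, etc.) and one must verify that each such correction produces either (a) a factor that tends to $0$ strongly in $L^2_{\mathrm{loc}}$ via Rellich (because one-degree smoothing acts on a sequence already converging weakly in $L^2$), or (b) a pairing in which one side already tends to $0$ in $L^2$. A convenient alternative, which bypasses some of this bookkeeping, is to perform the Green's-identity integration by parts directly on both terms in $I_\lambda$: expanding $\rd_\nu(XA\psi_\lambda)$ and $\rd_\mu(AX\phi_\lambda)$ via the product rule (using $[\rd_\mu,A]=0$), one sees that the pure $\rd^2$ contributions on $\phi_\lambda$ and on $\psi_\lambda$ cancel after a further integration by parts, leaving only bilinear $\rd\cdot\rd$ pairings whose limits are given by the cross MDM and whose coefficients manifestly combine into the Poisson bracket $\{g_0^{\mu\nu}\xi_\mu\xi_\nu,X^\rho\xi_\rho m\}$.
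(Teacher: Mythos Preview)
Your proposal is correct and the approach—reducing $I_\lambda$ to the commutator pairing $\langle\phi_\lambda,[\Box_{g_0},XA]\psi_\lambda\rangle$ and then reading off the Poisson-bracket symbol—is a genuinely different route from the paper's. The paper instead introduces the cross ``energy-momentum'' tensor
\[
\mathbb T^A_{\mu\nu}[\phi,\psi] := \partial_{(\mu|}\phi\,\partial_{|\nu)}A\psi + \tfrac{1}{2}g_{\mu\nu}g^{\alpha\beta}\partial_{(\alpha|}\phi\,\partial_{|\beta)}A\psi,
\]
computes $g^{\rho\mu}\nabla_\rho(\mathbb T^A_{\mu\nu}X^\nu)$ directly, applies Stokes' theorem, and identifies the residual term in the $\lambda\to 0$ limit as ${}^{(X)}\pi^{\mu\nu}\xi_\mu\xi_\nu\, m$ together with $\tfrac{1}{2}X^\rho\xi_\rho\{g_0^{\mu\nu}\xi_\mu\xi_\nu,m\}$, which it then checks equals $\tfrac{1}{2}\{g_0^{\mu\nu}\xi_\mu\xi_\nu, X^\rho\xi_\rho m\}$ by an explicit identity for the deformation tensor. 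Your commutator approach is more in the spirit of the microlocal-defect-measure literature (Francfort--Murat, G\'erard) and makes the Poisson bracket appear automatically from symbol calculus, at the cost of the factorization $C=\sum D_1^{(j)}D_2^{(j)}+(\text{order }1)$ and the compactness bookkeeping you describe; the paper's multiplier-method computation is more elementary and geometric, avoids factoring the order-$2$ commutator, and makes the role of ${}^{(X)}\pi$ explicit. Your ``convenient alternative'' at the end—expanding both pairings via Green's identity and cancelling the pure second-derivative contributions—is in fact essentially the paper's strategy, packaged there as a divergence identity for $\mathbb T^A$.
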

\begin{proof}
In this proof, we write $g = g_0$, $\phi = \phi_\lambda$, $\psi = \psi_\lambda$ whenever it does not create confusion.

Define
$$\mathbb T^A_{\mu\nu}[\phi, \psi] \doteq \rd_{(\mu|} \phi \rd_{|\nu)} A \psi +  \f 12 g_{\mu\nu} g^{\alp\bt} \rd_{(\alp|} \phi \rd_{|\bt)} A \psi. $$
Let $\nabla$ be the Levi-Civita connection associated to $g_0$. It is easy to check that 
$$g^{\rho\mu} \nabla_\rho \mathbb T^A_{\mu\nu}[\phi, \psi] = \f 12 \Box_{g} \phi \rd_\nu A\psi + \f 12 \rd_\nu \phi \Box_{g} A\psi.$$
It then easily follows that 
\begin{equation}\label{eq:id.for.div.thm}
g^{\rho\mu} \nabla_\rho (\mathbb T^A_{\mu\nu}[\phi, \psi] X^\nu) = \f 12 \Box_{g} \phi X A\psi + \f 12 X \phi A \Box_g \psi + \f 12 X\phi [\Box_{g}, A] \psi+ ^{(X)}\pi^{\mu\nu} \mathbb T^A_{\mu\nu}[\phi, \psi] ,
\end{equation}
where $^{(X)}\pi^{\mu\nu} \doteq \nabla^{(\mu|} X^{|\nu)}$ is the deformation tensor of $X$.

Integrating \eqref{eq:id.for.div.thm} in the whole space using Stoke's theorem, and taking the $\lambda \to 0$ limit, we obtain
\begin{equation}
\begin{split}
0 = &\: \f 12 \lim_{\lambda \to 0}\Big(\la \Box_{g} \phi_\lambda, X A\psi_\lambda\ra + \la X \phi_\lambda, A \Box_{g} \psi_\lambda \ra \Big) \\
&\: + \f 12 \int_{S^*\RR^{d+1}}  X^\alp \xi_\alp \{g^{\mu\nu} \xi_\mu\xi_\nu, m \} \, \ud \mu_{\phi\psi} + \int_{S^*\RR^{d+1}} m {}^{(X)}\pi^{\mu\nu} \xi_\mu\xi_\nu \, \ud \mu_{\phi\psi},
\end{split}
\end{equation} 
where we have used that $g^{\alp\bt} \xi_\alp \xi_\bt = 0$ on the support of $\mu_{\phi\psi}$.

We then compute
$$^{(X)}\pi^{\mu\nu} = \nabla^{(\mu|} X^{|\nu)} = g^{(\mu|\mu'} \rd_{\mu'} X^{|\nu)} + g^{(\mu|\mu'}\Gamma^{|\nu)}_{\mu'\rho} X^\rho = g^{(\mu|\mu'} \rd_{\mu'} X^{|\nu)} - \f 12 X^\rho \rd_{x^\rho} g^{\mu\nu},$$
where the last equation can be obtained using $\nabla_{\rho} g_0^{\mu\nu} = 0$.

Finally, we compute
\begin{equation*}
\begin{split}
&\: \{g^{\mu\nu} \xi_\mu\xi_\nu,  X^\rho (x) \xi_\rho m(\xi) \} \\
= &\: 2 g^{\mu\nu} \xi_\nu  \xi_\rho m(\xi) \rd_{x^\mu} X^\rho - \rd_{x^\bt} g^{\mu\nu} X^\bt (x) m(\xi) - \rd_{x^\bt} g^{\mu\nu} X^\rho (x) \xi_\rho \rd_\bt m(\xi) \\
= &\: 2 { }^{(X)}\pi^{\rho\nu}\xi_\nu \xi_\rho +  X^\rho \xi_\rho \{g^{\mu\nu} \xi_\mu\xi_\nu, m(\xi) \}.
\end{split}
\end{equation*}

Putting all these together yields the conclusion. \qedhere
\end{proof}

We now return to the setting of part (4) of Theorem~\ref{thm:main}, imposing, in addition, the assumptions in Section~\ref{sec:cutoff.assumptions}.

\begin{proposition}\label{prop:general.wave.transport}
Let $A \in \Psi^0$ with principal symbol $a(x,\xi)$ which is real. Define $\widetilde{a}(x,\xi) = \xi_t a(x,\xi)$. Then the following identity holds:
\begin{equation}\label{eq:general.wave.transport}
\begin{split}
\int_{S^*\RR^{d+1}} \{g_0^{\mu\nu} \xi_\mu\xi_\nu, \widetilde{a}(x,\xi) \}  \, \ud \mu = &\: \f 14\int_{S^*\RR^{d+1}} g_0^{\mu\nu} \xi_\nu \widetilde{a}(x,\xi)  \rd_{x^\mu} (2g_0^{\alp\alp'} g_0^{\bt\bt'} - g_0^{\alp\bt} g_0^{\alp'\bt'}) \, \ud \mu_{\alp\bt\alp'\bt'} \\
&\: + \f 12 \lim_{\lambda\to 0}  \la (2 g_0^{\alp\alp'} g_0^{\bt\bt'} - g_0^{\alp\bt} g_0^{\alp'\bt'}) \rd_{t} h_{\alp\bt}, A\Box_{g_0} h_{\alp'\bt'} \ra.
\end{split}
\end{equation}
\end{proposition}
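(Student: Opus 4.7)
The proof rests on a mild generalization of Lemma~\ref{lem:waveEE}, allowing (i) the principal symbol $m$ to depend on $x$ as well as $\xi$ and (ii) a smooth scalar prefactor $T(x)$ to be inserted in the divergence identity. The proof of the lemma adapts without change: one integrates the divergence of the modified current $T(x)g_0^{\rho\mu}\mathbb T^A_{\mu\nu}[\phi,\psi]X^\nu$ by Stokes' theorem, and the Leibniz rule produces an additional contribution $(\rd_\rho T)\,g_0^{\rho\mu}\mathbb T^A_{\mu\nu}X^\nu$. The commutator term $\f12\la X\phi,[\Box_{g_0},A]\psi\ra$ and the deformation-tensor term $\int m\,{}^{(X)}\pi^{\mu\nu}\xi_\mu\xi_\nu\,\ud\mu_{\phi\psi}$ still assemble into $\f12\int\{g_0^{\mu\nu}\xi_\mu\xi_\nu,\, X^\rho\xi_\rho\, m(x,\xi)\}\,\ud\mu_{\phi\psi}$, exactly as in the algebraic identity at the end of the proof of Lemma~\ref{lem:waveEE} (the additional $\rd_{x}$-dependence of $m$ only contributes to the Poisson bracket).

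Next, I apply this extended identity with the choices $\phi=h_{\alp\bt}$, $\psi=h_{\alp'\bt'}$, $X=\rd_t$, $m=a(x,\xi)$, and $T=T^{\alp\bt\alp'\bt'}\doteq 2g_0^{\alp\alp'}g_0^{\bt\bt'}-g_0^{\alp\bt}g_0^{\alp'\bt'}$, and sum over the four free indices.  The key algebraic input, derived by direct index manipulation from the definition~\eqref{eq:mu.def} of $\mu$ and the symmetries~\eqref{eq:mu.obvious.symmetries} of $\mu_{\alp\bt\alp'\bt'}$, is the identity
\[
T^{\alp\bt\alp'\bt'}\,\ud\mu_{\alp\bt\alp'\bt'} \;=\; 8\,\ud\mu.
\]
Using this identity together with $X^\rho\xi_\rho\, m = \xi_t a = \widetilde{a}$, the Poisson-bracket term of the extended lemma collapses to a multiple of $\int\{g_0^{\mu\nu}\xi_\mu\xi_\nu,\widetilde{a}\}\,\ud\mu$, while the new $\rd T$ contribution takes precisely the form $\int g_0^{\mu\nu}\xi_\nu\widetilde{a}\,\rd_{x^\mu}T^{\alp\bt\alp'\bt'}\,\ud\mu_{\alp\bt\alp'\bt'}$ which appears on the right-hand side of~\eqref{eq:general.wave.transport}.

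It remains to consolidate the two wave-equation limits $\la T\Box_{g_0}h_{\alp\bt},\rd_t Ah_{\alp'\bt'}\ra$ and $\la T\rd_t h_{\alp\bt},A\Box_{g_0}h_{\alp'\bt'}\ra$ produced by the Stokes identity into a single expression. By the symmetry $T^{\alp\bt\alp'\bt'}=T^{\alp'\bt'\alp\bt}$, relabeling of the summation indices, and symmetry of the real $L^2$ inner product, the first limit equals $\lim\sum T^{\alp\bt\alp'\bt'}\la\rd_t Ah_{\alp\bt},\Box_{g_0}h_{\alp'\bt'}\ra$. Writing $\rd_t A = A\rd_t + [\rd_t,A]$, the commutator $[\rd_t,A]\in\Psi^0$ sends $h\to 0$ strongly in $L^2$ since $h$ itself converges strongly; and the remaining $\la A\rd_t h,\Box h'\ra = \la \rd_t h, A^*\Box h'\ra$ differs from $\la\rd_t h, A\Box h'\ra$ only by $\la \rd_t h, (A^*-A)\Box_{g_0}h'\ra$ with $A^*-A\in\Psi^{-1}$, so that $(A^*-A)\Box_{g_0}\in\Psi^1$ has principal symbol proportional to $g_0^{\mu\nu}\xi_\mu\xi_\nu$, which vanishes on $\mathrm{supp}(\mu)$ by Proposition~\ref{prop:supp.null.cone}. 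Hence the two limits agree at principal order, and isolating $\int\{g_0^{\mu\nu}\xi_\mu\xi_\nu,\widetilde{a}\}\,\ud\mu$ in the resulting equation yields \eqref{eq:general.wave.transport}.

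The main obstacle is this last identification: one must track carefully every symbol-calculus correction (commutators of $A$ with $\rd_t$ and $\Box_{g_0}$, the adjoint discrepancy $A^*-A$, lower-order parts of $\Box_{g_0}=g_0^{\mu\nu}\rd_\mu\rd_\nu-H^\rho\rd_\rho$, and volume-form corrections in the integration by parts) and verify that each either tends to zero via the strong convergence $h\to 0$ in $L^2$, or has principal symbol factoring through $g_0^{\mu\nu}\xi_\mu\xi_\nu$ and hence vanishes against $\mu$ on the null cone. The null-cone support provided by Proposition~\ref{prop:supp.null.cone} is the essential input that allows the two a priori distinct limit expressions to collapse into the single asymmetric $\lim\la T^{\alp\bt\alp'\bt'}\rd_t h_{\alp\bt}, A\Box_{g_0}h_{\alp'\bt'}\ra$ appearing in~\eqref{eq:general.wave.transport}.
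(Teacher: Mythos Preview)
Your proof is correct and follows essentially the same strategy as the paper, with two organizational differences. The paper avoids extending Lemma~\ref{lem:waveEE}: instead of allowing $x$-dependent symbols, it invokes Stone--Weierstrass to reduce to the product case $a(x,\xi)=f(x)m(\xi)$; and instead of inserting a separate prefactor $T(x)$, it absorbs both $f$ and $T^{\alp\bt\alp'\bt'}$ into the vector field by taking $X = f\,g_0^{\alp\alp'}g_0^{\bt\bt'}\rd_t$ (and similarly for the $g_0^{\alp\bt}g_0^{\alp'\bt'}$ contraction). The $\rd T$ term you produce via Leibniz then arises in the paper's version from the deformation tensor ${}^{(X)}\pi$ of this enlarged $X$.

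The paper's merging of the two wave-equation limits is also simpler than yours. Once $T$ sits inside $X$, relabeling $(\alp,\bt)\leftrightarrow(\alp',\bt')$ makes the two terms in \eqref{eq:wave.EE.conq} identical, and converting $\la\Box_{g_0}\phi,\,XA^m\psi\ra$ to $\la X\psi,\, A^m\Box_{g_0}\phi\ra$ needs only that $[A^m,X]\in\Psi^{0}$ (so $[A^m,X]h\to 0$ in $L^2$ since $h\to 0$ strongly) and that $A^m-(A^m)^*\in\Psi^{-1}$ (move it back across the pairing to act on $h$ through $X$, giving a $\Psi^0$ operator applied to $h\to 0$). Your route through $(A^*-A)\Box_{g_0}\in\Psi^1$ and the null-cone support of $\mu_{\alp\bt\alp'\bt'}$ is valid but more elaborate than needed; note also that when $T$ is kept outside $A$ you must in addition dispose of the $[A^*,T]\Box_{g_0}$ contribution, which yields to the same argument (its principal symbol again carries a factor $g_0^{\mu\nu}\xi_\mu\xi_\nu$, or alternatively one moves the $\Psi^{-1}$ piece back onto $\rd_t h$).
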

\begin{proof}
By the Stone--Weierstrass theorem, it suffices to check the identity when $a(x,\xi)= f(x) m(\xi)$, where $f \in C^\infty_c(\mathbb R^{d+1};\mathbb R)$ and $m$ is real and positively $0$-homogeneous.

Define $A^m \in \Psi^0$ with principal symbol $m$ as above. By Lemma~\ref{lem:waveEE}, given any vector field $X$, it holds that
\begin{equation}\label{eq:wave.EE.conq}
\int_{S^*\RR^{d+1}} \{g_0^{\mu\nu} \xi_\mu\xi_\nu,  X^\rho (x) \xi_\rho m(\xi) \}  \, \ud \mu_{\alp\bt\alp'\bt'}= \lim_{\lambda\to 0} \Big( \la X h_{\alp\bt}, A^m \Box_{g_0} h_{\alp'\bt'} \ra + \la X h_{\alp'\bt'}, A^m\Box_{g_0} h_{\alp\bt} \ra \Big),
\end{equation}
where we have used that $[A^m,X] ,\, A^m - (A^m)^* \in \Psi^{-1}$.

Applying \eqref{eq:wave.EE.conq} with $X = f g_0^{\alp\alp'} g_0^{\bt\bt'} \rd_t$, we obtain
\begin{equation}
\begin{split}
&\: \int_{S^*\RR^{d+1}}  g_0^{\alp\alp'} g_0^{\bt\bt'} \{g_0^{\mu\nu} \xi_\mu\xi_\nu, \widetilde{a}(x,\xi) \}  \, \ud \mu_{\alp\bt\alp'\bt'} \\
= &\: \int_{S^*\RR^{d+1}}  g_0^{\alp\alp'} g_0^{\bt\bt'} \{g_0^{\mu\nu} \xi_\mu\xi_\nu, f(x) \xi_{t} m(\xi) \}  \, \ud \mu_{\alp\bt\alp'\bt'} \\
= &\: 2 \int_{S^*\RR^{d+1}} f g_0^{\mu\nu} \xi_\nu \xi_{t} m \rd_{x^\mu} (g_0^{\alp\alp'} g_0^{\bt\bt'}) \, \ud \mu_{\alp\bt\alp'\bt'} + 2\lim_{\lambda\to 0}  \la f g_0^{\alp\alp'} g_0^{\bt\bt'} \rd_{t} h_{\alp\bt}, A^m\Box_{g_0} h_{\alp'\bt'} \ra \\
= &\: 2 \int_{S^*\RR^{d+1}} g_0^{\mu\nu} \xi_\nu \widetilde{a} \rd_{x^\mu} (g_0^{\alp\alp'} g_0^{\bt\bt'}) \, \ud \mu_{\alp\bt\alp'\bt'} + 2\lim_{\lambda\to 0}  \la g_0^{\alp\alp'} g_0^{\bt\bt'} \rd_{t} h_{\alp\bt}, A \Box_{g_0} h_{\alp'\bt'} \ra.
\end{split}
\end{equation}
We can compute $\int_{S^*\RR^{d+1}}  g_0^{\alp\bt} g_0^{\alp'\bt'} \{g_0^{\mu\nu} \xi_\mu\xi_\nu, f \xi_t m(\xi) \}  \, \ud \mu_{\alp\bt\alp'\bt'}$ in a similar manner. Thus,
\begin{equation}
\begin{split}
 &\: \int_{S^*\RR^{d+1}}  (g_0^{\alp\alp'} g_0^{\bt\bt'} - \f 12 g_0^{\alp\bt} g_0^{\alp'\bt'}) \{g_0^{\mu\nu} \xi_\mu\xi_\nu, \widetilde{a}(x,\xi) \}  \, \ud \mu_{\alp\bt\alp'\bt'} \\
= &\:\int_{S^*\RR^{d+1}} g_0^{\mu\nu} \xi_\nu \widetilde{a}(x,\xi) \rd_{x^\mu} (2g_0^{\alp\alp'} g_0^{\bt\bt'} - g_0^{\alp\bt} g_0^{\alp'\bt'}) \, \ud \mu_{\alp\bt\alp'\bt'} \\
&\: + \lim_{\lambda\to 0}  \la (2 g_0^{\alp\alp'} g_0^{\bt\bt'} - g_0^{\alp\bt} g_0^{\alp'\bt'}) \rd_{t} h_{\alp\bt}, A \Box_{g_0} h_{\alp'\bt'} \ra.
\end{split}
\end{equation}
The desired conclusion hence follows from the definition $\mu \doteq \f 14(g_0^{\alp\alp'} g_0^{\bt\bt'} - \f 12 g_0^{\alp\bt} g_0^{\alp'\bt'}) \mu_{\alp\bt\alp'\bt'}$. \qedhere
\end{proof}

Combining the result above with Proposition~\ref{prop:h.eqn}, we obtain our main propagation identity:
\begin{proposition}\label{prop:main.reduced}
The following identity holds:
\begin{equation}\label{eq:main.reduced}
\begin{split}
&\: \int_{S^*\RR^{d+1}} \{g_0^{\mu\nu} \xi_\mu\xi_\nu, \widetilde{a}(x,\xi) \}  \, \ud \mu \\
= &\: \f 14\int_{S^*\RR^{d+1}} g_0^{\mu\nu} \xi_\nu \widetilde{a}(x,\xi)  \rd_{x^\mu} (2g_0^{\alp\alp'} g_0^{\bt\bt'} - g_0^{\alp\bt} g_0^{\alp'\bt'}) \, \ud \mu_{\alp\bt\alp'\bt'} \\
&\: + \f 12 \lim_{\lambda\to 0}  \Big\la (2 g_0^{\alp\alp'} g_0^{\bt\bt'} - g_0^{\alp\bt} g_0^{\alp'\bt'}) \rd_{t} h_{\alp\bt}, A\Big(g_0^{\mu\mu'} g_0^{\nu\nu'} h_{\mu\nu} \rd_{\mu'\nu'}^2 h_{\alp'\bt'} \Big) \Big\ra \\
&\: + \f 12 \lim_{\lambda\to 0}  \Big\la (2 g_0^{\alp\alp'} g_0^{\bt\bt'} - g_0^{\alp\bt} g_0^{\alp'\bt'}) \rd_{t} h_{\alp\bt}, A\Big(L_{\alp'\bt'}(g_0)(\rd h_n)\Big) \Big\ra \\
&\: + \f 12 \lim_{\lambda\to 0}  \Big\la (2 g_0^{\alp\alp'} g_0^{\bt\bt'} - g_0^{\alp\bt} g_0^{\alp'\bt'}) \rd_{t} h_{\alp\bt}, A\Big(Q_{\alp'\bt'}(g_0)(\rd h, \rd h)\Big) \Big\ra \\
&\: + \f 12 \lim_{\lambda\to 0}  \Big\la (2 g_0^{\alp\alp'} g_0^{\bt\bt'} - g_0^{\alp\bt} g_0^{\alp'\bt'}) \rd_{t} h_{\alp\bt}, A\Big(P_{\alp'\bt'}(g_0)(\rd h, \rd h)\Big) \Big\ra \\
&\: +  \lim_{\lambda\to 0}  \Big\la (2 g_0^{\alp\alp'} g_0^{\bt\bt'} - g_0^{\alp\bt} g_0^{\alp'\bt'}) \rd_{t} h_{\alp\bt}, A\Big((g_0)_{\rho(\alp'} \rd_{\bt')} [H^\rho - H^\rho_0]\Big) \Big\ra.
\end{split}
\end{equation}
\end{proposition}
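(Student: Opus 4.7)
The plan is to combine Proposition~\ref{prop:general.wave.transport} with the wave equation in Proposition~\ref{prop:h.eqn}. The first two lines of \eqref{eq:main.reduced} already reproduce the first two lines of \eqref{eq:general.wave.transport} verbatim, so only the final term $\f 12 \lim_{\lambda\to 0} \la M^{\alp\bt\alp'\bt'} \rd_t h_{\alp\bt}, A \Box_{g_0} h_{\alp'\bt'}\ra$ remains, where I abbreviate $M^{\alp\bt\alp'\bt'} \doteq 2g_0^{\alp\alp'}g_0^{\bt\bt'} - g_0^{\alp\bt}g_0^{\alp'\bt'}$. Into this pairing I substitute the seven-term expansion \eqref{eq:h.eqn} of $\Box_{g_0} h_{\alp'\bt'}$. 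Five of the resulting summands reproduce, on the nose, the quasilinear, $L$, $Q$, $P$ and wave-coordinate lines of \eqref{eq:main.reduced} (the factor of $2$ inside \eqref{eq:h.eqn} in front of the wave-coordinate correction combines with the outer $\f 12$ to produce coefficient $1$, matching the statement). It then suffices to show that the two remaining summands — the $2\mathrm{Ric}(g_0)_{\alp'\bt'}$ contribution and the $O(\lambda^\eta)$ error — both vanish in the limit.

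For the error, I apply Cauchy--Schwarz and use the uniform $L^2$ bound on $\rd h$ (from \eqref{eq:basic.bds.for.Ric} combined with the compact support imposed in Section~\ref{sec:cutoff.assumptions}) and the $L^2$-boundedness of $A$ to get
$$\Big| \la M^{\alp\bt\alp'\bt'} \rd_t h_{\alp\bt}, A \cdot O(\lambda^\eta) \ra \Big| \ls \lambda^\eta \to 0.$$

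For the Ricci contribution, the key point is that $\mathrm{Ric}(g_0)$ is a \emph{fixed} smooth tensor while $\rd_t h$ tends to zero weakly in $L^2$. Indeed, $h \to 0$ uniformly by \eqref{eq:ass.1} with compact support after the cutoff, hence strongly in $L^2$; integrating against any smooth compactly supported test function then gives $\rd_t h \rightharpoonup 0$ weakly in $L^2$. Since $A^*\big(M^{\alp\bt\alp'\bt'} \rd_t h_{\alp\bt}\big)$ remains weakly convergent to zero in $L^2_{loc}$ (the compact support of $\rd h$ combined with pseudo-locality of $A^*$ localizes the pairing), weak-strong duality against the fixed smooth function $2 \mathrm{Ric}(g_0)_{\alp'\bt'}$ gives that this contribution vanishes as $\lambda \to 0$. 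Combining the two paragraphs yields \eqref{eq:main.reduced}.

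None of these steps is delicate: this proposition is essentially bookkeeping, and the real analytic work is deferred to the subsequent subsections, where each of the five surviving limits on the right of \eqref{eq:main.reduced} must be shown to vanish — the main obstacle overall being the $P$-term, which requires uncovering a hidden trilinear null structure via the generalized wave coordinate condition before the Ionescu--Pasauder-type estimate of Proposition~\ref{prop:main.trilinear} can be applied.
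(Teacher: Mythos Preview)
Your proof is correct and follows the same route as the paper: start from Proposition~\ref{prop:general.wave.transport}, substitute the expansion \eqref{eq:h.eqn} for $\Box_{g_0}h$, and check that the Ricci and $O(\lambda^\eta)$ pieces drop out (the paper phrases the Ricci step as ``integration by parts and $|h|\to 0$'', which is just another way to say $\rd_t h\rightharpoonup 0$). One small inaccuracy in your closing commentary: not all five surviving limits vanish individually --- the $L$-term limit is shown in Proposition~\ref{prop:linear} to \emph{cancel} the $\tfrac14\int \ldots\,\ud\mu_{\alp\bt\alp'\bt'}$ term rather than to vanish, while the remaining four do vanish.
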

\begin{proof}
We start with \eqref{eq:general.wave.transport} and use the equation for $\Box_{g_0} h$ from Proposition~\ref{prop:h.eqn}.

Since $g_0$, $\mathrm{Ric}(g_0)$ are smooth, it follows from integration by parts and $|h|\to 0$ that
$$ \f 12 \lim_{\lambda\to 0}  \Big\la (2 g_0^{\alp\alp'} g_0^{\bt\bt'} - g_0^{\alp\bt} g_0^{\alp'\bt'}) \rd_{t} h_{\alp\bt}, A\Big(\mathrm{Ric}_{\alp'\bt'}(g_0)\Big) \Big\ra = 0.$$
The next five terms in Proposition~\ref{prop:h.eqn} give the corresponding five terms in \eqref{eq:main.reduced}. Finally, the $O(\lambda^{\eta})$ contribution in Proposition~\ref{prop:h.eqn} vanishes in the $\lambda \to 0$ limit and do not contribute to \eqref{eq:main.reduced}. \qedhere
\end{proof}

The remainder of the paper thus involves handling the four terms on the right-hand side of \eqref{eq:main.reduced}. 

\subsection{The quasilinear term}\label{sec:quasilinear}

The main goal of this subsection is the following proposition:
\begin{proposition}\label{prop:quasilinear}
\begin{equation}\label{eq:quasilinear.main.term}
\begin{split}
&\: \lim_{\lambda\to 0}  \Big\la (2 g_0^{\alp\alp'} g_0^{\bt\bt'} - g_0^{\alp\bt} g_0^{\alp'\bt'}) \rd_{t} h_{\alp\bt}, A\Big(g_0^{\mu\mu'} g_0^{\nu\nu'} h_{\mu\nu} \rd_{\mu'\nu'}^2 h_{\alp'\bt'} \Big) \Big\ra =0.
\end{split}
\end{equation}
\end{proposition}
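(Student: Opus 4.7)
The plan follows the frequency-decomposition strategy indicated in the introduction. For a small $\sigma>0$ to be chosen, write $h_{\mu\nu}=\mfh^{(1)}_{\mu\nu}+\mfh^{(2)}_{\mu\nu}+\mfh^{(3)}_{\mu\nu}$, where $\mfh^{(1)}$ is the spatial Fourier truncation of $h$ to frequencies $|\uxi|\ls \lambda^{-1+\sigma}$, $\mfh^{(2)}$ is the further truncation to high frequencies bounded away from the light cone of $g_0$ (with coefficients frozen locally as in Section~\ref{sec:Qmunu}), and $\mfh^{(3)}$ is the remainder, supported at high frequencies in a conic neighborhood of the null cone. Split the inner product in \eqref{eq:quasilinear.main.term} into three pieces $I_1+I_2+I_3$ according to which $\mfh^{(i)}$ appears as the coefficient of $\rd^2 h_{\alp'\bt'}$.

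For $I_1$, the low-frequency cutoff combined with $\|h\|_{L^\infty}\ls\lambda$ yields $\|\rd\mfh^{(1)}\|_{L^\infty}\ls\lambda^\sigma$. Invoke the Calder\'on commutator theorem to interchange $A$ and the coefficient $\mfh^{(1)}_{\mu\nu}$ modulo an $o(1)$ commutator, then integrate by parts one derivative from $\rd^2 h_{\alp'\bt'}$ onto $\mfh^{(1)}\cdot \rd_t h$. The resulting expressions are controlled using $\|\mfh^{(1)}\|_{L^\infty}=O(\lambda)$, $\|\rd\mfh^{(1)}\|_{L^\infty}=o(1)$, the uniform bound $\|\rd h\|_{L^2}\ls 1$, and, for the $\rd_t^2$ contribution, $\|\Box_{g_0}h\|_{L^\infty}\ls 1$ from Proposition~\ref{prop:h.eqn}, exactly as in the corresponding argument in \cite{HL.Burnett}. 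For $I_2$, microlocal ellipticity of $\Box_{g_0}$ on $\mathrm{supp}\,\mfh^{(2)}$ combined with $\|\Box_{g_0}h\|_{L^\infty}\ls 1$ gives $\|\mfh^{(2)}\|_{L^2}\ls\lambda^{2-2\sigma}$ and $\|\rd\mfh^{(2)}\|_{L^2}\ls\lambda^{1-\sigma}$. A direct Cauchy--Schwarz estimate using $\|A\|_{L^2\to L^2}\ls 1$, $\|\rd^2 h\|_{L^\infty}\ls\lambda^{-1}$ and $\|\rd_t h\|_{L^2}\ls 1$ yields $|I_2|\ls\lambda^{1-2\sigma}\to 0$.

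The main piece is $I_3$. Integrate by parts one derivative in $\mu'$ to write
\begin{equation*}
g_0^{\mu\mu'}g_0^{\nu\nu'}\mfh^{(3)}_{\mu\nu}\rd^2_{\mu'\nu'}h_{\alp'\bt'} = \rd_{\mu'}\!\left(g_0^{\mu\mu'}g_0^{\nu\nu'}\mfh^{(3)}_{\mu\nu}\rd_{\nu'}h_{\alp'\bt'}\right) - g_0^{\mu\mu'}g_0^{\nu\nu'}\rd_{\mu'}\mfh^{(3)}_{\mu\nu}\rd_{\nu'}h_{\alp'\bt'} + O(|h||\rd h|),
\end{equation*}
where the remainder gathers derivatives of $g_0$ and is $O(\lambda)$ pointwise. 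The total $\rd_{\mu'}$-divergence, paired against $A$ and $(2g_0^{\alp\alp'}g_0^{\bt\bt'}-g_0^{\alp\bt}g_0^{\alp'\bt'})\rd_t h_{\alp\bt}$, is re-integrated by parts so that $\rd_{\mu'}$ lands either on $A$ (lower-order commutator), on the smooth coefficient $g_0$, or on $\rd_t h$ (producing second derivatives of $h$: $\rd_t^2 h$ is absorbed by $\Box_{g_0}h$, while $\rd_t\rd_i h$ is processed as in Steps~1--2 with $h$ now playing the role of the coefficient). For the non-divergence term, apply the generalized wave coordinate condition \eqref{eq:basic.bds.for.WC},
\[ g_0^{\mu\mu'}\rd_{\mu'}h_{\mu\nu} = \tfrac12 g_0^{\mu\mu'}\rd_\nu h_{\mu\mu'} + O(\lambda^\eta), \]
and subtract the $\mfh^{(1)}$- and $\mfh^{(2)}$-contributions, which are $o(1)$ in $L^\infty$ and $L^2$ respectively by Steps~1--2. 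This yields
\[ g_0^{\mu\mu'}g_0^{\nu\nu'}\rd_{\mu'}\mfh^{(3)}_{\mu\nu}\rd_{\nu'}h_{\alp'\bt'} = \tfrac12 g_0^{\mu\mu'}\mathfrak Q^{(g_0)}_0(h_{\mu\mu'},h_{\alp'\bt'}) + o_{L^2}(1), \]
a classical $\mathfrak Q_0$ null form whose trilinear pairing against $\rd_t h$ (via $A$) is controlled by Proposition~\ref{prop:stupid.trilinear}, producing a power of $\lambda$.

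The main obstacle is $I_3$: one must carefully manage the re-integration of the total divergence, transfer the wave coordinate condition from $h$ to $\mfh^{(3)}$ by exploiting the gains from Steps~1--2, and unveil the $\mathfrak Q_0^{(g_0)}$ null form so that the trilinear estimate applies. A consistent choice of the small parameters $\sigma$ and $\eta$ is also required to guarantee that all the error terms are simultaneously $o(1)$ as $\lambda\to 0$.
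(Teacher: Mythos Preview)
Your overall strategy and the treatment of $I_1$, $I_2$ are in the right spirit, but there is a genuine gap in $I_3$. (Also note a smaller issue: with a purely \emph{spatial} Fourier cutoff, $\|\rd_t\mfh^{(1)}\|_{L^\infty}\ls\lambda^\sigma$ does not follow from $\|h\|_{L^\infty}\ls\lambda$ alone; the paper uses a spacetime frequency decomposition.)

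For $I_3$, after you integrate the total divergence by parts, $\rd_{\mu'}$ lands on $\rd_t h_{\alp\bt}$ and produces
\[
\big\langle (2g_0^{\alp\alp'}g_0^{\bt\bt'}-g_0^{\alp\bt}g_0^{\alp'\bt'})\,\rd_{\mu'}\rd_t h_{\alp\bt},\; A\big(g_0^{\mu\mu'}g_0^{\nu\nu'}\mfh^{(3)}_{\mu\nu}\,\rd_{\nu'}h_{\alp'\bt'}\big)\big\rangle,
\]
which is $O(\|\rd^2 h\|_{L^2}\,\|\mfh^{(3)}\|_{L^\infty}\,\|\rd h\|_{L^2})=O(1)$. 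Your claims that ``$\rd_t^2 h$ is absorbed by $\Box_{g_0}h$'' and that the rest is ``processed as in Steps~1--2'' do not work here: the index $\mu'$ is contracted with $\mfh^{(3)}_{\mu\nu}$, not with a second derivative, so no $\Box_{g_0}$ appears; and the Calder\'on/symmetry argument of Step~1 relied on $\rd\mfh^{(1)}=o(1)$, which fails for $\mfh^{(3)}$ (replaying it generates $\rd\mfh^{(3)}$ terms that are merely $O(1)$).

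The paper takes a different route. Since (with its spacetime cutoff) $\mfh^{(3)}$ is supported where $|\xi_t|\gtrsim\lambda^{-b}$, one defines $\mfk$ with $\rd_t\mfk=\mfh^{(3)}$, gaining $\|\mfk\|_{X^2_\lambda}\ls\lambda^b$ (Lemma~\ref{lem:mfh3}). Writing $\mfh^{(3)}\rd^2_{\mu'\nu'}h=\rd_t\mfk\cdot\rd^2_{\mu'\nu'}h$, one swaps $\rd_t\leftrightarrow\rd_{\mu'}$ at the expense of a $\mathfrak Q_{t\mu'}$ null form, then applies the wave-coordinate condition for $\mfk$ to turn the remaining piece into a $\mathfrak Q_0^{(g_0)}$ null form in $(\mfk,\rd_t h^{\hbox{\Rightscissors}})$. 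The $\lambda^b$ gain from $\mfk$ beats the $\lambda^{-1.01}$ loss coming from $\|\rd h^{\hbox{\Rightscissors}}\|_{X^\infty_\lambda}$ (the latter requires the preliminary very-high-frequency cutoff $h\mapsto h^{\hbox{\Rightscissors}}$ of Lemma~\ref{lem:cut.off.high.freq.quasilinear}, which your sketch omits). Both Proposition~\ref{prop:stupid.trilinear} and Proposition~\ref{prop:main.trilinear} are needed; your outline invokes only the former.
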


\subsubsection{A preliminary reduction}

We start with a preliminary observation, namely that we can replace all instances of $h$ in \eqref{eq:quasilinear.main.term} by their frequency cutoff versions. The idea is related to that in Step~3 in the proof of Proposition~\ref{prop:main.trilinear}, except now we use a spacetime Fourier cutoff.
\begin{lemma}\label{lem:cut.off.high.freq.quasilinear}
Define $\chi:[0,\infty) \to [0,1]$ to be a cutoff function supported in $\{|\xi| \leq 2\}$ and such that $\chi \equiv 1$ when on $[0,1]$. Define $h^{\hbox{\Rightscissors}}_{\alp\bt}$ so that $\calF h^{\hbox{\Rightscissors}}_{\alp\bt}(\xi) \doteq \chi(\lambda^{1.01} |\xi|)\calF h_{\alp\bt}(\xi)$. Then
\begin{equation}\label{eq:cut.off.high.freq.quasilinear}
\begin{split}
 \lim_{\lambda\to 0} &\: \Big[ \Big\la (2 g_0^{\alp\alp'} g_0^{\bt\bt'} - g_0^{\alp\bt} g_0^{\alp'\bt'}) \rd_{t} h_{\alp\bt}, A\Big(g_0^{\mu\mu'} g_0^{\nu\nu'} h_{\mu\nu} \rd_{\mu'\nu'}^2 h_{\alp'\bt'} \Big) \Big\ra \\
 &\: - \Big\la (2 g_0^{\alp\alp'} g_0^{\bt\bt'} - g_0^{\alp\bt} g_0^{\alp'\bt'}) \rd_{t} h^{\hbox{\Rightscissors}}_{\alp\bt}, A\Big(g_0^{\mu\mu'} g_0^{\nu\nu'} h_{\mu\nu} \rd_{\mu'\nu'}^2 h^{\hbox{\Rightscissors}}_{\alp'\bt'} \Big) \Big\ra \Big] = 0.
\end{split}
\end{equation}
\end{lemma}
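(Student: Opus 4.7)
The plan is to exploit the quantitative smallness of the high-frequency part $h - h^{\hbox{\Rightscissors}}$. First, I would establish preliminary Fourier-side estimates. The spacetime Fourier transform of $h - h^{\hbox{\Rightscissors}}$ is supported in $\{|\xi|\geq \lambda^{-1.01}\}$, so using the compact support of $h$ and $|\rd^2 h|\ls \lambda^{-1}$, Plancherel should give
\[
\|\rd^j(h - h^{\hbox{\Rightscissors}})\|_{L^2}^2 \ls \int_{|\xi|\geq \lambda^{-1.01}} |\xi|^{2j} |\calF h|^2 \ud \xi \ls \lambda^{1.01(4-2j)} \|\rd^2 h\|_{L^2}^2 \ls \lambda^{1.01(4-2j) - 2}
\]
for $j = 0, 1$; in particular $\|\rd(h - h^{\hbox{\Rightscissors}})\|_{L^2} \ls \lambda^{0.01}$. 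Meanwhile $h^{\hbox{\Rightscissors}}$ inherits all of the $L^\infty$ and $L^2$-derivative bounds of $h$, since the Fourier multiplier $\chi(\lambda^{1.01}|\xi|)$ is convolution against a kernel of unit $L^1$ norm.

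Next, I would split the difference between the two sides of \eqref{eq:cut.off.high.freq.quasilinear} by writing $h = h^{\hbox{\Rightscissors}} + (h - h^{\hbox{\Rightscissors}})$ in the two outer factors (keeping the middle coefficient $h$ as is), reducing the error schematically to
\[
E_1 + E_2, \qquad E_1 \doteq \la \rd_t(h - h^{\hbox{\Rightscissors}}), A(h\, \rd^2 h)\ra, \qquad E_2 \doteq \la \rd_t h^{\hbox{\Rightscissors}}, A(h\, \rd^2(h - h^{\hbox{\Rightscissors}}))\ra.
\]
The piece $E_1$ should be immediate from Cauchy--Schwarz and $L^2$-boundedness of $A$: $|E_1|\ls \|\rd(h - h^{\hbox{\Rightscissors}})\|_{L^2}\, \|h\|_{L^\infty}\, \|\rd^2 h\|_{L^2} \ls \lambda^{0.01}\cdot\lambda\cdot\lambda^{-1}\to 0$.

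The main obstacle is $E_2$, since a direct estimate using $\|\rd^2(h - h^{\hbox{\Rightscissors}})\|_{L^2}\ls \lambda^{-1}$ gives only $O(1)$. To overcome this, I would integrate one of the two derivatives in $\rd^2(h - h^{\hbox{\Rightscissors}})$ by parts inside the inner product. This should produce three families of terms: one in which the derivative is shifted onto $\rd_t h^{\hbox{\Rightscissors}}$, bounded by $\|\rd^2 h^{\hbox{\Rightscissors}}\|_{L^2}\, \|h\|_{L^\infty}\, \|\rd(h - h^{\hbox{\Rightscissors}})\|_{L^2}\ls \lambda^{-1}\cdot \lambda\cdot \lambda^{0.01}$; one in which the derivative falls on the coefficient $h$, bounded by $\|\rd_t h^{\hbox{\Rightscissors}}\|_{L^2}\, \|\rd h\|_{L^\infty}\, \|\rd(h - h^{\hbox{\Rightscissors}})\|_{L^2}\ls \lambda^{0.01}$; and a commutator contribution from $[A,\rd]\in \Psi^0$ of the same size. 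All three are $O(\lambda^{0.01})\to 0$, so $E_2 \to 0$ and the lemma follows.

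Structurally, the integration by parts in $E_2$ is the heart of the argument: the factor $\lambda^{-1}$ lost by placing one additional derivative on $h^{\hbox{\Rightscissors}}$ is cancelled exactly by the bare factor $\|h\|_{L^\infty}\ls \lambda$ in the middle, so that the net effect is to trade one copy of $\rd^2(h - h^{\hbox{\Rightscissors}})$ for $\rd(h - h^{\hbox{\Rightscissors}})$. It is precisely the quasilinear structure---one undifferentiated $h$ multiplying $\rd^2 h$---together with the slightly subcritical cutoff scale $\lambda^{1.01}$ (rather than $\lambda$) that supplies the small gain $\lambda^{0.01}$ used to close the estimate.
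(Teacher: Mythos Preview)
Your proposal is correct and follows essentially the same approach as the paper: establish the gain $\|\rd(h-h^{\hbox{\Rightscissors}})\|\ls\lambda^{0.01}$ from the frequency cutoff, handle the term with $\rd_t(h-h^{\hbox{\Rightscissors}})$ directly by H\"older, and treat the remaining term by integrating one derivative off $\rd^2(h-h^{\hbox{\Rightscissors}})$ so that the undifferentiated factor $h$ absorbs the resulting $\lambda^{-1}$. The only cosmetic differences are that the paper first moves $A$ to the left via $A-A^*,\,[A,g_0]\in\Psi^{-1}$ before integrating by parts (whereas you keep $A$ on the right and pick up the harmless $[A,\rd]\in\Psi^0$ commutator), and the paper proves the $\rd\bar h$ bound in all $L^p$ via the $L^1$ kernel rather than just in $L^2$ via Plancherel---but your $L^2$ version is already sufficient for the way you place the norms.
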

\begin{proof}
For notational convenience, we write $\bar{h} = h - h^{\hbox{\Rightscissors}}$ in the remainder of this proof. The key estimate is that 
\begin{equation}\label{eq:h.bar.improvement}
\|\rd \bar{h} \|_{L^p} \ls \lambda^{1.01} \| \rd^2 h\|_{L^p} \ls \lambda^{1.01}\lambda^{-1} \ls \lambda^{0.01},\quad \forall p \in [1,\infty].
\end{equation}
Importantly, this is better than the estimate for $\rd h$ itself. We also have the following estimates for $\bar{h}$ and $h^{\hbox{\Rightscissors}}$, which follow easily from the definitions of the cutoffs and \eqref{eq:basic.bds.for.Ric}:
\begin{equation}\label{eq:h.bar.other}
\|\rd h^{\hbox{\Rightscissors}} \|_{L^p} \ls 1,\quad \|\rd^2 h^{\hbox{\Rightscissors}} \|_{L^p} \leq \lambda^{-1} \|\rd^2 \bar{h}\|_{L^p} \ls \lambda^{-1}.
\end{equation}

Writing $h = h^{\hbox{\Rightscissors}} + \bar{h}$ in \eqref{eq:cut.off.high.freq.quasilinear}, we need to control the following three terms:
\begin{align}
\Big\la (2 g_0^{\alp\alp'} g_0^{\bt\bt'} - g_0^{\alp\bt} g_0^{\alp'\bt'}) \rd_{t} \bar{h}_{\alp\bt}, A\Big(g_0^{\mu\mu'} g_0^{\nu\nu'} h_{\mu\nu} \rd_{\mu'\nu'}^2 h^{\hbox{\Rightscissors}}_{\alp'\bt'} \Big) \Big\ra, \label{eq:cut.off.high.freq.quasilinear.1}\\
\Big\la (2 g_0^{\alp\alp'} g_0^{\bt\bt'} - g_0^{\alp\bt} g_0^{\alp'\bt'}) \rd_{t} \bar{h}_{\alp\bt}, A\Big(g_0^{\mu\mu'} g_0^{\nu\nu'} h_{\mu\nu} \rd_{\mu'\nu'}^2 \bar{h}_{\alp'\bt'} \Big) \Big\ra.\label{eq:cut.off.high.freq.quasilinear.2}\\
\Big\la (2 g_0^{\alp\alp'} g_0^{\bt\bt'} - g_0^{\alp\bt} g_0^{\alp'\bt'}) \rd_{t} h^{\hbox{\Rightscissors}}_{\alp\bt}, A\Big(g_0^{\mu\mu'} g_0^{\nu\nu'} h_{\mu\nu} \rd_{\mu'\nu'}^2 \bar{h}_{\alp'\bt'} \Big) \Big\ra, \label{eq:cut.off.high.freq.quasilinear.3}
\end{align}

For \eqref{eq:cut.off.high.freq.quasilinear.1}, we use that $A:L^2\to L^2$ is bounded and the H\"older inequality to obtain
\begin{equation}
|\hbox{\eqref{eq:cut.off.high.freq.quasilinear.1}}|\ls \|\rd \bar{h}\|_{L^2} \| h \|_{L^\i}  \|\rd^2 h^{\hbox{\Rightscissors}} \|_{L^2} \ls \lambda^{0.01} \cdot \lambda \cdot \lambda^{-1} = \lambda^{0.01} = o(1),
\end{equation}
where we used \eqref{eq:h.bar.improvement} together with \eqref{eq:basic.bds.for.Ric} and \eqref{eq:h.bar.other}. The term \eqref{eq:cut.off.high.freq.quasilinear.2} can be treated similarly. 

For the term \eqref{eq:cut.off.high.freq.quasilinear.3}, we need to integrate by parts. First note that $A-A^*,\, [A,g_0] \in \Psi^{-1}$ and are bounded as maps $L^2\to H^{-1}$. Thus, using the bounds \eqref{eq:basic.bds.for.Ric} and \eqref{eq:h.bar.other}, we obtain
\begin{equation}\label{eq:cut.off.high.freq.quasilinear.4}
\hbox{\eqref{eq:cut.off.high.freq.quasilinear.3}} = \Big\la (2 g_0^{\alp\alp'} g_0^{\bt\bt'} - g_0^{\alp\bt} g_0^{\alp'\bt'}) A\rd_{t} h^{\hbox{\Rightscissors}}_{\alp\bt}, g_0^{\mu\mu'} g_0^{\nu\nu'} h_{\mu\nu} \rd_{\mu'\nu'}^2 \bar{h}_{\alp'\bt'} \Big\ra + o(1).
\end{equation}

We now integrate by parts the $\rd_{\mu'}$ away so as to utilize \eqref{eq:h.bar.improvement}. Notice that if $\rd_{\mu'}$ hits on $g_0$, this gives a much better term. Moreover, $[\rd,A] \in \Psi^{0}$ also gives better terms. We thus have
\begin{equation}
\begin{split}
\hbox{\eqref{eq:cut.off.high.freq.quasilinear.3}} =&\: - \Big\la (2 g_0^{\alp\alp'} g_0^{\bt\bt'} - g_0^{\alp\bt} g_0^{\alp'\bt'}) A\rd_{t} h^{\hbox{\Rightscissors}}_{\alp\bt}, g_0^{\mu\mu'} g_0^{\nu\nu'} \rd_{\mu'} h_{\mu\nu} \rd_{\nu'} \bar{h}_{\alp'\bt'} \Big\ra \\
&\: - \Big\la (2 g_0^{\alp\alp'} g_0^{\bt\bt'} - g_0^{\alp\bt} g_0^{\alp'\bt'}) A\rd^2_{\mu' t} h^{\hbox{\Rightscissors}}_{\alp\bt}, g_0^{\mu\mu'} g_0^{\nu\nu'} h_{\mu\nu} \rd_{\nu'} \bar{h}_{\alp'\bt'} \Big\ra +o(1)\\
\ls &\: \| \rd h^{\hbox{\Rightscissors}} \|_{L^2} \| \rd h\|_{L^2} \|\rd \bar{h}\|_{L^\i} + \| \rd^2 h^{\hbox{\Rightscissors}} \|_{L^2} \| h\|_{L^2} \|\rd \bar{h}\|_{L^\i} \\
\ls &\: 1\cdot 1 \cdot \lambda^{0.01} + \lambda^{-1} \cdot \lambda \cdot \lambda^{0.01} = \lambda^{0.01} = o(1),
\end{split}
\end{equation}
where as before, we used \eqref{eq:basic.bds.for.Ric}, \eqref{eq:h.bar.improvement} and \eqref{eq:h.bar.other}. \qedhere

\end{proof}

The reason that it is useful to consider $h^{\hbox{\Rightscissors}}$ instead of $h$ itself is the following lemma. Note that the bound for $\| h^{\hbox{\Rightscissors}} \|_{X^p_\lambda(g_0)}$ is no better than the bounds for $h$ given in \eqref{eq:basic.bds.for.Ric}, but the frequency cut-off gives us access to third derivatives of $h^{\hbox{\Rightscissors}}$ and obtain a bound for $\| \rd h^{\hbox{\Rightscissors}} \|_{X^\infty_\lambda(g_0)}$. This improvement will be used in the proof of Proposition~\ref{prop:quasilinear.3} below.
\begin{lemma}\label{lem:hcut.est}
$h^{\hbox{\Rightscissors}}$ satisfies
$$\| h^{\hbox{\Rightscissors}} \|_{X^p_\lambda(g_0)} \ls 1,\quad \forall p \in [1,\infty].$$
Moreover, $\rd h^{\hbox{\Rightscissors}}$ satisfies the estimates
$$\| \rd h^{\hbox{\Rightscissors}} \|_{X^\infty_\lambda(g_0)} \ls \lambda^{-1.01}.$$
\end{lemma}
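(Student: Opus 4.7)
Setup: Write $h^{\hbox{\Rightscissors}} = T_\chi h$, where $T_\chi$ is convolution with the kernel $K_\lambda(x) = \lambda^{-1.01(d+1)} \check\chi(x/\lambda^{1.01})$, a $\lambda^{1.01}$-dilation of a fixed Schwartz function. A change of variables gives $\|\rd^j K_\lambda\|_{L^1} \ls \lambda^{-1.01 j}$ and $\| |y|^j K_\lambda(y)\|_{L^1} \ls \lambda^{1.01 j}$ uniformly in $\lambda$. Young's inequality then furnishes the Bernstein-type bound $\|\rd^j T_\chi f\|_{L^p} \ls \lambda^{-1.01 j} \|f\|_{L^p}$, and the mean value representation $\phi(x)-\phi(x-y) = \int_0^1 y\cdot\rd\phi(x-ty)\,dt$ combined with the kernel form of the commutator yields $\|[\phi, T_\chi] f\|_{L^p} \ls \lambda^{1.01} \|\rd \phi\|_{L^\infty} \|f\|_{L^p}$ for any smooth $\phi$.

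The first bound $\|h^{\hbox{\Rightscissors}}\|_{X^p_\lambda(g_0)} \ls 1$ is relatively straightforward. For $k=0,1,2$, since $T_\chi$ commutes with $\rd$ and is $L^p$-bounded, $\lambda^{-1+k}\|\rd^k h^{\hbox{\Rightscissors}}\|_{L^p} \leq \lambda^{-1+k}\|\rd^k h\|_{L^p} \ls 1$, using $|h|\ls \lambda$, $|\rd h|\ls 1$, $|\rd^2 h|\ls \lambda^{-1}$ together with the post-cutoff compact support of $h$. For the wave operator term, decompose
\[
\Box_{g_0} h^{\hbox{\Rightscissors}} = T_\chi(\Box_{g_0} h) + [g_0^{\alp\bt}, T_\chi]\rd^2_{\alp\bt} h - [H^\rho, T_\chi]\rd_\rho h;
\]
the first piece is bounded by $\|\Box_{g_0} h\|_{L^p} \ls 1$ via \eqref{eq:Boxh.bound}, and the commutator estimate with $\phi = g_0^{\alp\bt}$ or $\phi = H^\rho$ (using the bounds on $\rd g_0$, $\rd H$ from \eqref{eq:basic.bds.for.Ric}, \eqref{eq:basic.bds.for.H}) contributes $O(\lambda^{0.01})$ after accounting for $\|\rd^2 h\|_{L^p}$.

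For the second bound, $\lambda^{-1}\|\rd h^{\hbox{\Rightscissors}}\|_{L^\infty}$ and $\|\rd^2 h^{\hbox{\Rightscissors}}\|_{L^\infty}$ are both $\ls \lambda^{-1} \ls \lambda^{-1.01}$, while the Bernstein estimate applied to $\rd^3 h^{\hbox{\Rightscissors}} = \rd T_\chi \rd^2 h$ yields $\lambda\|\rd^3 h^{\hbox{\Rightscissors}}\|_{L^\infty} \ls \lambda\cdot\lambda^{-1.01}\|\rd^2 h\|_{L^\infty} \ls \lambda^{-1.01}$. The main obstacle is $\|\Box_{g_0}\rd h^{\hbox{\Rightscissors}}\|_{L^\infty}$: a brute-force bound using the estimate for $\rd^3 h^{\hbox{\Rightscissors}}$ would produce only $\lambda^{-2.01}$, off by a factor of $\lambda^{-1}$.

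To bypass this I plan to commute derivatives: $\Box_{g_0}\rd h^{\hbox{\Rightscissors}} = \rd(\Box_{g_0} h^{\hbox{\Rightscissors}}) + [\Box_{g_0},\rd] h^{\hbox{\Rightscissors}}$. A direct computation gives $[\Box_{g_0},\rd_\gamma] = -(\rd_\gamma g_0^{\alp\bt})\rd^2_{\alp\bt} + (\rd_\gamma H^\rho)\rd_\rho$, a first-order operator whose coefficients are $L^\infty$-bounded by \eqref{eq:basic.bds.for.Ric} and \eqref{eq:basic.bds.for.H}; hence $\|[\Box_{g_0},\rd] h^{\hbox{\Rightscissors}}\|_{L^\infty} \ls \|\rd^2 h^{\hbox{\Rightscissors}}\|_{L^\infty} + \|\rd h^{\hbox{\Rightscissors}}\|_{L^\infty} \ls \lambda^{-1}$. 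For $\rd(\Box_{g_0} h^{\hbox{\Rightscissors}})$, reuse the decomposition from the first bound and apply $\rd$ to each piece: Bernstein gives $\|\rd T_\chi(\Box_{g_0} h)\|_{L^\infty} \ls \lambda^{-1.01}\|\Box_{g_0} h\|_{L^\infty} \ls \lambda^{-1.01}$; for the commutator pieces, the Leibniz identity $\rd([\phi,T_\chi] f) = [\rd\phi, T_\chi] f + [\phi, T_\chi]\rd f$ and, when necessary, one integration by parts in the kernel representation of $[\phi,T_\chi]\rd f$ (transferring the extra derivative onto $K_\lambda$ and $g_0$) show the total contribution is $\ls \lambda^{-1}$. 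Summing all pieces yields $\|\Box_{g_0}\rd h^{\hbox{\Rightscissors}}\|_{L^\infty} \ls \lambda^{-1.01}$, completing the proof.
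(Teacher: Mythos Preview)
The paper states this lemma without proof, treating it as a routine consequence of the frequency cutoff (the surrounding text merely remarks that the cutoff ``gives us access to third derivatives of $h^{\hbox{\Rightscissors}}$''). Your argument supplies exactly the details one would expect: $L^p$-boundedness of the Fourier multiplier via Young's inequality, a Bernstein gain of $\lambda^{-1.01}$ per extra derivative, and a kernel-based commutator estimate $\|[\phi,T_\chi]f\|_{L^p}\ls \lambda^{1.01}\|\rd\phi\|_{L^\infty}\|f\|_{L^p}$ to handle the variable coefficients in $\Box_{g_0}$. The treatment of $\|\Box_{g_0}\rd h^{\hbox{\Rightscissors}}\|_{L^\infty}$ via the commutation $[\Box_{g_0},\rd]$ together with the integration-by-parts trick for $[\phi,T_\chi]\rd f$ (which transfers the derivative to $K_\lambda$ and uses $\||y|\,\rd K_\lambda\|_{L^1}\ls 1$) is correct and is the natural way to avoid needing bounds on $\rd^3 h$. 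Your proof is sound and is presumably what the authors had in mind.
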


\subsubsection{Setting up the Fourier decomposition}

In order to estimate the term in Proposition~\ref{prop:quasilinear}, we need to decompose $h_{\mu\nu}$ into three pieces using suitable frequency cutoff functions. The reader should think of $\mfh^{(1)}$ as ``low frequency'', $\mfh^{(2)}$ as ``spatial frequency dominated,'' and $\mfh^{(3)}$ as ``temporal frequency dominated.'' Moreover, the ``spatial frequency dominated'' part is chosen so that the frequency is supported away from the light cone.

We first fix a parameter $\de_{\Theta}$ that we will use to define the decomposition.
\begin{lemma}\label{lem:elliptic}
Given $\xi \in T^*\mathbb R^{d+1}$, denote its spatial part by $\uxi$. Then there exists $\de_{\Theta}>0$ such that 
$$|\xi_t| \leq \de_{\Theta} |\uxi| \implies g_0^{\alp\bt} \xi_\alp \xi_\bt \gtrsim |\xi|^2.$$
\end{lemma}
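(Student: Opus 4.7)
The plan is to exploit the smallness condition \eqref{eq:g.smallness}, which makes $g_0^{-1}$ a small perturbation of the Minkowski inverse metric $m^{-1}$, so the ellipticity statement reduces to the corresponding (trivial) statement for Minkowski plus an error controlled by $10^{-100}$.

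First I would split
\[
g_0^{\alp\bt}\xi_\alp\xi_\bt = m^{\alp\bt}\xi_\alp\xi_\bt + (g_0^{\alp\bt} - m^{\alp\bt})\xi_\alp\xi_\bt = -\xi_t^2 + |\uxi|^2 + E(\xi),
\]
where $|E(\xi)| \leq C \cdot 10^{-100}|\xi|^2$ by \eqref{eq:g.smallness}, with $C$ a dimensional constant. Under the hypothesis $|\xi_t| \leq \de_\Theta|\uxi|$, the Minkowski piece satisfies
\[
-\xi_t^2 + |\uxi|^2 \geq (1-\de_\Theta^2)|\uxi|^2 \geq \frac{1-\de_\Theta^2}{1+\de_\Theta^2}|\xi|^2,
\]
since $|\xi|^2 = \xi_t^2 + |\uxi|^2 \leq (1+\de_\Theta^2)|\uxi|^2$.

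Combining, $g_0^{\alp\bt}\xi_\alp\xi_\bt \geq \bigl(\tfrac{1-\de_\Theta^2}{1+\de_\Theta^2} - C \cdot 10^{-100}\bigr)|\xi|^2$. Choosing $\de_\Theta$ to be any fixed small constant (e.g.\ $\de_\Theta = 1/2$) makes the bracketed quantity bounded below by an absolute positive constant, which gives the conclusion. There is no real obstacle here; the only thing to verify is that the universal constant $C$ arising from $|(g_0^{\alp\bt} - m^{\alp\bt})\xi_\alp \xi_\bt| \leq (d+1)^2 \|g_0^{-1} - m^{-1}\|_\infty |\xi|^2$ is much smaller than the gap coming from the Minkowski piece, which is automatic given the $10^{-100}$ size of the perturbation in \eqref{eq:g.smallness}.
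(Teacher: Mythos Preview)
Your proof is correct and is essentially the same approach as the paper's: the paper's entire proof is the one-line remark ``This is a consequence of \eqref{eq:g.smallness},'' and you have simply written out the details of that consequence.
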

\begin{proof}
This is a consequence of \eqref{eq:g.smallness}. \qedhere
\end{proof}

From now on, fix $\de_{\Theta}$ so that Lemma~\ref{lem:elliptic} holds. We now introduce the decomposition of $h_{\mu\nu}$.
\begin{definition}\label{def:freq.decomposition}
Decompose
$$h_{\mu\nu} = \sum_{i=1}^3 \mfh^{(i)}_{\mu\nu},$$
where we define
\begin{align}
\mathcal F (\mfh^{(1)})(\xi) \doteq \Theta(\lambda^{-b} |\xi|)\mathcal F(h),\quad \mathcal F (\mfh^{(2)}) \doteq (1-\Theta(\lambda^{-b} |\xi|)) \Theta(\tfrac{10|\xi_t|}{\de_\Theta |\uxi|})\mathcal F(h), \label{eq:def.decomposition}
\end{align}
for $\Theta:[0,\infty] \to [0,1]$ being a smooth cutoff function such that $\Theta \equiv 1$ on $[0,1]$ and $\Theta \equiv 0$ on $[2,\infty]$, and for 
\begin{equation}\label{eq:b.range}
b \in ( \max\{\tfrac{29}{30},1-\eta\},1 )
\end{equation} 
being a fixed constant. It will be convenient to denote the corresponding projection operators by $\mathcal P^{(i)}$ so that $\mathcal P^{(i)} h = \mfh^{(i)}$.
\end{definition}

In the next three subsections, we will consider the term \eqref{eq:quasilinear.main.term} with $h_{\mu\nu}$ replaced by $\mfh^{(1)}$, $\mfh^{(2)}$ and $\mfh^{(3)}$ respectively. We will then combining the results to prove Proposition~\ref{prop:quasilinear} in Section~\ref{sec:quasilinear.everything}.

\subsubsection{The $\mfh^{(1)}$ term}

We start with the $\mfh^{(1)}$ term. The key property that we will use for $\mfh^{(1)}$ is the following
\begin{lemma}\label{lem:mfh1}
$$\|\mfh^{(1)} \|_{L^\i} \ls \lambda, \quad \| \rd \mfh^{(1)} \|_{L^\i} \ls \lambda^{1-b} \log (\tfrac{1}{\lambda}).$$
\end{lemma}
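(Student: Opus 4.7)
The plan is to exploit the explicit frequency localization of $\mfh^{(1)} = \mathcal P^{(1)} h$ via spacetime Fourier analysis. Writing $\mfh^{(1)} = K_\lambda * h$ with $K_\lambda \doteq \calF^{-1}(\Theta(\lambda^{-b}|\cdot|))$, the change of variables $\eta = \lambda^b \xi$ in the Fourier integral yields the scaling identity
\[ K_\lambda(x) = \lambda^{b(d+1)} \widetilde K(\lambda^b x), \qquad \widetilde K \doteq \calF^{-1}(\Theta(|\cdot|)). \]
Since $\widetilde K$ is Schwartz, $\|K_\lambda\|_{L^1} = \|\widetilde K\|_{L^1}$ is a constant independent of $\lambda$, so Young's inequality combined with the bound $\|h\|_{L^\infty} \ls \lambda$ from \eqref{eq:basic.bds.for.Ric} gives the first estimate $\|\mfh^{(1)}\|_{L^\infty} \leq \|K_\lambda\|_{L^1} \|h\|_{L^\infty} \ls \lambda$.

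For the second estimate, I would use a (spacetime) dyadic Littlewood--Paley decomposition $\rd \mfh^{(1)} = \sum_k \rd P_k \mfh^{(1)}$. The spectral support condition $\mathrm{supp}(\calF \mfh^{(1)}) \subset \{|\xi| \leq 2\lambda^{-b}\}$ ensures that only the $O(\log(1/\lambda))$ dyadic scales with $2^k \ls \lambda^{-b}$ contribute. At each such scale, Bernstein's inequality together with the uniform $L^\infty \to L^\infty$ boundedness of $P_k$ yields
\[ \|\rd P_k \mfh^{(1)}\|_{L^\infty} \ls 2^k \|P_k \mfh^{(1)}\|_{L^\infty} \ls \lambda^{-b} \|\mfh^{(1)}\|_{L^\infty} \ls \lambda^{1-b}, \]
using the first bound just established. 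Summing over the $O(\log(1/\lambda))$ non-trivial dyadic scales produces the claimed $\lambda^{1-b}\log(1/\lambda)$ bound.

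Since both steps reduce to standard Fourier-analytic scaling arguments once one notes that $\widetilde K$ and its derivatives are Schwartz, there is no serious obstacle here; the key is simply the observation that the frequency cutoff lives at scale $\lambda^{-b} \ll \lambda^{-1}$ so that Bernstein converts the smallness of $\|h\|_{L^\infty}$ into smallness of $\|\rd \mfh^{(1)}\|_{L^\infty}$. In fact a more careful kernel computation shows $\|\rd K_\lambda\|_{L^1} \ls \lambda^b$, which would give the stronger estimate $\|\rd \mfh^{(1)}\|_{L^\infty} \ls \lambda^{1+b}$; however, the stated $\lambda^{1-b}\log(1/\lambda)$ form is the most natural output of the Bernstein argument and is already sufficient for the subsequent treatment of the $\mfh^{(1)}$ quasilinear contribution in Section~\ref{sec:quasilinear}.
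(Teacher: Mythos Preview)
The paper states this lemma without proof. Your two-step strategy is the natural one and is essentially correct: Young's inequality with the scale-invariant $L^1$ bound on the convolution kernel gives $\|\mfh^{(1)}\|_{L^\infty} \ls \lambda$, and dyadic Bernstein summed over the $O(\log(1/\lambda))$ relevant scales gives the derivative bound.

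However, there is an internal inconsistency in your write-up that leads to an incorrect claim at the end. The intended frequency support of $\mfh^{(1)}$ is $\{|\xi| \leq 2\lambda^{-b}\}$ --- this is what you correctly use in your Bernstein step, and it is forced by the surrounding context (Lemma~\ref{lem:mfh2} uses that $\mfh^{(2)}$ has frequency $\gtrsim \lambda^{-b}$, and Lemma~\ref{lem:mfh3} gains $\lambda^b$ from $|\xi_t|^{-1}$ on the support of $\mfh^{(3)}$; the literal $\Theta(\lambda^{-b}|\xi|)$ in Definition~\ref{def:freq.decomposition} appears to be a typo for $\Theta(\lambda^{b}|\xi|)$). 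With that support the correct kernel scaling is $K_\lambda(x) = \lambda^{-b(d+1)}\widetilde K(\lambda^{-b}x)$, not the formula you wrote, and consequently $\|\rd K_\lambda\|_{L^1} \sim \lambda^{-b}$, not $\lambda^b$. The direct kernel computation therefore yields $\|\rd\mfh^{(1)}\|_{L^\infty} \ls \lambda^{1-b}$: you can remove the logarithm, but the claimed $\lambda^{1+b}$ is false. The first bound is unaffected since $\|K_\lambda\|_{L^1} = \|\widetilde K\|_{L^1}$ is scale-invariant.
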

We now control the term with $\mfh^{(1)}$. The key is that since $\rd \mfh^{(1)}$ is better by Lemma~\ref{lem:mfh1}, we perform multiple integration by parts to ensure that a derivative falls on $\mfh^{(1)}$.
\begin{proposition}\label{prop:quasilinear.1}
\begin{equation}
\begin{split}
\lim_{\lambda\to 0}  \Big\la (2 g_0^{\alp\alp'} g_0^{\bt\bt'} - g_0^{\alp\bt} g_0^{\alp'\bt'}) \rd_{t} h^{\hbox{\Rightscissors}}_{\alp\bt}, A\Big(g_0^{\mu\mu'} g_0^{\nu\nu'} \mfh^{(1)}_{\mu\nu} \rd_{\mu'\nu'}^2 h^{\hbox{\Rightscissors}}_{\alp'\bt'} \Big) \Big\ra  = 0.
\end{split}
\end{equation}
\end{proposition}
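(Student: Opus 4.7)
The plan is as follows. Write $B^{\mu'\nu'} := g_0^{\mu\mu'} g_0^{\nu\nu'} \mfh^{(1)}_{\mu\nu}$ and $C^{\alp\bt\alp'\bt'} := 2 g_0^{\alp\alp'} g_0^{\bt\bt'} - g_0^{\alp\bt} g_0^{\alp'\bt'}$, so that the expression under the limit becomes $\la C^{\alp\bt\alp'\bt'} \rd_t h^{\hbox{\Rightscissors}}_{\alp\bt}, A(B^{\mu'\nu'} \rd^2_{\mu'\nu'} h^{\hbox{\Rightscissors}}_{\alp'\bt'})\ra$. The two key inputs are that by Lemma~\ref{lem:mfh1} and the smoothness of $g_0$, one has $\|CB\|_{L^\infty} = O(\lambda)$ and $\|\rd(CB)\|_{L^\infty} = O(\lambda^{1-b}\log\tfrac{1}{\lambda}) = o(1)$. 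The first allows us to absorb any $O(1)$ pairing accompanied by a factor of $CB$, while the second unlocks the Calder\'on commutator theorem. As a preliminary step, I apply Calder\'on in the form $\|[A, B^{\mu'\nu'}]\rd f\|_{L^2} \ls \|\rd B^{\mu'\nu'}\|_{L^\infty}\|f\|_{L^2}$ with $f = \rd_{\nu'}h^{\hbox{\Rightscissors}}_{\alp'\bt'}$ to move $B^{\mu'\nu'}$ through $A$ modulo an $o(1)$ error in $L^2$, which is $o(1)$ after pairing with $\rd_t h^{\hbox{\Rightscissors}}$. The remaining task is to show
\[
J \,:=\, \la C^{\alp\bt\alp'\bt'} B^{\mu'\nu'} \rd_t h^{\hbox{\Rightscissors}}_{\alp\bt}, A\rd^2_{\mu'\nu'} h^{\hbox{\Rightscissors}}_{\alp'\bt'}\ra \,=\, o(1).
\]

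The core argument is a cascade of three integrations by parts, transferring the derivatives $\rd_{\mu'}$, $\rd_t$, $\rd_{\nu'}$ from the right slot of the pairing to the left slot one at a time. Before each IBP I commute the relevant derivative past $A$ using $[A,\rd_\mu] \in \Psi^0$; applied to a derivative of $h^{\hbox{\Rightscissors}}$ the commutator is $L^2$-bounded of size $O(1)$, which when paired against the factor $CB\rd_t h^{\hbox{\Rightscissors}}$ of $L^2$-norm $O(\lambda)$ contributes only $O(\lambda)=o(1)$. Each IBP produces two further errors: a term with a derivative falling on the smooth coefficient $CB$, controlled by $\|\rd(CB)\|_{L^\infty} = o(1)$ times an $L^2\times L^2$ pairing, and a volume-form correction from $\mathrm{dVol}_{g_0}$, of size $O(\|CB\|_{L^\infty})=O(\lambda)$. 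Three successive IBP's give an overall sign $(-1)^3=-1$, yielding
\[
J \,=\, -\la C^{\alp\bt\alp'\bt'} B^{\mu'\nu'} \rd^2_{\mu'\nu'} h^{\hbox{\Rightscissors}}_{\alp'\bt'}, A\rd_t h^{\hbox{\Rightscissors}}_{\alp\bt}\ra \,+\, o(1).
\]

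To close the argument I would prove that this right-hand inner product equals $J + o(1)$, so that $J=-J+o(1)$ forces $J=o(1)$. The point is that, by IBP in the definition of the $H^{-1}$ norm together with $\|CB\|_{L^\infty}=O(\lambda)$, $\|\rd(CB)\|_{L^\infty}=o(1)$, and $\|h^{\hbox{\Rightscissors}}\|_{L^2}=O(\lambda)$, one obtains the improved bound $\|CB\rd_t h^{\hbox{\Rightscissors}}\|_{H^{-1}} \ls \lambda^{2-b}\log\tfrac{1}{\lambda}$. Combined with $A^* - A \in \Psi^{-1}$ (so $\|(A^*-A)u\|_{L^2} \ls \|u\|_{H^{-1}}$), this gives $\|(A^*-A)(CB\rd_t h^{\hbox{\Rightscissors}})\|_{L^2} \ls \lambda^{2-b}\log\tfrac{1}{\lambda}$, which is $o(1)$ even after pairing with $\rd^2 h^{\hbox{\Rightscissors}}$ of norm $O(\lambda^{-1})$. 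Using this improvement together with one more Calder\'on commutator to pass $CB$ through $A$, one obtains $\la CB\rd^2 h^{\hbox{\Rightscissors}}, A\rd_t h^{\hbox{\Rightscissors}}\ra = \la A(CB\rd_t h^{\hbox{\Rightscissors}}), \rd^2 h^{\hbox{\Rightscissors}}\ra + o(1) = \la CB\rd_t h^{\hbox{\Rightscissors}}, A\rd^2 h^{\hbox{\Rightscissors}}\ra + o(1) = J + o(1)$, completing the loop.

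The main obstacle is the careful bookkeeping required in the IBP cascade: one must verify that every commutator of $A$ with a derivative (of order $0$) is absorbed by the $O(\lambda)$ smallness of $\|CB\|_{L^\infty}$, and every IBP-generated factor $\rd(CB)$ by the $o(1)$ smallness of $\|\rd(CB)\|_{L^\infty}$, and in particular the final $H^{-1}$ estimate uses both of these simultaneously. The key structural input — namely that $\mfh^{(1)}$ is low-frequency so that $\|\rd\mfh^{(1)}\|_{L^\infty}=o(1)$ in addition to $\|\mfh^{(1)}\|_{L^\infty}=O(\lambda)$ — is used precisely to make these two forms of smallness available at the appropriate points.
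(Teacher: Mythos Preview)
Your argument is correct and follows essentially the same strategy as the paper: shuffle the derivatives by integration by parts so that the expression equals minus itself modulo $o(1)$, with all error terms absorbed either by $\|\mfh^{(1)}\|_{L^\infty}=O(\lambda)$ or by $\|\rd\mfh^{(1)}\|_{L^\infty}=o(1)$, and with the Calder\'on commutator estimate handling the passage of $\mfh^{(1)}$ through $A$. The differences are purely organizational: the paper moves only the smooth factor $g_0^{\mu\mu'}g_0^{\nu\nu'}$ through $A$ at the outset (via $[A,g_0g_0]\in\Psi^{-1}$), performs two integrations by parts (in $\rd_{\mu'}$ then $\rd_t$), and then recognizes the resulting term as $-III$ directly after a relabeling and one Calder\'on commutator; you instead move the full $B=g_0g_0\mfh^{(1)}$ through $A$ upfront via Calder\'on, do three integrations by parts, and close the loop with an $H^{-1}$ estimate. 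Your third IBP and the $H^{-1}$ argument are a slight detour --- after two IBPs and a relabeling $(\mu'\leftrightarrow\nu',\ \alp\bt\leftrightarrow\alp'\bt')$ you are already at $-J+o(1)$ --- but everything you wrote checks out.
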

\begin{proof}
First, note that since $g_0^{\mu\mu'} g_0^{\nu\nu'}$ is smooth, the commutator $[A, g_0^{\mu\mu'} g_0^{\nu\nu'}]: L^2 \to H^{-1}$ so that
\begin{equation}\label{eq:quasilinear.low.1}
\begin{split}
&\:  \Big\la (2 g_0^{\alp\alp'} g_0^{\bt\bt'} - g_0^{\alp\bt} g_0^{\alp'\bt'}) \rd_{t} h^{\hbox{\Rightscissors}}_{\alp\bt}, A\Big(g_0^{\mu\mu'} g_0^{\nu\nu'} \mfh^{(1)}_{\mu\nu} \rd_{\mu'\nu'}^2 h^{\hbox{\Rightscissors}}_{\alp'\bt'} \Big) \Big\ra \\
= &\:  \la (2 g_0^{\alp\alp'} g_0^{\bt\bt'} - g_0^{\alp\bt} g_0^{\alp'\bt'})g_0^{\mu\mu'} g_0^{\nu\nu'} A \rd_{t} h^{\hbox{\Rightscissors}}_{\alp\bt},  \mfh^{(1)}_{\mu\nu} \rd_{\mu'\nu'}^2 h^{\hbox{\Rightscissors}}_{\alp'\bt'} ) \ra  + o(1).
\end{split}
\end{equation}

We then integrate by parts in $\rd_{\mu'}$. Note that if $\rd_{\mu'}$ hits any factor of $g_0$, we then have an integral of $\mfh  \rd h \rd h$, which is $o(1)$ by our assumptions. Hence,
\begin{equation}\label{eq:quasilinear.low.2}
\begin{split}
\mbox{\eqref{eq:quasilinear.low.1}} = &\: - \la (2 g_0^{\alp\alp'} g_0^{\bt\bt'} - g_0^{\alp\bt} g_0^{\alp'\bt'})g_0^{\mu\mu'} g_0^{\nu\nu'} A \rd_{t} h^{\hbox{\Rightscissors}}_{\alp\bt},  \rd_{\mu'} \mfh^{(1)}_{\mu\nu} \rd_{\nu'} h^{\hbox{\Rightscissors}}_{\alp'\bt'}  \ra \\
&\: + \la (2 g_0^{\alp\alp'} g_0^{\bt\bt'} - g_0^{\alp\bt} g_0^{\alp'\bt'})g_0^{\mu\mu'} g_0^{\nu\nu'} [A, \rd_{\mu'}] \rd_{t} h^{\hbox{\Rightscissors}}_{\alp\bt},   \mfh^{(1)}_{\mu\nu} \rd_{\nu'} h^{\hbox{\Rightscissors}}_{\alp'\bt'}  \ra \\
&\: -\la (2 g_0^{\alp\alp'} g_0^{\bt\bt'} - g_0^{\alp\bt} g_0^{\alp'\bt'})g_0^{\mu\mu'} g_0^{\nu\nu'} A \rd^2_{t\mu'} h^{\hbox{\Rightscissors}}_{\alp\bt},   \mfh^{(1)}_{\mu\nu} \rd_{\nu'} h^{\hbox{\Rightscissors}}_{\alp'\bt'}  \ra + o(1) \doteq I + II + III + o(1).
\end{split}
\end{equation}
For term $I$, we use the estimate for $\rd \mfh^{(1)}$ in Lemma~\ref{lem:mfh1}, which is sufficient to show that $I = o(1)$. For term $II$, note that $[A,\rd_{\mu'}]$ is bounded on $L^2$, and so using $h^{\hbox{\Rightscissors}}_{\mu\nu} \to 0$ uniformly, we obtain $II = o(1)$. Hence,
\begin{equation}\label{eq:quasilinear.low.3}
I, II = o(1).
\end{equation}

Thus it remains to consider $III$ in \eqref{eq:quasilinear.low.2}. For this term, we integrate by part in $\rd_t$. As before, if $\rd_t$ hits any factor of $g_0$, the resulting term in $o(1)$. Hence,
\begin{equation}\label{eq:quasilinear.low.4}
\begin{split}
III = &\: \la (2 g_0^{\alp\alp'} g_0^{\bt\bt'} - g_0^{\alp\bt} g_0^{\alp'\bt'})g_0^{\mu\mu'} g_0^{\nu\nu'} A \rd_{\mu'} h^{\hbox{\Rightscissors}}_{\alp\bt}, \rd_t  \mfh^{(1)}_{\mu\nu} \rd_{\nu'} h^{\hbox{\Rightscissors}}_{\alp'\bt'}  \ra \\
&\: + \la (2 g_0^{\alp\alp'} g_0^{\bt\bt'} - g_0^{\alp\bt} g_0^{\alp'\bt'})g_0^{\mu\mu'} g_0^{\nu\nu'} A \rd_{\mu'} h^{\hbox{\Rightscissors}}_{\alp\bt},   \mfh^{(1)}_{\mu\nu} \rd^2_{t\nu'} h^{\hbox{\Rightscissors}}_{\alp'\bt'}  \ra +o(1) \doteq III_a + III_b + o(1).
\end{split}
\end{equation}

For $III_a$ in \eqref{eq:quasilinear.low.4}, we again use the Lemma~\ref{lem:mfh1} (as in the term $I$ in \eqref{eq:quasilinear.low.2}) to obtain smallness and get that 
\begin{equation}\label{eq:quasilinear.low.5}
III_a = o(1).
\end{equation}

As for $III_b$ in \eqref{eq:quasilinear.low.4}, we first relabel indices and then note that after commuting $[A, \mfh^{(1)}_{\mu\nu}]$, we obtain a term which is exactly $-III$.
\begin{equation}\label{eq:quasilinear.low.6}
\begin{split}
III_b = &\: \la (2 g_0^{\alp\alp'} g_0^{\bt\bt'} - g_0^{\alp\bt} g_0^{\alp'\bt'})g_0^{\mu\mu'} g_0^{\nu\nu'} A \rd_{\nu'} h^{\hbox{\Rightscissors}}_{\alp'\bt'},  \mfh^{(1)}_{\mu\nu} \rd^2_{t\mu'} h^{\hbox{\Rightscissors}}_{\alp\bt}  \ra \\
= &\: \la (2 g_0^{\alp\alp'} g_0^{\bt\bt'} - g_0^{\alp\bt} g_0^{\alp'\bt'})g_0^{\mu\mu'} g_0^{\nu\nu'}  \rd_{\nu'} h^{\hbox{\Rightscissors}}_{\alp'\bt'}, A( \mfh^{(1)}_{\mu\nu} \rd^2_{t\mu'} h^{\hbox{\Rightscissors}}_{\alp\bt} ) \ra + o(1) \\
= &\: \la (2 g_0^{\alp\alp'} g_0^{\bt\bt'} - g_0^{\alp\bt} g_0^{\alp'\bt'})g_0^{\mu\mu'} g_0^{\nu\nu'}  \rd_{\nu'} h^{\hbox{\Rightscissors}}_{\alp'\bt'}, [A, \mfh^{(1)}_{\mu\nu}] \rd^2_{t\mu'} h^{\hbox{\Rightscissors}}_{\alp\bt} ) \ra -III + o(1)
\end{split}
\end{equation}

Combining \eqref{eq:quasilinear.low.4}--\eqref{eq:quasilinear.low.6}, we thus obtain 
\begin{equation}\label{eq:quasilinear.low.7}
2III = \la (2 g_0^{\alp\alp'} g_0^{\bt\bt'} - g_0^{\alp\bt} g_0^{\alp'\bt'})g_0^{\mu\mu'} g_0^{\nu\nu'}  \rd_{\nu'} h^{\hbox{\Rightscissors}}_{\alp'\bt'}, [A, \mfh^{(1)}_{\mu\nu}] \rd^2_{t\mu'} h^{\hbox{\Rightscissors}}_{\alp\bt} ) \ra + o(1).
\end{equation}

Combining \eqref{eq:quasilinear.low.2}--\eqref{eq:quasilinear.low.3}, \eqref{eq:quasilinear.low.7} then gives
\begin{equation}\label{eq:quasilinear.low.8}
\begin{split}
&\:  \Big\la (2 g_0^{\alp\alp'} g_0^{\bt\bt'} - g_0^{\alp\bt} g_0^{\alp'\bt'}) \rd_{t} h^{\hbox{\Rightscissors}}_{\alp\bt}, A\Big(g_0^{\mu\mu'} g_0^{\nu\nu'} \mfh^{(1)}_{\mu\nu} \rd_{\mu'\nu'}^2 h^{\hbox{\Rightscissors}}_{\alp'\bt'} \Big) \Big\ra \\
= &\:  \f 12 \la (2 g_0^{\alp\alp'} g_0^{\bt\bt'} - g_0^{\alp\bt} g_0^{\alp'\bt'})g_0^{\mu\mu'} g_0^{\nu\nu'}  \rd_{\nu'} h^{\hbox{\Rightscissors}}_{\alp'\bt'}, [A, \mfh^{(1)}_{\mu\nu}] \rd^2_{t\mu'} h^{\hbox{\Rightscissors}}_{\alp\bt}  \ra  + o(1).
\end{split}
\end{equation}

Finally, observe that $[A, \mfh^{(1)}_{\mu\nu}] \rd^2_{t\mu'} h^{\hbox{\Rightscissors}}_{\alp\bt} = [A \rd_t, \mfh^{(1)}_{\mu\nu}] \rd_{\mu'} h^{\hbox{\Rightscissors}}_{\alp\bt} + o_{L^2}(1)$ using that $\| \rd \mfh^{(1)}_{\mu\nu}\|_{L^\i} = o(1)$ by Lemma~\ref{lem:mfh1}. On the other hand, the Calder\'on commutator estimate (for the commutator of a pseudo-differential operator in $\Psi^1$ and a Lipschitz function, see, for instance, \cite[Corollary, p.309]{Stein}) imply
\begin{equation}
\|[A \rd_t, \mfh^{(1)}_{\mu\nu}] \rd_{\mu'} h^{\hbox{\Rightscissors}}_{\alp\bt} \|_{L^2} \ls \| \rd h^{\hbox{\Rightscissors}}\|_{L^2} \| \mfh^{(1)} \|_{W^{1,\infty}}.
\end{equation}
Thus, using also Cauchy--Schwarz and Lemma~\ref{lem:mfh1}, we obtain
\begin{equation}
|\mbox{RHS of \eqref{eq:quasilinear.low.8}}| \ls \| \rd h^{\hbox{\Rightscissors}}\|_{L^2}^2 \| \mfh^{(1)} \|_{W^{1,\infty}} = o(1),
\end{equation}
which gives the desired conclusion. \qedhere
\end{proof}

\subsubsection{The $\mfh^{(2)}$ term}

The key property that we need for $\mfh^{(2)}$ is the following improved estimate. This can be viewed as an elliptic estimate.
\begin{lemma}\label{lem:mfh2}
$$\|  \mfh^{(2)} \|_{L^2} \ls \lambda^{2b},\quad \|\mfh^{(2)} \|_{H^1} \ls \lambda^b.$$
\end{lemma}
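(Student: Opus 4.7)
The proof rests on the fact that $\Box_{g_0}$ is elliptic on the Fourier support of $\mfh^{(2)}$: by Lemma~\ref{lem:elliptic} and Definition~\ref{def:freq.decomposition}, the principal symbol $-g_0^{\alp\bt}(x)\xi_\alp\xi_\bt$ satisfies $|g_0^{\alp\bt}(x)\xi_\alp\xi_\bt| \gtrsim |\xi|^2 \gtrsim \lambda^{2b}$ uniformly in $x$ for all $\xi\in\mathrm{supp}(m)$, where $m$ denotes the symbol of $\mathcal P^{(2)}$. My plan is to exploit this microlocal ellipticity through a standard parametrix argument, combined with the two key inputs $\|\Box_{g_0} h\|_{L^2}\ls 1$ (from \eqref{eq:Boxh.bound} in Proposition~\ref{prop:h.eqn}) and $\|h\|_{L^2}\ls\lambda$ (from $|h|\leq\lambda$).

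First I would construct a pseudodifferential parametrix $E\in\Psi^{-2}$ with principal symbol $-\widetilde\chi(\xi)/(g_0^{\alp\bt}(x)\xi_\alp\xi_\bt)$, where $\widetilde\chi$ is smooth, equal to $1$ on $\mathrm{supp}(m)$, and supported in a slight enlargement of $\mathrm{supp}(m)$ that remains in the elliptic region and satisfies $|\xi|\geq\lambda^b/2$. Standard symbol calculus then yields $E\,\Box_{g_0} = \widetilde M + R$, where $\widetilde M$ is a Fourier multiplier acting as the identity on $\mathrm{supp}(m)$ and $R\in\Psi^{-1}$. Since $\mathcal P^{(2)}\widetilde M = \mathcal P^{(2)}$, one obtains the identity
$$\mfh^{(2)} \;=\; \mathcal P^{(2)} E\,\Box_{g_0} h \;-\; \mathcal P^{(2)} R\, h,$$
and I would estimate the two terms respectively using $\|\Box_{g_0} h\|_{L^2}\ls 1$ and $\|h\|_{L^2}\ls\lambda$, exploiting the order $-2$ gain of $E$ on $\mathrm{supp}(m)$.

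For the $H^1$ bound, I would combine the $L^2$ bound with an $H^2$ bound obtained by direct elliptic regularity on $\mfh^{(2)}$: since $\Box_{g_0}\mfh^{(2)} = \mathcal P^{(2)}\Box_{g_0} h + [\Box_{g_0},\mathcal P^{(2)}]h$ and the commutator is of order $1$ applied to $h\in H^1$, one gets $\|\Box_{g_0}\mfh^{(2)}\|_{L^2}\ls 1$, hence by microlocal elliptic regularity $\|\mfh^{(2)}\|_{H^2}\ls\|\Box_{g_0}\mfh^{(2)}\|_{L^2}+\|\mfh^{(2)}\|_{L^2}\ls 1$. Interpolating, $\|\mfh^{(2)}\|_{H^1}\ls \|\mfh^{(2)}\|_{L^2}^{1/2}\|\mfh^{(2)}\|_{H^2}^{1/2}\ls \lambda^b$, as required.

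The hard part will be extracting the precise $\lambda^{2b}$ gain in the $L^2$ estimate. A naive use of the parametrix (bounding $\|E\|_{L^2\to L^2}$ by the pointwise size of its symbol) only gives $\|\mathcal P^{(2)} E\Box_{g_0}h\|_{L^2}\ls\lambda^{-2b}$, since on $\mathrm{supp}(m)$ the factor $|\xi|^{-2}$ can be as large as $\lambda^{-2b}$, while the alternative trivial bound $\|\mfh^{(2)}\|_{L^2}\leq\|h\|_{L^2}\ls\lambda$ is only stronger than $\lambda^{2b}$ when $b<1/2$. Achieving the sharp $\lambda^{2b}$ therefore requires balancing these two sources of information, for instance via a dyadic Littlewood--Paley decomposition of the frequency annuli $|\xi|\in[\lambda^b,\infty)$ and using the elliptic gain at higher frequencies together with the trivial $L^\infty$ control on $h$ near the cutoff scale $|\xi|\sim\lambda^b$.
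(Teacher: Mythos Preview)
Your parametrix construction and your $H^2$ bound $\|\mfh^{(2)}\|_{H^2}\ls 1$ are correct and are exactly what the paper does. The gap is in the $L^2$ step, and its root cause is a misreading of the Fourier support of $\mfh^{(2)}$. You take $\mfh^{(2)}$ to live at $|\xi|\gtrsim\lambda^{b}$ (a \emph{small} threshold), which is what the formula $\Theta(\lambda^{-b}|\xi|)$ in Definition~\ref{def:freq.decomposition} literally says; but this is a typo. The intended cutoff is $\Theta(\lambda^{b}|\xi|)$, so that $\mfh^{(1)}$ has frequency $\ls\lambda^{-b}$ and $\mfh^{(2)},\mfh^{(3)}$ have frequency $\gtrsim\lambda^{-b}$ (\emph{large}); this is stated explicitly in the paper's own proof (``$\mfh^{(2)}$ is defined to have frequency $\gtrsim\lambda^{-b}$''), is required for the bound $\|\rd\mfh^{(1)}\|_{L^\infty}\ls\lambda^{1-b}\log(1/\lambda)$ in Lemma~\ref{lem:mfh1} to be nontrivial, and is needed for the gain \eqref{eq:S1.1} in Lemma~\ref{lem:mfh3}. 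With your reading the target $\lambda^{2b}$ is in fact unreachable: since $b>\tfrac12$ one has $\lambda^{2b}<\lambda$, so the trivial bound $\|h\|_{L^2}\ls\lambda$ is too weak, and the elliptic gain $|\xi|^{-2}$ gives nothing on the range $\lambda^{b}\leq|\xi|\ls 1$; your proposed dyadic balancing cannot overcome this.

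Once you use the correct support $|\xi|\gtrsim\lambda^{-b}$, your own argument finishes immediately with no extra work: from $\|\mfh^{(2)}\|_{H^2}\ls 1$ and Plancherel on the high-frequency support one gets directly
\[
\|\mfh^{(2)}\|_{L^2}\ls\lambda^{2b}\|\mfh^{(2)}\|_{H^2}\ls\lambda^{2b},\qquad
\|\mfh^{(2)}\|_{H^1}\ls\lambda^{b}\|\mfh^{(2)}\|_{H^2}\ls\lambda^{b},
\]
which is precisely the paper's proof. Your interpolation step for $H^1$ is correct but unnecessary.
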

\begin{proof}
Using Lemma~\ref{lem:elliptic}, we can find $S_2\in \Psi^{-2}$ such that $\sigma(S_2) = \Big(g_0^{\alp\bt}\xi_\alp\xi_\bt\Big)^{-1}$ when $|\xi_t|\leq \de_\Theta |\uxi|$ and $|\xi|\geq 1$. In particular, there is an $R_1 \in \Psi^{-1}$ such that $\mfh^{(2)} = S_2 \Box_{g_0} \mfh^{(2)} + R_1 \mfh^{(2)}$ (using the Fourier support of $\mfh^{(2)}$). Therefore,
$$\|  \mfh^{(2)} \|_{H^2} \ls \| S_2 \Box_{g_0} \mfh^{(2)} \|_{H^2} + \| R_1 \mfh^{(2)} \|_{H^2} \ls \|\Box_{g_0} \mfh^{(2)} \|_{L^2} + \|\mfh^{(2)} \|_{H^1} \ls 1,$$
where in the bound, we have used that $(1-\Theta(\lambda^{-\f 12} |\xi|)) \Theta(\tfrac{2|\xi_t|}{\de_\Theta |\uxi|})$ satisfies $S^0$ symbol bounds uniformly in $\lambda$ and so $\| \Box_{g_0} \mfh^{(2)} \|_{L^2} \ls \|\Box_{g_0} h \|_{H^1} +  \| h \|_{H^1} \ls 1$. Recalling that $\mfh^{(2)}$ is defined to have frequency $\gtrsim \lambda^{-b}$, we thus have $\|  \mfh^{(2)} \|_{L^2} \ls \lambda^{2b},\quad \|\mfh^{(2)} \|_{H^1} \ls \lambda^b$. \qedhere
\end{proof}
With the bounds in Lemma~\ref{lem:mfh2}, the desired estimate in this case is almost immediate.
\begin{proposition}\label{prop:quasilinear.2}
\begin{equation}
\begin{split}
\lim_{\lambda\to 0}  \Big\la (2 g_0^{\alp\alp'} g_0^{\bt\bt'} - g_0^{\alp\bt} g_0^{\alp'\bt'}) \rd_{t} h^{\hbox{\Rightscissors}}_{\alp\bt}, A\Big(g_0^{\mu\mu'} g_0^{\nu\nu'} \mfh^{(2)}_{\mu\nu} \rd_{\mu'\nu'}^2 h^{\hbox{\Rightscissors}}_{\alp'\bt'} \Big) \Big\ra  = 0.
\end{split}
\end{equation}
\end{proposition}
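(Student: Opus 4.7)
The plan is to bound the expression by brute force using Cauchy--Schwarz and the elliptic gain for $\mfh^{(2)}$ encoded in Lemma~\ref{lem:mfh2}. Unlike the $\mfh^{(1)}$ case (where we needed several integrations by parts together with the Calder\'on commutator theorem) and the anticipated $\mfh^{(3)}$ case (which should require a hidden trilinear null structure), no algebraic structure or integration by parts is needed here: the frequency support of $\mfh^{(2)}$ stays a fixed multiplicative distance away from the light cone, so $\Box_{g_0}$ is \emph{elliptic} on that support, and hence $L^2$-bounded control of $\Box_{g_0} h$ (coming from \eqref{eq:Boxh.bound}) upgrades to a very strong $L^2$ bound on $\mfh^{(2)}$ itself.

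Concretely, I would proceed as follows. First, using the $L^2$-boundedness of $A\in\Psi^0$ together with the fact that $(2g_0^{\alpha\alpha'}g_0^{\beta\beta'}-g_0^{\alpha\beta}g_0^{\alpha'\beta'})g_0^{\mu\mu'}g_0^{\nu\nu'}$ is smooth and bounded (by \eqref{eq:g.smallness}), Cauchy--Schwarz reduces the left-hand side to
\begin{equation*}
\lesssim \|\rd h^{\hbox{\Rightscissors}}\|_{L^2}\cdot\|\mfh^{(2)}\,\rd^2 h^{\hbox{\Rightscissors}}\|_{L^2},
\end{equation*}
where the first factor is $O(1)$ by \eqref{eq:basic.bds.for.Ric}. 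Next, H\"older's inequality gives
\begin{equation*}
\|\mfh^{(2)}\,\rd^2 h^{\hbox{\Rightscissors}}\|_{L^2}\leq \|\mfh^{(2)}\|_{L^2}\,\|\rd^2 h^{\hbox{\Rightscissors}}\|_{L^\infty}.
\end{equation*}
Lemma~\ref{lem:mfh2} bounds the first factor by $\lambda^{2b}$. For the second factor I would observe that the convolution kernel of the Fourier multiplier defining $h^{\hbox{\Rightscissors}}$ is the rescaling $\lambda^{-1.01(d+1)}\psi(x/\lambda^{1.01})$ of a fixed Schwartz function $\psi$, hence has $L^1$-norm uniformly bounded in $\lambda$; it follows that $h\mapsto h^{\hbox{\Rightscissors}}$ is bounded on $L^\infty$ uniformly in $\lambda$, and therefore $\|\rd^2 h^{\hbox{\Rightscissors}}\|_{L^\infty}\lesssim \|\rd^2 h\|_{L^\infty}\lesssim \lambda^{-1}$ by \eqref{eq:basic.bds.for.Ric}.

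Combining these inputs yields a bound of size $O(\lambda^{2b-1})$, which is $o(1)$ as $\lambda\to 0$ because the range \eqref{eq:b.range} enforces $b>29/30>1/2$. The essential (and only) step is the elliptic estimate of Lemma~\ref{lem:mfh2}; there is no real obstacle in this regime, which is why the authors single out $\mfh^{(2)}$ as the simplest of the three frequency pieces and concentrate the subtler trilinear/null-structure arguments on the $\mfh^{(3)}$ term.
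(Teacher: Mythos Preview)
Your proof is correct and takes essentially the same approach as the paper: both use the elliptic gain $\|\mfh^{(2)}\|_{L^2}\lesssim \lambda^{2b}$ from Lemma~\ref{lem:mfh2} together with H\"older's inequality to obtain the bound $O(\lambda^{2b-1})=o(1)$. The only cosmetic difference is the choice of exponents---the paper pairs $L^4$ with $L^{4/3}$ (using $L^{4/3}$-boundedness of $A$ and the splitting $\|\rd h\|_{L^4}\|\mfh^{(2)}\|_{L^2}\|\rd^2 h\|_{L^4}$) rather than your $L^2$--$L^2$ pairing followed by an $L^2$--$L^\infty$ H\"older on the product.
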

\begin{proof}
By $L^{4/3}$ boundedness of $A$ and H\"older's inequality, we have 
\begin{equation}
\begin{split}
&\:  \Big\la (2 g_0^{\alp\alp'} g_0^{\bt\bt'} - g_0^{\alp\bt} g_0^{\alp'\bt'}) \rd_{t} h^{\hbox{\Rightscissors}}_{\alp\bt}, A\Big(g_0^{\mu\mu'} g_0^{\nu\nu'} \mfh^{(2)}_{\mu\nu} \rd_{\mu'\nu'}^2 h^{\hbox{\Rightscissors}}_{\alp'\bt'} \Big) \Big\ra \\
\ls &\: \| \rd h \|_{L^4} \|\mfh^{(2)} \|_{L^2} \| \rd^2 h \|_{L^4} \ls \lambda^{2b-1} = o(1). \qedhere
\end{split}
\end{equation}
\end{proof}

\subsubsection{The $\mfh^{(3)}$ term}

The property that we will use for $\mfh^{(3)}$ is captured in the following lemma. The lemma roughly says that $\mfh^{(3)}$ has a nicely behaved anti-$\rd_t$ derivative.
\begin{lemma}\label{lem:mfh3}
For every $\lambda \in (0,\lambda_0)$, there exists a ($\lambda$-dependent) $\mfk_{\mu\nu}$ such that the following holds:
\begin{enumerate}
\item \label{item:lem.mfh3.1} $\rd_t \mfk_{\mu\nu} = \mfh^{(3)}_{\mu\nu}$.
\item \label{item:lem.mfh3.2} The following estimates hold:
\begin{equation}\label{eq:mfk.basic.est}
\| \mfk\|_{L^2} \ls \lambda^{1+b},\quad \| \mfk\|_{H^1} \ls  \lambda^{b},\quad \| \Box_{g_0}\mfk\|_{L^2} \ls  \lambda^b.
\end{equation}
\item \label{item:lem.mfh3.3} The following wave coordinate condition cancellation holds:
\begin{equation}\label{eq:wave.coord.mfk}
\| H^\nu(g_0)(\rd \mfk) \|_{L^2} \ls  \max\{\lambda^{1+\f b2}, \lambda^{b+\eta}\}. 
\end{equation}
\end{enumerate}
\end{lemma}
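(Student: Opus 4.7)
The plan is to define $\mfk_{\mu\nu}$ as the spacetime Fourier multiplier antiderivative of $\mfh^{(3)}_{\mu\nu}$ in the $t$-direction. Writing $\chi(\xi)$ for the symbol of the projection $\mathcal P^{(3)}$ from Definition~\ref{def:freq.decomposition}, I set
$$\widehat{\mfk_{\mu\nu}}(\xi) \doteq \frac{\chi(\xi)}{i\xi_t}\widehat{h_{\mu\nu}}(\xi).$$
This is well-defined because on $\mathrm{supp}(\chi)$ we have $|\xi_t|\gtrsim |\xi|\gtrsim \lambda^{-b}$; denoting by $T$ the corresponding Fourier multiplier, Plancherel yields the operator bound $\|T\|_{L^2\to L^2}\ls \lambda^b$. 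Property (\ref{item:lem.mfh3.1}) is immediate since $i\xi_t\widehat{\mfk}(\xi) = \chi(\xi)\widehat{h}(\xi) = \widehat{\mfh^{(3)}}(\xi)$.

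For the first two bounds in (\ref{item:lem.mfh3.2}), Plancherel together with the fact that $T$ commutes with $\rd$ gives $\|\mfk\|_{L^2}\ls \lambda^b\|h\|_{L^2}\ls \lambda^{1+b}$ (using $\|h\|_{L^\infty}\ls \lambda$ and the compact support arranged in Section~\ref{sec:cutoff.assumptions}) and $\|\rd\mfk\|_{L^2}\ls \lambda^b\|\rd h\|_{L^2}\ls \lambda^b$. For $\|\Box_{g_0}\mfk\|_{L^2}$, since $T$ commutes with $\rd$ the identity
$$\Box_{g_0}\mfk = T(\Box_{g_0}h) + [g_0^{\alpha\beta},T]\rd_\alpha\rd_\beta h - [H^\rho,T]\rd_\rho h$$
holds. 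The first term is $\ls \lambda^b$ by \eqref{eq:Boxh.bound} and Plancherel, and the third is $\ls \lambda^b$ by the operator norm of $T$. For the middle commutator, substituting $g_0^{\alpha\beta}\rd_\alpha\rd_\beta h = \Box_{g_0}h + H^\rho\rd_\rho h$ reduces it to controlling the composition $g_0^{\alpha\beta}T\rd_\alpha\rd_\beta h$ in $L^2$; paradifferentially this corresponds to the symbol $(\chi/\xi_t)\cdot(g_0^{\alpha\beta}\xi_\alpha\xi_\beta\widehat{h})$, bounded in $L^2$ by $\lambda^b\|\Box_{g_0}h\|_{L^2}\ls \lambda^b$ after exchanging $g_0^{\alpha\beta}\xi_\alpha\xi_\beta\widehat{h}$ for (the Fourier transform of) $-\Box_{g_0}h$ modulo acceptable lower-order commutator errors involving $\rd g_0$.

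For the wave coordinate cancellation (\ref{item:lem.mfh3.3}), I would apply $T$ to the pointwise bound \eqref{eq:basic.bds.for.WC} on $h$. Expanding $H^\nu(g_0)(\rd \mfk) = g_0^{\nu\sigma}g_0^{\alpha\beta}(\rd_\beta\mfk_{\sigma\alpha}-\tfrac{1}{2}\rd_\sigma\mfk_{\alpha\beta})$ and commuting $T$ past the smooth coefficients produces $T\big[H^\nu(g_0)(\rd h)\big]$ plus commutator terms of the form $[g_0^{\alpha\beta},T]\rd h$. The main term is bounded in $L^2$ by $\|T\|_{L^2\to L^2}\cdot \|H^\nu(g_0)(\rd h)\|_{L^2}\ls \lambda^b\cdot \lambda^\eta = \lambda^{b+\eta}$, using the $L^\infty$ bound \eqref{eq:basic.bds.for.WC} together with compact support. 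The commutator contributions are controlled by $\lambda^{1+b/2}$ via a Coifman--Meyer-type commutator estimate, gaining the extra factor $\lambda^{1-b/2}$ over the naïve $\lambda^b$ from the smallness $\|h\|_{L^2}\ls \lambda$ and interpolation with $\|\rd h\|_{L^2}\ls 1$.

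The principal obstacle is rigorously verifying the commutator/paradifferential bound $\|[g_0^{\alpha\beta},T]\rd_\alpha\rd_\beta h\|_{L^2}\ls \lambda^b$ required in (\ref{item:lem.mfh3.2}). The Fourier symbol $\chi/(i\xi_t)$ is of magnitude $\lambda^b$ on its support, but its $\xi$-derivatives inherit the singular scaling $\lambda^{-b}$ from $\Theta(\lambda^{-b}|\xi|)$, so $\chi/(i\xi_t)\notin S^{-1}$ in the classical sense and a direct Coifman--Meyer bound yields only the insufficient $\lambda^{b-1}\cdot\|\rd^2 h\|_{L^2}$. The gain down to $\lambda^b$ is ultimately recovered by trading $\rd^2 h$ against $\Box_{g_0}h$ via the wave equation \eqref{eq:h.eqn}, so that the troublesome high-frequency content of $\rd^2 h$ is replaced by the much better-behaved quantity controlled by \eqref{eq:Boxh.bound}.
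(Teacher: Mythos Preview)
Your construction of $\mfk$ as $Th$ with $T$ the Fourier multiplier of symbol $\chi(\xi)/(i\xi_t)$ is exactly the paper's choice (the paper writes it as $S_1\circ\mathcal P^{(3)}$, which has the same symbol). Parts (\ref{item:lem.mfh3.1}) and the first two bounds of (\ref{item:lem.mfh3.2}) are handled correctly.

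The genuine gap is your ``principal obstacle,'' which is in fact not an obstacle at all; this misunderstanding then propagates into a circular argument for $\|\Box_{g_0}\mfk\|_{L^2}$. The multiplier $\chi/(i\xi_t)$ \emph{does} lie in $S^{-1}$ with bounds independent of $\lambda$. The angular cutoff is $0$-homogeneous, hence trivially in $S^0$. The radial cutoff transitions in the shell $|\xi|\sim\lambda^{-b}$, so its $k$-th $\xi$-derivatives are of size $\lambda^{kb}\sim|\xi|^{-k}$ there --- precisely the $S^0$ decay, not a singular $\lambda^{-b}$ as you claim. Finally, on $\mathrm{supp}\,\chi$ one has $|\xi_t|\gtrsim|\xi|$, so $1/\xi_t$ and all its derivatives satisfy $S^{-1}$ bounds there. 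The product is therefore in $S^{-1}$ uniformly. This is the key observation the paper singles out: once you have it, standard pseudodifferential calculus gives $\|[T,\Box_{g_0}]\|_{L^2\to L^2}\ls 1$ independently of $\lambda$, and hence
\[
\|\Box_{g_0}\mfk\|_{L^2}\leq \|[T,\Box_{g_0}]h\|_{L^2}+\|T\Box_{g_0}h\|_{L^2}\ls \|h\|_{L^2}+\lambda^b\|\Box_{g_0}h\|_{L^2}\ls \lambda+\lambda^b\ls\lambda^b.
\]
Your proposed workaround, by contrast, tries to control $g_0^{\alp\bt}T\rd_\alp\rd_\bt h$ by relating it back to $T\Box_{g_0}h$ ``modulo acceptable lower-order commutator errors'' --- but that difference is exactly the commutator you set out to bound, so the argument is circular.

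For part (\ref{item:lem.mfh3.3}), the same uniform $S^{-1}$ bound gives $\|[T,g_0^{\nu\nu'}g_0^{\mu\mu'}\rd_{\mu'}]\|_{H^{-1}\to L^2}\ls 1$. The paper combines this with the $\lambda^b$ operator norm via a concrete high/low split of $h$ at frequency $\lambda^{-b/2}$: for $h_{\mathrm{high}}$ the uniform commutator bound yields $\|h_{\mathrm{high}}\|_{H^{-1}}\ls\lambda^{b/2}\|h\|_{L^2}\ls\lambda^{1+b/2}$; for $h_{\mathrm{low}}$ one has $\mathcal P^{(3)}h_{\mathrm{low}}=0$ by disjoint Fourier support, so the commutator reduces to $T(g_0\,\rd h_{\mathrm{low}})$, bounded by $\lambda^b\|\rd h_{\mathrm{low}}\|_{L^2}\ls\lambda^b\cdot\lambda^{-b/2}\|h\|_{L^2}\ls\lambda^{1+b/2}$. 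Your gesture at ``Coifman--Meyer plus interpolation'' points in this direction but does not supply the mechanism; the frequency split is what actually makes the two available bounds work together.
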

\begin{proof}
Define $S_1 \in \Psi^{-1}$ as a Fourier multiplier operator with Fourier multiplier $\mathfrak m(\xi) \doteq \f 1{2\pi i} \xi_t^{-1} \Big(1-\Theta(2\lambda^{-b} |\xi|) \Big)\Big(1- \Theta(\tfrac{20|\xi_t|}{\de_\Theta |\uxi|})\Big)$, and for each component, let $\mfk_{\mu\nu}  \doteq S_1 \mfh^{(3)}_{\mu\nu} = S_1 \circ \mathcal P^{(3)} h_{\mu\nu}$.

Recalling the Fourier support of $\mfh^{(3)}$, it follows that $(1-\Theta(2\lambda^{-b} |\xi|))(1- \Theta(\tfrac{20|\xi_t|}{\de_\Theta |\uxi|})) = 1$ on the $\mathrm{supp}(\mathcal F \mfh^{(3)})$. From this, we obtain $\rd_t \mfk = \mfh^{(3)}$, which proves part (\ref{item:lem.mfh3.1}).

Before turning to parts (\ref{item:lem.mfh3.2}) and (\ref{item:lem.mfh3.3}), we derive some properties of $S_1$. First, by Plancherel's theorem, we gain from the high $\xi_t$ frequency to obtain
\begin{equation}\label{eq:S1.1}
\| S_1\circ \mathcal P^{(3)} \|_{H^k \to H^k} \ls_k \lambda^{b}.
\end{equation}
Second, notice that the multiplier $\mathfrak m$ satisfies the symbol bounds $|\rd_\xi^\alp \mathfrak m(\xi)| \ls_{|\alp|} (1+|\xi|)^{-|\alp|-1}$ \emph{independently} of $\lambda$. In particular, this implies by standard results on pseudo-differential operators that
\begin{equation}\label{eq:S1.2}
\| [S_1\circ \mathcal P^{(3)}, \Box_{g_0} ] \|_{L^2 \to L^2} \ls 1,\quad \| [S_1\circ \mathcal P^{(3)}, g_0^{\alp\bt} g_0^{\de\rho} \rd_\sigma] \|_{H^{-1} \to L^2} \ls 1.
\end{equation}

We now turn to the proof of (\ref{item:lem.mfh3.2}). The first two estimates in \eqref{eq:mfk.basic.est} are simple consequence of \eqref{eq:S1.1} and \eqref{eq:basic.bds.for.Ric}. To prove the third estimate in \eqref{eq:mfk.basic.est}, we note
\begin{equation}
\|\Box_{g_0} \mfk \|_{L^2} \ls \|[S_1\circ \mathcal P^{(3)}, \Box_{g_0}]\|_{L^2\to L^2} \| h \|_{L^2} +  \|S_1\circ \mathcal P^{(3)}\|_{L^2 \to L^2} \|\Box_{g_0} h \|_{L^2} \ls \min\{\lambda, \lambda^b\} = \lambda^b,
\end{equation}
where we have used \eqref{eq:S1.1}, \eqref{eq:S1.2}, \eqref{eq:basic.bds.for.Ric} and \eqref{eq:Boxh.bound}.

Finally, turning to (\ref{item:lem.mfh3.3}), we write
\begin{equation}\label{eq:mfk.almost.wave.1}
\begin{split}
&\: H^{\nu}(g_0)(\partial \mathfrak k)\\
= &\: g_0^{\nu\nu'} g_0^{\mu\mu'} (\rd_{\mu'} (\mathfrak k)_{\mu\nu'} - \f 12 \rd_{\nu'} (\mathfrak k)_{\mu\mu'}) \\
= &\: -[S_1 \circ\mathcal P^{(3)}, g_0^{\nu\nu'} g_0^{\mu\mu'} \rd_{\mu'}] h_{\mu\nu'} + \f 12  [S_1\circ \mathcal P^{(3)}, g_0^{\nu\nu'} g_0^{\mu\mu'} \rd_{\nu'} ] h_{\mu\mu'} + S_1\circ \mathcal P^{(3)} H^\nu(g_0)(\rd h) \\
\doteq &\:  I + II + III.
\end{split}
\end{equation}

$I$, $II$ are similar; we consider only $I$. Notice that the second estimate in \eqref{eq:S1.2} is by itself not sufficient to treat term $I$ in \eqref{eq:mfk.almost.wave.1} because we do not have improved bounds for $\|h\|_{H^{-1}}$ (compared to $\|h\|_{L^2}$). Instead, we decompose $h \doteq h_{\mathrm{high}} + h_{\mathrm{low}}$, where $h_{\mathrm{high}}$ and $h_{\mathrm{low}}$ have frequency support $\gtrsim \lambda^{-\f b2}$ and $\ls \lambda^{-\f b2}$, respectively. For $h_{\mathrm{high}}$, we use \eqref{eq:S1.2}, $\| h \|_{L^2} \ls \lambda$, the frequency support and Plancherel's theorem to obtain
\begin{equation}\label{eq:mfk.almost.wave.1.1.1}
\| [S_1 \circ\mathcal P^{(3)}, g_0^{\nu\nu'} g_0^{\mu\mu'} \rd_{\mu'}] (h_{\mathrm{high}})_{\mu\nu'} \|_{L^2} \ls \|h_{\mathrm{high}} \|_{H^{-1}} \ls \lambda^{\f b2} \| h\|_{L^2} \ls \lambda^{1+\f b2}.
\end{equation}
As for $h_{\mathrm{low}}$, we note that for $\lambda$ small, the frequency support implies $[S_1 \circ \mathcal P^{(3)}, g_0^{\nu\nu'} g_0^{\mu\mu'} \rd_{\mu'}] (h_{\mathrm{low}})_{\mu\nu'} =  S_1 \circ \mathcal P^{(3)}(g_0^{\nu\nu'} g_0^{\mu\mu'} \rd_{\mu'} (h_{\mathrm{low}})_{\mu\nu'})$ and thus by \eqref{eq:S1.1} and then Plancherel's theorem, we obtain
\begin{equation}\label{eq:mfk.almost.wave.1.1.2}
\begin{split}
\| [S_1 \circ\mathcal P^{(3)}, g_0^{\nu\nu'} g_0^{\mu\mu'} \rd_{\mu'}] (h_{\mathrm{low}})_{\mu\nu'} \|_{L^2} \ls &\: \| S_1 \circ \mathcal P^{(3)}(g_0^{\nu\nu'} g_0^{\mu\mu'} \rd_{\mu'} (h_{\mathrm{low}})_{\mu\nu'})\|_{L^2} \\
\ls &\: \lambda^b \| \rd h_{\mathrm{low}}\|_{L^2} \ls \lambda^{\f b2} \| h \|_{L^2} \ls \lambda^{1+\f b2}.
\end{split}
\end{equation}
Combining \eqref{eq:mfk.almost.wave.1.1.1}, \eqref{eq:mfk.almost.wave.1.1.2} for $I$ in \eqref{eq:mfk.almost.wave.1}, and handling $II$ similarly, we obtain
\begin{equation}\label{eq:mfk.almost.wave.1.1}
\| \mbox{$I$ in \eqref{eq:mfk.almost.wave.1}} \|_{L^2} + \| \mbox{$II$ in \eqref{eq:mfk.almost.wave.1}} \|_{L^2} \ls \lambda^{1+\f b2}.
\end{equation}

Finally, for $III$ in \eqref{eq:mfk.almost.wave.1}, we simply use \eqref{eq:S1.1} to obtain 
\begin{equation}\label{eq:mfk.almost.wave.1.3}
\| \mbox{$III$ in \eqref{eq:mfk.almost.wave.1}} \|_{L^2}  \ls \|S_1 \circ \mathcal P^{(3)} H^\nu(g_0)(\rd h)\|_{L^2} \ls \lambda^{b+\eta}.
\end{equation}
Combining \eqref{eq:mfk.almost.wave.1}, \eqref{eq:mfk.almost.wave.1.1} and \eqref{eq:mfk.almost.wave.1.3} yields \eqref{eq:wave.coord.mfk}. \qedhere
\end{proof}

We now use Lemma~\ref{lem:mfh3} to handle the term \eqref{eq:quasilinear.3.main.term} below. The key point is that after writing $\rd_t \mfk = \mfh^{(3)}$ (using Lemma~\ref{lem:mfh3}), we can integrate by parts and use the wave coordinate type condition in \eqref{eq:wave.coord.mfk} to reveal a null structure. The fact that the quasilinear terms have some hidden null structure is also used in \cite{LinRod}, see also \cite{IP}.
\begin{proposition}\label{prop:quasilinear.3}
\begin{equation}\label{eq:quasilinear.3.main.term}
\begin{split}
\lim_{\lambda\to 0}  \Big\la (2 g_0^{\alp\alp'} g_0^{\bt\bt'} - g_0^{\alp\bt} g_0^{\alp'\bt'}) \rd_{t} h_{\alp\bt}, A\Big(g_0^{\mu\mu'} g_0^{\nu\nu'} \mfh^{(3)}_{\mu\nu} \rd_{\mu'\nu'}^2 h^{\hbox{\Rightscissors}}_{\alp'\bt'} \Big) \Big\ra  = 0.
\end{split}
\end{equation}
\end{proposition}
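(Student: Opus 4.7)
The plan is to combine the decomposition $\mfh^{(3)}_{\mu\nu} = \rd_t \mfk_{\mu\nu}$ from Lemma~\ref{lem:mfh3}(\ref{item:lem.mfh3.1}) with the approximate wave coordinate identity \eqref{eq:wave.coord.mfk} for $\mfk$ to uncover a hidden $\mathfrak Q_0^{(g_0)}$ null structure --- in the same spirit as the hidden null structure trick \eqref{eq:quasilinear.hidden.null.structure} used in the proof of Proposition~\ref{prop:Ricci.limit}, but now applied to the quasilinear term at one higher frequency level and via the $\mfk$ variable rather than $h$ itself. The resulting null-form expressions are then dispatched in the spirit of the proof of Proposition~\ref{prop:stupid.trilinear} using the bounds on $\Box_{g_0}\mfk$ from Lemma~\ref{lem:mfh3}(\ref{item:lem.mfh3.2}) and on $\Box_{g_0} h^{\hbox{\Rightscissors}}$ from Lemma~\ref{lem:hcut.est}.

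Concretely, I would first replace the outer factor $\rd_t h_{\alp\bt}$ by $\rd_t h^{\hbox{\Rightscissors}}_{\alp\bt}$ via an argument modeled on Lemma~\ref{lem:cut.off.high.freq.quasilinear}, using the improved bound \eqref{eq:h.bar.improvement} on $\rd(h-h^{\hbox{\Rightscissors}})$. Then substituting $\mfh^{(3)} = \rd_t \mfk$ and integrating by parts in $\rd_{\mu'}$, the expression splits modulo $O(\lambda)$ errors from derivatives of $g_0$ (which are $o(1)$ since $\mfh^{(3)}=O_{L^\infty}(\lambda)$) into a ``boundary'' piece $T_1$ with $\rd_{\mu'}$ on $\rd_t h^{\hbox{\Rightscissors}}$, and an ``interior'' piece $T_2$ with $\rd_{\mu'}$ on $\rd_t \mfk$. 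In $T_2$, the key step is to use Lemma~\ref{lem:mfh3}(\ref{item:lem.mfh3.3}) to rewrite $g_0^{\mu\mu'}\rd_{\mu'}\mfk_{\mu\nu} = \tfrac 12 g_0^{\mu\mu'}\rd_\nu\mfk_{\mu\mu'} + (g_0)_{\nu\sigma}H^\sigma(g_0)(\rd\mfk)$. The $H^\sigma$-error, after moving $\rd_t$ back via one further integration by parts in $t$, yields a contribution bounded by
$$\|\rd^2 h^{\hbox{\Rightscissors}}\|_{L^2}\,\|H^\sigma(g_0)(\rd\mfk)\|_{L^2}\,\|\rd h\|_{L^\infty}\ls \lambda^{-1}\cdot \max\{\lambda^{1+b/2},\lambda^{b+\eta}\} = o(1),$$
with the last equality critically using $b>1-\eta$ from \eqref{eq:b.range}.

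The remaining leading piece of $T_2$ then expresses $g_0^{\nu\nu'}\rd_t\rd_\nu \mfk_{\mu\mu'} \rd_{\nu'} h^{\hbox{\Rightscissors}}$ as $\rd_t \mathfrak Q_0^{(g_0)}(\mfk_{\mu\mu'}, h^{\hbox{\Rightscissors}}) - \mathfrak Q_0^{(g_0)}(\mfk_{\mu\mu'}, \rd_t h^{\hbox{\Rightscissors}})$ modulo an $O_{L^2}(\lambda^b)$ error from $\rd_t$ falling on $g_0$; a final integration by parts in $\rd_t$ on the first summand then reduces $T_2$ to a linear combination of $\mathfrak Q_0^{(g_0)}$ trilinear expressions of the form $\la \rd h^{\hbox{\Rightscissors}}, A(g_0^{\mu\mu'}\mathfrak Q_0^{(g_0)}(\mfk_{\mu\mu'},\psi))\ra$ with $\psi\in\{h^{\hbox{\Rightscissors}}, \rd_t h^{\hbox{\Rightscissors}}\}$. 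A parallel argument applied to $T_1$, this time exploiting the wave coordinate condition for $h$ itself (so that $g_0^{\mu\mu'}\rd_{\mu'}\mfh^{(3)}_{\mu\nu}$ differs from $\tfrac12 g_0^{\mu\mu'}\rd_\nu\mfh^{(3)}_{\mu\mu'}$ by an $O_{L^\infty}(\lambda^\eta)$ piece plus a commutator $[\mathcal P^{(3)}, g_0^{\mu\mu'}\rd_{\mu'}]h$ that is $o(1)$ by Fourier-support arguments), produces an analogous $\mathfrak Q_0^{(g_0)}(\mfh^{(3)}_{\mu\mu'}, h^{\hbox{\Rightscissors}})$ null form. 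Each such $\mathfrak Q_0^{(g_0)}$ expression is then estimated by expanding $\mathfrak Q_0^{(g_0)}(\phi,\psi) = \tfrac12\Box_{g_0}(\phi\psi) - \tfrac12\phi\Box_{g_0}\psi - \tfrac12\psi\Box_{g_0}\phi$ and using $\|\Box_{g_0}\mfk\|_{L^2}\ls \lambda^b$, $\|\Box_{g_0} h^{\hbox{\Rightscissors}}\|_{L^2}\ls 1$, the pointwise smallness of $\mfk$ from its multiplier form $\mfk = S_1\circ \mathcal P^{(3)} h$, and the Fourier-support bounds on $\rd^k \Box_{g_0} h^{\hbox{\Rightscissors}}$.

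The main obstacle is the delicate scale balance: the $\lambda^{-1}$ losses from $\rd^2 h^{\hbox{\Rightscissors}}$ must be compensated by the smallness of $\mfk$ and of the wave-coord error $H^\sigma(g_0)(\rd\mfk)$. The choice $b > \max\{29/30, 1-\eta\}$ in \eqref{eq:b.range} is precisely what makes this balance work --- the restriction $b > 1-\eta$ closes the wave coordinate error in Step~3, while $b > 29/30$ is needed for the auxiliary smallness arguments such as those in the $h^{\hbox{\Rightscissors}}$ reduction and the final $\mathfrak Q_0$ estimates. The conceptual subtlety is that the hidden null structure is available for $\mfk$ but not intrinsically for $\mfh^{(3)}$, which is what distinguishes this case from the more direct quasilinear analysis of Proposition~\ref{prop:Ricci.limit}.
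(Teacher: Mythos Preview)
There is a genuine gap in your handling of the piece you call $T_1$. After you integrate by parts in $\rd_{\mu'}$ (moving it off $\rd^2_{\mu'\nu'}h^{\hbox{\Rightscissors}}_{\alp'\bt'}$), the boundary contribution has $\rd_{\mu'}$ landing on the \emph{outer} factor $A^*\rd_t h^{\hbox{\Rightscissors}}_{\alp\bt}$, so that
\[
T_1 = -\bigl\langle (2 g_0^{\alp\alp'} g_0^{\bt\bt'} - g_0^{\alp\bt} g_0^{\alp'\bt'}) g_0^{\mu\mu'}g_0^{\nu\nu'}\rd_{\mu'}A^*\rd_t h^{\hbox{\Rightscissors}}_{\alp\bt},\; \mfh^{(3)}_{\mu\nu}\,\rd_{\nu'}h^{\hbox{\Rightscissors}}_{\alp'\bt'}\bigr\rangle + o(1).
\]
Here $\mfh^{(3)}_{\mu\nu}$ appears \emph{undifferentiated}: there is no factor of the form $g_0^{\mu\mu'}\rd_{\mu'}\mfh^{(3)}_{\mu\nu}$ on which to invoke the wave coordinate condition, contrary to what you claim. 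A direct estimate gives only $\|\rd^2 h^{\hbox{\Rightscissors}}\|_{L^2}\,\|\mfh^{(3)}\|_{L^\infty}\,\|\rd h^{\hbox{\Rightscissors}}\|_{L^2} \ls \lambda^{-1}\cdot\lambda\cdot 1 = O(1)$, and further integrations by parts in $\rd_t$ or $\rd_{\nu'}$ produce $\rd^3 h^{\hbox{\Rightscissors}}$ factors that lose $\lambda^{-2.01}$, which the $\lambda^{1+b}$ smallness of $\mfk$ cannot absorb since $b<1$. (There is also a secondary issue: the angular cutoff in $\mathcal P^{(3)}$ is a Mikhlin-type multiplier, so your repeated use of $L^\infty$ bounds on $\mfh^{(3)}$ and its projections is not justified.)

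The paper's proof avoids this integration by parts altogether. It uses instead the \emph{pointwise} identity $\rd_t\mfk_{\mu\nu}\,\rd^2_{\mu'\nu'}h^{\hbox{\Rightscissors}}_{\alp'\bt'} = \rd_{\mu'}\mfk_{\mu\nu}\,\rd^2_{t\nu'}h^{\hbox{\Rightscissors}}_{\alp'\bt'} + \mathfrak Q_{t\mu'}(\mfk_{\mu\nu},\rd_{\nu'}h^{\hbox{\Rightscissors}}_{\alp'\bt'})$, swapping $\rd_t$ and $\rd_{\mu'}$ at the cost of a $\mathfrak Q_{t\mu'}$ null form rather than a boundary term. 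The first piece now genuinely carries $g_0^{\mu\mu'}\rd_{\mu'}\mfk_{\mu\nu}$ and, after Lemma~\ref{lem:mfh3}(\ref{item:lem.mfh3.3}), becomes the $\mathfrak Q_0$ expression handled by Proposition~\ref{prop:stupid.trilinear} (this is your $T_2$ mechanism, and that part of your argument is sound). The $\mathfrak Q_{t\mu'}$ piece, however, is not reducible to $\mathfrak Q_0$ structure and must be estimated via the Ionescu--Pasauder bound of Proposition~\ref{prop:main.trilinear}, giving $\lambda^{1/15}\,\|\rd h^{\hbox{\Rightscissors}}\|_{X^\infty_\lambda}\,\|A^* h^{\hbox{\Rightscissors}}\|_{X^2_\lambda}\,\|\mfk\|_{X^2_\lambda} \ls \lambda^{b-1.01+1/15}$. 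It is precisely this estimate --- not the auxiliary cutoff arguments you cite --- that forces the constraint $b>\tfrac{29}{30}$ in \eqref{eq:b.range}. In short, the $\mathfrak Q_{t\mu'}$ null form and Proposition~\ref{prop:main.trilinear} are essential here, and the proof cannot be closed using $\mathfrak Q_0$ structure alone.
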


\begin{proof}
Let $\mathfrak k$ be as in Lemma~\ref{lem:mfh3}. We now compute as follows:
\begin{equation}
\begin{split}
&\: \la (2 g_0^{\alp\alp'} g_0^{\bt\bt'} - g_0^{\alp\bt} g_0^{\alp'\bt'})  \rd_\gamma h^{\hbox{\Rightscissors}}_{\alp\bt}, A (g_0^{\mu\mu'} g_0^{\nu\nu'}\rd_t (\mathfrak k)_{\mu\nu} \rd^2_{\mu'\nu'} h^{\hbox{\Rightscissors}}_{\alp'\bt'} )\ra \\
= &\: \la (2 g_0^{\alp\alp'} g_0^{\bt\bt'} - g_0^{\alp\bt} g_0^{\alp'\bt'})  \rd_\gamma h^{\hbox{\Rightscissors}}_{\alp\bt}, A (g_0^{\mu\mu'} g_0^{\nu\nu'} \rd_{\mu'} (\mathfrak k)_{\mu\nu} \rd^2_{t\nu'} h^{\hbox{\Rightscissors}}_{\alp'\bt'} )\ra \\
&\:+  \la (2 g_0^{\alp\alp'} g_0^{\bt\bt'} - g_0^{\alp\bt} g_0^{\alp'\bt'})  \rd_\gamma h^{\hbox{\Rightscissors}}_{\alp\bt}, A (g_0^{\mu\mu'} g_0^{\nu\nu'} Q_{t\mu'}((\mathfrak k)_{\mu\nu}, \rd_{\nu'} h^{\hbox{\Rightscissors}}_{\alp'\bt'}) )\ra \\
= &\: \f 12 \la (2 g_0^{\alp\alp'} g_0^{\bt\bt'} - g_0^{\alp\bt} g_0^{\alp'\bt'}) \rd_\gamma h^{\hbox{\Rightscissors}}_{\alp\bt}, A (g_0^{\mu\mu'} g_0^{\nu\nu'} \rd_{\nu} (\mathfrak k)_{\mu\mu'} \rd^2_{t\nu'} h^{\hbox{\Rightscissors}}_{\alp'\bt'} )\ra \\
&\:+  \la (2 g_0^{\alp\alp'} g_0^{\bt\bt'} - g_0^{\alp\bt} g_0^{\alp'\bt'}) \rd_\gamma h^{\hbox{\Rightscissors}}_{\alp\bt}, A (g_0^{\mu\mu'} g_0^{\nu\nu'}  Q_{t\mu'}((\mathfrak k)_{\mu\nu}, \rd_{\nu'} h^{\hbox{\Rightscissors}}_{\alp'\bt'}) )\ra + o(1)\\
= &\: \f 12 \la g_0^{\mu\mu'} (2 g_0^{\alp\alp'} g_0^{\bt\bt'} - g_0^{\alp\bt} g_0^{\alp'\bt'})  \rd_\gamma A^*h^{\hbox{\Rightscissors}}_{\alp\bt},  Q_0( (\mathfrak k)_{\mu\mu'}, \rd_{t} h^{\hbox{\Rightscissors}}_{\alp'\bt'}) \ra \\
&\:+  \la g_0^{\mu\mu'} g_0^{\nu\nu'} (2 g_0^{\alp\alp'} g_0^{\bt\bt'} - g_0^{\alp\bt} g_0^{\alp'\bt'}) \rd_\gamma A^*h^{\hbox{\Rightscissors}}_{\alp\bt},  Q_{t\mu'}((\mathfrak k)_{\mu\nu}, \rd_{\nu'} h^{\hbox{\Rightscissors}}_{\alp'\bt'}) \ra + o(1),
\end{split}
\end{equation}
where in the first step we exchanged $\rd_t$ and $\rd_{\mu'}$ at the expense of a null form, in the second step we used the bound for $H(g_0)(\rd \mathfrak k)$ in \eqref{eq:wave.coord.mfk}, and in the third step we noted that the commutation of $A^*$ with $g_0^{\mu\mu'} g_0^{\alp\alp'} g_0^{\bt\bt'}  \rd_\gamma$ and $g_0^{\mu\mu'} g_0^{\nu\nu'} g_0^{\alp\alp'} g_0^{\bt\bt'} \rd_\gamma$ are in $\Psi^{-1}$.

Finally, we claim that the remaining terms are $o(1)$ after using the null form bounds in Proposition~\ref{prop:stupid.trilinear} and Proposition~\ref{prop:main.trilinear}. For the first term, note that Proposition~\ref{prop:stupid.trilinear} allows us to put $A^*h_{\alp\bt},\,\mathfrak k_{\mu\mu'} \in X_\lambda^2(g_0)$ and $\rd_{\nu'} h_{\alp'\bt'} \in X_\lambda^\infty(g_0)$ so that 
\begin{equation*}
\begin{split}
&\: \Big| \la g_0^{\mu\mu'} g_0^{\alp\alp'} g_0^{\bt\bt'}  \rd_\gamma A^*h^{\hbox{\Rightscissors}}_{\alp\bt},  Q_0( (\mathfrak k)_{\mu\mu'}, \rd_{t} h^{\hbox{\Rightscissors}}_{\alp'\bt'}) \ra  \Big| \\
\ls &\: \lambda \|\rd_{t} h^{\hbox{\Rightscissors}} \|_{X^\i_\lambda(g_0)} \| A^*h^{\hbox{\Rightscissors}} \|_{X^2_\lambda(g_0)} \| \mathfrak k \|_{X^2_\lambda(g_0)} \ls \lambda \cdot \lambda^{-1.01} \cdot 1 \cdot \lambda^b = \lambda^{b-0.01} = o(1)
\end{split}
\end{equation*}
since $b\geq \f {29}{30}$ by \eqref{eq:b.range}. In the penultimate step above, we used that $A^*:L^2\to L^2$ is a bounded operator and applied the estimates in Lemma~\ref{lem:hcut.est} and \eqref{eq:mfk.basic.est}. The second term can be treated similarly except for using Proposition~\ref{prop:main.trilinear} instead so that we have
\begin{equation*}
\begin{split}
&\: \Big| \la g_0^{\mu\mu'} g_0^{\nu\nu'} g_0^{\alp\alp'} g_0^{\bt\bt'} \rd_\gamma A^*h^{\hbox{\Rightscissors}}_{\alp\bt},  Q_{t\mu'}((\mathfrak k)_{\mu\nu}, \rd_{\nu'} h^{\hbox{\Rightscissors}}_{\alp'\bt'}) \ra  \Big| \\
\ls &\: \lambda^{\f 1{15}} \|\rd h^{\hbox{\Rightscissors}} \|_{X^\i_\lambda(g_0)} \| A^*h^{\hbox{\Rightscissors}} \|_{X^2_\lambda(g_0)} \| \mathfrak k \|_{X^2_\lambda(g_0)} \ls \lambda^{\f 1{15}} \cdot \lambda^{-1.01} \cdot 1 \cdot \lambda^b = \lambda^{b-1.01+\f 1{15}} = o(1)
\end{split}
\end{equation*}
since (by \eqref{eq:b.range}) $b - 1.01+\f 1{15} >0$. \qedhere
\end{proof}

\subsubsection{Putting everything together}\label{sec:quasilinear.everything}

\begin{proof}[Proof of Proposition~\ref{prop:quasilinear}]
This is an immediate consequence of the combination of Lemma~\ref{lem:cut.off.high.freq.quasilinear}, Proposition~\ref{prop:quasilinear.1}, Proposition~\ref{prop:quasilinear.2} and Proposition~\ref{prop:quasilinear.3}. \qedhere
\end{proof}

\subsection{The linear terms $L_{\mu\nu}(g_0)(\rd h)$}

The goal of this subsection is the following bilinear estimate:
\begin{proposition}\label{prop:linear}
\begin{equation}\label{eq:linear.main}
\begin{split}
&\: \lim_{\lambda\to 0} \f 12 \Big\la (2 g_0^{\alp\alp'} g_0^{\bt\bt'} - g_0^{\alp\bt} g_0^{\alp'\bt'}) \rd_{t} h_{\alp\bt} ,A L_{\alp'\bt'}(g_0)(\rd h) \Big\ra \\
= &\: -\int_{S^* \RR^{d+1}} \widetilde{a}\partial_{\sigma'} \Big(g_0^{\alpha \alp'} g_0^{\beta\bt'} - \f 12 g_0^{\alp\bt} g_0^{\alp'\bt'} \Big) g_0^{\sigma\sigma'} \xi_\sigma \, \ud
\mu_{\alp\bt\alp'\bt'}.
\end{split}
\end{equation}
\end{proposition}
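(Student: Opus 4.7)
The plan is to convert the left-hand side of \eqref{eq:linear.main} into an integral against the tensorial measure $\mu_{\alp\bt\alp'\bt'}$ via Definition~\ref{def:mu.abrs}, and then identify it algebraically with the right-hand side.

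First, expanding $L_{\alp'\bt'}(g_0)(\rd h)$ via \eqref{eq:L.def}, the left-hand side becomes a finite sum of terms of the form $\la \rd_t h_{\alp\bt}, A(c(x)\rd_\sigma h_{\rho\tau})\ra$, where $c$ is a smooth function built from $g_0^{-1}$, $\rd g_0$, and the outer factor $G^{\alp\alp'\bt\bt'} \doteq 2 g_0^{\alp\alp'} g_0^{\bt\bt'} - g_0^{\alp\bt} g_0^{\alp'\bt'}$. I pull each such smooth coefficient past $A$: writing $A(c\,\cdot) = (cA)\cdot + [A,c]\cdot$ with $[A,c]\in\Psi^{-1}$, the commutator $[A,c]$ is compact as an operator $L^2 \to L^2_{\mathrm{loc}}$ (boundedness $L^2\to H^1_{\mathrm{loc}}$ composed with Rellich). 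Since $\rd h\rightharpoonup 0$ weakly in $L^2$ (as $h\to 0$ strongly in $L^2$), applying $[A,c]$ yields a sequence converging strongly to zero in $L^2_{\mathrm{loc}}$, and its pairing with the $L^2$-bounded $\rd_t h$ is $o(1)$. The remaining main pairing $\la \rd_t h_{\alp\bt}, (cA)\rd_\sigma h_{\rho\tau}\ra$ involves $cA\in\Psi^0$ with real, positively $0$-homogeneous principal symbol $c\cdot a$, and by Definition~\ref{def:mu.abrs}
\begin{equation*}
\la \rd_t h_{\alp\bt}, (cA)\rd_\sigma h_{\rho\tau}\ra \to \int_{S^*U} c\, a\, \xi_t\, \xi_\sigma\, \ud \mu_{\alp\bt\rho\tau}.
\end{equation*}

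Summing all contributions from \eqref{eq:L.def}, the left-hand side of \eqref{eq:linear.main} converges to $\int_{S^*U} a\, \xi_t\, \mathcal I\, \ud\mu_{\alp\bt\alp'\bt'}$ for a specific integrand $\mathcal I$ cubic in $g_0^{-1}$, linear in $\rd g_0$, and linear in $\xi$. The right-hand side has the same structural form once $\rd g_0^{-1} = -g_0^{-1}(\rd g_0)g_0^{-1}$ is used. Matching the two then reduces to a pointwise algebraic identity, which I verify using (i) the symmetries $\mu_{\alp\bt\alp'\bt'} = \mu_{\bt\alp\alp'\bt'} = \mu_{\alp'\bt'\alp\bt}$ of Remark~\ref{rmk:existence.of.mu}(2), and (ii) the wave-coordinate trace identity $g_0^{\alp\bt}\xi_\alp\, \mu_{\bt\gamma\sigma\nu} = \tfrac 12 g_0^{\alp\bt}\xi_\gamma\, \mu_{\alp\bt\sigma\nu}$ of Lemma~\ref{lem:wave.coord.for.MLDM}.

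The main obstacle is this final algebraic matching. Splitting each $\Gamma_\rho{}^{\gamma}{}_{\alp'}$ into a divergence-type combination involving $\rd_\rho(g_0)_{\cdot\alp'}-\rd_{(\cdot)}(g_0)_{\rho\alp'}$ and a symmetric-derivative term $\rd_{\alp'}(g_0)_{\rho\cdot}$, and organizing $D_{\alp'}^{\gamma\sigma}$ analogously, the divergence-type pieces, after contraction with $g_0^{\sigma\rho}\xi_\sigma$, produce traces $g_0^{\cdot\cdot}\xi_{\cdot}\mu_{\cdot\cdot\cdot\cdot}$ which are absorbed by Lemma~\ref{lem:wave.coord.for.MLDM} against the $-g_0^{\alp\bt} g_0^{\alp'\bt'}$ piece of $G^{\alp\alp'\bt\bt'}$; the symmetric-derivative pieces reassemble into exactly $-\rd_{\sigma'}\!\big(g_0^{\alp\alp'} g_0^{\bt\bt'} - \tfrac 12 g_0^{\alp\bt} g_0^{\alp'\bt'}\big)\, g_0^{\sigma\sigma'}\xi_\sigma$. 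The specific linear combination defining $\mu$ in \eqref{eq:mu.def} is essential for these cancellations to occur. The bookkeeping is lengthy but not conceptually subtle.
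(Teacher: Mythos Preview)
Your overall strategy --- pass to the microlocal defect measure via Definition~\ref{def:mu.abrs} and reduce to a pointwise algebraic identity using the symmetries \eqref{eq:mu.obvious.symmetries} and Lemma~\ref{lem:wave.coord.for.MLDM} --- is correct and is exactly what the paper does. The paper's execution, however, is organized differently and more transparently: it first isolates the entire $D$-contribution of $L$ and shows it vanishes by the wave-coordinate relation (Lemma~\ref{lem:linear.wave.coord.cancel}, which uses Lemma~\ref{lem:wave.coord.for.MLDM} and is insensitive to the precise form of $D$), and then handles the $\Gamma$-contribution using \emph{only} the symmetries \eqref{eq:mu.obvious.symmetries} of $\mu$, with no further appeal to the wave-coordinate condition. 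The surviving piece of the Christoffel symbol in that computation is $\partial_{\sigma'}(g_0)_{\rho'\alpha}$, i.e., the term whose derivative points in the direction contracted against $\xi$.

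Your sketch of the final algebra has a genuine gap. You claim the ``divergence-type'' pieces of $\Gamma$, after contraction with $g_0^{\sigma\rho}\xi_\sigma$, produce traces $g_0^{\cdot\cdot}\xi_{\cdot}\mu_{\cdot\cdot\cdot\cdot}$ to which Lemma~\ref{lem:wave.coord.for.MLDM} applies. But that lemma requires a $\xi$ contracted (via $g_0^{-1}$) into an index that sits \emph{on} $\mu$; in the $\Gamma$-term $4g_0^{\sigma\rho}\Gamma_{\rho}{}^{\gamma}{}_{(\alpha'|}\partial_\sigma h_{|\beta')\gamma}$ the $\xi$ from $\partial_\sigma h$ lands on the first lower index $\rho$ of $\Gamma$, which never enters $\mu$ (the $\mu$-indices come from the components of $h$). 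So no such trace arises and the claimed absorption does not occur. Correspondingly, your ``symmetric-derivative'' piece $\partial_{\alpha'}(g_0)_{\rho\gamma'}$ produces a contribution of the schematic form $\xi_{\gamma'}\,(\partial_{\alpha'} g_0^{\gamma\gamma'})\,\mu_{\cdots\gamma}$, where the $\xi$ is contracted into $\partial g_0^{-1}$, not into $g_0^{-1}$ followed by $\partial_{\sigma'}$ as on the right-hand side; it does not reassemble into the target without further (unstated) manipulations. The correct mechanism for the $\Gamma$-part is the symmetry cancellation in the paper's computation of $I$ and $II$.
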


In order to prove Proposition~\ref{prop:linear}, it is useful to first establish a lemma which relies on the wave coordinate condition.
\begin{lemma}\label{lem:linear.wave.coord.cancel}
\begin{equation}\label{eq:linear.wave.coord.cancel}
\lim_{\lambda \to 0}  \Big( \la g_0^{\alp\alp'} g_0^{\bt\bt'} \rd_t h_{\alp'\bt'}, A D^{\sigma\sigma'}_{(\alp|} \rd_{|\bt)} h_{\sigma\sigma'} \ra - \f 12 \la g_0^{\alp\alp'} g_0^{\bt\bt'} \rd_t h_{\alp\alp'}, A D^{\sigma\sigma'}_{(\bt|} \rd_{|\bt')} h_{\sigma\sigma'} \ra  \Big) = 0 .
\end{equation}
\end{lemma}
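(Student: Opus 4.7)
The strategy is to show that the two terms in the difference \eqref{eq:linear.wave.coord.cancel} agree up to $o(1)$. The underlying mechanism is the (rewritten) generalized wave coordinate condition \eqref{eq:basic.bds.for.WC}, which says $g_0^{\bt\bt'} \rd_\bt h_{\alp'\bt'} = \f 12 g_0^{\bt\bt'} \rd_{\alp'} h_{\bt\bt'} + O(\lambda^\eta)$ pointwise. By converting the derivatives on $h$ in one term into those of the other via this identity, together with an integration by parts, I expect an exact algebraic match.

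Concretely, I would first exploit the symmetries of $g_0^{\alp\alp'} g_0^{\bt\bt'}$ to strip away the symmetrizations: the first term reduces to $\la g_0^{\alp\alp'} g_0^{\bt\bt'} \rd_t h_{\alp'\bt'}, A D^{\sigma\sigma'}_\alp \rd_\bt h_{\sigma\sigma'}\ra$ (via $\alp \leftrightarrow \bt$, $\alp' \leftrightarrow \bt'$) and the second to $\f 12 \la g_0^{\alp\alp'} g_0^{\bt\bt'} \rd_t h_{\alp\alp'}, A D^{\sigma\sigma'}_\bt \rd_{\bt'} h_{\sigma\sigma'}\ra$ (via $\bt \leftrightarrow \bt'$). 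In the first expression, I would then integrate $\rd_\bt$ by parts so that it falls on the left factor: the commutator $[\rd_\bt, A^*] \in \Psi^0$ paired with $h \to 0$ in $L^2$ is $o(1)$, and any derivative of $g_0$ yields $\rd g_0 \cdot \rd h$ paired against $A D^{\sigma\sigma'}_\alp h = O(\lambda)$ in $L^2$, also $o(1)$. I am left with $-\la g_0^{\alp\alp'} \rd_t\bigl(g_0^{\bt\bt'} \rd_\bt h_{\alp'\bt'}\bigr), A D^{\sigma\sigma'}_\alp h_{\sigma\sigma'}\ra + o(1)$. Substituting the wave coordinate identity inside the parenthesis, integrating $\rd_{\alp'}$ back by parts (with similar commutator/derivative-of-coefficient errors being $o(1)$), and finally relabeling $\alp \leftrightarrow \bt$ and $\alp' \leftrightarrow \bt'$, I arrive at the second term exactly.

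The main obstacle is controlling $\rd_t$ of the wave coordinate error. Defining $F_{\alp'} \doteq g_0^{\bt\bt'}\rd_\bt h_{\alp'\bt'} - \f 12 g_0^{\bt\bt'} \rd_{\alp'} h_{\bt\bt'}$, I can write (as in the derivation of \eqref{eq:basic.bds.for.WC}) $F_{\alp'}$ in terms of $g_{0,\alp'\rho}\bigl(H^\rho(g_\lambda) - H_0^\rho\bigr)$ together with terms of the form $(g_\lambda^{-1} - g_0^{-1})\cdot \rd g_\lambda$ whose time derivative is $O(1)$ pointwise by \eqref{eq:basic.bds.for.Ric}. Combined with the hypothesis $|\rd(H^\rho - H_0^\rho)| \ls 1$ from \eqref{eq:basic.bds.for.H}, this yields $\|\rd_t F_{\alp'}\|_{L^\infty} \ls 1$. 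This is only $O(1)$, not $o(1)$, but that is enough: when one pairs $g_0^{\alp\alp'} \rd_t F_{\alp'}$ against $A D^{\sigma\sigma'}_\alp h_{\sigma\sigma'}$ by Cauchy--Schwarz on the compact support, one gains $\|h\|_{L^2} = O(\lambda) = o(1)$. This is precisely the step where the improved assumption $|\rd H^\alp_n| \ls 1$ from Remark~\ref{rmk:improved.assumption} enters; a cruder pointwise bound on $F$ alone would not suffice.
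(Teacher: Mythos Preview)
Your argument is correct, but it takes a different route from the paper. The paper's proof is a one-liner at the measure level: it passes to the $\lambda\to 0$ limit immediately using the definition of $\mu_{\alp\bt\rho\sigma}$ (Definition~\ref{def:mu.abrs}), writing
\[
\mbox{LHS of \eqref{eq:linear.wave.coord.cancel}} = \int_{S^*\RR^{d+1}} g_0^{\alp\alp'} g_0^{\bt\bt'} a\, D^{\sigma\sigma'}_{\bt}\, \xi_t \Big( \xi_\alp\,\ud \mu_{\alp'\bt'\sigma\sigma'} - \tfrac 12 \xi_{\bt'}\, \ud \mu_{\alp\alp'\sigma\sigma'} \Big),
\]
and then invokes Lemma~\ref{lem:wave.coord.for.MLDM} (the wave-coordinate identity for the measure) to see that the parenthesis vanishes. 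No integration by parts, no commutators, no bound on $\rd_t F$ is needed; the exact form of $D$ is irrelevant.

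Your approach instead works at the level of $h$ before passing to the limit: you integrate $\rd_\bt$ by parts, apply the pointwise identity \eqref{eq:basic.bds.for.WC}, and integrate $\rd_{\alp'}$ back. This is correct, and your handling of the error $\rd_t F_{\alp'}$ is fine (you correctly note it is $O(1)$ pointwise, which suffices when paired against $A D h = O(\lambda)$ in $L^2$). The cost is that your argument consumes the extra hypothesis $|\rd(H-H_0)|\ls 1$ from \eqref{eq:basic.bds.for.H}, whereas the paper's measure-level argument only needs $|H-H_0|\ls \lambda^\eta$. So your route is more hands-on and slightly less economical in hypotheses, while the paper's route exploits the fact that the wave-coordinate cancellation has already been encoded once and for all into the measure via Lemma~\ref{lem:wave.coord.for.MLDM}.
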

\begin{proof}
Using the definition of the microlocal defect measure, we have
$$\mbox{LHS of \eqref{eq:linear.wave.coord.cancel}} = \int_{S^*\RR^{d+1}} g_0^{\alp\alp'} g_0^{\bt\bt'} a D^{\sigma\sigma'}_{\bt} \xi_t \,\Big( \xi_\alp\ud \mu_{\alp'\bt'\sigma\sigma'} - \f 12 \xi_{\bt'}  \ud \mu_{\alp\alp'\sigma\sigma'} \Big), $$
which vanishes by Lemma~\ref{lem:wave.coord.for.MLDM}. (We remark that the exact form of $D$ does not play any role here.) \qedhere
\end{proof}

We now return to the proof of Proposition~\ref{prop:linear}:
\begin{proof}[Proof of Proposition~\ref{prop:linear}]
To simplify the notations, we write $\Gamma_{\sigma}{}^{\bt}{}_{\alp} = \Gamma_{\sigma}{}^{\bt}{}_{\alp}(g_0)$.

Recalling the definition of $L$ in \eqref{eq:L.def} and using Lemma~\ref{lem:linear.wave.coord.cancel} to handle the term involving $D$, we have
\begin{equation}\label{eq:linear.term.prelim}
\begin{split}
&\: \mbox{LHS of \eqref{eq:linear.main}} \\
= &\: \lim_{\lambda\to 0} \Big( 4 \la g_0^{\alp\alp'} g_0^{\bt\bt'}  \rd_t h_{\alp'\bt'},A g_0^{\sigma\sigma'} \Gamma_{\sigma'}{}^{\rho}{}_{(\alp|}\rd_{\sigma} h_{|\bt)\rho} \ra - 2  \la g_0^{\alp\alp'} g_0^{\bt\bt'} \rd_t h_{\alp\alp'},A g_0^{\sigma\sigma'} \Gamma_{\sigma'}{}^{\rho}{}_{(\bt|}\rd_{\sigma} h_{|\bt')\rho} \ra \Big) \doteq I + II. 
\end{split}
\end{equation}

We compute
\begin{equation}\label{eq:linear.term.I}
\begin{split}
I = \quad &\:\lim_{\lambda \to 0} 4 \la g_0^{\alp\alp'} g_0^{\bt\bt'}  \rd_t h_{\alp'\bt'},A g_0^{\sigma\sigma'} \Gamma_{\sigma'}{}^{\rho}{}_{(\alp|}\rd_{\sigma} h_{|\bt)\rho} \ra \\
\stackrel{\mathclap{\tiny\mbox{Def~\ref{def:mu.abrs}}}}{=}\quad&\:4\int_{S^* \RR^{d+1}} a \xi_t \xi_{\sigma} g_0^{\alpha \alp' }g_0^{\bt\bt'}g^{\sigma\sigma'}_0 \Gamma_{\sigma'}{}^{\rho}{}_{(\alp|} \,\ud\mu_{\alp'\beta'|\bt)\rho}\\
=\quad &\:2 \int_{S^* \RR^{d+1}}  a \xi_t \xi_{\sigma} g_0^{\alpha \alp' }g_0^{\bt\bt'}g^{\sigma\sigma'}_0 g_0^{\rho\rho'} (\partial_{\sigma'}(g_0)_{\rho'(\alpha|}
+\partial_{(\alpha|}(g_0)_{\sigma' \rho'}-\partial_{\rho'} (g_0)_{\sigma'(\alpha|}) \,\ud\mu_{\alp'\beta'|\bt)\rho} \\
\stackrel{\mathclap{\tiny\mbox{\eqref{eq:mu.obvious.symmetries}}}}{=}\quad &\:2 \int_{S^* \RR^{d+1}}  a \xi_t \xi_{\sigma} g_0^{\alpha \alp' }g_0^{\bt\bt'}g^{\sigma\sigma'}_0 g_0^{\rho\rho'} (\partial_{\sigma'}(g_0)_{\rho'(\alpha|})\,\ud\mu_{\alp'\beta'|\bt)\rho} \\
=\quad &\:-\int_{S^* \RR^{d+1}}  a \xi_t \xi_{\sigma} g_0^{\bt\bt'}g^{\sigma\sigma'}_0  (\partial_{\sigma'}g_0^{\alp\alp'})\,\ud\mu_{\alp'\bt'\alp\bt}
-\int_{S^* \RR^{d+1}}  a \xi_t \xi_{\sigma} g_0^{\alpha \alp' }g^{\sigma\sigma'}_0  (\partial_{\sigma'}g_0^{\bt\bt'})\,\ud\mu_{\alp'\beta'\alp\bt} \\
=\quad &\:-\int_{S^* \RR^{d+1}} a\partial_{\sigma'} \left(g_0^{\alpha \alp'} g_0^{\beta\bt'}\right) g_0^{\sigma\sigma'} \xi_t\xi_\sigma \, \ud
\mu_{\alp'\bt'\alp\bt}.
\end{split}
\end{equation}

We turn to term $II$. A completely analogous computation shows that
\begin{equation}\label{eq:linear.term.II}
\begin{split}
II = &\:- 2 \lim_{\lambda \to 0}  \la g_0^{\alp\alp'} g_0^{\bt\bt'} \rd_t h_{\alp\alp'},A g_0^{\sigma\sigma'} \Gamma_{\sigma'}{}^{\rho}{}_{(\bt|}\rd_{\sigma} h_{|\bt')\rho} \ra\\
=&\:- 2\int_{S^* \RR^{d+1}} a \xi_t \xi_{\sigma} g_0^{\alp\alp'} g_0^{\bt\bt'} g_0^{\sigma\sigma'}\Gamma_{\sigma'}{}^{\rho}{}_{(\bt|}\, \ud\mu_{\alp\alp' |\beta')\rho}\\
=&\:- \int_{S^* \RR^{d+1}}  a \xi_t \xi_{\sigma} g_0^{\alp\alp'} g_0^{\bt\bt'} g_0^{\sigma\sigma'} g_0^{\rho\rho'} (\partial_{\sigma'}(g_0)_{\rho'(\bt|}
+\partial_{(\bt|} (g_0)_{\sigma' \rho'}-\partial_{\rho'} (g_0)_{\sigma'(\bt|})
\, \ud\mu_{\alp\alp' |\beta')\rho}\\
=&\:-\int_{S^* \RR^{d+1}}  a \xi_t \xi_{\sigma} g_0^{\alp\alp'} g_0^{\bt\bt'} g_0^{\sigma\sigma'} g_0^{\rho\rho'} (\partial_{\sigma'}(g_0)_{\rho'(\bt|}
)
\, \ud\mu_{\alp\alp' |\beta')\rho} \\
=&\: \int_{S^* \RR^{d+1}}  a \xi_t \xi_{\sigma} g_0^{\alp\alp'}  g_0^{\sigma\sigma'}  (\partial_{\sigma'}g_0^{\bt\bt'}
)
\, \ud\mu_{\alp\alp' \bt\beta'}
=  \f 12\int_{S^* \RR^{d+1}}  a \xi_t \xi_{\sigma}   g_0^{\sigma\sigma'}  \partial_{\sigma'}(g_0^{\alp\alp'}g_0^{\bt\bt'})
\, \ud\mu_{\alp\alp' \bt\beta'}.
\end{split}
\end{equation}

Plugging \eqref{eq:linear.term.I} and \eqref{eq:linear.term.II} into \eqref{eq:linear.term.prelim} (and relabelling the indices) yields the desired conclusion. \qedhere
\end{proof}

\subsection{The quadratic terms $Q(g_0)(\rd h,\rd h)$ and $P(g_0)(\rd h,\rd h)$}

\begin{proposition}\label{prop:null.form}
\begin{equation}
\lim_{\lambda\to 0}  \Big\la (2 g_0^{\alp\alp'} g_0^{\bt\bt'} - g_0^{\alp\bt} g_0^{\alp'\bt'}) \rd_{t} h_{\alp\bt}, A\Big(Q_{\alp'\bt'}(g_0)(\rd h, \rd h)\Big) \Big\ra = 0.
\end{equation}
\end{proposition}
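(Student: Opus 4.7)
By Lemma~\ref{lem:Ric.B}, $Q_{\alpha'\beta'}(g_0)(\rd h, \rd h)$ is a finite linear combination of contractions of the form $g_0^{\cdots}\mathfrak{Q}(h_{\cdot\cdot}, h_{\cdot\cdot})$ with $\mathfrak{Q} \in \{\mathfrak{Q}_0^{(g_0)}, \mathfrak{Q}_{\mu\nu}\}$. After absorbing the smooth $g_0$ factors sitting outside of $\mathfrak{Q}$ into a single smooth compactly supported coefficient $f$, the problem reduces to showing that for each $A \in \Psi^0$ with real principal symbol and each of the two null form types,
\[
\lim_{\lambda\to 0}\la f\rd_t h_{ab}, A(\mathfrak{Q}(h_{cd}, h_{ef}))\ra = 0.
\]

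The plan is to transfer $A$ to the left slot via duality and decompose using pseudo-differential commutator identities so as to reduce to the trilinear estimates of Section~\ref{sec:trilinear}. I would write $\la f\rd_t h, AX\ra = \la A^*(f\rd_t h), X\ra$ together with
\[
A^*(f\rd_t h_{ab}) = f\rd_t(A^* h_{ab}) + f[A^*, \rd_t]h_{ab} + [A^*, f]\rd_t h_{ab}.
\]
The two commutator pieces should give $o(1)$ contributions by Cauchy--Schwarz: $[A^*, \rd_t] \in \Psi^0$ together with $\|h\|_{L^2} \ls \lambda$ yields $\|f[A^*, \rd_t]h\|_{L^2} \ls \lambda$, while $[A^*, f] \in \Psi^{-1}: H^{-1}\to L^2$ together with $\|\rd_t h\|_{H^{-1}} \ls \|h\|_{L^2} \ls \lambda$ yields $\|[A^*, f]\rd_t h\|_{L^2} \ls \lambda$; on the other hand $\|\mathfrak{Q}(h,h)\|_{L^2} \ls \|\rd h\|_{L^\infty}\|\rd h\|_{L^2} \ls 1$ by \eqref{eq:basic.bds.for.Ric}, so both pair to $O(\lambda) = o(1)$.

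For the remaining main piece $\la f\rd_t(A^* h_{ab}), \mathfrak{Q}(h_{cd}, h_{ef})\ra$, I would apply Proposition~\ref{prop:stupid.trilinear} when $\mathfrak{Q} = \mathfrak{Q}_0^{(g_0)}$ and Proposition~\ref{prop:main.trilinear} when $\mathfrak{Q} = \mathfrak{Q}_{\mu\nu}$, taking $\phi^{(3)} = A^* h_{ab}$ and $\phi^{(1)} = h_{cd}$, $\phi^{(2)} = h_{ef}$. The required bounds $\|h\|_{X^\infty_\lambda(g_0)}, \|h\|_{X^2_\lambda(g_0)} \ls 1$ are immediate from \eqref{eq:basic.bds.for.Ric} and \eqref{eq:Boxh.bound}. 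For $\|A^* h\|_{X^2_\lambda(g_0)} \ls 1$, I would use $L^2$-boundedness of $A^*$ and the fact that $[\rd^k, A^*] \in \Psi^{k-1}$ to control the derivative pieces, together with $[\Box_{g_0}, A^*] \in \Psi^1$ to estimate $\|\Box_{g_0}(A^* h)\|_{L^2} \ls \|\Box_{g_0} h\|_{L^2} + \|h\|_{H^1} \ls 1$. The two propositions then deliver bounds of order $\lambda$ and $\lambda^{1/15}$, respectively, both $o(1)$. One small technical nuisance is that $A^* h$ is not compactly supported, whereas the trilinear estimates require compactly supported inputs; I would handle this by inserting a smooth spatial cutoff $\chi$ equal to $1$ on the support of $h$ and invoking pseudo-locality of $A^*$ to bound $(1-\chi)A^* h$ by $O(\lambda^N)$ in every $C^k$ norm.

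The main conceptual step is the duality transfer and the above three-piece commutator decomposition; once these are performed, no genuinely new trilinear analysis is required, as the null structure is exploited by the existing estimates of Section~\ref{sec:trilinear}. In this sense there is no real obstacle specific to the $Q_{\mu\nu}$ null form contribution, in sharp contrast to the $P_{\mu\nu}$ term of Proposition~\ref{prop:main.reduced}, which violates the classical null condition and whose treatment is the genuinely delicate part of the transport equation proof.
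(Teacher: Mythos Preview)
Your proposal is correct and follows essentially the same route as the paper: transfer $A$ across the pairing using $A-A^*,\,[A,g_0]\in\Psi^{-1}$ and $[A,\rd]\in\Psi^0$, then invoke the trilinear null-form estimates (Propositions~\ref{prop:stupid.trilinear} and \ref{prop:main.trilinear}) with $\phi^{(3)}=A^*h$ (the paper uses $Ah$) to obtain the $\lambda^{1/15}$ gain. Your treatment is in fact slightly more explicit than the paper's in two places: you spell out the commutator bookkeeping term-by-term, and you flag the compact-support issue for $A^*h$ and resolve it by pseudo-locality, a point the paper leaves implicit. One small remark: when you ``absorb the smooth $g_0$ factors into $f$,'' note that the $g_0$ coefficients from $Q_{\alpha'\beta'}$ in \eqref{eq:Q.def} sit \emph{inside} the argument of $A$, so strictly speaking you either commute them out via $[A,g_0]\in\Psi^{-1}$ (the error pairs to $O(\lambda)$ since $\|\mathfrak Q(h,h)\|_{H^{-1}}\lesssim\lambda$) or, more cleanly, absorb them into $A$ itself by replacing $A$ with $A\circ M_{g_0^{\cdots}}\in\Psi^0$.
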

\begin{proof}
First using that $A-A^*,\,[A,g_0] \in \Psi^{-1}$ and $[A,\rd]\in \Psi^0$ and the usual bounds for $h$, we have
\begin{equation*}
\begin{split}
&\: \Big\la (2 g_0^{\alp\alp'} g_0^{\bt\bt'} - g_0^{\alp\bt} g_0^{\alp'\bt'}) \rd_{t} h_{\alp\bt}, A\Big(Q_{\alp'\bt'}(g_0)(\rd h, \rd h)\Big) \Big\ra \\
= &\: \Big\la (2 g_0^{\alp\alp'} g_0^{\bt\bt'} - g_0^{\alp\bt} g_0^{\alp'\bt'}) \rd_{t} A h_{\alp\bt}, \Big(Q_{\alp'\bt'}(g_0)(\rd h, \rd h)\Big) \Big\ra + o(1).
\end{split}
\end{equation*}
Recall from \eqref{eq:Q.def} that $Q(g_0)(\rd h,\rd h)$ consists of a linear combination of null forms, thus after using Proposition~\ref{prop:stupid.trilinear} and Proposition~\ref{prop:main.trilinear}, we obtain
\begin{equation*}
\begin{split}
&\:\Big| \Big\la (2 g_0^{\alp\alp'} g_0^{\bt\bt'} - g_0^{\alp\bt} g_0^{\alp'\bt'}) \rd_{t} A h_{\alp\bt}, \Big(Q_{\alp'\bt'}(g_0)(\rd h, \rd h)\Big) \Big\ra \Big|\\
\ls &\: \lambda^{\f 1{15}} \| Ah\|_{X^2_\lambda(g_0)} \| h\|_{X^2_\lambda(g_0)} \| h\|_{X^\infty_\lambda(g_0)} \ls \lambda^{\f 1{15}} \| h\|_{X^2_\lambda(g_0)}^2  \| h\|_{X^\infty_\lambda(g_0)} \ls \lambda^{\f 1{15}} = o(1).
\end{split}
\end{equation*}
\qedhere
\end{proof}

\begin{proposition}\label{prop:quadratic}
\begin{equation}
\begin{split}
\lim_{\lambda\to 0}  \Big\la (2 g_0^{\alp\alp'} g_0^{\bt\bt'} - g_0^{\alp\bt} g_0^{\alp'\bt'}) \rd_{t} h_{\alp\bt}, A\Big(P_{\alp'\bt'}(g_0)(\rd h, \rd h)\Big) \Big\ra = 0.
\end{split}
\end{equation}
\end{proposition}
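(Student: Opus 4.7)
The plan is to exploit the trace-reversing effect of the combination $M^{\alp\bt\alp'\bt'} \doteq 2g_0^{\alp\alp'}g_0^{\bt\bt'} - g_0^{\alp\bt}g_0^{\alp'\bt'}$ together with the wave coordinate condition to reveal a hidden trilinear null structure inside $P_{\alp'\bt'}$, and then to apply the trilinear estimates of Section~\ref{sec:trilinear}. Since $M^{\alp\bt\alp'\bt'} S_{\alp\bt} = 2 S^{\alp'\bt'} - g_0^{\alp'\bt'}\mathrm{tr}_{g_0}(S)$ for symmetric $S$, and since commuting smooth $g_0$-factors across $A$ produces operators in $\Psi^{-1}$ contributing only $o(1)$, I would first reduce to analyzing $2\la \rd_t h^{\alp'\bt'}, A P_{\alp'\bt'}\ra$ and $\la \rd_t \mathrm{tr}(h), A(g_0^{\alp'\bt'} P_{\alp'\bt'})\ra$. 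A direct computation yields
\[
g_0^{\alp'\bt'} P_{\alp'\bt'}(g_0)(\rd h,\rd h) = \f 14 \mathfrak Q_0^{(g_0)}(\mathrm{tr}(h),\mathrm{tr}(h)) - \f 12 g_0^{\rho\rho'}g_0^{\sigma\sigma'}\mathfrak Q_0^{(g_0)}(h_{\rho\sigma}, h_{\rho'\sigma'}) + O(\lambda),
\]
so Proposition~\ref{prop:stupid.trilinear} immediately kills the trace contribution.

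For the remaining piece $2\la \rd_t h^{\alp'\bt'}, A P_{\alp'\bt'}\ra$, I would rewrite the wave coordinate condition \eqref{eq:basic.bds.for.WC} as $\rd_\mu \mathrm{tr}(h) = 2 g_0^{\alp\alp'}\rd_\alp h_{\mu\alp'} + O(\lambda^\eta)$ and apply it to each of the two trace-derivatives inside the first summand of $P_{\alp'\bt'}$, converting them into divergence expressions. Then I would combine both summands of $P$ using the identity
\[
g_0^{\sigma\sigma'}\rd_{\alp'} h_{\sigma\rho} = g_0^{\sigma\sigma'}\rd_\sigma h_{\alp'\rho} + g_0^{\sigma\sigma'}(\rd_{\alp'} h_{\sigma\rho} - \rd_\sigma h_{\alp'\rho})
\]
(and its analog in $\rho$). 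Each ``divergence'' piece can be fed into the wave coordinate condition a second time to manufacture a $\mathfrak Q_0^{(g_0)}$-factor, while the ``antisymmetrized'' pieces, read component-wise as scalar products, are $\mathfrak Q_{\mu\nu}$-null forms applied to two components of $h$. The specific coefficients $\f 14,\,-\f 12$ in $P$, together with the $\alp'\leftrightarrow\bt'$-symmetry from $\rd_t h^{\alp'\bt'}$ and the trace-reverse contraction from $M$, should conspire so that all non-null residual terms cancel -- this is exactly the ``much more subtle trilinear structure'' promised in the introduction, present only for the specific combination defining $\mu$ in \eqref{eq:mu.def} and not for the individual $\mu_{\alp\bt\rho\sigma}$.

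After these manipulations, every surviving term has the schematic form $\la \rd h, A(\mathfrak Q(\rd h, \rd h))\ra$ with $\mathfrak Q \in \{\mathfrak Q_0^{(g_0)}, \mathfrak Q_{\mu\nu}\}$. Combined with the uniform bounds $\|h\|_{X^p_\lambda(g_0)} \ls 1$ for $p\in\{2,\infty\}$ coming from Proposition~\ref{prop:h.eqn}, Proposition~\ref{prop:stupid.trilinear} yields a factor of $\lambda$ for the $\mathfrak Q_0^{(g_0)}$-terms and Proposition~\ref{prop:main.trilinear} yields a factor of $\lambda^{1/15}$ for the $\mathfrak Q_{\mu\nu}$-terms, so every remaining contribution is $o(1)$.

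The main obstacle will be the algebraic bookkeeping in the second step: showing that after the wave coordinate condition is applied, the trace-reversed pairing with $P$ decomposes exactly into null forms with no residual non-null piece. Individually, neither the $\f 14 \rd \mathrm{tr}(h)\rd \mathrm{tr}(h)$ nor the $\f 12 \rd h^{\rho\sigma}\rd h_{\rho\sigma}$ summand of $P$ has null structure, so the cancellation is genuinely trilinear and relies on the precise numerical coefficients of the Lindblad--Rodnianski obstructive term $P$ matching those of the trace-reverse inherent in the definition of $\mu$ in \eqref{eq:mu.def}.
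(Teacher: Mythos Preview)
Your treatment of the trace piece $\la \rd_t\mathrm{tr}(h), A(g_0^{\alp'\bt'}P_{\alp'\bt'})\ra$ is fine: the contraction $g_0^{\alp'\bt'}\rd_{\alp'}(\cdot)\rd_{\bt'}(\cdot)$ is a $\mathfrak Q_0^{(g_0)}$ null form directly, and Proposition~\ref{prop:stupid.trilinear} applies.

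The gap is in your handling of $\la \rd_t h^{\alp'\bt'}, A P_{\alp'\bt'}\ra$. Your plan is to manipulate $P_{\alp'\bt'}$ \emph{internally} --- substituting the wave coordinate identity into the first summand and then decomposing the second summand via the ``identity'' $g_0^{\sigma\sigma'}\rd_{\alp'}h_{\sigma\rho} = g_0^{\sigma\sigma'}\rd_\sigma h_{\alp'\rho} + g_0^{\sigma\sigma'}(\rd_{\alp'}h_{\sigma\rho}-\rd_\sigma h_{\alp'\rho})$. But the ``antisymmetrized'' piece $\rd_{\alp'}h_{\sigma\rho}-\rd_\sigma h_{\alp'\rho}$ is a \emph{linear} expression in $\rd h$, not a bilinear null form; multiplying it by $\rd_{\bt'}h_{\rho'\sigma'}$ does not produce any $\mathfrak Q_{\mu\nu}$ structure, since the index $\sigma$ being ``swapped'' is not a derivative index on the second factor. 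Likewise, the ``divergence'' pieces you generate, such as $\rd^\rho h_{\alp'\rho}\,\rd^\sigma h_{\bt'\sigma}$, are not null forms and cannot be turned into one by a second application of the wave coordinate condition without undoing the first. In short, manipulating $P$ by itself cannot reveal null structure, because $P$ genuinely is the non-null part of the nonlinearity.

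The paper's mechanism is different and simpler: the null structure comes from the \emph{contraction between the derivative index $\alp'$ in $P$ and the tensor index on the outer $h$}. Concretely, move $A$ across as $A^*$, then swap $\rd_t$ on $A^*h_{\alp'\bt'}$ with $\rd_\alp$ on one factor of $P$ (producing a $\mathfrak Q_{t\alp}$ error). The main term now contains $g_0^{\alp\alp'}\rd_\alp A^*h_{\alp'\bt'}$, which by the wave coordinate condition (and since commuting $A^*$ with $g_0\rd$ is lower order) equals $\tfrac12 g_0^{\alp\alp'}\rd_{\bt'}A^*h_{\alp\alp'} + o_{L^2}(1)$; the remaining $g_0^{\bt\bt'}\rd_{\bt'}(\cdot)\rd_\bt(\cdot)$ is then a $\mathfrak Q_0^{(g_0)}$. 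This argument works for \emph{each} summand of $P$ and for \emph{each} part $g_0^{\alp\alp'}g_0^{\bt\bt'}$, $g_0^{\alp\bt}g_0^{\alp'\bt'}$ of $M$ separately --- no cancellation between them is needed, and the specific coefficients $\tfrac14,\,-\tfrac12$ in $P$ play no role. Your belief that a delicate numerical conspiracy is required is therefore a misreading: the introduction's remark that the structure fails for individual $\mu_{\alp\bt\rho\sigma}$ refers only to the fact that without \emph{some} metric contraction linking the derivative index in $P$ to an index of the outer $h$, the swap-then-wave-coordinate trick has nothing to act on.
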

\begin{proof}
We treat the first term in \eqref{eq:P.def} in detail; the other is very similar. The key point here is that after integrating by parts and using wave coordinate condition, we effectively have null condition terms:
\begin{equation*}
\begin{split}
&\: \la g_0^{\alp\alp'} g_0^{\bt\bt'} \rd_t h_{\alp'\bt'}, A ( g_0^{\rho\rho'} g_0^{\sigma\sigma'} \rd_\alp h_{\rho\sigma} \rd_\bt h_{\rho'\sigma'} )\ra \\
= &\: \la g_0^{\alp\alp'} g_0^{\bt\bt'} (\rd_t A^* h_{\alp'\bt'}),  g_0^{\rho\rho'} g_0^{\sigma\sigma'} \rd_\alp h_{\rho\sigma} \rd_\bt h_{\rho'\sigma'} \ra + o(1) \\
= &\: \la g_0^{\alp\alp'} g_0^{\bt\bt'} (\rd_\alp A^* h_{\alp'\bt'}),  g_0^{\rho\rho'} g_0^{\sigma\sigma'} \rd_t h_{\rho\sigma} \rd_\bt h_{\rho'\sigma'} \ra - \la g_0^{\alp\alp'} g_0^{\bt\bt'} \mathfrak Q_{t\alp}(A^* h_{\alp'\bt'},h_{\rho\sigma}) ,  g_0^{\rho\rho'} g_0^{\sigma\sigma'} \rd_\bt h_{\rho'\sigma'} \ra + o(1) \\
= &\: \f 12\la g_0^{\alp\alp'} \mathfrak Q^{(g_0)}_0 (A^* h_{\alp\alp'}, h_{\rho'\sigma'}),  g_0^{\rho\rho'} g_0^{\sigma\sigma'} \rd_t h_{\rho\sigma} \ra - \la g_0^{\alp\alp'} g_0^{\bt\bt'} \mathfrak Q_{t\alp}(A^* h_{\alp'\bt'},h_{\rho\sigma}) ,  g_0^{\rho\rho'} g_0^{\sigma\sigma'} \rd_\bt h_{\rho'\sigma'} \ra + o(1) \\
= &\: o(1),
\end{split}
\end{equation*}
where in the first step, we used that the commutator $[A^*, g_0^{\alp\alp'} g_0^{\bt\bt'} \rd_t]$ gives rise to a lower order term; in the second step, we swapped the $\rd_t$ and $\rd_\alp$ derivative at the expense of a null form; in the third step, we used the wave coordinate condition in \eqref{eq:basic.bds.for.H}; finally, we use the trilinear estimate in Proposition~\ref{prop:stupid.trilinear} and Proposition~\ref{prop:main.trilinear} to handle the terms involving the null condition (in a similar manner as Proposition~\ref{prop:null.form}). \qedhere
\end{proof}

\subsection{The wave gauge term}

\begin{proposition}\label{prop:wave.gauge}
	The following holds:
	$$\lim_{\lambda\to 0}  \Big\la (2 g_0^{\alp\alp'} g_0^{\bt\bt'} - g_0^{\alp\bt} g_0^{\alp'\bt'}) \rd_{t} h_{\alp\bt}, A\Big((g_0)_{\rho(\alp'} \rd_{\bt')} [H^\rho_\lambda - H^\rho_0]\Big) \Big\ra = 0.$$
\end{proposition}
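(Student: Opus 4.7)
The plan is to integrate by parts to transfer the $\rd_{\bt')}$ derivative off the difference $F^\rho \doteq H^\rho_\lambda - H^\rho_0$, exploit the pointwise smallness $|F^\rho| \ls \lambda^\eta$ to kill most terms, and handle the one remaining non-trivial contribution via the wave coordinate identity for the microlocal defect measure (Lemma~\ref{lem:wave.coord.for.MLDM}), in the spirit of Lemma~\ref{lem:linear.wave.coord.cancel}.

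First, I would write $(g_0)_{\rho\alp'}\rd_{\bt'}F^\rho = \rd_{\bt'}[(g_0)_{\rho\alp'}F^\rho] - (\rd_{\bt'}(g_0)_{\rho\alp'})F^\rho$, noting that the second piece pairs with $\rd h \in L^2$ to give $o(1)$ since $\|F^\rho\|_{L^\infty} \ls \lambda^\eta$. Moving the pseudo-differential operator $A$ to its adjoint (with $\Psi^{-1}$-commutator errors that also vanish against $F^\rho = O(\lambda^\eta)$ in $L^2$) and then integrating by parts in $\rd_{\bt'}$, all terms in which the new $\rd_{\bt'}$ lands on a smooth coefficient yield $L^2$ quantities paired with $F^\rho$ of $L^2$ norm $O(\lambda^\eta)$, hence $o(1)$. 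The only surviving contribution is
\begin{equation*}
-\Big\la S^{\alp\bt\alp'\bt'}(g_0)_{\rho\alp'} A^* \rd^2_{\bt' t} h_{\alp\bt},\, F^\rho \Big\ra + o(1),
\end{equation*}
where $S^{\alp\bt\alp'\bt'}$ denotes the $(\alp'\bt')$-symmetrization of $T^{\alp\bt\alp'\bt'}$.

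The naive estimate $\|\rd^2 h\|_{L^2}\|F^\rho\|_{L^2}\ls \lambda^{\eta-1}$ is insufficient. To proceed, I would expand $F^\rho = L^\rho(h) + R^\rho$, where $L^\rho(h) \doteq g_0^{\sigma\rho}g_0^{\alp\bt}(\rd_\bt h_{\sigma\alp}-\tfrac 12 \rd_\sigma h_{\alp\bt})$ is the linearization of $H^\rho$ at $g_0$ and $R^\rho = O_{L^\infty}(\lambda)$ collects the remaining terms, coming from the expansion of $g_\lambda^{-1}$ and having schematic form $h\cdot(\rd h + \rd g_0)\cdot\mathrm{smooth}$. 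For the $R^\rho$ contribution, the key observation is that the specific contraction $g_0^{\alp\bt}(\rd^2_{\bt'\bt}h_{\sigma\alp}-\tfrac 12 \rd^2_{\bt'\sigma}h_{\alp\bt})$ coming from the inverse-metric factors inside $R^\rho$ equals $\rd_{\bt'}[(g_0)_{\sigma\tau}F^\tau] + O(1) = O(1)$ pointwise by the wave coordinate condition (rather than the naive $O(\lambda^{-1})$); multiplying by the $h$-factor present in $R^\rho$ (which is $O(\lambda)$) yields an $O(\lambda)$ quantity in $L^2$ that pairs with $\rd h$ to give $o(1)$.

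For the main $L^\rho(h)$ contribution, one further integration by parts on $\rd_{\bt'}$ together with the standard identity $\la \rd h, A\rd^2 h\ra = \tfrac 12 \la\rd h,[\rd,A]\rd h\ra + o(1)$ converts the expression into the form $\la \rd h, B \rd h\ra$ with $B\in\Psi^0$. By the definition of $\mu_{\alp\bt\rho\sigma}$, this has a limit given by an explicit integral against $\mu$. The hardest step, and the main obstacle, is to verify via careful index bookkeeping through the IBP manipulations, the symmetrization $S^{\alp\bt\alp'\bt'}$, and the contraction with $(g_0)_{\rho\alp'}$, that the resulting symbol is precisely of the form $g_0^{\alp\bt}\xi_\alp \mu_{\bt\gamma\sigma\nu} - \tfrac 12 g_0^{\alp\bt}\xi_\gamma \mu_{\alp\bt\sigma\nu}$ killed by Lemma~\ref{lem:wave.coord.for.MLDM}, yielding the desired vanishing in analogy with the cancellation behind Lemma~\ref{lem:linear.wave.coord.cancel}.
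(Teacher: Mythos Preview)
Your approach is considerably more complicated than the paper's and, as written, has a gap. The key observation you miss is that the contraction
\[
(2g_0^{\alp\alp'}g_0^{\bt\bt'}-g_0^{\alp\bt}g_0^{\alp'\bt'})(g_0)_{\rho(\alp'}\rd_{\bt')}h_{\alp\bt}
= 2g_0^{\bt\bt'}\rd_{\bt'}h_{\rho\bt}-g_0^{\alp\bt}\rd_\rho h_{\alp\bt}
\]
is \emph{exactly} the wave-coordinate combination from \eqref{eq:basic.bds.for.WC}. The paper exploits this by integrating by parts to \emph{swap} $\rd_t$ and $\rd_{\bt'}$ (rather than only moving $\rd_{\bt'}$ off $F^\rho$): the $\rd_{\bt'}$ lands on $h_{\alp\bt}$ and produces an $O(\lambda^\eta)$ factor in $L^2$, while $\rd_t F^\rho=O(1)$ in $L^2$ by \eqref{eq:basic.bds.for.H}. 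The product is $O(\lambda^\eta)=o(1)$ and the proof is over---no decomposition $F^\rho=L^\rho(h)+R^\rho$, no microlocal defect measure, no appeal to Lemma~\ref{lem:wave.coord.for.MLDM}.

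Concerning your argument itself: the $R^\rho$ step is confused. After your integration by parts, $\rd^2_{\bt' t}h_{\alp\bt}$ sits on the left and $R^\rho$ (with no derivative) on the right, so there is no second-derivative contraction ``inside $R^\rho$'' to speak of. If you instead meant to analyze $\rd_{\bt'}R^\rho$ in the pre-IBP expression, note that $R^\rho$ also contains $g_0^{\rho\sigma}(g_\lambda^{\alp\bt}-g_0^{\alp\bt})(\rd_\bt h_{\sigma\alp}-\tfrac12\rd_\sigma h_{\alp\bt})$; applying $\rd_{\bt'}$ produces a term with $\rd^2 h$ contracted against $(g_\lambda-g_0)^{\alp\bt}$ rather than $g_0^{\alp\bt}$, so the wave-coordinate cancellation does not apply and this contribution is genuinely $h\cdot\rd^2 h=O(1)$, not $O(\lambda)$. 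Your $L^\rho(h)$ route via the defect measure could in principle be carried out, but it is unnecessary: since $L^\rho(h)=O(\lambda^\eta)$ pointwise by \eqref{eq:basic.bds.for.WC}, the same one-line bound closes it directly.
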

\begin{proof}
We integrate by parts in $\rd_t$ and $\rd_{\bt'}$. The error terms from $[A,\rd_t]$, $[A,\rd_{\bt'}]$ and the derivatives hitting on the smooth functions all give lower order contributions. Hence,
\begin{equation}\label{eq:wave.gauge.est.before.using.wave.gauge.again}
\begin{split}
&\:  \Big\la (2 g_0^{\alp\alp'} g_0^{\bt\bt'} - g_0^{\alp\bt} g_0^{\alp'\bt'}) \rd_{t} h_{\alp\bt}, A\Big((g_0)_{\rho(\alp'} \rd_{\bt')} [H^\rho_\lambda - H^\rho_0]\Big) \Big\ra \\
= &\: \Big\la (2 g_0^{\alp\alp'} g_0^{\bt\bt'} - g_0^{\alp\bt} g_0^{\alp'\bt'}) \rd_{(\bt'|} h_{\alp\bt}, A\Big((g_0)_{|\alp')\rho} \rd_{t} [H^\rho_\lambda - H^\rho_0]\Big) \Big\ra + o(1) \\
= &\: \Big\la (2 g_0^{\bt\bt'}\rd_{\bt'} h_{\rho\bt} - g_0^{\alp\bt} \rd_{\rho} h_{\alp\bt}) , A\Big( \rd_{t} [H^\rho_\lambda - H^\rho_0]\Big) \Big\ra + o(1),
\end{split}
\end{equation}
where in the second step we used that the commutator $[A,(g_0)_{\alp'\rho}]$ or $[A,(g_0)_{\bt'\rho}]$ gives lower order contributions.

Finally, by the wave coordinate condition \eqref{eq:basic.bds.for.WC} and the bound for $\rd_t H$ in \eqref{eq:basic.bds.for.H}, we obtain
$$\Big| \hbox{\eqref{eq:wave.gauge.est.before.using.wave.gauge.again}} \Big| \ls \| g_0^{\bt\bt'}\rd_{\bt'} h_{\rho\bt} - g_0^{\alp\bt} \rd_{\rho} h_{\alp\bt} \|_{L^2} \| \rd_{t} [H^\rho_\lambda - H^\rho_0] \|_{L^2} \ls \lambda^{\eta}  = o(1),$$
which concludes the proof. \qedhere
\end{proof}

\subsection{Putting everything together}\label{sec:end.of.proof}

\begin{proof}[Proof of Theorem~\ref{thm:main.transport}]
We start with \eqref{eq:main.reduced} in Proposition~\ref{prop:main.reduced}. By Proposition~\ref{prop:linear}, the $L_{\mu\nu}(g_0)(\rd h)$ term cancels the $\f 14\int_{S^*\RR^{d+1}} g_0^{\mu\nu} \xi_\nu \widetilde{a}(x,\xi)  \rd_{x^\mu} (2g_0^{\alp\alp'} g_0^{\bt\bt'} - g_0^{\alp\bt} g_0^{\alp'\bt'}) \, \ud \mu_{\alp\bt\alp'\bt'}$ term in \eqref{eq:main.reduced}. By Propositions~\ref{prop:quasilinear}, \ref{prop:null.form}, \ref{prop:quadratic} and \ref{prop:wave.gauge}, all the remaining terms in \eqref{eq:main.reduced} vanish. This concludes the proof of Theorem~\ref{thm:main.transport}. \qedhere
\end{proof}

\appendix

\section{Proofs of algebraic properties of the Einstein equations}\label{sec:appendix}

\begin{proof}[Proof of Lemma~\ref{lem:Ric}]
To simplify notation, we will write $H^\rho = H^\rho(g)(\rd g)$.

We begin with the standard expression
\begin{equation}\label{eq:Ric.exp.1}
\mathrm{Ric}_{\mu\nu}(g) = g^{\alp\bt} \Big(\rd_\bt \Gamma_{\mu\alp\nu} - \rd_\nu \Gamma_{\mu\alp\bt} + \Gamma_{\nu\sigma\alp}\Gamma_{\mu}{}^\sigma{}_\bt - \Gamma_{\alp\sigma\bt}\Gamma_{\mu}{}^\sigma{}_\nu\Big).
\end{equation}

Note that 
\begin{equation}\label{eq:Ric.exp.2}
\begin{split}
\rd_{(\mu} (g_{\nu)\rho} H^\rho) = &\: \rd_{(\mu} \Big(g_{\nu)\rho} g^{\rho\sigma}(g^{\alp\bt} \rd_\bt g_{\sigma\alp} - \f 12 g^{\alp\bt} \rd_{\sigma} g_{\alp\bt} ) )\Big) = \rd_{(\mu|} (g^{\alp\bt} \rd_\bt g_{|\nu)\alp} - \f 12 g^{\alp\bt} \rd_{|\nu)} g_{\alp\bt} ).
\end{split}
\end{equation}
Thus,
\begin{equation}\label{eq:Ric.exp.3}
\begin{split}
&\: g^{\alp\bt} \Big(\rd_\bt \Gamma_{\mu\alp\nu} - \rd_\nu \Gamma_{\mu\alp\bt} \Big) \\
= &\: g^{\alp\bt} \Big(\f 12 \rd^2_{\bt\mu} g_{\nu\alp}  - \f 12 \rd^2_{\alp\bt} g_{\mu\nu} - \f 12 \rd^2_{\nu\mu} g_{\bt\alp} + \f 12 \rd^2_{\nu\alp} g_{\mu\bt} \Big) \\
= &\: - \f 12 \widetilde{\Box}_g g_{\mu\nu} + \rd_{(\mu}  (g_{\nu)\rho} H^\rho)- \rd_{(\mu|} g^{\alp\bt} \rd_\bt g_{|\nu)\alp} + \f 12 \rd_{(\mu|} g^{\alp\bt} \rd_{|\nu)} g_{\alp\bt} \\
= &\: - \f 12 \widetilde{\Box}_g g_{\mu\nu} + g_{\rho(\mu} \rd_{\nu)} H^\rho +  H^\rho \rd_{(\mu}  g_{\nu)\rho} + g^{\alp\bt} g^{\sigma\rho} \rd_{(\mu|} g_{\bt\rho} \rd_\sigma g_{|\nu)\alp} - \f 12 g^{\alp\bt} g^{\sigma\rho} \rd_{\mu} g_{\bt\rho} \rd_{\nu} g_{\alp\sigma}.
\end{split}
\end{equation}

As for the terms quadratic in $\Gamma$, we have
\begin{equation}\label{eq:Ric.exp.4}
\begin{split}
g^{\alp\bt} \Gamma_{\alp\sigma\bt}\Gamma_{\mu}{}^\sigma{}_\nu = g_{\sigma \rho} H^\rho \Gamma_{\mu}{}^\sigma{}_\nu = H^\rho(\rd_{(\mu}  g_{\nu)\rho} - \f 12 \rd_\rho g_{\mu\nu})
\end{split}
\end{equation}
and
\begin{equation}\label{eq:Ric.exp.5}
\begin{split}
g^{\alp\bt} \Gamma_{\nu\sigma\alp}\Gamma_{\mu}{}^\sigma{}_\bt 
= &\: g^{\alp\bt} g^{\sigma\rho} (\rd_{(\nu} g_{\alp)\sigma}  - \f 12 \rd_\sigma g_{\nu\alp})(\rd_{(\mu} g_{\bt)\rho}  - \f 12 \rd_\rho g_{\mu\bt}) \\
= &\: \f 14 g^{\alp\bt} g^{\sigma\rho} \rd_\mu g_{\alp\sigma} \rd_\nu g_{\bt\rho} + \f 12 g^{\alp\bt} g^{\sigma\rho} \rd_\sigma g_{\nu\alp} \rd_\rho g_{\mu\bt} - \f 12  g^{\alp\bt} g^{\sigma\rho} \rd_{\alp} g_{\nu\sigma} \rd_\rho g_{\mu\bt}.
\end{split}
\end{equation}
Plugging \eqref{eq:Ric.exp.3}, \eqref{eq:Ric.exp.4}, \eqref{eq:Ric.exp.5} into \eqref{eq:Ric.exp.1} yields the desired conclusion.\qedhere
\end{proof}

\begin{proof}[Proof of Lemma~\ref{lem:Ric.linear}]
Using \eqref{eq:basic.bds.for.Ric} and H\"older's inequality, we deduce
\begin{equation}
\begin{split}
&\: B_{\mu\nu}(g_\lambda)(\rd g_\lambda,\rd g_\lambda) - B_{\mu\nu}(g_0)(\rd g_0,\rd g_0) \\
= &\: B_{\mu\nu}(g_0) (\rd h, \rd g_0) + B_{\mu\nu}(g_0) (\rd g_0, \rd h) + B_{\mu\nu}(g_0) (\rd h, \rd h) + O(\lambda).
\end{split}
\end{equation}

To simplify notations, we write $g = g_0$ and $\Gamma_{\rho}{}^{\alp}{}_{\mu} = \Gamma_{\rho}{}^{\alp}{}_{\mu}(g_0)$. We then compute using \eqref{eq:Bmunu.def}:
\begin{equation}
\begin{split}
&\: B_{\mu\nu}(g_0) (\rd h, \rd g_0) + B_{\mu\nu}(g_0) (\rd g_0, \rd h)  \\
= &\: 2 g^{\alp\bt} g^{\sigma\rho} \rd_{(\mu|} g_{\bt\rho} \rd_\sigma h_{|\nu)\alp} + 2 g^{\alp\bt} g^{\sigma\rho} \rd_{(\mu|} h_{\bt\rho} \rd_\sigma g_{|\nu)\alp} \\
&\: - g^{\alp\bt} g^{\sigma\rho} \rd_{(\mu|} g_{\bt\rho} \rd_{|\nu)} h_{\alp\sigma} + 2 g^{\alp\bt} g^{\sigma\rho} \rd_\sigma g_{(\mu|\alp} \rd_\rho h_{|\nu)\bt} - 2 g^{\alp\bt} g^{\sigma\rho} \rd_{\alp} g_{(\mu|\sigma} \rd_\rho h_{|\nu)\bt} \\
= &\: 4 g^{\sigma\rho}\Gamma_{\rho}{}^{\alp}{}_{(\mu|}\rd_{\sigma} h_{|\nu)\alp} + g^{\alp\bt} g^{\sigma\rho} (2 \rd_\rho g_{\bt (\mu|} - \rd_{(\mu|} g_{\bt\rho}) \rd_{|\nu)} h_{\alp\sigma},
\end{split}
\end{equation}
which gives the conclusion. \qedhere 
\end{proof}

\begin{proof}[Proof of Lemma~\ref{lem:Ric.B}]
This computation is similar to the computation in \cite{LinRod.WN}. We need to manipulate the first term and the last term in \eqref{eq:Bmunu.def}. For the first term, using \eqref{eq:basic.bds.for.H} twice, we obtain
\begin{equation}\label{eq:p.q.derivation.1}
\begin{split}
&\: 2 g_0^{\alp\bt} g_0^{\sigma\rho} \rd_{(\mu|} h_{\bt\rho} \rd_\sigma h_{|\nu)\alp} \\
= &\: 2 g_0^{\alp\bt} g_0^{\sigma\rho} \rd_{\sigma} h_{\bt\rho} \rd_{(\mu|} h_{|\nu)\alp} + (2 g_0^{\alp\bt} g_0^{\sigma\rho} \rd_{(\mu|} h_{\bt\rho} \rd_\sigma h_{|\nu)\alp} - 2 g_0^{\alp\bt} g_0^{\sigma\rho} \rd_{\sigma} h_{\bt\rho} \rd_(\mu| h_{|\nu)\alp}) \\
\stackrel{\mathclap{\tiny\mbox{\eqref{eq:basic.bds.for.H}}}}{=} &\: g_0^{\alp\bt} g_0^{\sigma\rho} \rd_{\bt} h_{\sigma\rho} \rd_{(\mu|} h_{|\nu)\alp} + (2 g_0^{\alp\bt} g_0^{\sigma\rho} \rd_{(\mu|} h_{\bt\rho} \rd_\sigma h_{|\nu)\alp} - 2 g_0^{\alp\bt} g_0^{\sigma\rho} \rd_{\sigma} h_{\bt\rho} \rd_(\mu| h_{|\nu)\alp}) + O(\lambda^{\eta})\\
= &\: g_0^{\alp\bt} g_0^{\sigma\rho} \rd_{(\mu|} h_{\sigma\rho} \rd_{\bt} h_{|\nu)\alp} + (g_0^{\alp\bt} g_0^{\sigma\rho} \rd_{\bt} h_{\sigma\rho} \rd_{(\mu|} h_{|\nu)\alp} - g_0^{\alp\bt} g_0^{\sigma\rho} \rd_{(\mu|} h_{\sigma\rho} \rd_{\bt} h_{|\nu)\alp})\\
&\:  + (2 g_0^{\alp\bt} g_0^{\sigma\rho} \rd_{(\mu|} h_{\bt\rho} \rd_\sigma h_{|\nu)\alp} - 2 g_0^{\alp\bt} g_0^{\sigma\rho} \rd_{\sigma} h_{\bt\rho} \rd_(\mu| h_{|\nu)\alp}) + O(\lambda^{\eta})\\
\stackrel{\mathclap{\tiny\mbox{\eqref{eq:basic.bds.for.H}}}}{=} &\: \f 12 g_0^{\alp\bt} g_0^{\sigma\rho} \rd_{\mu} h_{\sigma\rho} \rd_{\nu} h_{\alp\bt} + (g_0^{\alp\bt} g_0^{\sigma\rho} \rd_{\bt} h_{\sigma\rho} \rd_{(\mu|} h_{|\nu)\alp} - g_0^{\alp\bt} g_0^{\sigma\rho} \rd_{(\mu|} h_{\sigma\rho} \rd_{\bt} h_{|\nu)\alp})\\
&\:  + (2 g_0^{\alp\bt} g_0^{\sigma\rho} \rd_{(\mu|} h_{\bt\rho} \rd_\sigma h_{|\nu)\alp} - 2 g_0^{\alp\bt} g_0^{\sigma\rho} \rd_{\sigma} h_{\bt\rho} \rd_(\mu| h_{|\nu)\alp}) + O(\lambda^{\eta}).
\end{split}
\end{equation}

In a similar manner, we obtain
\begin{equation}\label{eq:p.q.derivation.2}
\begin{split}
&\: -g_0^{\alp\bt} g_0^{\sigma\rho} \rd_{\alp} h_{\nu\sigma} \rd_\rho h_{\mu\bt} \\
= &\: -g_0^{\alp\bt} g_0^{\sigma\rho} \rd_{\rho} h_{\nu\sigma} \rd_\alp h_{\mu\bt} - (g_0^{\alp\bt} g_0^{\sigma\rho} \rd_{\alp} h_{\nu\sigma} \rd_\rho h_{\mu\bt} - g_0^{\alp\bt} g_0^{\sigma\rho} \rd_{\rho} h_{\nu\sigma} \rd_\alp h_{\mu\bt}) \\
\stackrel{\mathclap{\tiny\mbox{\eqref{eq:basic.bds.for.H}}}}{=} &\: -\f 14 g_0^{\alp\bt} g_0^{\sigma\rho} \rd_{\nu} h_{\rho\sigma} \rd_\mu h_{\alp\bt} - (g_0^{\alp\bt} g_0^{\sigma\rho} \rd_{\alp} h_{\nu\sigma} \rd_\rho h_{\mu\bt} - g_0^{\alp\bt} g_0^{\sigma\rho} \rd_{\rho} h_{\nu\sigma} \rd_\alp h_{\mu\bt}) + O(\lambda^{\eta}).
\end{split}
\end{equation}
Plugging \eqref{eq:p.q.derivation.1} and \eqref{eq:p.q.derivation.2} into \eqref{eq:Bmunu.def} and using notations in Definition~\ref{def:Q0} and Definition~\ref{def:Qmunu}, we obtain the desired decomposition. \qedhere
\end{proof}

\begin{proof}[Proof of Lemma~\ref{lem:Ric.quasilinear}]
Using \eqref{eq:tBox.def}, we compute
\begin{equation}\label{eq:quasi.derivation.1}
\widetilde{\Box}_{g_\lambda} (g_{\lambda})_{\mu\nu} - \widetilde{\Box}_{g_0} (g_0)_{\mu\nu} = (g_{\lambda}^{\alp\bt} -g_0^{\alp\bt}) \rd^2_{\alp\bt} (g_{\lambda})_{\mu\nu} + \widetilde{\Box}_{g_0} h_{\mu\nu} = (g_{\lambda}^{\alp\bt} -g_0^{\alp\bt}) \rd^2_{\alp\bt} h_{\mu\nu} + \widetilde{\Box}_{g_0} h_{\mu\nu}+O(\lambda),
\end{equation}
where we have used \eqref{eq:basic.bds.for.Ric} to control $ (g_{\lambda}^{\alp\bt} -g_0^{\alp\bt}) \rd^2_{\alp\bt} (g_0)_{\mu\nu}$. Using \eqref{eq:basic.bds.for.Ric} to compute the difference of the inverses, we obtain
\begin{equation}\label{eq:quasi.derivation.2}
g_{\lambda}^{\alp\bt} - g_0^{\alp\bt} = - g_0^{\alp\alp'}g_0^{\bt\bt'} h_{\alp'\bt'} + O(\lambda^2).
\end{equation}
Plugging \eqref{eq:quasi.derivation.2} into \eqref{eq:quasi.derivation.1} and using the second derivative bounds in \eqref{eq:basic.bds.for.Ric}, we obtain the conclusion. \qedhere
\end{proof}

\bibliographystyle{DLplain}
\bibliography{HFlimit2}

\end{document}